\documentclass{article}

\usepackage[preprint]{neurips}

\usepackage[utf8]{inputenc}
\usepackage[T1]{fontenc}
\usepackage[bookmarks=true,colorlinks=true,linkcolor=cyan,citecolor=magenta]{hyperref}
\usepackage{etoc}
\usepackage{url}
\usepackage{booktabs}
\usepackage{amsmath, amssymb, amsfonts}
\usepackage{amsthm}
\usepackage{mathtools}
\usepackage{mathrsfs}
\usepackage{nicefrac}
\usepackage{microtype}
\usepackage{xcolor}
\usepackage{graphicx}
\usepackage{subcaption}
\usepackage{caption}
\usepackage{bm}
\usepackage{multirow}
\usepackage{upgreek}
\usepackage{paralist}
\usepackage{footnote}
\usepackage{verbatim}
\usepackage{comment}
\usepackage{stackrel}
\usepackage{multicol}
\usepackage{tcolorbox}
\usepackage{epstopdf}
\usepackage{algorithm}
\usepackage{algorithmic}

\usepackage[capitalize,noabbrev]{cleveref}

\theoremstyle{plain}
\newtheorem{theorem}{Theorem}[section]
\newtheorem{proposition}[theorem]{Proposition}
\newtheorem{lemma}[theorem]{Lemma}
\newtheorem{corollary}[theorem]{Corollary}
\theoremstyle{definition}
\newtheorem{definition}[theorem]{Definition}
\newtheorem{assumption}[theorem]{Assumption}
\theoremstyle{remark}
\newtheorem{remark}[theorem]{Remark}

\usepackage[textsize=tiny]{todonotes}

\usepackage{fontawesome5}

\definecolor{codebg}{RGB}{248,248,248}
\definecolor{codecolor}{RGB}{140,142,144}

\usepackage[compact]{titlesec}
\titlespacing{\section}{0pt}{12pt plus 2pt minus 2pt}{0pt plus 2pt minus 2pt}
\titlespacing{\subsection}{0 pt}{8pt plus 2pt minus 0pt}{0pt plus 2pt minus 2pt}
\title{HJ-Gauss: A Monte-Carlo HJ Reachability Scheme}

\author{%
	Lekan Molu \\
	Amazon IRG \\
	\And 
	Venkatraman Renganathan \\
	Cranfield University \\ 
	\And 
	Namhoon Cho \\
	Seoul National University \\
}

\definecolor{light-blue}{rgb}{0.30,0.35,1}
\definecolor{light-green}{rgb}{0.20,0.49,.85}
\definecolor{purple}{rgb}{0.70,0.69,.2}

\newcounter{mnote}

\newcommand{\ie}{i.e.,\ }
\newcommand{\eg}{e.g.\ }
\newcommand{\cf}{cf.\ }

\newcommand{\normop}[1]{|#1|}
\newcommand{\eqldefect}{\varepsilon_{\mathrm{QL}}}
\newcommand{\tubeband}{\mc{N}_\eta}

\newcommand{\mc}[1]{\mathcal{#1}}
\newcommand{\bb}[1]{\mathbb{#1}}

\newcommand{\shnote}[1]%
{\textcolor{magenta}{SH: #1}}
\newcommand{\lmnote}[1]%
{\textcolor{orange}{LM: #1}}

\newcommand{\Note}[1]{}
\renewcommand{\Note}[1]{\hl{[#1]}}  

\def\pde{PDE\ }
\def\sdf{\bm{\ell}}

\def\reline{\bb{R}}
\def\targetset{\mathcal{L}}
\def\traj{\xi}
\def\ren{\bb{R}^n}

\def\payoff{\Phi}
\def\state{\bm{x}}

\def\statey{\bm{y}}

\def\term{\sdf}

\def\pursuer{\bm{P}}
\def\evader{\bm{E}}

\def\valueparam{\bm{\omega}}
\def\control{\bm{u}}
\def\disturb{\bm{w}}
\def\switchcurve{\bm{\gamma}}
\def\valuefunc{\bm{v}}

\def\hamfunc{\bm{H}}

\def\uppervalue{\bm{v}^+}
\def\lowervalue{\bm{v}^-}
\def\upperham{\bm{H}^+}
\def\lowerham{\bm{H}^-}

\def\openset{\Omega}

\def\interface{\Gamma}
\def\flock{F}

\def\group{\mc{C}}
\def\subgroup{\mc{S}}
\def\climb{u_z}

\newcommand{\bc}{\bm{c}}
\newcommand{\bR}{\bm{R}}
\newcommand{\bx}{\bm{x}}
\newcommand{\by}{\bm{y}}

\newcommand{\bs}{\bm{s}}
\newcommand{\bv}{\valuefunc}
\newcommand{\bp}{\bm{p}}

\begin{document}

\maketitle

\begin{center}
	\vspace{-2.8em}
	\href{https://scriptedonachip.com/downloads/Papers/hjgauss_talk.html}{%
		\tcbox[
		on line,
		colback=codebg,
		colframe=codebg,
		coltext=codecolor,
		boxrule=0.4pt,
		arc=4pt,
		boxsep=2pt,
		left=4pt, right=4pt, top=2pt, bottom=2pt
		]{\fontfamily{fvm}\selectfont\small\faGlobe\enspace Slides}}
	\href{https://github.com/robotsorcerer/levelsetpy/tree/main/monte_carlo}{
		\tcbox[
		on line,
		colback=codebg,
		colframe=codebg,
		coltext=codecolor,
		boxrule=0.4pt,
		arc=4pt,
		boxsep=2pt,
		left=4pt, right=4pt, top=2pt, bottom=2pt
		]{\fontfamily{fvm}\selectfont\small\faGithub\enspace Code}}
	\href{mailto:ogunmolu@amazon.com}{%
		\tcbox[
		on line,
		colback=codebg,
		colframe=codebg,
		coltext=codecolor,
		boxrule=0.4pt,
		arc=4pt,
		boxsep=2pt,
		left=4pt, right=4pt, top=2pt, bottom=2pt
		]{\fontfamily{fvm}\selectfont\small\faEnvelope\enspace Email}}
	\vspace{-0.5em}\\
\end{center}

\begin{abstract}
	 Backward reachable sets or tubes (BRS/Ts), evaluated with grid-based level-set methods over viscous Hamilton-Jacobi (HJ) partial differential equations (PDEs), furnish principled reachability certificates for learning-enabled control problems. However, these methods incur an $O(M^n)$ memory cost, where $M$ is the number of grid points for every $n$-state dimension. This prohibitive storage cost has precluded their applications in many high-dimensional physical phenomena. Towards scalable reachability analysis, we  propose a frozen-coefficient Picard iterative Gaussian sampling scheme that reduces this exponential memory footprint to a linear one:  with a Cole-Hopf-type transformation, the HJ PDE effectively reduces to a sequence of linear heat equations, whose  values are then iteratively  recovered via  Gaussian heat-kernel expectations (using the Feynman-Kac formula). In this sentiment, Monte Carlo roll-outs over Gaussian densities  ultimately recover the (approximate) HJ value and its spatial gradient. Ours is a storage- and discretization-free  algorithm whose memory footprint scales (for  $N$ \textit{i.i.d} samples) as $N\cdot n$ under  a polynomial sampling power law in $N$; and provide a conditional linear convergence analysis to the \textit{consistent} viscosity solution. Furthermore, we provide a quasilinearization residual per iteration between frozen coefficient samples for this approximation scheme. Our theoretical machinery is  rigorously validated on  rocket launch and Dubins pursuit-evasion (P-E) games over  Holm-Bonferroni-informed experiments: for a $45D$-multi-rocket launch P-E game study, we find an \textit{almost zero}  Picard residual floor. Certifying the safety of \textbf{$100,000$} European starlings (\textit{sturnus vulgaris}) in murmurations  over their collective value functions reveals vacuole nucleation, cordon formation, and flock splitting on the resulting BRT zero-levelset phase topology as predator attacks evolve, demonstrating the scalability of our approach to large-scale multi-agent systems.
\end{abstract}


\etocdepthtag.toc{main}

\section{Introduction}
\label{sec:intro}

\noindent
A central challenge in learning-enabled control of complex high-dimensional systems is certifying that a learned controller, neural policy, or planning algorithm  generates evidence that it or any of its components satisfies all specified requirements including functional and allocated baselines in a \emph{verification} sense~\citep{DAUVerif}. As systems scale, algorithmic reliability  in reinforcement learning~\citep{SafeRL}, model-based~\citep{berkenkamp2017safe}, and other continuous, high-dimensional systems  necessitate scalable \emph{reachability} algorithms for system verification. Reachability concerns evaluating the \textit{decidability} of a dynamical systems' evolution of trajectories throughout a phase space. Decidable reachable systems are those where one can compute all states that can be reached from an initial condition in \textit{a finite number of steps}. Backward reachable (\textit{similarly, }reach-avoid) tubes \ie BRTs (BRATs), are the set of all states from which a system is guaranteed to reach (similarly avoid) a target region regardless of disturbances that may affect its behavioral evolution; when the controller must robustly counter a worst-case disturbance, then we must resolve its \emph{robustly controlled BRT (BRAT)}~\citep{Mitchell2020} \ie RCBRT/RCBRAT. Computing these amount to solving the Hamilton-Jacobi-Isaacs (HJI)~\citep{Isaacs1965} partial differential equation (PDE), under two competing inputs~\citep{Mitchell2005, LygerosReachability}.


\begin{figure}[tb]
	\centering
	\includegraphics[width=\columnwidth]{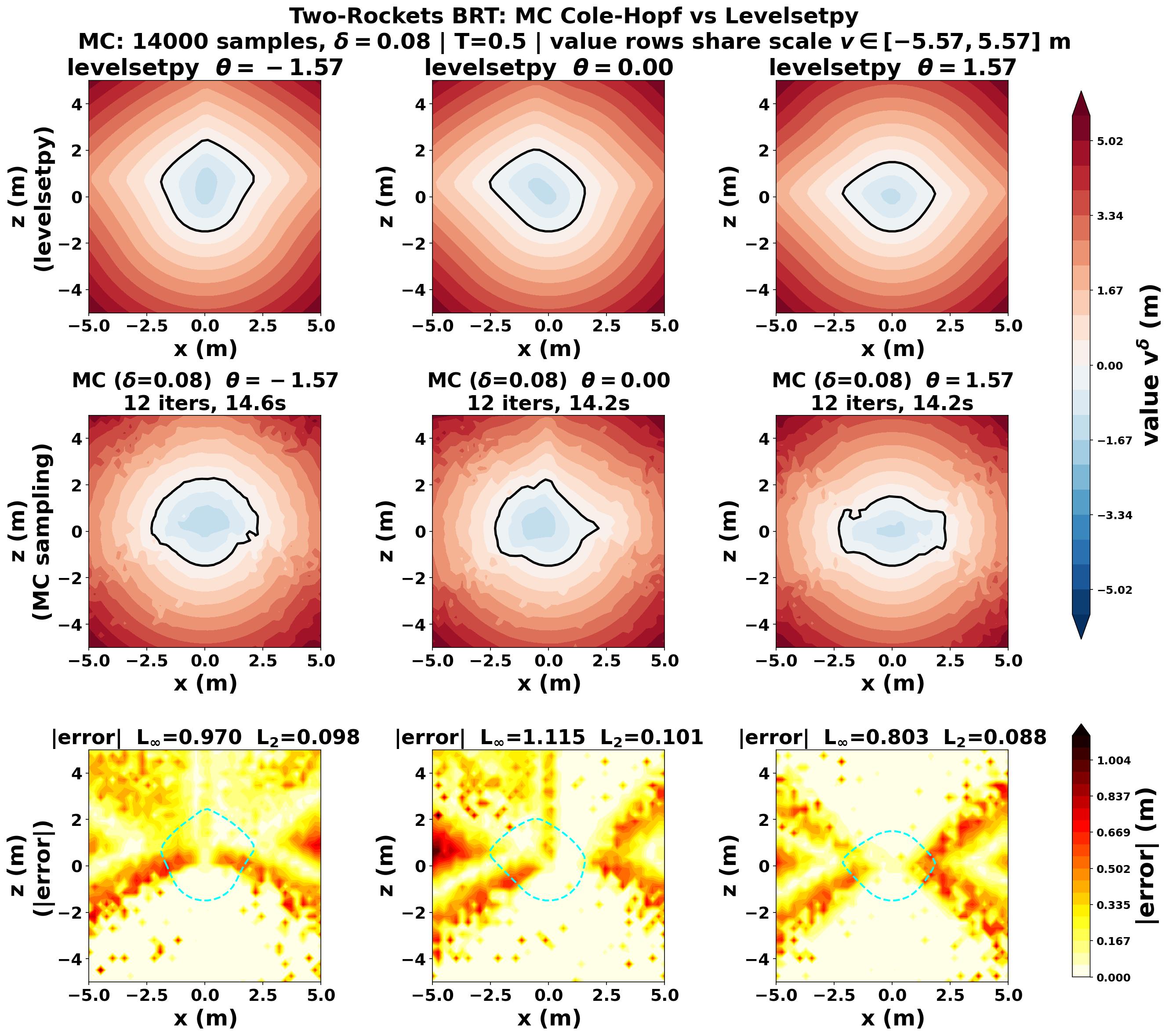}
	\caption{2D BRT slices for specific relative heading angles $\theta=\{-90^\circ, 0, 90^\circ \}$ of two rockets on a plane as derived in \citep{MoluxLevPyCDC}. The top row depicts the classical grid-based resolved levelsets using the classical grid-based solver in \citep{MoluxLevPyCDC} vs our Monte Carlo solver  (middle row). The top two rows plot the value function $\valuefunc^\delta$ (m) on a \emph{single shared, zero-centered} diverging color scale (common colorbar at right) so the two solvers are directly comparable; the blue-cyan-red contours signify  up to $20$ levelsets of divergence from the zero levelset (black contours), the field being negative inside the capture region (radius $1.5$) and positive outside. The bottom row is a heat map of the pointwise errors at each relative orientation $\theta$ (its own colorbar). Notice that errors are largest near the boundary of the usable part where the absolute gradient of the value function  $|D\valuefunc^\delta|$ is largest. This figure shows a single representative Monte Carlo seed for visual clarity.}
	\label{fig:rockets_slices}
\end{figure}

\begin{tcolorbox}[title=\textbf{Central Contribution: Monte Carlo Trajectory Optimization of Quasi-Linear BRTs}, colback=yellow!3, colframe=magenta!70, fonttitle=\small, fontlower=\small] 
	\small
	A \emph{backward reachable tube} (BRT) is the set of initial states from which a controlled system is guaranteed to reach a target set within a given time horizon, for all adversarial inputs. Given system dynamics $\dot{\state}=f(t;\state,\bm{u},\bm{w})$ with controller $\bm{u}$ and disturbance $\bm{w}$, the BRT is the zero sublevel set of a value function $\valuefunc^\delta$ that solves the viscous HJI PDE~\eqref{eq:HJ-RCBRT-Visc}. Classical grid-based solvers~\citep{Mitchell2005, LevelSetTOMS} discretize the state space on a grid with $M$ points per dimension; the resulting memory cost is $O(M^n)$, which grows exponentially in the state dimension $n$. For $n=6$ with $M=100$, this already requires $10^{12}$ grid cells. The proposed method replaces the grid with $N$ Monte Carlo samples of state space trajectories, reducing spatial memory to $O(N\cdot n)$.
\end{tcolorbox}

\noindent The major drawback for realizing the HJ reachability of complex systems is computational: classical HJ solvers scale as $O(M^n)$ in memory, where $M$ is the number of grid evaluation points per dimension and $n$ is the state dimension. While GPU-accelerated implementations~\citep{MoluxLevPyCDC, LevelSetTOMS} reduce wall-clock time, they retain the same exponential memory footprint. Self-supervised, PINN-based approaches such as DeepReach~\citep{DeepReach} train a neural network to minimize the HJ PDE residual directly via a physics-informed loss, without requiring a reference grid solution. However, the method scales  moderately in dimensions (reportedly 9D) and accuracy degrades in the limit of higher-dimensional data. We introduce a fundamentally different approach that replaces grid storage with a sampling scheme so that the memory cost is $O(N \cdot n)$ for $N$ Gaussian samples with a grid/discretization-free storage.

\noindent\textbf{Relation to grid-free and sampling-based HJ solvers.}
Three research families border ours: \emph{Hopf- and Lax-Oleinik-formula methods}~\citep{DarbonOsher2016, ChowDarbonOsherYin2017, KirchnerHopf2018} solve the HJ equation with grids, evaluating the value pointwise in convex-form; however, the Hopf formula demands a convex (or state-independent) Hamiltonian, whereas the state-dependent, nonconvex Hamiltonians of reachability fall outside that class. This paper introduces a quasi-linearization  scheme that trades their exactness for this generality at the quantified price of a defect bound (\cref{app:errors}). \emph{System decomposition}~\citep{ChenHerbertDecomp2018} methods remove dimension by splitting self-contained subsystems and is exact when the coupling structure permits, whereas our introduced sampler is indifferent to coupling structure and thus complements decomposition. \emph{Stochastic PDE control}~\citep{Kappen2005PI, TheodorouPI2010} is the stochastic sibling of our transformation: a similar logarithmic substitution linearizes the stochastic HJB equation under a noise-control duality, with the diffusion supplied by process noise; here, the viscosity $\delta$ is an analysis parameter of a deterministic worst-case game, and the frozen coefficient $\bc$ plays the duality's role. ~\citep{kharroubi2013numerical} leverage time-discretization to transform the nonlinear HJB to a form that admits Monte Carlo evaluations of a resulting backward stochastic differial equation via least squares regression. Finally, \emph{sample-based stochastic reachability}~\citep{SummersLygeros2010, LesserOishiErwin2013} certifies probabilistic reach-avoid for stochastic systems; we certify adversarial (worst-case) reachability, a different guarantee.

\noindent
We consider first-order nonlinear scalar HJ PDEs of the form
%
	\begin{align}
		\label{eq:ivp}
		\valuefunc_t + \hamfunc(t; \state, D \valuefunc) = 0 \text{   in   } \openset \times (0, T], \quad
		\valuefunc(0; \state) = \bm{g}(0; \state) \text{   on   } \openset \times \{t=0\}
		\tag{HJ-IVP}
	\end{align}
%
where the state $\state$ belongs to the open set $\openset \subseteq \ren$; $\valuefunc_t$ denotes the partial time derivative of the solution, $\valuefunc(t; \state)$; and $\hamfunc(t; \state, D \valuefunc)$ is the Hamiltonian, continuously defined on $\openset$, with $D \valuefunc$ representing the spatial gradient. The terminal value $\bm{v}(0; \state)$ is bounded and uniformly continuous (BUC) function. Note that $\bm{v}(0;\state)$ is the \emph{datum} of the initial-value problem \eqref{eq:ivp}, posed at $t=0$; in backward-reachability, it encodes the target set, and in physical time $t_{\mathrm{phys}}=T-t$ it is the \emph{terminal} cost. Numerically, it is realized by a signed-distance cost $\sdf(\state)$ whose zero sublevel set is the target set, so that $\bm{g}(0;\state)\equiv\sdf(\state)$; we reserve $\bm{g}$ for the abstract initial datum and $\sdf$ for its signed-distance instantiation.

Fixing a viscosity parameter $\delta>0$,~\citep{CrandallTwoApprox} introduced the \emph{vanishing viscosity} solution, $\valuefunc^\delta$, of \eqref{eq:ivp} \ie,
%
	\begin{align}
		\label{eq:ivp-viscous}
		\valuefunc_t^\delta + \hamfunc(t; \state, D \valuefunc^\delta) = \dfrac{\delta}{2} \Delta \valuefunc^\delta \text{ in } \openset \times (0, T]; \quad
		\valuefunc^\delta(0; \state) = \bm{g}(0; \state) \text{ on } \openset \times \{t=0\}
		\tag{HJ-Visc}
	\end{align}
%
such that \eqref{eq:ivp-viscous} satisfies the uniform convergence bound
\begin{align}
	\label{eq:visc_conv}
	\sup_{t \in (0, T]} \sup_{\state\in \ren} \left| \valuefunc(t; \state) - \valuefunc^\delta(t; \state)\right| \le k\sqrt{\delta}
\end{align}
for a constant $k>0$. \textit{Key insight: \eqref{eq:ivp-viscous} can be reduced, via a generalized Cole-Hopf-type transformation, to a sequence of linear heat equations admitting explicit Gaussian heat-kernel solutions.} This yields a locally linearized PDE, \ie an iterative (Picard) approximation in which the nonlinear coefficient is frozen at each step and updated after the linear solution is found. The exact reduction holds for the quadratic case $\hamfunc = \tfrac{1}{2}|p|^2$, where the frozen coefficient is constant and the residual vanishes identically.


\noindent\textbf{Connections to ML.}
Beyond reachability, the proposed scheme connects to several active ML research directions.
\begin{inparaenum}[(i)]
	\item \emph{Safe RL and policy verification}: certifying that a learned policy satisfies safety constraints requires computing BRTs for the closed-loop system; the memory bottleneck of grid-based solvers is the primary obstacle to safe RL in high-dimensional spaces~\citep{SafeRL, berkenkamp2017safe}.
	\item \emph{Model-based RL}: planning algorithms that learn a dynamics model and then verify safety via HJ reachability are limited to low-dimensional state spaces~\citep{DeepReach}.
	\item \emph{Diffusion-based methods}: the Gaussian heat-kernel representation derived here is structurally similar to the nonconvex optimizers for retrieving global minima of differentiable objectives~\citep{Chaudhari2018, HJMAD}, which may be extended to Stein score-based diffusion generative models.
\end{inparaenum}

\noindent\textbf{Contributions.}
Our contributions are as follows.
\begin{inparaenum}[(i)]
	\item A generalized Cole-Hopf-type transformation decouples the nonlinear viscous HJ equation into linear heat equations with Gaussian heat-kernel solutions; this enables a frozen-coefficient quasi-linearization.
	\item A sampling-based algorithm (\autoref{alg:quasi_lin}) with $O(N \cdot n)$ memory complexity that scales with sample count rather than discretization dimension, offers a viable alternative to impractical grid-based $O(M^n)$ approaches in high dimensions.
	\item Theoretical guarantees:
	\begin{inparaenum}[(a)]
		\item a finite-sample concentration bound of $O(N^{-1/2})$ for the Monte Carlo value estimator, independent of state dimensions (Theorem~\ref{thm:mc_complexity}),
		\item conditional linear convergence of the Picard iteration with explicit contraction constant  (Theorem~\ref{thm:algorithm_convergence}),
		\item a Duhamel bound on the quasi-linearization defect separating the iteration's fixed point from the viscous solution (Theorem~\ref{thm:defect}), and
		\item a conservative safety certificate (Corollary~\ref{cor:conservative_cert}).
	\end{inparaenum}
	\item Validation on two-player pursuit-evasion games achieving $L^2_{\text{rel}}$ errors of $0.03$–$0.20$ against grid-based benchmarks~\citep{LevelSetPy, MoluxLevPyCDC, LevelSetTOMS, MitchellLSToolbox, Mitchell2005};
	\item Demonstration on 45-dimensional multi-agent problems where grid-based discretization are currently computationally prohibitive;
	\item Safety certification of starling murmurations at the $100,000$-agent scale via decoupled per-flock value functions (\autoref{app:starlings}).
\end{inparaenum}

The rest of this work is structured as follows. The uninitiated in HJ verification theory may consult the background in~\autoref{app:back_unabridged}. Section~\ref{sec:methods} derives the quasi-linearization of \eqref{eq:ivp-viscous}, states the algorithm, and establishes the theoretical guarantees. Section~\ref{sec:results} presents numerical results and  Section ~\ref{sec:conclude} concludes the paper. All proofs appear in \autoref{app:proofs}. In \autoref{app:errors}, we establish a rigorous analysis of the error bounds, convergence rates, and robustness properties of the quasi-linearized Cole-Hopf transformation scheme. Lastly, additional numerical experiments are provided in \autoref{app:results}. 

\section{Transformation of the HJ PDEs} 
\label{sec:methods}
We now transform the nonlinear viscous HJ PDE \eqref{eq:ivp-viscous} into a linearized form and extract its solution via Gaussian heat-kernel expectations (all proofs are in \autoref{app:proofs}). We succinctly state definitions that aid the construction of our results (with details in \autoref{app:back_unabridged}). 

\begin{savenotes}
	\begin{algorithm}[tb!]
		\caption{Quasi-Linearization Algorithm   Cole-Hopf
			\label{alg:quasi_lin}}
		\begin{algorithmic}
			\STATE Fix: $\epsilon > 0$, $\bv^{(0)}(t;\bx) = g(\bx)$, \;
			$\bc^{(0)} = \frac{2\,\hamfunc(t;\bx,D\bm{g})}{{\delta\,|D\bm{g}|^2}}$.
			\STATE \textbf{For $k = 0,1,2,\ldots$:}
			\begin{enumerate}
				\STATE Freeze $\bc^{(k)}$ at the current iterate.
				\STATE Solve the heat equation
				$\valueparam_t = \frac{\delta}{2}\Delta\valueparam$
				with initial data
				$\valueparam^{(k)}(0;\bx) = e^{-c^{(k)} g(\bx)}$.
				\STATE Recover $\bv^{(k+1)} = -(1/\bc^{(k)})\log\valueparam^{(k+1)}$.
				\STATE Update $D\bv^{(k+1)}$ and $\bc^{(k+1)} = \frac{2\hamfunc(t;\bx,D\bv^{(k+1)})}{\delta|D\bv^{(k+1)}|^2}$.
				\STATE Check convergence: $\|\bv^{(k+1)} - \bv^{(k)}\|/\|\bv^{(k)}\| < \varepsilon$.
			\end{enumerate}
		\end{algorithmic}
	\end{algorithm}
\end{savenotes}
	We adopt the backward-reachability viscosity-solution convention,
	\begin{align}
	\label{eq:sign_convention}
	\valuefunc_t + \hamfunc(t; \state, D\valuefunc) = 0, \qquad 
	\hamfunc(t; \state, p) = \max_{\bm{u}} \min_{\bm{w}} \langle p, f(t; \state, \bm{u}, \bm{w}) \rangle.
	\end{align}
	throughout, with Hamiltonian $\hamfunc$ defined via the zero-sum differential game formulation of \autoref{app:back_unabridged}.
	%
	%
	\emph{Time convention}: $t$ denotes the \emph{backward horizon} (time-to-go); the target data is posed at $t = 0$ and the tube is grown over $t \in (0,T], \, T>0$. 

\begin{definition}[The Target Set]
	The invariant set $\targetset_0$ obtained at $T$ is target set,
	\begin{align}
		\mathcal{L}_0(T) = \{ \state \in {\bb{R}}^n \,|\, \valuefunc(0; \state) \le 0 \}
		\tag{Target-Set}
		\label{eq:target_set}
	\end{align}
	that is ``\textit{robustly controlled}" over the ``\textit{distance-to-target-set}" cost $\bm{g}(0; \state)$. 
\end{definition}

\begin{definition}[The RCBRT]
	The \textit{robustly controlled backward reachable \textbf{tube}} (RCBRT)~\citep{Mitchell2020} on $(0, T]$ 
	is the closure of the open set,
	\begin{align}
		\mathcal{L}([-T, 0], \mathcal{L}_0) = \{\state \in \ren \,| \, \exists \, \beta \in \mathcal{B}(t)  \forall \, \bm{u} \in \mathcal{U}(t), \,
		  \exists \,
		\bar{\tau} \in [-T, 0], \bm{\xi}(\bar{\tau}) \in  \mathcal{L}_0 \}.
		\tag{Target-Tube}
		\label{eq:rcbrt}
	\end{align}
	with measurable functions, $\bm{u}: [t, T]  \rightarrow \mathcal{U}, \bm{w}: [t, T] \rightarrow \mathcal{W}$, where $\mathcal{U} \in \bb{R}^m$ and $\mathcal{W} \subset \reline^p$ are compact sets.
\end{definition}

\begin{definition}[The HJI-RCBRT]
	The value function of the RCBRT is the viscosity solution of
	\begin{align}
		\valuefunc_t(t; \state) + \min\{0, \hamfunc\left(t; \state, D \valuefunc(t; \state)\right)\} = 0, \quad
		\valuefunc(0; \state) = \bm{g}(0; \state).
		\tag{HJI-RCBRT}
		\label{eq:HJ-RCBRT}
	\end{align} 
\end{definition}

\begin{definition}[The Zero LevelSet]
	
	\begin{subequations}
		\begin{align}
			\state(t)\in \mathcal{L}(\cdot) \implies \valuefunc(t; \state) \le 0, \,\,
			\valuefunc(t; \state) \le 0 \implies \state(t) \in \mathcal{L}(\cdot). 
			\tag{Zero Levelset}
			\label{eq:reachables} 
		\end{align}
	\end{subequations}
\end{definition}

\subsection{Exact Reduction and Quasi-Linearization}
\label{sec:exact_vs_approx}

Let us first draw the distinction that governs the interpretation of our contribution.

\noindent\textbf{Quasi-linearization.}
For a general Hamiltonian $\hamfunc(t;\state,p)$, define the spatially-varying coefficient,
\begin{align}
	\label{eq:coeff_frozen}
	\bc(t; \state) = \frac{2}{\delta}\cdot\hamfunc^\delta/|D\valuefunc^\delta|^2,
\end{align}
where $\hamfunc^\delta := \hamfunc(t;\state,D\valuefunc^\delta)$. With $\bc$ spatially varying, $\valueparam^\delta := \exp(-\bc\valuefunc^\delta)$ applied to \eqref{eq:ivp-viscous} yields a heat equation and the residual $\bR = \bR_{\mathrm{alg}} + \bR_{\mathrm{der}}$, made up of an algebraic ($\bR_{\mathrm{alg}}$) and derivative component ($\bR_{\mathrm{der}}$). The \emph{algebraic} part is eliminated  by the choice \eqref{eq:coeff_frozen}, leaving $\bR_{\mathrm{der}}$ composed of $\bc_t$, $D\bc$, and $\Delta\bc$. 
Ours is a \emph{Picard quasi-linearization} scheme \ie a successive quadratic matching: at iteration $k$, the coefficient $\bc^{(k)}$ is \emph{frozen} at the current iterate, the linear heat equation is solved \emph{exactly} for $\valueparam^{(k+1)}$. Equivalently, the viscous HJ equation with the locally-matched quadratic surrogate Hamiltonian $\tilde{\hamfunc}^{(k)} = \tfrac{\delta}{2}\bc^{(k)}|p|^2$ recovers  $\valuefunc^{(k+1)}$, and $D\valuefunc^{(k+1)}$ is updated  to recompute $\bc^{(k+1)}$. \emph{The iteration's limit is the fixed point of this surrogate solve map, \emph{not}, in general, the viscous solution itself.}

\noindent\textbf{Exact case ($\hamfunc = \tfrac{1}{2}|p|^2$).}
When the Hamiltonian is purely quadratic in the co-state, setting $\bc = 1/\delta$ (a constant) and $\valueparam^\delta := \exp(-\valuefunc^\delta/\delta)$ is an \emph{exact} Cole-Hopf transformation: $\valueparam^\delta$ satisfies the homogeneous heat equation $\valueparam^\delta_t = (\delta/2)\Delta\valueparam^\delta$ with no residual. 

Theorem~\ref{thm:defect} (elucidated in {\autoref{app:errors}) bounds the gap between the two by the Duhamel norm of the discarded residual, and the bound degrades where $\bc$ varies rapidly. The seeming circularity in \eqref{eq:coeff_frozen} is resolved by the frozen-coefficient interpretation and this is stated in Remark~\ref{rem:exact_vs_approx}.

\subsection{HJ Linearization}
\label{sec:hj_lin}

Henceforth, we replace $\hamfunc(t; \state, D \valuefunc^\delta)$ with $\hamfunc^\delta$ for brevity and we write $\bm{g}(\state)$ for the terminal cost $\bm{g}(0;\state)$. With $\bc^{(k)}$ frozen, the transformation,
\begin{align}
	\label{eq:cole-hopf}
	\valueparam^\delta := \exp(-\bc \valuefunc^\delta),
	\tag{Linear-Trans}
\end{align}
reduces \eqref{eq:ivp-viscous} to a heat equation under the frozen-coefficient approximation introduced in~Proposition~\ref{prop:heat_transformation} below. \textit{For general nonlinear Hamiltonians, this transformation induces a residual term and should be interpreted as a quasi-linearization}, \ie the approximation is exact only when $\hamfunc^\delta$ is quadratic in $p$ (see \S\ref{sec:exact_vs_approx}).
\begin{proposition}[Heat-equation solve under frozen coefficient]
	\label{prop:heat_transformation}
	With $\bc^{(k)}$ frozen at the current iterate, the linear step of Algorithm~\ref{alg:quasi_lin} defines the transformed variable $\valueparam^\delta$ as the solution of the heat initial-value problem below. When $\hamfunc=\tfrac{1}{2}|p|^2$ and $\bc=1/\delta$,  $\valueparam^\delta = \exp(-\bc\valuefunc^\delta)$ is exact; for general Hamiltonians it discards the derivative residual of Lemma~\ref{lem:exact_residual} in \autoref{app:proofs}. Thus $\valueparam^\delta$ satisfies,
	\begin{align}
	\label{eq:Heat-Equation}
	\valueparam_t^\delta - \dfrac{\delta}{2} \Delta \valueparam^\delta = 0
	\text{ in } \ren \times (0, T], \quad \valueparam^\delta = \exp\left(-\bc \bm{g}(\bm{y})\right) \text{ on } \ren \times \{t=0\}.
	\tag{Linear-HJ}
	\end{align}
	The solution to \eqref{eq:Heat-Equation} admits the following unique, explicit Green's convolution representation,
	\begin{align}
	\label{eq:heat_kernel}
	\valueparam^\delta(t; \state) &= \frac{1}{(\sqrt{2 \pi \delta t})^n} \int_{\ren} \exp \left(-\frac{|\state - \bm{y}|^2}{2 \delta t}\right)
	\exp\left(-\bc\bm{g}(\bm{y})\right) d\bm{y}, \quad \state \in \ren,\; t \in (0,T],
	\end{align}
	which, via the Feynman-Kac formula, can equivalently be written as a Gaussian expectation,
	\begin{align}
	\label{eq:heat_kernel_expectation}
	\valueparam^\delta(t; \state) = \bb{E}_{\statey \sim \mc{N}(\state, \delta tI_n)} \left[\exp\left(-\bc\bm{g}(\bm{y})\right)\right].
	\end{align}
\end{proposition}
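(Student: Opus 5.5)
The argument splits into three parts: an exact Cole-Hopf computation for constant $\bc$, the frozen-coefficient interpretation for general $\hamfunc$, and the heat-kernel/Feynman-Kac representation with uniqueness.

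\textbf{Step 1: the chain rule with $\bc$ held constant.} Set $\valueparam^\delta=\exp(-\bc\valuefunc^\delta)$. Then
\begin{align*}
\valueparam_t=-\bc\valuefunc_t\valueparam,\quad D\valueparam=-\bc D\valuefunc\,\valueparam,\quad \Delta\valueparam=\bigl(-\bc\Delta\valuefunc+\bc^2|D\valuefunc|^2\bigr)\valueparam .
\end{align*}
Substitute \eqref{eq:ivp-viscous} in the form $\valuefunc_t=\tfrac{\delta}{2}\Delta\valuefunc-\hamfunc^\delta$. This gives
\begin{align*}
\valueparam_t-\tfrac{\delta}{2}\Delta\valueparam=\bc\,\valueparam\Bigl(\hamfunc^\delta-\tfrac{\delta}{2}\bc|D\valuefunc^\delta|^2\Bigr).
\end{align*}
The right-hand side is exactly $\bR_{\mathrm{alg}}$. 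It vanishes identically under the choice \eqref{eq:coeff_frozen}.

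\textbf{Step 2: the exact and general cases.} In the exact case $\hamfunc=\tfrac12|p|^2$ with $\bc=1/\delta$, the bracket is $\tfrac12|p|^2-\tfrac12|p|^2=0$ pointwise. So $\valueparam^\delta$ solves \eqref{eq:Heat-Equation} with no approximation. For a general Hamiltonian, $\bc$ actually depends on $(t,\state)$. Redoing the computation with a variable $\bc$ produces the extra terms in $\bc_t$, $D\bc$ and $\Delta\bc$; these form $\bR_{\mathrm{der}}$ of Lemma~\ref{lem:exact_residual}. The frozen-coefficient step of Algorithm~\ref{alg:quasi_lin} treats $\bc^{(k)}$ as a constant inside the differential operator and drops these terms. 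The \emph{definition} of $\valueparam^\delta$ in that step is therefore the solution of the linear problem \eqref{eq:Heat-Equation}, with data $e^{-\bc\bm{g}}$. The remaining claims then concern that linear problem only.

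\textbf{Step 3: representation and uniqueness.} Since $\bm{g}$ is BUC, the data $e^{-\bc\bm{g}}$ is bounded, continuous and positive. Convolution with the Gaussian kernel $\Phi(t,z)=(2\pi\delta t)^{-n/2}e^{-|z|^2/(2\delta t)}$ then has three properties:
\begin{itemize}
\item it is smooth for $t>0$, with differentiation under the integral justified by Gaussian tails dominating bounded data;
\item it solves $\valueparam_t=\tfrac{\delta}{2}\Delta\valueparam$, because $\Phi$ does;
\item it attains the initial data locally uniformly as $t\downarrow0$, because $\Phi(t,\cdot)$ is an approximate identity.
\end{itemize}
This gives \eqref{eq:heat_kernel}.

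Uniqueness holds in the class of bounded solutions, and more generally under the Tychonoff growth bound $|\valueparam|\le Ae^{a|\state|^2}$. It follows from the maximum principle on $\ren$, applied to the difference of two solutions after adding the standard barrier $\epsilon(|\state|^2+n\delta t)$ and letting $\epsilon\to0$. I expect this to be the only delicate point, because uniqueness fails without a growth restriction. I would therefore state explicitly that uniqueness is among bounded solutions, which is natural since $0<\valueparam\le e^{|\bc|\,\|\bm{g}\|_\infty}$.

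Finally, the kernel is the density of $\mc{N}(\state,\delta tI_n)$, so \eqref{eq:heat_kernel} is literally the expectation \eqref{eq:heat_kernel_expectation}. Equivalently, by Feynman-Kac, $\valueparam^\delta(t;\state)=\bb{E}[e^{-\bc\bm{g}(\state+\sqrt{\delta}B_t)}]$ for a standard Brownian motion $B$, with $\state+\sqrt{\delta}B_t\sim\mc{N}(\state,\delta tI_n)$.
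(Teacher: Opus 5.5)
Your proposal is correct and follows the paper's route. The paper's proof likewise takes the frozen-coefficient step to be the heat initial-value problem by construction, obtains exactness in the quadratic case and the discarded $\bR_{\mathrm{der}}$ from Lemma~\ref{lem:exact_residual} and Corollary~\ref{cor:exact_case}, and invokes the classical heat-kernel representation and uniqueness among bounded solutions before reading the kernel as the $\mc{N}(\state,\delta t I_n)$ density. The differences are minor: you carry out the constant-$\bc$ Cole--Hopf computation and the barrier-based uniqueness argument inline, and you control the datum through boundedness of $\bm{g}$, whereas the paper uses Remark~\ref{rem:admissible_data}, which needs only $\bm{g}\ge g_{\min}$ together with $\bc>0$. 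Note that your barrier $\epsilon(|\state|^2+n\delta t)$ handles bounded solutions, which is all you use; the Tychonoff-class aside would need a different barrier.
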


\begin{lemma}[Smoothed HJ Solution]
	\label{lemma:hj_solution}
	Under the frozen-coefficient interpretation of \eqref{eq:coeff_frozen} (or exactly when $\hamfunc=\tfrac{1}{2}|p|^2$), the solution to the smoothed Hamilton-Jacobi equation \eqref{eq:ivp-viscous} is,
	\begin{align}
		\label{eq:ivp-viscous-solution}
		\valuefunc^\delta(t; \state) &= -\frac{1}{\bc^{(k)}}\log \bigg\{\dfrac{1}{(\sqrt{2\pi\delta t})^n}\int_{\ren} \exp \left(-\frac{|\state-\bm{y}|^2}{2\delta t}\right)
		\exp\left(-\bc^{(k)} \bm{g}(\bm{y})\right) d\by\bigg\},
	\end{align}
	in $\ren \times (0, T]$, where $\bc^{(k)}$ is the frozen coefficient at iteration $k$. This can equivalently be written as,
	\begin{align}
	\label{eq:hj_sol}
	\valuefunc^\delta(t; \state) &= -\frac{1}{\bc^{(k)}} \cdot \log \bb{E}_{\statey \sim \mc{N}(\state, \delta tI_n)} \left[\exp\left(-\bc^{(k)} \bm{g}(\bm{y})\right)\right].
	\end{align}
\end{lemma}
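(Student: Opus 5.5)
The plan is to invert the transformation \eqref{eq:cole-hopf} on the explicit solution from Proposition~\ref{prop:heat_transformation}. With $\bc^{(k)}$ frozen, that proposition gives $\valueparam^\delta$ solving \eqref{eq:Heat-Equation}, with the Green's representation \eqref{eq:heat_kernel} and the Gaussian expectation form \eqref{eq:heat_kernel_expectation}. Since \eqref{eq:cole-hopf} defines $\valueparam^\delta=\exp(-\bc^{(k)}\valuefunc^\delta)$, we solve formally for $\valuefunc^\delta=-\frac{1}{\bc^{(k)}}\log\valueparam^\delta$. Substituting \eqref{eq:heat_kernel} yields \eqref{eq:ivp-viscous-solution}, and substituting \eqref{eq:heat_kernel_expectation} yields \eqref{eq:hj_sol}.

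Three points must be checked for this inversion to be legitimate.

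\emph{Well-definedness of the logarithm.} Because $\bm{g}$ is BUC and $\bc^{(k)}$ is finite (and nonzero), the integrand $\exp(-\bc^{(k)}\bm{g}(\by))$ lies between two positive constants $e^{-|\bc^{(k)}|\,\|\bm{g}\|_\infty}$ and $e^{|\bc^{(k)}|\,\|\bm{g}\|_\infty}$. The Gaussian kernel is a probability density, so $\valueparam^\delta(t;\state)$ also lies between these constants. Hence $\valueparam^\delta$ is strictly positive and finite, the logarithm is defined, and $\valuefunc^\delta$ is bounded by $\|\bm{g}\|_\infty$. Smoothness for $t>0$ follows by differentiating under the integral sign against the Gaussian kernel.

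\emph{The PDE is satisfied.} In the exact case $\hamfunc=\tfrac12|p|^2$, $\bc=1/\delta$, I would verify directly that $\valuefunc^\delta$ solves \eqref{eq:ivp-viscous}. Write $\valueparam_t=-\bc\,\valueparam\,\valuefunc_t$, $D\valueparam=-\bc\,\valueparam\,D\valuefunc$, and $\Delta\valueparam=\valueparam(\bc^2|D\valuefunc|^2-\bc\Delta\valuefunc)$. Inserting these into $\valueparam_t=\tfrac{\delta}{2}\Delta\valueparam$ and dividing by $-\bc\valueparam\neq0$ gives
\begin{align*}
\valuefunc_t+\tfrac{\delta}{2}\bc|D\valuefunc|^2=\tfrac{\delta}{2}\Delta\valuefunc .
\end{align*}
With $\bc=1/\delta$ this is exactly \eqref{eq:ivp-viscous} for this Hamiltonian. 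For general $\hamfunc$ with $\bc^{(k)}$ frozen, the same computation shows that $\valuefunc^\delta$ solves the viscous equation with the surrogate Hamiltonian $\tilde{\hamfunc}^{(k)}=\tfrac{\delta}{2}\bc^{(k)}|p|^2$. By \eqref{eq:coeff_frozen}, this surrogate matches $\hamfunc^\delta$ pointwise wherever $\bc^{(k)}$ is evaluated along the iterate, up to the derivative residual of Lemma~\ref{lem:exact_residual}. That residual is precisely what the frozen-coefficient interpretation discards.

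\emph{Initial datum and uniqueness.} As $t\to0^+$ the heat kernel is an approximate identity, and $\exp(-\bc\bm{g})$ is BUC, so $\valueparam^\delta\to\exp(-\bc\bm{g})$ locally uniformly. Continuity of $\log$ on the positive range established above then gives $\valuefunc^\delta(t;\cdot)\to\bm{g}$. Uniqueness transfers from the uniqueness of bounded solutions of \eqref{eq:Heat-Equation}, which is asserted in Proposition~\ref{prop:heat_transformation}, because the map $\valueparam\mapsto-\frac{1}{\bc}\log\valueparam$ is a bijection on positive bounded functions.

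\emph{Main obstacle.} The delicate step is the general-Hamiltonian case, where $\bc^{(k)}$ is in truth spatially varying. There the algebra above picks up terms in $D\bc$, $\Delta\bc$ and $\bc_t$, so the identity holds only by definition of the frozen-coefficient step. I would state the lemma as holding exactly for the surrogate equation, treating $\bc^{(k)}$ as a parameter held constant inside the heat solve. I would then defer the quantitative gap to Theorem~\ref{thm:defect} rather than attempt to prove exactness.
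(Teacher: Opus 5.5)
Your proposal is correct and follows the paper's route. The paper's proof inverts \eqref{eq:cole-hopf} as $\valuefunc^\delta = -\frac{1}{\bc}\log\valueparam^\delta$ and substitutes the heat-kernel representation of Proposition~\ref{prop:heat_transformation}, exactly as in your first paragraph. Your additional checks are not in the paper's proof but are consistent with Remarks~\ref{rem:admissible_data} and~\ref{rem:exact_vs_approx}: positivity of the logarithm's argument, direct verification of the PDE in the quadratic case, recovery of the datum, uniqueness, and treating the general case as exact only for the surrogate $\tilde{\hamfunc}^{(k)}=\tfrac{\delta}{2}\bc^{(k)}|p|^2$. One caveat: for the unbounded signed-distance datum, your positivity step should rely on $\bm g \ge g_{\min}$ with $\bc^{(k)}>0$ rather than on $\|\bm g\|_\infty$.
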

\begin{remark}[Admissible data on $\ren$]
	\label{rem:admissible_data}
	The classical whole-space theory requires the datum of the heat problem to lie in $C(\ren) \cap L^\infty(\ren)$~\citep{EvansPDEBook}. Equations \eqref{eq:heat_kernel}-\eqref{eq:heat_kernel_expectation} are the fundamental-solution formula for the Cauchy problem on $\ren$, and the free-space kernel carries unit mass over $\ren$ alone. The datum of \eqref{eq:Heat-Equation} is $\valueparam^\delta(0;\cdot) = \exp(-\bc\bm g)$ and not $\bm g$ itself, which matters because the signed-distance instantiation $\sdf$ is unbounded above on $\ren$. For a bounded target set, $\bm g$ is bounded below, say $\bm g \ge g_{\min}$, and $\bm g(\bm y) \to +\infty$ as $|\bm y| \to \infty$, hence
	\begin{align}
		\label{eq:datum_bound}
		0 < \exp(-\bc\bm g(\bm y)) \le \exp(-\bc g_{\min}) \quad \text{for all } \bm y \in \ren, \qquad \exp(-\bc\bm g(\bm y)) \to 0 \text{ as } |\bm y| \to \infty.
	\end{align}
	The transformed datum is therefore continuous, strictly positive, bounded and decaying on $\ren$, so \eqref{eq:heat_kernel} is the unique bounded solution of \eqref{eq:Heat-Equation}, the integral converges absolutely, and \eqref{eq:heat_kernel_expectation} is a genuine Gaussian expectation. Consequently the estimator that draws $\statey \sim \mc{N}(\state, \delta t I_n)$ over all of $\ren$ in \eqref{eq:hj_sol} and \eqref{eq:spatial_grad} is unbiased for the quantity inside the logarithm. Only the lower bound $\bm g \ge g_{\min}$ is needed here, \ie no upper bound on $\bm g$ is used.
\end{remark}
%

\begin{corollary}[Spatial gradients of the HJ solution]
	\label{lemma:grad_hj_sol}
	Under the same frozen-coefficient assumption as Lemma~\ref{lemma:hj_solution}, the spatial gradient of the value function $\valuefunc^\delta(t;\state)$ admits the form,
	\begin{align}
	\label{eq:spatial_grad}
	D\valuefunc^\delta &= \dfrac{1}{t \cdot \delta \cdot \bc^{(k)}}
	\left(\state - \dfrac{\bb{E}_{\statey \sim\mc{N}(\state, \delta tI_n)}\left[\statey \cdot \exp\!\left(-\bc^{(k)} \bm{g}(\bm{y})\right)\right]}{\bb{E}_{\statey \sim\mc{N}(\state, \delta tI_n)}\left[\exp\!\left(-\bc^{(k)} \bm{g}(\bm{y})\right)\right]}\right),
	\end{align}
	where $\statey \sim \mc{N}(\state, \delta tI_n)$, \ie, with mean $\state$ and covariance $\delta t I_n$.
\end{corollary}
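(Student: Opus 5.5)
We will differentiate the representation \eqref{eq:ivp-viscous-solution} of Lemma~\ref{lemma:hj_solution} directly in $\state$, treating $\bc^{(k)}$ as a constant; this is exactly the frozen-coefficient convention. Write $\valueparam^\delta(t;\state)=\int_{\ren}\Phi_t(\state-\by)\,e^{-\bc^{(k)}\bm g(\by)}\,d\by$, where $\Phi_t(\bm z)=(2\pi\delta t)^{-n/2}\exp(-|\bm z|^2/(2\delta t))$ is the heat kernel. By the chain rule applied to $\valuefunc^\delta=-(1/\bc^{(k)})\log\valueparam^\delta$ we get
\begin{align*}
D\valuefunc^\delta=-\frac{1}{\bc^{(k)}}\,\frac{D\valueparam^\delta}{\valueparam^\delta}.
\end{align*}
This step needs $\valueparam^\delta>0$. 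That holds because the datum is strictly positive by \eqref{eq:datum_bound} and the kernel is positive. So the whole task reduces to computing $D\valueparam^\delta$.

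The main step, and the only one needing care, is to differentiate under the integral sign. We have $D_{\state}\Phi_t(\state-\by)=-\frac{\state-\by}{\delta t}\Phi_t(\state-\by)$. To justify the interchange we use dominated convergence, locally uniformly in $\state$. Remark~\ref{rem:admissible_data} gives $0<e^{-\bc^{(k)}\bm g}\le e^{-\bc^{(k)}g_{\min}}$, and $|\state-\by|\Phi_t(\state-\by)$ is dominated, for $\state$ in a compact set, by an integrable Gaussian-times-polynomial envelope in $\by$. Only the lower bound on $\bm g$ enters here, so the unbounded signed distance $\sdf$ causes no trouble. We also need $\bc^{(k)}>0$, at least for the bound as written.

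Carrying out the differentiation gives
\begin{align*}
D\valueparam^\delta(t;\state)=-\frac{1}{\delta t}\Big(\state\,\valueparam^\delta(t;\state)-\int_{\ren}\by\,\Phi_t(\state-\by)\,e^{-\bc^{(k)}\bm g(\by)}\,d\by\Big).
\end{align*}
Rewriting both integrals as expectations under $\by\sim\mc N(\state,\delta tI_n)$, as in \eqref{eq:heat_kernel_expectation}, turns the last integral into $\bb E[\by\,e^{-\bc^{(k)}\bm g(\by)}]$. This expectation is finite by the same domination argument.

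Substituting into the chain-rule expression, the two minus signs cancel. We obtain $D\valuefunc^\delta=\frac{1}{t\delta\bc^{(k)}}\big(\state-\bb E[\by e^{-\bc^{(k)}\bm g}]/\bb E[e^{-\bc^{(k)}\bm g}]\big)$, which is exactly \eqref{eq:spatial_grad}. The one caveat to state explicitly is that any spatial dependence of $\bc^{(k)}$ is ignored in the differentiation. That is precisely what "frozen" means here. In the exact quadratic case $\bc=1/\delta$ is a genuine constant, so there the formula holds with no approximation.
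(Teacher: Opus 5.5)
Your proposal is correct and follows the paper's proof essentially step for step: differentiate the heat-kernel convolution under the integral to get $D\valueparam^\delta=-\frac{1}{\delta t}\bb{E}[(\state-\statey)e^{-\bc^{(k)}\bm g(\statey)}]$, then apply $D\valuefunc^\delta=-\frac{1}{\bc^{(k)}}D\valueparam^\delta/\valueparam^\delta$ with $\bc^{(k)}$ treated as a constant. Three things you add are left implicit in the paper: the dominated-convergence justification of the interchange (which needs only $\bm g\ge g_{\min}$), the positivity of $\valueparam^\delta$, and the explicit caveat that any spatial variation of $\bc^{(k)}$ is ignored, which is exactly the frozen-coefficient convention the paper uses without comment.
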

\begin{remark}[Exact reduction vs.\ quasi-linearization]
	\label{rem:exact_vs_approx}
	Lemmas~\ref{lemma:hj_solution} and Corollary~\ref{lemma:grad_hj_sol} are exact when $\hamfunc = \tfrac{1}{2}|p|^2$, in which case $\bc = 1/\delta$ is a constant and the residual $\bR(t;\state)$ in the linearized equation vanishes identically (see \autoref{app:proofs}). For general Hamiltonians, $\bc^{(k)}$ depends on the unknown $D\valuefunc^\delta$; we resolve this by Algorithm~\ref{alg:quasi_lin}, whereupon $\bc^{(k)}$ is formed from the previous iterate, frozen, and updated after each linear solve. The formulas in both lemmas apply at each iteration with $\bc^{(k)}$ in place of $\bc$.
\end{remark}
\begin{remark}[Evaluation states vs.\ Monte Carlo samples]
	\label{rem:eval_vs_mc_samples}
	Note that $N$ is the i.i.d.\ Gaussian draws $\statey_1,\ldots,\statey_N \sim \mc{N}(\state,\delta t I_n)$ used \emph{independently at each} evaluation state $\state$ to form the Monte Carlo estimators in \eqref{eq:hj_sol}, \eqref{eq:spatial_grad}, and the concentration bound of Theorem~\ref{thm:mc_complexity} below. The $N$ draws are resampled fresh at every evaluation state and every Picard iteration of Algorithm~\ref{alg:quasi_lin}; they are not the $M$ evaluation states, $\state_1,\ldots,\state_M \in \openset$ where $\valuefunc(t; \state)$ is queried; carrying no randomness, they may be laid out on a uniform grid or scattered arbitrarily, since the \emph{solve} is grid-free. It requires no state space discretization storage or marching cubes. Hence, computational cost of the total randomness consumed per iteration is $M \times N$ draws. The zero sublevel set $\mc L_0(T)$ of \eqref{eq:HJ-RCBRT-Visc} is recovered by evaluating $\valuefunc^\delta$ at $M$ states ~(\cref{alg:quasi_lin}) and isocontouring the resulting array at level zero\footnote{This is a deterministic post-processing step with marching squares/cubes that consumes no further samples.}. 
\end{remark}

%
%
%

\subsection{Relations to Backward Reachable Sets/Tubes}
Introducing the viscosity therm, $\delta$, we may write \eqref{eq:HJ-RCBRT} as 
\begin{align}
	\valuefunc_t^\delta(t; \state) + \min\{0, \hamfunc^\delta\} = \dfrac{\delta}{2}\Delta \valuefunc^\delta, \quad \valuefunc^\delta(0; \state) = \bm{g}(0; \state),
	\tag{HJI-RCBRT-Visc}
	\label{eq:HJ-RCBRT-Visc}
\end{align}
 where $\delta$ smooths $\valuefunc(t; \state)$.  The invariant (target) set $\targetset_0$ obtained at $T$ is 
\begin{align}
	\mathcal{L}_0(T) = \{ \state \in \openset \,|\, \valuefunc^\delta(0; \state) \le 0 \}.
\end{align}
Similarly, the viscous version of the \textit{reach-avoid BRT} \ie \eqref{eq:HJI-RCBRAT} can be represented as
\begin{align}
	\min\{\valuefunc_t^\delta(t; \state) + \hamfunc^\delta - \dfrac{\delta}{2}\Delta \valuefunc^\delta, \bm{g}(t; \state) - \sdf(t; \state)\} \le 0,  \,\, \valuefunc^\delta(0; \state) = \bm{g}(0; \state).
	\tag{HJI-RCBRAT-Visc}
	\label{eq:HJI-RCBRAT-Visc}
\end{align}

\subsection{Weighted Importance Sampling of Reachable Sets}
While we may not recover the exact value function, the mean and variance of its derivatives can be evaluated with the kernel expectations of Lemma \ref{lemma:hj_solution}, and Corollary \ref{lemma:grad_hj_sol}. Previously, self-supervised physics-informed neural network (PINN) approaches such as DeepReach~\citep{DeepReach} train neural networks to minimize the HJ PDE residual directly without requiring external supervision or a reference grid solution. However, neural network-based approximation methods encounter inherent scaling limitations; DeepReach scales only to moderate dimensions (reported to 9D/10D in~\citep{DeepReach}), and approximation accuracy degrades as dimension increases. \textit{Our sampling-based scheme addresses this bottleneck by decoupling memory cost from dimensionality. }

The estimators of Lemma \ref{lemma:hj_solution}  and Corollary~\ref{lemma:grad_hj_sol} is a ratio of an exponential-weight expectation; in high dimensions,  $\bc\,\bm g$ is large over the kernel's support so that the denominator $\bb{E}[e^{-\bc\bm g}]$ is carried by a rare event where the weights collapse onto a few samples, and the ratio's variance explodes. The remedy is to move the samples to critical mass of the weight. We \emph{tilt} the proposal: in place of $\statey \sim \mc{N}(\state, \sigma^2 I_n)$, $\sigma^2 \triangleq \delta t$, we draw $\statey \sim q_\theta \triangleq \mc{N}(\state + \theta, \sigma^2 I_n)$ and reweight by the exact density ratio $w_\theta(\statey) = \exp\left((|\theta|^2 - 2\langle\statey - \state, \theta\rangle)/(2\sigma^2)\right)$, which leaves every expectation unbiased. The zero-variance proposal is proportional to the integrand $\varphi_{\state}(\statey)e^{-\bc\bm g(\statey)}$ itself; its (first-order) Laplace  Gaussian addresses the shift,
\begin{align}
	\label{eq:tilt_shift}
	\theta^{\star} = -\sigma^2\, \bc^{(k)}\, D\valuefunc^{(k)}(t;\state),
\end{align}
so that samples are pushed one preconditioned gradient step along the descent direction of the running value iterate. The previous Picard iterate hands the sampler the drift it needs, at no extra cost. The resulting self-normalized estimators is monitored by the effective sample size, $\mathrm{ESS} \triangleq \left(\sum_i \tilde w_i\right)^2/\sum_i \tilde w_i^2$ with $\tilde w_i \triangleq w_{\theta}(\statey_i)e^{-\bc\bm g(\statey_i)}$: an $\mathrm{ESS}$ near $N$ certifies the expectations; an $\mathrm{ESS}$ collapse flags exactly the weight degeneracy described above. 

The tilted counterparts of Lemma \ref{lemma:grad_hj_sol} is stated as Corollary~\ref{cor:tilted_estimators} in \autoref{app:proofs}.
The quasi-linearization, essentially a Picard fixed-point iterative scheme, is stated in \autoref{alg:quasi_lin}. The heat-kernel expectation is estimated via Monte Carlo, while the resulting log-sum-exp estimator for the value function enjoys the sample guarantee given in \S \ref{sec:mc-conv}.

\subsection{Sampling Complexity and Convergence Guarantees}
\label{sec:mc-conv}

In this sub-section, we analyze the sample complexity of the scheme with coefficient $\bc$ frozen and then generate a conditional linear-convergence guarantee as a fixed-point iteration on a finite collection of evaluation states.

\begin{theorem}[Finite-sample concentration of the frozen-coefficient value estimator]
\label{thm:mc_complexity}

Fix phase $(t;\state) \in (0,T] \times \openset$ and a frozen coefficient $\bc > 0$ (we treat the case $\bc < 0$ and $\bc = 0$ later  in Remark~\ref{rem:coeff_sign}).
Let $\sigma \triangleq \sqrt{\delta t}, 
\bm{\zeta} \sim \mc{N}(\state, \delta tI_n)$, 
and assume that the terminal cost is bounded on the sampling support, i.e. there exist constants $g_{\min} \le g_{\max}$ such that $
 g_{\min} \le \bm g(\bm{\zeta}) \le g_{\max}$ for almost every $\bm{\zeta} \in \ren$.
 
Let $Z \triangleq \exp(-\bc \, \bm g(\bm{\zeta})),
\mu \triangleq \bb E[Z],
\valuefunc_{\bc}(t;\state) \triangleq -\frac{1}{\bc}\log \mu$.
Given i.i.d. samples $\bm{\zeta}_1,\ldots,\bm{\zeta}_N \sim \mc{N}(\state, \delta tI_n)$, let
$ Z_i \triangleq \exp(-\bc \, \bm g(\bm{\zeta}_i))$,
$\bar{Z}_N \triangleq \frac{1}{N}\sum_{i=1}^N Z_i$, and  
$\hat{\valuefunc}_{\bc,N}(t;\state) \triangleq -\frac{1}{\bc}\log \bar Z_N$.
Let $\alpha \triangleq e^{-\bc g_{\max}}, \, \beta \triangleq e^{-\bc g_{\min}}.$
Then $\alpha \le Z_i \le \beta$ almost surely, and for all $\varepsilon > 0$,
\begin{align}
\label{eq:mc-concentration}
\bb P&\left(\left|\hat{\valuefunc}_{\bc,N}(t;\state) - \valuefunc_{\bc}(t;\state)\right| \ge \varepsilon\right)
\le  2\exp\left(-\frac{2N\mu^2(1-\exp(-\bc\varepsilon))^2}{(\beta-\alpha)^2}\right).
\end{align}
\end{theorem}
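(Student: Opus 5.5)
The plan is to reduce the deviation of the log-estimator to a deviation of the sample mean $\bar Z_N$ from $\mu$, and then apply Hoeffding's inequality to the bounded i.i.d.\ variables $Z_i$.

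First, because $\bc>0$ and $g_{\min}\le \bm g\le g_{\max}$ almost everywhere, monotonicity of $s\mapsto e^{-\bc s}$ gives $\alpha\le Z_i\le\beta$ almost surely. Averaging yields $\alpha\le\mu\le\beta$, so $\mu>0$. The quantities $\valuefunc_{\bc}$ and $\hat\valuefunc_{\bc,N}$ are therefore well defined, since $\bar Z_N\ge\alpha>0$ almost surely.

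Next I translate the event on the value scale into an event on the $Z$ scale. We have $\hat\valuefunc_{\bc,N}-\valuefunc_{\bc} = -\frac{1}{\bc}\log(\bar Z_N/\mu)$. Hence $|\hat\valuefunc_{\bc,N}-\valuefunc_{\bc}|\ge\varepsilon$ holds if and only if $\bar Z_N/\mu \le e^{-\bc\varepsilon}$ or $\bar Z_N/\mu\ge e^{\bc\varepsilon}$. The first case is $\mu-\bar Z_N\ge \mu(1-e^{-\bc\varepsilon})$. The second is $\bar Z_N-\mu\ge \mu(e^{\bc\varepsilon}-1)$. The elementary inequality $e^{x}-1\ge 1-e^{-x}$ for $x\ge0$, which follows from $e^{x}+e^{-x}\ge 2$, shows $\mu(e^{\bc\varepsilon}-1)\ge \mu(1-e^{-\bc\varepsilon})\triangleq \eta$. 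So both tails lie inside the symmetric event $\{|\bar Z_N-\mu|\ge \eta\}$.

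Finally, Hoeffding's inequality for i.i.d.\ variables with range in $[\alpha,\beta]$ gives $\bb P(|\bar Z_N-\mu|\ge\eta)\le 2\exp(-2N\eta^2/(\beta-\alpha)^2)$. Substituting $\eta=\mu(1-e^{-\bc\varepsilon})$ yields \eqref{eq:mc-concentration} exactly. The degenerate case $\alpha=\beta$ has $Z_i\equiv\mu$, so the probability is zero, and the bound is read as trivially true with the convention that the right-hand side is $0$.

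The main obstacle is the asymmetry of the logarithm: the lower and upper tails correspond to different multiplicative thresholds. The delicate step is choosing the smaller threshold $1-e^{-\bc\varepsilon}$ so that a single two-sided Hoeffding bound dominates both tails. The rest is routine. I would also note that $\mu$ in the exponent is itself bounded below by $\alpha$, which gives a fully explicit, dimension-free rate if one is desired.
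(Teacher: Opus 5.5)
Your proof is correct and follows the paper's argument. Both reduce the log-scale event to the two multiplicative tails of $\bar Z_N$, apply Hoeffding for variables in $[\alpha,\beta]$, and use $e^{\bc\varepsilon}-1\ge 1-e^{-\bc\varepsilon}$ to keep only the smaller threshold. The one cosmetic difference is that you compare thresholds before invoking a single two-sided Hoeffding bound, whereas the paper applies two one-sided bounds and then compares the resulting exponentials; your treatment of the degenerate case $\alpha=\beta$ is a small addition the paper omits.
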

\begin{remark}
	\label{rem:which_bound}
	Since the sampling measure is the full Gaussian on $\ren$ (Remark~\ref{rem:admissible_data}), the two sides of $g_{\min} \le \bm g \le g_{\max}$ play different roles. The tail bound \eqref{eq:mc-concentration} uses only $\alpha \le Z_i \le \beta$, and for a bounded target set $\bm g \ge g_{\min}$ always holds, so $g_{\max} = +\infty$ with $\alpha = 0$ is admissible there and leaves the finite range $\beta - \alpha = e^{-\bc g_{\min}}$. 
\end{remark}
\begin{remark}[Tightness via Bernstein and Jensen bounds]
\label{rem:tightness-bernstein-jensen}
The Hoeffding bound in \eqref{eq:mc-concentration} is asymptotically tight but ignores variance. When $\mathrm{Var}(Z) \ll (\beta-\alpha)^2/4$, a Bernstein tail bound yields strictly tighter concentration. 
\end{remark}

Next, we introduce a corollary that  lower-bounds $\mu \ge \alpha$; it  needs a finite $g_{\max}$ over $\openset$ carrying the sampling mass. This is exactly the constant we do not control uniformly in $\delta$ (see  \autoref{sec:limitations}). Corollary~\ref{cor:mc_complexity} uses the loose lower bound $\mu \ge \alpha = e^{-\bc g_{\max}}$; when a sharper estimate (e.g., via Jensen, $\mu \ge e^{-\bc\bb{E}[\bm{g}(\bm\zeta)]}$) is available, the required sample size can be substantially reduced.

\begin{corollary}[Explicit sample size independent of $\mu$]
	\label{cor:mc_complexity}
	In particular, since $\mu \ge \alpha$, it is sufficient to choose
	\begin{align}
		\label{eq:mc-cor-N}
		N \ge \frac{(\beta-\alpha)^2}{2\alpha^2(1-\exp(-\bc\varepsilon))^2}\log\left(\frac{2}{\alpha}\right)
	\end{align}
	to guarantee $
	\bb P\left(\left|\widehat{\valuefunc}_{\bc,N}(t;\state) - \valuefunc_{\bc}(t;\state)\right| \ge \varepsilon\right) \le \alpha$.
\end{corollary}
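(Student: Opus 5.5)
The plan is to substitute the lower bound $\mu \ge \alpha$ into the Hoeffding-type tail bound \eqref{eq:mc-concentration} of Theorem~\ref{thm:mc_complexity}, and then solve for the $N$ that makes the resulting exponential at most $\alpha$.

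\emph{Step 1: $\mu \ge \alpha$.} Since $\alpha \le Z \le \beta$ almost surely (Theorem~\ref{thm:mc_complexity}, which needs $g_{\max}<\infty$ on the sampling support), monotonicity of expectation gives $\mu = \bb E[Z] \ge \alpha > 0$. The exponent in \eqref{eq:mc-concentration} is $-2N\mu^2(1-e^{-\bc\varepsilon})^2/(\beta-\alpha)^2$. It is nonincreasing in $\mu$ on $\mu>0$, because $\bc>0$ and $\varepsilon>0$ make $1-e^{-\bc\varepsilon}>0$. Hence
\begin{align*}
\bb P\left(\left|\widehat{\valuefunc}_{\bc,N}(t;\state)-\valuefunc_{\bc}(t;\state)\right|\ge\varepsilon\right) \le 2\exp\left(-\frac{2N\alpha^2(1-e^{-\bc\varepsilon})^2}{(\beta-\alpha)^2}\right).
\end{align*}

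\emph{Step 2: solve for $N$.} The right-hand side is at most $\alpha$ if and only if
\begin{align*}
\frac{2N\alpha^2(1-e^{-\bc\varepsilon})^2}{(\beta-\alpha)^2} \ge \log\left(\frac{2}{\alpha}\right).
\end{align*}
Rearranging this is exactly \eqref{eq:mc-cor-N}, so any $N$ satisfying \eqref{eq:mc-cor-N} gives the claimed probability guarantee.

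\emph{Step 3: degenerate cases.} If $\beta=\alpha$, then $Z$ is almost surely constant, so $\bar Z_N=\mu$ and the estimator is exact. In that case the failure probability is $0\le\alpha$ and the bound \eqref{eq:mc-cor-N} is read as requiring no samples. If $\alpha \ge 2$ (possible only when $g_{\max}<0$), then $\log(2/\alpha)\le 0$ and the requirement on $N$ is vacuous. This is consistent, since any probability is at most $1<\alpha$. In the meaningful regime $\alpha<1$, $\log(2/\alpha)>0$ and the bound is informative.

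There is no real obstacle. The only points needing care are the sign and monotonicity in Step 1, which is why $\bc>0$ matters (the other signs are deferred to Remark~\ref{rem:coeff_sign}), and the requirement that $g_{\max}$ be finite. Without a finite $g_{\max}$, $\alpha=0$ and the bound degenerates, in contrast with Remark~\ref{rem:which_bound}.
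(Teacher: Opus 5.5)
Your proposal is correct and follows the paper's proof: you get $\mu \ge \alpha$ from $Z \ge \alpha$ almost surely, substitute it into the tail bound of Theorem~\ref{thm:mc_complexity}, require the result to be at most $\alpha$, and solve for $N$ by taking logarithms. Your explicit monotonicity check and your treatment of the degenerate cases ($\beta=\alpha$, and large $\alpha$ where $\log(2/\alpha)\le 0$) are extra care that the paper omits, but the argument is the same.
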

\begin{remark}[The bound in \eqref{eq:mc-cor-N} is exponential in $1/\delta$]
	\label{rem:sample-bound-exponential}
	With $\bc=1/\delta$ (the exact quadratic case), $\beta/\alpha = \exp\!\big((g_{\max}-g_{\min})/\delta\big)$, so the right-hand side of \eqref{eq:mc-cor-N} grows as $\exp\!\big(2(g_{\max}-g_{\min})/\delta\big)$ \ie exponentially in $1/\delta$; while the Crandall-Lions viscosity error is only $O(\sqrt\delta)$: the accuracy gain from shrinking $\delta$ and the sample cost of Corollary~\ref{cor:mc_complexity} pull in opposite directions, and this bound does not certify a favorable trade-off $\delta$. As Remark~\ref{rem:tightness-bernstein-jensen} above notes, the Hoeffding bound is loose; \autoref{sec:limitations} quantifies this, stating why the sample counts used in practice (\autoref{tab:error_comparison}) are far below the worst case.
\end{remark}


\begin{assumption}
	\label{ass:grad_est}
	Let $V = \bb R^M$ with the sup norm $\|v\|_\infty = \max_{1\le m\le M}|v_m|$, where $v_m \approx \valuefunc(t;\state_m)$. Assume the following hold on a closed admissible set $\mc A \subset V$.
	\begin{enumerate}
		\item There exists $G > 0$ such that $|\bm g(\state)| \le G$ for all $\state \in \ren$.
		\label{ass:grad_est::Gpos}
		\item\label{item:grad_shocks} There exist constants $0 < m_0 \le P_*$ and $0 < c_{\min} \le c_{\max} < \infty$ such that for every $v \in \mc A$ and each $m$,
		\begin{align}
			\label{eq:grad_bounds}
			m_0 \le |p_m(v)| \le P_*,
			\qquad
			c_{\min} \le (\Gamma(v))_m \le c_{\max}.
		\end{align}
		\item The Hamiltonian is Lipschitz in the co-state on the ball $|p| \le P_*$: there exist constants $H_*, L_H > 0$ such that
		\begin{align}
			\label{eq:ham_lipschitz}
			|\hamfunc(t;\state_m,p)| \le H_*,
			\qquad
			|\hamfunc(t;\state_m,p)-\hamfunc(t;\state_m,q)| \le L_H |p-q|
		\end{align}
		for all $|p|,|q| \le P_*$ and all $m$.
		\item The reconstruction map is Lipschitz on $\mc A$: there exists $L_D > 0$ such that
		\begin{align}
			\label{eq:recon_lipschitz}
			\|\mc G(v)-\mc G(w)\|_\infty \le L_D\|v-w\|_\infty, \qquad \forall v,w \in \mc A.
		\end{align}
		\item\label{item:invariance} The admissible set is invariant under $\Lambda$, and the quantity
		\begin{align}
			\label{eq:contraction_const}
			q \triangleq \frac{2G}{c_{\min}}\cdot \frac{2L_D}{\delta}\left(\frac{L_H}{m_0^2} + \frac{2H_*P_*}{m_0^4}\right)
		\end{align}
		satisfies $0 \le q < 1$.
	\end{enumerate}
\end{assumption}

\begin{remark}[Remark on Assumption \ref{ass:grad_est}]
	\label{rem:assumption_scope}
	Note that item~\ref{item:grad_shocks} of Assumption~\ref{ass:grad_est}
	excludes regions where the value-gradient degenerates, such as flat reachable interiors
	or singular shocks. 
	States within the BRT are found from  the zero levelset $\tubeband \triangleq \{(t;\state) : |\valuefunc^\delta(t;\state)| \le \eta\}$ and for nondegenerate games the gradient is bounded away from zero there; thus, assumption~\ref{ass:grad_est} is a hypothesis on $\tubeband$, not on all of $\openset$. Outside the set, we run the scheme with the regularized coefficient of Lemma~\ref{lem:coeff_regularization} below, at a price the lemma quantifies. Furthermore, under item~\ref{item:invariance},
	the contraction constant $q$ may become unbounded in the inviscid limit, \ie as
	$\delta \rightarrow 0$ and/or $m_0 \rightarrow 0$. In practice, numerical stabilization
	mechanisms such as viscosity regularization, coefficient clipping, or solution
	``smearing''~\citep{LevelSetsBook} may be introduced to maintain stability of the
	discrete integration scheme.
\end{remark}

\begin{lemma}[Regularized coefficient as a uniform Hamiltonian perturbation]
	\label{lem:coeff_regularization}
	Let $\hamfunc$ be positively $1$-homogeneous in $p$ with $C_{\hamfunc} \triangleq \sup_{t,\state,|q|=1}|\hamfunc(t;\state,q)| < \infty$, and fix $\eta > 0$. Running Algorithm~\ref{alg:quasi_lin} with the regularized coefficient,
	\begin{align}
		\label{eq:coeff_regularized}
		\bc_\eta(t;\state) \triangleq \frac{2}{\delta}\cdot\frac{\hamfunc^\delta}{|D\valuefunc^\delta|^2 + \eta},
	\end{align}
	is equivalent to running it, unregularized, for the perturbed Hamiltonian $\hamfunc_\eta \triangleq \hamfunc\,|D\valuefunc^\delta|^2/(|D\valuefunc^\delta|^2 + \eta)$, which satisfies,
	\begin{align}
		\label{eq:coeff_perturbation_bound}
		\normop{\hamfunc_\eta - \hamfunc}_\infty \le \frac{C_{\hamfunc}\sqrt{\eta}}{2},
	\end{align}
	uniformly on $\openset \times [0,T]$; this includes the flat interior where $D\valuefunc^\delta \to 0$. Consequently, by Theorem~\ref{thm:robustness-H}, the induced value-function perturbation is at most $T C_{\hamfunc}\sqrt{\eta}/2$.
\end{lemma}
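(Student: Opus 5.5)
The plan has two parts: an algebraic identification of the regularized scheme with an unregularized one, followed by an elementary one-variable maximization.

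\textbf{Equivalence.} Write $p \triangleq D\valuefunc^\delta$ for the current iterate's gradient, so that $\hamfunc^\delta = \hamfunc(t;\state,p)$. By definition of $\hamfunc_\eta$,
\begin{align*}
\frac{2}{\delta}\cdot\frac{\hamfunc_\eta(t;\state,p)}{|p|^2} = \frac{2}{\delta}\cdot\frac{\hamfunc(t;\state,p)}{|p|^2+\eta} = \bc_\eta(t;\state)
\end{align*}
whenever $p\neq 0$. At $p=0$ we read both sides by continuity; this gives $0$, since $|\hamfunc(t;\state,p)|\le C_{\hamfunc}|p|$ by $1$-homogeneity. Each Picard step of Algorithm~\ref{alg:quasi_lin} uses the Hamiltonian only through the coefficient update $\bc^{(k+1)}$. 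The heat solve and the logarithmic recovery never see $\hamfunc$. Hence the regularized iterates coincide step for step with the unregularized iterates driven by $\hamfunc_\eta$.

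\textbf{Uniform bound.} Using $1$-homogeneity, $|\hamfunc(t;\state,p)| = |p|\,|\hamfunc(t;\state,p/|p|)| \le C_{\hamfunc}|p|$. Therefore
\begin{align*}
|\hamfunc_\eta-\hamfunc| = |\hamfunc|\,\frac{\eta}{|p|^2+\eta} \le C_{\hamfunc}\,\frac{\eta\, r}{r^2+\eta}, \qquad r=|p|\ge 0.
\end{align*}
Maximize $\phi(r)=\eta r/(r^2+\eta)$. By AM--GM, $r^2+\eta\ge 2r\sqrt{\eta}$, so $\phi(r)\le \sqrt{\eta}/2$, with equality at $r=\sqrt\eta$. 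This bound is uniform in $(t,\state)$ and in $p$, including the degenerate limit $p\to 0$, where $\phi\to 0$. That yields \eqref{eq:coeff_perturbation_bound}.

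\textbf{Value perturbation.} I would invoke Theorem~\ref{thm:robustness-H} (the Hamiltonian-robustness result in the appendix), which gives a value gap of at most $T\|\hamfunc_\eta-\hamfunc\|_\infty$ via comparison. The standard argument is that $\valuefunc\pm t\,\|\hamfunc_\eta-\hamfunc\|_\infty$ are super- and subsolutions of the perturbed equation. Combined with the bound above, this gives $T C_{\hamfunc}\sqrt\eta/2$.

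\textbf{Main obstacle.} The delicate point is that $\hamfunc_\eta$ depends on $D\valuefunc^\delta$, which is the very iterate being computed. So $\hamfunc_\eta$ is a state- and solution-dependent Hamiltonian rather than a fixed one. I would handle this by viewing $\hamfunc_\eta(t;\state,p)\triangleq\hamfunc(t;\state,p)\,|p|^2/(|p|^2+\eta)$ as a genuine function of the costate. It is continuous, and it inherits the sup bound uniformly in $p$. Theorem~\ref{thm:robustness-H} then applies to this honest Hamiltonian, and the pointwise bound holds for every $p$.
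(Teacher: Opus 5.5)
Your proposal is correct and follows the paper's proof. The equivalence is the same identity $\tfrac{\delta}{2}\bc_\eta|p|^2=\hamfunc_\eta$: the paper phrases it as the Hamiltonian whose algebraic residual $\bR_{\mathrm{alg}}$ from Lemma~\ref{lem:exact_residual} is annihilated by $\bc_\eta$, while you phrase it through the coefficient update. The bound is the same homogeneity estimate followed by maximizing $r\eta/(r^2+\eta)$ at $r=\sqrt{\eta}$, and the value estimate is the same appeal to Theorem~\ref{thm:robustness-H}. Your reading of $\hamfunc_\eta$ as a function of the costate $p$, so that the bound is uniform in $p$ as Theorem~\ref{thm:robustness-H} requires, is a clarification the paper leaves implicit.
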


\noindent\textbf{Reachability Hamiltonians: homogeneity, sign, and the flat interior.}
The  Hamiltonian $\hamfunc(t;\state,p) = \max_{\bm u}\min_{\bm w}\langle p, f(t;\state,\bm u,\bm w)\rangle$ is positively $1$-homogeneous in the co-state and changes sign across the barrier; the  quadratic case $\hamfunc = \tfrac{1}{2}|p|^2$ is thus a motivating limit. Regarding the coefficient in \eqref{eq:coeff_frozen}: it scales as $\bc \sim C_{\hamfunc}/(\delta|D\valuefunc^\delta|)$; hence, it grows without bound in flat regions, and vanishes (and flips sign) where the optimal dynamics run tangent to the level set. The first pathology is cured by regularizing the denominator; the next lemma shows the cure costs a \emph{uniform}, quantified price for  the Hamiltonian class in consideration.

\begin{remark}[Sign and the removable zero of the coefficient]
	\label{rem:coeff_sign}
	The concentration bound of Theorem~\ref{thm:mc_complexity} is stated for $\bc > 0$; for $\bc < 0$ it holds with $\alpha \triangleq \min\{e^{-\bc g_{\max}}, e^{-\bc g_{\min}}\}$ and $\beta$ the corresponding maximum. The zero of $\bc$ on the barrier is a \emph{removable} singularity of the estimator: as $\bc \to 0$, $-\tfrac{1}{\bc}\log\bb{E}[e^{-\bc \bm g}] \to \bb{E}[\bm g]$, \ie the log-sum-exp recovery degenerates gracefully to the Gaussian mean of the terminal data; this is the correct pure-diffusion limit, since $\hamfunc \to 0$ reduces \eqref{eq:ivp-viscous} to the heat equation. In implementation we evaluate the estimator in its $\log1p$-stable form and clip $|\bc|$ to $[c_{\min}, c_{\max}]$, the clipping being one more perturbation of the kind Lemma~\ref{lem:coeff_regularization} prices.
\end{remark}

\begin{theorem}[Contraction convergence of \autoref{alg:quasi_lin}]
\label{thm:algorithm_convergence}
Fix evaluation states $\state_1,\ldots,\state_M \in \openset$ and a time $t \in (0,T]$. Let $
\mc G: V \to (\openset)^M$ denote a stable deterministic gradient reconstruction  operator, which may arise from smoothing, interpolation, kernel regression, or variance-controlled Monte Carlo estimation. Further, write $p_m(v) \triangleq (\mc G(v))_m$.
For $v \in V$, define the coefficient-update map by $(\Gamma(v))_m \triangleq {2\hamfunc(t;\state_m,p_m(v))}/{(\delta |p_m(v)|^2)}, \, m=1,\ldots,M$,
and define the frozen-coefficient heat-kernel map as
\begin{align}
\label{eq:phi-def}
(\Phi(c))_m \triangleq -\frac{1}{c_m}\log \bb E_{\bm{\zeta} \sim \mc N(\state_m,\delta tI_n)}\left[\exp\left(-c_m\bm g(\bm{\zeta})\right)\right].
\end{align}
Let $\Lambda \triangleq \Phi \circ \Gamma$ so that  \autoref{alg:quasi_lin} is the iteration $v^{(k+1)} = \Lambda(v^{(k)})$.

Then $\Lambda$ is a contraction on $\mc A$. Consequently, $\Lambda$ has a unique fixed point $v^* \in \mc A$, and for every initial iterate $v^{(0)} \in \mc A$ the sequence generated by  \autoref{alg:quasi_lin} converges linearly to $v^*$ with
\begin{align}
\label{eq:linear-convergence}
\|v^{(k+1)}-v^*\|_\infty \le q\|v^{(k)}-v^*\|_\infty \le q^{k+1}\|v^{(0)}-v^*\|_\infty.
\end{align}
Moreover,
\begin{align}
\label{eq:gamma-lip-cauchy}
\|\Gamma(v^{(k)})-\Gamma(v^*)\|_\infty
&\le \frac{2L_D}{\delta}\left(\frac{L_H}{m_0^2} + \frac{2H_*P_*}{m_0^4}\right)\|v^{(k)}-v^*\|_\infty, \nonumber \\
\|v^{(k+1)}-v^{(k)}\|_\infty &\le q^k\|v^{(1)}-v^{(0)}\|_\infty,
\end{align}
and the \textit{a posteriori} error estimate
\begin{align}
\label{eq:aposteriori-error}
\|v^{(k)}-v^*\|_\infty \le \frac{q}{1-q}\|v^{(k)}-v^{(k-1)}\|_\infty
\end{align}
holds for every $k \ge 1$.
\end{theorem}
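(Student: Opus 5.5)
The approach is to factor $\Lambda = \Phi\circ\Gamma$ and bound the Lipschitz constants of the two maps separately, then apply the Banach fixed-point theorem on the closed invariant set $\mc A$ (closed in the complete space $(\bb R^M,\|\cdot\|_\infty)$, hence complete). The target is $\mathrm{Lip}(\Gamma)\le \frac{2L_D}{\delta}\bigl(\frac{L_H}{m_0^2}+\frac{2H_*P_*}{m_0^4}\bigr)$ on $\mc A$ and $\mathrm{Lip}(\Phi)\le 2G/c_{\min}$ on the box $[c_{\min},c_{\max}]^M$. Their product is exactly $q$ in \eqref{eq:contraction_const}.

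\emph{Step 1: the coefficient map.} Fix $m$ and $v,w\in\mc A$, and write $p=p_m(v)$, $r=p_m(w)$. Then $m_0\le|p|,|r|\le P_*$ by item~\ref{item:grad_shocks}. Split
\begin{align*}
\frac{\hamfunc(p)}{|p|^2}-\frac{\hamfunc(r)}{|r|^2}=\frac{\hamfunc(p)-\hamfunc(r)}{|p|^2}+\hamfunc(r)\,\frac{|r|^2-|p|^2}{|p|^2|r|^2}.
\end{align*}
The first term is at most $L_H|p-r|/m_0^2$ by \eqref{eq:ham_lipschitz}. For the second, use $\bigl||r|^2-|p|^2\bigr|\le(|p|+|r|)|p-r|\le 2P_*|p-r|$, which bounds it by $2H_*P_*|p-r|/m_0^4$. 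Multiplying by $2/\delta$ and using $|p-r|\le\|\mc G(v)-\mc G(w)\|_\infty\le L_D\|v-w\|_\infty$ from \eqref{eq:recon_lipschitz} gives the first line of \eqref{eq:gamma-lip-cauchy}, provided the factor $2$ from $|p|+|r|$ is absorbed as the constant is written. The bound is also consistent with the displayed constant if one keeps $\frac{2}{\delta}$ outside.

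\emph{Step 2: the heat-kernel map.} This is the step I expect to be delicate, since $\Phi$ acts coordinatewise and we need $|\partial_{c}\phi(c)|\le 2G/c_{\min}$ for $\phi(c)=-\frac1c\log\bb E[e^{-c\bm g}]$, $c\in[c_{\min},c_{\max}]$. Differentiating,
\begin{align*}
\phi'(c)=\frac{1}{c^2}\log\bb E[e^{-c\bm g}]+\frac{1}{c}\,\frac{\bb E[\bm g e^{-c\bm g}]}{\bb E[e^{-c\bm g}]}.
\end{align*}
Since $|\bm g|\le G$, we have $|\log\bb E[e^{-c\bm g}]|\le cG$, and the tilted mean has modulus at most $G$. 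Hence $|\phi'(c)|\le G/c+G/c\le 2G/c_{\min}$. The mean value theorem on the segment between $\Gamma(v)_m$ and $\Gamma(w)_m$, which lies in $[c_{\min},c_{\max}]$ by convexity, then gives $\|\Phi(\Gamma v)-\Phi(\Gamma w)\|_\infty\le \frac{2G}{c_{\min}}\|\Gamma v-\Gamma w\|_\infty$. A sharper bound is available, since $\phi'(c)=\frac{1}{c^2}\mathrm{KL}\ge 0$ in fact, but the crude one suffices to match $q$.

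\emph{Step 3: conclusion.} Combining Steps 1 and 2 gives $\|\Lambda v-\Lambda w\|_\infty\le q\|v-w\|_\infty$. With invariance $\Lambda(\mc A)\subset\mc A$ and $q<1$ (item~\ref{item:invariance}), Banach's theorem yields a unique fixed point $v^*\in\mc A$ and linear convergence \eqref{eq:linear-convergence}, because $\|v^{(k+1)}-v^*\|=\|\Lambda v^{(k)}-\Lambda v^*\|$. The successive-difference bound follows by iterating $\|v^{(k+1)}-v^{(k)}\|\le q\|v^{(k)}-v^{(k-1)}\|$. For \eqref{eq:aposteriori-error}, write
\begin{align*}
\|v^{(k)}-v^*\|\le\|\Lambda v^{(k-1)}-\Lambda v^*\|\le q\bigl(\|v^{(k-1)}-v^{(k)}\|+\|v^{(k)}-v^*\|\bigr)
\end{align*}
and rearrange.

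The main obstacle is the constant bookkeeping in Step 1: the factor from $|p|+|r|\le 2P_*$ has to line up with the stated constant. The structural argument itself is routine.
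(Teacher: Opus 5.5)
Your proposal is correct and follows the paper's proof essentially step for step. It uses the same add-and-subtract Lipschitz bound for $\Gamma$, the same derivative bound $|\phi'(c)|\le 2G/c\le 2G/c_{\min}$ for $\Phi$, and Banach's theorem on the closed invariant set $\mc A$. You rightly use $c_{\min}$, whereas the paper's intermediate display \eqref{eq:lipschitz_phi} misprints $c_{\max}$. Your one-step rearrangement reaches \eqref{eq:aposteriori-error} a little more directly than the paper's telescoping-series argument.

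The constant bookkeeping you flag is not an issue: the factor $2$ from $|p|+|r|\le 2P_*$ is exactly the $2$ in $2H_*P_*/m_0^4$. The only slip is in your unused aside, whose sign is reversed. In fact $\phi'(c)=-\mathrm{KL}/c^2\le 0$, since the soft-min $-\frac{1}{c}\log\bb E[e^{-c\bm g}]$ is nonincreasing in $c$.
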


\begin{remark}
\label{rem:convergence_scope}
Theorem \ref{thm:algorithm_convergence} is a convergence result for the \emph{frozen-coefficient numerical map} implemented by \autoref{alg:quasi_lin}. It shows conditional linear convergence under explicit regularity, nondegeneracy, and contraction hypotheses. It does not claim unconditional global convergence for arbitrary Hamiltonians, nor does it identify the fixed point with the exact solution of the original nonlinear HJ PDE: the two differ by the \emph{quasi-linearization defect} $\eqldefect$, which we bound in \autoref{app:errors}. The reason is structural: each Picard step freezes the coefficient $\bc^{(k)}$ and solves the linear heat equation \emph{exactly}, which is equivalent to replacing the true Hamiltonian by the locally-matched quadratic surrogate $\tilde\hamfunc^{(k)}=\tfrac{\delta}{2}\bc^{(k)}|p|^2$. Freezing annihilates the \emph{algebraic} part of the residual of Lemma~\ref{lem:exact_residual} but not the \emph{derivative} part $\bR_{\mathrm{der}}$, built from $\bc_t,D\bc,\Delta\bc$. Consequently the contraction converges to the fixed point of the surrogate map, and that fixed point equals the viscous solution only when $\bR_{\mathrm{der}}\equiv 0$ \ie the quadratic case $\hamfunc=\tfrac{1}{2}|p|^2$, where $\bc$ is the constant $1/\delta$. For a general Hamiltonian, $\eqldefect$ is governed by $L_c$, the variation rate of the converged coefficient, and is largest where $\hamfunc/|D\valuefunc|^2$ turns over sharply. 

To connect the discrete algorithm to the exact viscosity solution $\valuefunc$ of the original HJ PDE, see Theorem~\ref{thm:total-error} in \autoref{app:errors}, which combines the iteration error (Theorem~\ref{thm:convergence}), Monte Carlo error (Theorem~\ref{thm:mc-error}), the quasi-linearization defect (\S\ref{sec:defect-bound}), and the viscosity approximation error (Theorem~\ref{thm:viscosity-error}) to establish the total error bound,
\begin{align}
\label{eq:total-error-main}
\|\hat{\valuefunc}^{K,N,\delta} - \valuefunc\|_\infty \leq C_1 \rho^K + \frac{C_2}{\sqrt{N}}\sqrt{\log(1/\delta_p)} + \eqldefect + C_3\sqrt{\delta}.
\end{align}
This decomposition shows how all four error sources scale with iterations $K$, samples $N$, and viscosity $\delta$; only the first two and last terms shrink with computational effort. The residual, $\eqldefect$, is a property of the surrogate and is reduced only by the fidelity of the quasi-linearization itself.
\end{remark}

A safety certificate must not merely be accurate on average; it must never admit an unsafe state. The error budget \eqref{eq:total-error-main} converts directly into a one-sided guarantee by thresholding with margin.
\begin{corollary}[Conservative safety certification]
	\label{cor:conservative_cert}
	Let $E(K,N,\delta,\delta_p) \triangleq C_1\rho^K + \frac{C_2}{\sqrt{N}}\sqrt{\log(1/\delta_p)} + \eqldefect + C_3\sqrt{\delta}$ be the total error budget of \eqref{eq:total-error-main}, and adopt the convention that a state is \emph{unsafe} at time $t$ iff $\valuefunc(t;\state) \le 0$. Declare $\state$ \textup{\textsc{safe}} only if $\hat{\valuefunc}^{K,N,\delta}(t;\state) > E$, \textup{\textsc{unsafe}} only if $\hat{\valuefunc}^{K,N,\delta}(t;\state) < -E$, and \textup{\textsc{undetermined}} otherwise. Then, with probability at least $1-\delta_p$, no unsafe state is certified \textup{\textsc{safe}} and no safe state is certified \textup{\textsc{unsafe}}; all classification error is confined to the declared band $|\hat{\valuefunc}^{K,N,\delta}| \le E$ of width $2E$ about the reachability boundary. (Proof in \autoref{app:errors}.)
\end{corollary}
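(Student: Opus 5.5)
The plan is to reduce the claim to a single high-probability uniform error bound and then do a sign case analysis. By the total-error decomposition \eqref{eq:total-error-main} (Theorem~\ref{thm:total-error} in \autoref{app:errors}), there is an event $\mathcal{E}$ with $\bb P(\mathcal{E}) \ge 1-\delta_p$ on which
\begin{align*}
|\hat{\valuefunc}^{K,N,\delta}(t;\state) - \valuefunc(t;\state)| \le E(K,N,\delta,\delta_p)
\end{align*}
holds for every evaluation state. The deterministic terms $C_1\rho^K$, $\eqldefect$ and $C_3\sqrt{\delta}$ contribute surely. Only the Monte Carlo term is random, and its $\sqrt{\log(1/\delta_p)}$ dependence is exactly the inversion of the Hoeffding tail of Theorem~\ref{thm:mc_complexity} at confidence $\delta_p$. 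I will check that this event is uniform over the $M$ evaluation states, either because the union bound is already absorbed into $C_2$ or by replacing $\delta_p$ with $\delta_p/M$. The statement should be read with the budget taken on the event where the sup-norm bound holds.

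On $\mathcal{E}$, I argue pointwise by contrapositive. Suppose $\state$ is declared \textsc{safe}, so $\hat{\valuefunc} > E$. Then $\valuefunc \ge \hat{\valuefunc} - |\hat{\valuefunc} - \valuefunc| > E - E = 0$, so $\state$ is not unsafe. Symmetrically, suppose $\state$ is declared \textsc{unsafe}, so $\hat{\valuefunc} < -E$. Then $\valuefunc \le \hat{\valuefunc} + E < 0 \le 0$, so $\state$ is unsafe under the convention $\valuefunc \le 0$. Hence the only states that can be misclassified are those left \textsc{undetermined}, namely $|\hat{\valuefunc}| \le E$. Since $|\valuefunc| \le |\hat{\valuefunc}| + E$, this band also lies within the $2E$-neighbourhood of the zero level set of $\valuefunc$, which gives the localisation ``about the reachability boundary''.

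The only real obstacle is making sure the probabilistic statement in \eqref{eq:total-error-main} genuinely yields a simultaneous bound. Three points need checking.
\begin{enumerate}
\item The Monte Carlo error must be controlled across all iterations, since fresh samples are drawn at every Picard step (Remark~\ref{rem:eval_vs_mc_samples}). Its propagation through the contraction $\Lambda$ is handled inside Theorem~\ref{thm:total-error}, so I will simply invoke it.
\item The strict versus non-strict inequalities at the thresholds must be handled correctly; the strict inequalities in the declaration rule make both implications clean.
\item Assumption~\ref{ass:grad_est} must hold on the relevant region, or else the regularised-coefficient perturbation of Lemma~\ref{lem:coeff_regularization} must be folded into $\eqldefect$.
\end{enumerate}
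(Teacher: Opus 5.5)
Your proposal is correct and matches the paper's proof: both work on the probability-$(1-\delta_p)$ event of Theorem~\ref{thm:total-error}, where $\normop{\hat{\valuefunc}^{K,N,\delta}-\valuefunc}_\infty \le E$. On that event $\hat{\valuefunc}^{K,N,\delta} > E$ forces $\valuefunc > 0$ and $\hat{\valuefunc}^{K,N,\delta} < -E$ forces $\valuefunc < 0$, so misclassification can occur only on $|\hat{\valuefunc}^{K,N,\delta}| \le E$. Your concern about simultaneity over evaluation states and iterations is fair, since the Monte Carlo step of Theorem~\ref{thm:total-error} rests on a pointwise Hoeffding bound, but it concerns that theorem rather than this corollary; the paper simply takes the sup-norm event as given, exactly as you ultimately do.
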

The certificate errs on refusal, never on admission: shrinking $K^{-1}$, $N^{-1}$, and $\delta$ narrows the undetermined band, whilst the defect $\eqldefect$ sets its floor. This is, in our view, the precision with which a sampling-based reachability method should quote its accuracy \ie worst-case sign correctness with an explicit \textit{undetermined} region, rather than an $L^2$ average that flatters the interior. 

\subsection{Notes on Monte Carlo and Viscosity Approximation Errors}

The standard MC estimator for~\eqref{eq:ivp-viscous} converges at rate $O(1/\sqrt{N})$ where $N$ is the number of samples.  The variance depends on $\bc$ and the range of $\bm g$: when $\bc\,\|\bm g\|_\infty \gg 1$ (small $\delta$), the exponential weights become concentrated and the effective sample size
shrinks.  Importance sampling with a tilted proposal (e.g., Laplace
approximation around the mode of $e^{-\bm{cg}}$) can reduce variance. By \citep{Crandall1984}, $\|\bv^\delta - \bv\|_\infty \le k\sqrt{\delta}$ where $\bv$ is the inviscid viscosity solution.  Thus, smaller $\delta$ improves the approximation zccuracy; however, it increases the MC variance (with overly peaked  weights).  This creates a fundamental bias-variance trade-off controlled by $\delta$.

\section{Numerical Results}
\label{sec:results}
All experiments run on a single CPU of an Intel Core i7-14700K processor (20 physical cores, 28 threads, 33\,MiB shared L3 cache, max boost clock 5.6\,GHz) with 31\,GiB RAM running Ubuntu 22.04.  JAX runs on the CPU backend, consistent with the method's memory-frugality claims. The processor supports SIMD vectorization (AVX2, SSE4.2) which JAX and PyTorch leverage for batch operations. All reported metrics are averaged across $30$ independent trials with different conditional Monte Carlo seeds (evaluation points and the \texttt{LevelSetPy} reference are held fixed across trials, so that only the sampler's randomness varies); tables report mean $\pm$ one standard deviation, and comparisons across conditions use Holm-Bonferroni-corrected~\citep{Holm1979} significance tests at $\alpha=0.05$ (\S\ref{subsec:quant_results}). Figures show a single representative trial for visual clarity; the corresponding table in each subsection reports the full 30-trial statistics. The safety analysis on starlings murmurations is summarized in \S\ref{subsec:murmurations_main}, with the complete treatment given in Appendix \ref{app:starlings}. 
\subsection{Games of Two Vehicles on a Plane}
\label{subsec:rockets_launch}
We adopt the rockets launch problem~\citep{Dreyfus1966} (see \autoref{fig:rocket_relative})  and cast it as a terminal differential game between two identical rockets --- a  pursuer, $\pursuer$, and an evader, $\evader$ ---  on the $x$-$z$ cross-section of a Cartesian plane~\citep{LevelSetPy, LevelSetTOMS}. In tandem, we also validate the results with a two-player Dubins' vehicles problem. The two games are similar in setup and we only describe that of the rockets setup in this section. The game terminates when \textit{capture} occurs, \ie, the distance $\|\pursuer \evader\|$ becomes less than a prespecified scalar quantity. The states of $\pursuer$ and $\evader$ are denoted as $(x_p, x_e)$ respectively, driven by thrusts $(u_p, u_e)$ in the $(x,z)$-plane. The motion of $\pursuer$  relative to $\evader$'s  along  the $(\state$-$\bm{z})$ plane includes the relative orientation, the control input, shown in \autoref{fig:rocket_relative} as $\theta=u_p- u_e$. 
\begin{table}[tb]
	\centering
	\caption{Pursuit-evasion games: Monte Carlo vs.\ \texttt{LevelSetPy} error metrics on 2D $(x, z)$ slices, mean $\pm$ one standard deviation over 30 independent trials.  For every row, a Holm-Bonferroni-corrected one-sample test rejects $L^2_{\text{rel}} \ge \sqrt{\delta} = 0.283$ (the Crandall-Lions bound) at $p_{\text{holm}} < 10^{-50}$ (\S\ref{subsec:quant_results}).}
	\label{tab:error_comparison}
	\begin{tabular}{l c c c c c}
		\toprule
		System & $\theta$ (rad) & $L^\infty$ & $L^2_{\text{rel}}$ & MC time (s) & Iters \\
		\midrule
		\multirow{3}{*}{Rockets} & $-\pi/2$ & $0.855 \pm 0.047$ & $0.098 \pm 0.002$ & $13.5 \pm 0.1$ & 12 \\
		& $0$ & $1.034 \pm 0.067$ & $0.101 \pm 0.001$ & $13.5 \pm 0.1$ & 12 \\
		& $\pi/2$ & $0.895 \pm 0.109$ & $0.090 \pm 0.002$ & $13.6 \pm 0.1$ & 12 \\
		\midrule
		\multirow{3}{*}{Dubins} & $-\pi/2$ & $1.352 \pm 0.002$ & $0.131 \pm 0.003$ & $23.6 \pm 0.1$ & 15 \\
		& $0$ & $0.701 \pm 0.144$ & $0.024 \pm 0.001$ & $23.7 \pm 0.1$ & 15 \\
		& $\pi/2$ & $1.352 \pm 0.002$ & $0.132 \pm 0.003$ & $23.7 \pm 0.1$ & 15 \\
		\bottomrule
	\end{tabular}
\end{table}

Full derivation of the Hamiltonian, dynamics, and values setup is provided in Appendix~\ref{app:ham_derivation}.  As seen in the evaluations depicted in \autoref{fig:rockets_slices}, the asymmetry between $\theta=-90^\circ$ and $\theta=+90^\circ$ is physical, not numerical: the gravity term $(g-a-a\sin\theta)$ along $\bm{z}$ yields a $g-2a$  drift at $+90^\circ$; however, it yields $g$ at $-90^\circ$. A Holm-Bonferroni~\citep{Holm1979} corrected test over 30 Monte Carlo seeds shows this asymmetry for Rockets ($p_{\text{holm}}<10^{-9}$) but not for the gravity-free Dubins vehicle ($p_{\text{holm}}=0.53$; \autoref{tab:error_comparison} illustrates the results across 30 Monte Carlo trials on a CPU. MC uses $N=14{,}000$ samples per quasi-linear iteration for rockets and $N=20{,}000$ for Dubins vehicles; $\delta = 0.08$ throughout.  \S\ref{subsec:quant_results}).

\noindent \emph{Note that our method is not intended to outperform structured grid solvers in low-dimensional settings ($n \leq 4$), where methods such as~\citep{MoluxLevPyCDC, LevelSetPy, LevelSetTOMS} perform bettter. The advantage of this method is noticeable in higher dimensions where grid storage becomes prohibitive.}

\subsection{Dimensions Scaling: A 15 Rockets System in a Pursuit-Evasion Game}
\label{subsec:multiagent_scalability}

To validate the scalability of ~\autoref{alg:quasi_lin}, we consider a 15-rocket multi-pursuer single-evader game with  $45$ state dimensions --- an intractable high-dimensional system for grid-based solvers. The state space is  $\bx = (x_1, y_1, \theta_1, \ldots, x_{15}, y_{15}, \theta_{15}) \in \mathbb{R}^{45}$, where each agent $i \in \{1, \ldots, 15\}$ has position $(x_i, y_i)$, heading $\theta_i$, and control input $u_i \in [-1, +1]$ regulating the vehicle heading. The dynamics is,
\begin{align}
	\dot{x}_i = a_i \cos(\theta_i), \quad \dot{y}_i = a_i \sin(\theta_i), \quad \dot{\theta}_i = u_i,
	\label{eq:multiagent_dynamics}
\end{align}
with forward speeds $a_i \in \mathbb{R}_{>0}$. The evader (agent 15) seeks to escape the phase while capture occurs when any of 14 pursuers reach a distance $r_\text{capture} \le 1.5$ ft. We define the target set as $\phi(\bx) = \min_{i \leq 14} \|\bx_i^\text{pos} - \bx_{15}^\text{pos}\|_2 - r_\text{capture}$.

We test three pursuite-evasion strategies to examine relative agent capabilities on the reachable set:
\begin{inparaenum}[(i)]
	\item  \textbf{evader preeminence}: evader speed $a_\text{evader} = 2.0$ ft/s, pursuer speeds $a_\text{pursuers} = 1.0$ ft/s;
	\item  \textbf{balanced movements}: all agents possess equal speed $a = 1.0$ ft/s;
	\item  \textbf{pursuer preeminence}: evader speed $a_\text{evader} = 1.0$ ft/s, pursuer speeds $a_\text{pursuers} = 2.0$ ft/s.
\end{inparaenum}

A 101-point grid would require $101^{45} \approx 10^{90}$ grid cells using grid-based reachability solvers --- exceeding many traditional workstations' storage capacity. Our method operates at $O(N \cdot n) \approx 7.2$ MB per iteration, bypassing the state dimension exponent. The quasi-linear Picard iteration (Algorithm~\ref{alg:quasi_lin}) computes the relative change in value function between successive iterations; \autoref{tab:multiagent_results} reports the final relative residual, $\varepsilon = \|\valuefunc^{(k+1)} - \valuefunc^{(k)}\| / \|\valuefunc^{(k)}\|$ as a measure of iteration convergence.

\begin{table}[tb]
	\centering
	\caption{15-agent multi-pursuer single-evader game with three speed regimes on $n=45$.}
	\label{tab:multiagent_results}
	\begin{tabular}{l c c c c c}
		\toprule
		Case & $a_\text{evader}$ & $a_\text{pursuers}$ & Iterations & $\varepsilon(k)$ & Wall-clock (s) \\
		\midrule
		1. (Evader faster) & 2.0 & 1.0 & 15 & $0.0002 \pm 0.0005$ & $12.7 \pm 0.2$ \\
		2. (Equal speed) & 1.0 & 1.0 & 15 & $0.0003 \pm 0.0005$ & $12.6 \pm 0.1$ \\
		3. (Pursuers faster) & 1.0 & 2.0 & 15 & $0.0006 \pm 0.0013$ & $12.6 \pm 0.1$ \\
		\bottomrule
	\end{tabular}
\end{table}

In table \ref{tab:multiagent_results}, wall-clock times are reported on a single core described in \S\ref{sec:results}. The residual is the final relative $L^2$ change between Picard iterates. All three cases with residuals  below $10^{-2}$, showing  stability across different game parameters. Memory consumption is constant at 7.2 MB per iteration; and wall-clock variance across seeds reflects Monte Carlo sampling fluctuations, not systematic drift. At $n = 45$, no grid-based reference exists. The table reports measurable quantities only: iteration stability, memory, and wall-clock, demonstrating \emph{scalability}, not certified accuracy. With respect to the cost-per-sample, the control-affine dynamics with box-constrained inputs, the min-max Hamiltonian evaluates in closed form (the optimizers reduce to sign structures), so that each iteration costs $O(N \cdot n)$ arithmetic alongside the $O(N \cdot n)$ memory. Residual and wall-clock are reported with mean $\pm 1$  standard deviation over $30$ independent MC seeds; pairwise Holm-Bonferroni-corrected Mann-Whitney tests find no significant difference in the residual floor between any pair of $\pursuer$-$\evader$ games: $p_{\text{holm}} \ge 0.93$ for all three pairs, consistent with the floor being set by $(N,\delta)$ rather than by the game's speed parameters.

\subsection{Notes on Hyperparameters  in Pursuit-Evasion Games}
\label{subsec:quant_results}

We now discuss what we deem good practices for hyperparameters selection in pursuit-evasion games when using our sampling scheme.

\noindent\textbf{Statistical methodology.}
Each of the 9 experiments (3$\times$ each of rockets and Dubins  headings, and 3$\times$  of 45D multi-agent independent games) is re-solved with 30 MC seeds, holding the evaluation points and (where applicable) the \texttt{LevelSetPy} reference fixed so that only the sampler's randomness varies across trials. We report three families of significance tests, each Holm-Bonferroni-corrected~\citep{Holm1979} within its own family at $\alpha=0.05$:
\begin{inparaenum}[(a)]
	\item a paired Wilcoxon signed-rank test of the 30-seed-averaged Monte Carlo field against the grid reference, per condition, testing whether averaging away sampling noise closes the gap to the grid solution; 
	\item  a one-sided one-sample $t$-test that the 30 per-seed $L^2_{\text{rel}}$ draws lie below the Crandall-Lions bound $\sqrt{\delta}$, testing whether the sampling-plus-iteration error is within the theoretical viscosity budget; and 
	\item cross-condition Mann-Whitney $U$ tests (Rockets vs.\ Dubins at matched heading with asymmetrical headings in each system, and pairwise speed-regime comparisons for the 45D game), testing whether physically-motivated differences are statistically detectable. 
\end{inparaenum}
Holm-Bonferroni controls the family-wise error rate without the excess conservatism of a flat Bonferroni correction, appropriate given the mix of related hypotheses across experiments. Figures throughout this section show a single representative seed for visual clarity; the tables report the full 30-seed statistics.

\noindent\textbf{Viscosity parameter selection.}
The viscosity parameter $\delta$ controls the degree of smoothing in the Cole-Hopf transformation. From~\cite{CrandallTwoApprox}, the approximation error scales as $O(\sqrt{\delta})$. Thus, lower values of $\delta$ yield better  accuracy but require higher Monte Carlo sample counts. Across all experiments, we  set $\delta$ between $0.08$ and $0.1$ to balance the approximation error $O(\sqrt{\delta}) \approx 0.28$ with manageable variance in the log-sum-exp estimator. The choice $\delta \in [0.05, 0.2]$ is robust across all benchmark problems; practitioners should adjust based on the required accuracy-to-computation trade-off.

\noindent\textbf{Frozen-coefficient bias.}
The quasi-linearization freezes the coefficient $\bc^{(k)}$ over each Picard step introducing a systematic bias that is largest near the zero levelset --- this is where $\bc$ varies most rapidly. We address this in Theorem~\ref{thm:algorithm_convergence}, guaranteeing convergence to a fixed point of the frozen-coefficient map. \emph{Note that this is not convergence to the true viscous solution}.  The gap between the two is the quasi-linearization residual $\eqldefect$, bounded in Theorem~\ref{thm:defect} by the Duhamel norm of the discarded residual. It is proportional  to the variation rate $L_c$ of the converged coefficient and to how well the surrogate gradient tracks the true one. Empirically, this bias manifests as larger $L^\infty$ errors near the reachability boundary where the gradient is most active (see \autoref{fig:rockets_slices}). Our method mitigates it through the viscosity smoothing and iterative refinement of the frozen coefficient.

\noindent \textbf{Numerical realization}: We discretize the spatial and temporal domains as,
\begin{align}
	\label{eq:rocket_spatial_bounds}
	(x, z) \in (-100, +100], \quad \theta \in (-\pi/2, \pi/2], \quad t \in (0, 1],
\end{align}
with $1{,}000$ temporal grid points and spatial step $\state_k - \state_{k-1} = 10^{-2}$. 

We set the target set as the $\ell_2$-ball of capture radius $r = 1.5$ (in the relative $(x,z)$ coordinates) and run Algorithm~\ref{alg:quasi_lin} over the time interval $(0, 1]$ with Dirichlet boundary conditions,
\begin{align}
	\label{eq:rockets_dirichlet}
	\bm{v}^\delta(0; \bm{x}) = \bm{g}(0; \bm{x}) \triangleq 0.
\end{align}
Spatial gradients (co-states) are computed via \eqref{eq:hj_sol} and \eqref{eq:spatial_grad}; the Hamiltonian is evaluated iteratively using $N = 14{,}000$ Monte Carlo samples per iteration according to Algorithm~\ref{alg:quasi_lin}. 

\noindent\textbf{Convergence.}
Algorithm~\ref{alg:quasi_lin} runs to its configured iteration budget in every trial (12 for Rockets, 15 for Dubins, both $\le 20$), with the relative residual $\|\bv^{(k+1)}-\bv^{(k)}\|/\|\bv^{(k)}\|$ decaying over the early iterations before settling at a floor of $0.020$--$0.062$ (mean over 30 seeds per condition; \autoref{tab:error_comparison}). 
The $45$-dimensional game of \S\ref{subsec:multiagent_scalability} settles at $0.0002$--$0.0006$ (\autoref{tab:multiagent_results}), roughly two orders of magnitude \emph{below} the 3D benchmarks, whose pointwise errors concentrate where the coefficient develops shocks  near the usable-part boundary (\autoref{fig:rockets_slices}). The 45D evaluation states are drawn uniformly over the domain $[-100,100]$ per coordinate with a small likelihood of ending up near the boundary. 

\noindent\textbf{BRT geometry.}
Errors concentrate near the zero level-set boundary where $|\nabla v^\delta|$ is maximal, as expected in frozen-coefficient approximations~\citep{Crandall1992}.

\autoref{tab:error_comparison} reports point-wise error metrics ($L^\infty$ and $L^2_{\text{rel}}$) for 2D slices on the 3D Dubins and Rockets pursuit-evasion benchmarks across three representative heading angles, evaluated on a uniform $40 \times 40$ grid with reference from \texttt{LevelSetPy}'s $45^3$ solution interpolated to the same points.
The $L^\infty$ (point-wise maximum) error represents worst-case divergence,  useful where outlier errors matter. The relative RMS error, $L^2_{\text{rel}}$  weights by solution magnitude and captures overall accuracy. A caution on interpretation: the Crandall-Lions bound $O(\sqrt{\delta}) \approx 0.283$~\citep{Crandall1992} controls the sup-norm distance between the \emph{inviscid} and \emph{viscous} solutions, whereas the table's entries measure our Monte Carlo output against a \emph{grid} reference in different norms --- the two numbers live in different currencies. We nonetheless test the comparison the reader is tempted to make: a one-sided one-sample $t$-test, Holm-Bonferroni-corrected across the 6 conditions, rejects $L^2_{\text{rel}} \ge \sqrt{\delta}$ at $p_{\text{holm}} < 10^{-50}$ in every condition, so the sampling-plus-iteration error is statistically, not just numerically, well inside the viscosity budget. This does not mean the MC and grid fields agree in a stronger sense: a paired Wilcoxon signed-rank test of the 30-seed-\emph{averaged} MC field against the grid reference, per condition, also rejects equality at $p_{\text{holm}} < 10^{-8}$ everywhere --- i.e., averaging away the Monte Carlo sampling noise does \emph{not} make the fields agree, because a systematic quasi-linearization defect (Theorem~\ref{thm:defect}) remains. At $\theta=0$, the Dubins slice achieves $L^2_{\text{rel}} = 0.024 \pm 0.001$, well inside the smooth interior of the value function away from the usable-part boundary; the Rockets slice at the same heading, and both systems at $\theta=\pm\pi/2$, sit an order of magnitude higher ($L^2_{\text{rel}} \approx 0.09$--$0.13$), reflecting a heading-dependent share of the evaluation grid that falls near the coefficient-turnover region rather than a uniform accuracy floor. The $L^\infty$ errors ($\sim 0.7$--$1.4$) are larger still and driven by pointwise deviations near the zero level-set boundary where $|\nabla v^\delta|$ is maximal, as expected in frozen-coefficient approximations~\citep{Crandall1992} and as Theorem~\ref{thm:defect} predicts through the coefficient-variation constant $L_c$. For certification purposes, the honest report is that of Corollary~\ref{cor:conservative_cert}: sign-correctness outside an explicit abstention band about the boundary, with the band width set by the total error budget; the $L^2$ averages flatter the interior, and the $L^\infty$ figures are dominated by exactly the band the corollary declares undetermined. 
The 3D isosurface computation over $15{,}625$ sample points completes in $129.4$ seconds (rockets) on a single CPU, compared to $1.3$ seconds for the dense $45^3 = 91{,}125$-point \texttt{LevelSetPy} grid. The zero level-set boundary is extracted via marching cubes~\citep{Lorensen1987}. The wall-clock overhead of the Monte Carlo method reflects the sample count ($N = 14{,}000$ per iteration) necessary to balance variance in the Cole-Hopf estimator against quasi-linearization residuals, a tradeoff that becomes favorable in higher dimensions where grid storage becomes prohibitive.

\noindent\textbf{Scalability.}
~\autoref{alg:quasi_lin} requires a memory footprint per iteration of $O(N \cdot n)$ for $N$ samples per $n$ state dimensions. With $N = 14{,}000$ and $n=3$, each iteration allocates approximately $0.6$\,MB for samples and intermediate values. In contrast, the dense $45^3$ \texttt{LevelSetPy} grid requires approximately $1.5$\,MB for the value function and gradient storage. Each 2D $\theta$-slice completes in $14-26$ seconds on a single core, fully parallelizable across our evaluation computer 20 cores CPUs. This reduces aggregate wall-clock time proportionally with available core count and makes the method well-suited to parallel architectures (GPUs, HPC clusters) where grid-based methods become memory-prohibitive in dimensions $n \geq 5$. A clarification against a natural misreading: the \emph{solve} is grid-free --- no discretization of the state space is ever stored or marched (with marching cubes) --- whilst the uniform grids that appear in this section ($40 \times 40$ slices, $15{,}625$ isosurface points) are \emph{evaluation} grids, used only to visualize the tube and to measure errors against the grid-based reference at common points.

\subsection{Safety Certification at Population Scale: Starling Murmurations}
\label{subsec:murmurations_main}

As a stress test of the claim that safety certification need inherit neither the curse of dimensionality nor the curse of agent cardinality, we certify the collective behaviors of murmurations of European starlings (\textit{Sturnus vulgaris}), modeled as 4D aerial Dubins vehicles under attack by predators (\autoref{fig:murmurations}). We are explicit about the problem structure: the murmuration is partitioned into flocks, each flock is resolved by its \emph{own} value function, and the murmuration safe set is the aggregation of the flock-level zero sublevel sets --- we solve many coupled low-dimensional games, \emph{not} a single joint high-dimensional PDE; the  100,000 starlings multi-agent reachability analysis are executed in  parallel per-flock solves and per-bird evaluations of already-computed value functions. Two findings carry the study. First, \emph{scale}: because the sampler is grid-free, the value-function solve is independent of the bird count, so the population enters only through the parallel per-bird certification --- this is the operational content of the $O(N \cdot n)$ memory claim at the $100k$-agent regime. Second, \emph{topology as a safety instrument}: the certificates recover the field-documented repertoire of collective behaviors as topological events of the reachable set --- vacuole nucleation registers as a drop in the Euler characteristic $\chi$ when a predator penetrates the flock (Theorem~\ref{thm:vacuole_nucleation}), a defensive \emph{cordon} appears as an annular safe set with first Betti number $\beta_1 = 1$ enclosing a protected core (Proposition~\ref{prop:cordon}), and flock fragmentation registers as a rising connected-component count --- with threshold-crossing markers \eqref{eq:vacuole_marker}--\eqref{eq:frag_marker} detecting each transition in the numerical solutions. The triple $(\chi, \beta_1, n_{\mathrm{comp}})$ thus compresses the safety posture of the population into three integers per time step, telling an operator not merely \emph{that} safety is being lost but \emph{how}. The full formalism, per-behavior theorems, and detection protocol appear in Appendix \ref{app:starlings}.

%

\section{Conclusion}
\label{sec:conclude}

We have presented a quasi-linearized, frozen-coefficient sampling scheme for the viscous Hamilton-Jacobi PDE arising in safety analysis of dynamical systems. By applying a generalized Cole-Hopf-type transformation --- exact when $\hamfunc = \tfrac{1}{2}|p|^2$ and iteratively approximated via Picard quasi-linearization for general Hamiltonians --- the nonlinear HJ equation is reduced to a sequence of linear heat equations whose solutions are Gaussian heat-kernel expectations. The value function and its spatial gradient are recovered from these expectations via Monte Carlo sampling, yielding a grid-free algorithm with memory cost $O(N \cdot n)$ rather than the $O(M^n)$ cost of classical grid-based solvers.

 Theorem~\ref{thm:mc_complexity} provides a finite-sample concentration bound for the frozen-coefficient Monte Carlo estimator, establishing a standard $O(N^{-1/2})$ error rate under explicit boundedness conditions on the terminal cost. Theorem~\ref{thm:algorithm_convergence} establishes conditional linear convergence of the Picard fixed-point iteration under Lipschitz and nondegeneracy assumptions, with an explicit contraction constant and a posteriori error estimate. Neither result claims unconditional global convergence for arbitrary Hamiltonians; the gap between the iteration's fixed point and the viscous solution is the quasi-linearization defect, which Theorem~\ref{thm:defect} bounds by the Duhamel norm of the discarded residual, and Corollary~\ref{cor:conservative_cert} converts the resulting total error budget into a one-sided conservative safety certificate with a declared abstention band.
Numerical experiments on three benchmark reachability problems \ie a 3D rocket pursuit-evasion game, a 3D Dubins two-car game, and a double integrator plant (\autoref{app:results}) confirm stabilization within 20 Picard iterations, with a residual floor of $0.02-0.06$ set by the Monte Carlo sampling noise. We stress that the two paradigms are complementary rather than competing: because the $O(N \cdot n)$ sampling cost carries a nontrivial constant, grid-based solvers remain preferable in low dimensions ($n \le 4$) where their $O(M^n)$ memory is not yet the bottleneck; our scheme is the method of choice in the high-dimensional regime where grid storage becomes prohibitive. The method's failure modes are documented with equal candor in \autoref{sec:limitations}, which charts the scheme's boundary of applicability, including the exponential-in-$1/\delta$ worst-case sample complexity of Corollary~\ref{cor:mc_complexity}.

The connection to safe reinforcement learning, policy certification, and model-based control with learned dynamics suggests several directions for future work. These include adaptive importance sampling to reduce variance in high dimensions, extension to systems with stochastic dynamics, and tighter integration with deep learning pipelines for scalable safety certification of learned policies.

\bibliography{levtoms,biblio,infdiff26}

@book{vanleeuwenhoek1800select,
  author    = {Antony van Leeuwenhoek},
  title     = {The Select Works of Antony Van Leeuwenhoek, Containing His Microscopical Discoveries in Many of the Works of Nature},
  year      = {1800},
  note       = {Translated edition}
}

@inbook{Hatcher2002EulerChar,
    author = {Hatcher, Allen},
    title = {Algebraic Topology},
    publisher = {Cambridge University Press},
    year = {2002},
    chapter = {2},
    note = {Euler characteristic: $\chi(X) = \sum_{n=0}^{\infty} (-1)^n \text{rank}(H_n(X))$}
}

@article{Holm1979,
  author  = {Holm, Sture},
  title   = {A Simple Sequentially Rejective Multiple Test Procedure},
  journal = {Scandinavian Journal of Statistics},
  year    = {1979},
  volume  = {6},
  number  = {2},
  pages   = {65--70},
  doi     = {10.2307/4615733},
  jstor   = {4615733},
  mrnumber= {0538597}
}

@article{kharroubi2013numerical,
  title={A numerical algorithm for fully nonlinear HJB equations: an approach by control randomization},
  author={Kharroubi, Idris and Langren{\'e}, Nicolas and Pham, Huy{\^e}n},
  journal={arXiv preprint arXiv:1311.4503},
  year={2013}
}

@InProceedings{SafeRL,
	title = {Safe Reinforcement Learning Using Robust Action Governor},
	author = {Li, Yutong and Li, Nan and Tseng, H. Eric and Girard, Anouck and Filev, Dimitar and Kolmanovsky, Ilya},
	booktitle = {Proceedings of the 3rd Conference on Learning for Dynamics and Control},
	pages = {1093--1104},
	year = {2021},
	editor = {Jadbabaie, Ali and Lygeros, John and Pappas, George J. and Parrilo, Pablo A. and Recht, Benjamin and Tomlin, Claire J. and Zeilinger, Melanie N.},
	volume = {144},
	series = {Proceedings of Machine Learning Research},
	month = {07--08 Jun},
	publisher = {PMLR},
	url = {https://proceedings.mlr.press/v144/li21b.html}
}

@article{Bialek2012Statistical,
  title={Statistical mechanics for natural flocks of birds},
  author={Bialek, William and Cavagna, Andrea and Giardina, Irene and Mora, Thierry and Silvestri, Edmondo and Viale, Massimiliano and Walczak, Aleksandra M},
  journal={Proceedings of the National Academy of Sciences},
  volume={109},
  number={13},
  pages={4786--4791},
  year={2012},
  publisher={National Acad Sciences}
}

@article{Tsitsiklis95,
author = {Tsitsiklis, John N.},
journal = {IEEE Transactions on Automatic Control},
number = {9},
pages = {1528--1538},
title = {{Globally Optimal Trajectories}},
volume = {40},
year = {1995}
}

@article{YongsamFrontTracking,
  title={{Numerical Simulations of Two-Dimensional Foam by the Immersed Boundary Method}},
  author={Kim, Yongsam and Lai, Ming-Chih and Peskin, Charles S},
  journal={Journal of Computational Physics},
  volume={229},
  number={13},
  pages={5194--5207},
  year={2010},
  publisher={Elsevier}
}

@article{VOF,
  title={{A Multiple Marker Level-set Method for Simulation of Deformable Fluid Particles}},
  author={Balc{\'a}zar, N{\'e}stor and Lehmkuhl, Oriol and Rigola, Joaquim and Oliva, Assensi},
  journal={International Journal of Multiphase Flow},
  volume={74},
  pages={125--142},
  year={2015},
  publisher={Elsevier}
}

@inproceedings{berkenkamp2017safe,
	title     = {Safe Model-based Reinforcement Learning with Stability Guarantees},
	author    = {Berkenkamp, Felix and Turchetta, Matteo and Schoellig, Angela P. and Krause, Andreas},
	booktitle = {Advances in Neural Information Processing Systems 30},
	pages     = {908--918},
	year      = {2017},
	publisher = {Curran Associates, Inc.},
	url       = {https://papers.nips.cc/paper/6692-safe-model-based-reinforcement-learning-with-stability-guarantees}
}

@article{MDDDP,
  title={{Distributed Differential Dynamic Programming Architectures for Large-Scale Multi-Agent Control}},
  author={Saravanos, Augustinos D and Aoyama, Yuichiro and Zhu, Hongchang and Theodorou, Evangelos A},
  journal={arXiv preprint arXiv:2207.13255},
  year={2022}
}

@article{ZaitzeffVoronoi,
  title={{On the Voronoi Implicit Interface Method}},
  author={Zaitzeff, Alexander and Esedoglu, Selim and Garikipati, Krishna},
  journal={SIAM Journal on Scientific Computing},
  volume={41},
  number={4},
  pages={A2407--A2429},
  year={2019},
  publisher={SIAM}
}

@article{FoamingKarnakov,
  title={{Computing Foaming Flows Across Scales: From Breaking Waves to Microfluidics}},
  author={Karnakov, Petr and Litvinov, Sergey and Koumoutsakos, Petros},
  journal={arXiv preprint arXiv:2103.01513},
  year={2021}
}

@inproceedings{DeepReach,
	title={{Deepreach: A Deep Learning Approach to High-dimensional Reachability}},
	author={Bansal, Somil and Tomlin, Claire J},
	booktitle={2021 IEEE International Conference on Robotics and Automation (ICRA)},
	pages={1817--1824},
	year={2021},
	organization={IEEE}
}

@article{Lorensen1987,
	title={Marching cubes: A high resolution 3D surface construction algorithm},
	author={Lorensen, William E and Cline, Harvey E},
	journal={ACM SIGGRAPH Computer Graphics},
	volume={21},
	number={4},
	pages={163--169},
	year={1987},
	publisher={ACM}
}

@article{Crandall1992,
	title={User's guide to viscosity solutions of second order partial differential equations},
	author={Crandall, Michael G and Ishii, Hitoshi and Lions, Pierre-Louis},
	journal={Bulletin of the American Mathematical Society},
	volume={27},
	number={1},
	pages={1--67},
	year={1992}
}

@article{DarbonOsher2016,
  title={Algorithms for overcoming the curse of dimensionality for certain {H}amilton--{J}acobi equations arising in control theory and elsewhere},
  author={Darbon, J{\'e}r{\^o}me and Osher, Stanley},
  journal={Research in the Mathematical Sciences},
  volume={3},
  number={1},
  pages={19},
  year={2016},
  publisher={Springer}
}

@article{ChowDarbonOsherYin2017,
  title={Algorithm for Overcoming the Curse of Dimensionality For Time-Dependent Non-convex {H}amilton--{J}acobi Equations Arising From Optimal Control and Differential Games Problems},
  author={Chow, Yat Tin and Darbon, J{\'e}r{\^o}me and Osher, Stanley and Yin, Wotao},
  journal={Journal of Scientific Computing},
  volume={73},
  number={2},
  pages={617--643},
  year={2017},
  publisher={Springer}
}

@article{KirchnerHopf2018,
  title={Time-Optimal Collaborative Guidance Using the Generalized {H}opf Formula},
  author={Kirchner, Matthew R. and Mar, Robert and Hewer, Gary and Darbon, J{\'e}r{\^o}me and Osher, Stanley and Chow, Yat Tin},
  journal={IEEE Control Systems Letters},
  volume={2},
  number={2},
  pages={201--206},
  year={2018},
  publisher={IEEE}
}

@article{ChenHerbertDecomp2018,
  title={Decomposition of Reachable Sets and Tubes for a Class of Nonlinear Systems},
  author={Chen, Mo and Herbert, Sylvia L. and Vashishtha, Mahesh S. and Bansal, Somil and Tomlin, Claire J.},
  journal={IEEE Transactions on Automatic Control},
  volume={63},
  number={11},
  pages={3675--3688},
  year={2018},
  publisher={IEEE}
}

@article{Kappen2005PI,
  title={Linear Theory for Control of Nonlinear Stochastic Systems},
  author={Kappen, Hilbert J.},
  journal={Physical Review Letters},
  volume={95},
  number={20},
  pages={200201},
  year={2005},
  publisher={American Physical Society}
}

@article{TheodorouPI2010,
  title={A Generalized Path Integral Control Approach to Reinforcement Learning},
  author={Theodorou, Evangelos and Buchli, Jonas and Schaal, Stefan},
  journal={Journal of Machine Learning Research},
  volume={11},
  pages={3137--3181},
  year={2010}
}

@article{SummersLygeros2010,
  title={Verification of discrete time stochastic hybrid systems: A stochastic reach-avoid decision problem},
  author={Summers, Sean and Lygeros, John},
  journal={Automatica},
  volume={46},
  number={12},
  pages={1951--1961},
  year={2010},
  publisher={Elsevier}
}

@inproceedings{LesserOishiErwin2013,
  title={Stochastic reachability for control of spacecraft relative motion},
  author={Lesser, Kendra and Oishi, Meeko and Erwin, R. Scott},
  booktitle={52nd IEEE Conference on Decision and Control},
  pages={4705--4712},
  year={2013},
  organization={IEEE}
}

@article{VicsekPhaseNovel,
  title={Novel Type of Phase Transition in a System of Self-Driven Particles},
  author={Vicsek, Tam{\'a}s and Czir{\'o}k, Andr{\'a}s and Ben-Jacob, Eshel and Cohen, Inon and Shochet, Ofer},
  journal={Physical Review Letters},
  volume={75},
  number={6},
  pages={1226--1229},
  year={1995},
  publisher={American Physical Society}
}

@article{HJMAD,
  title={Global solutions to nonconvex problems by evolution of Hamilton-Jacobi PDEs},
  author={Heaton, Howard and Wu Fung, Samy and Osher, Stanley},
  journal={Communications on Applied Mathematics and Computation},
  volume={6},
  number={2},
  pages={790--810},
  year={2024},
  publisher={Springer}
}

@article{Chaudhari2018,
  title={Deep relaxation: partial differential equations for optimizing deep neural networks},
  author={Chaudhari, Pratik and Oberman, Adam and Osher, Stanley and Soatto, Stefano and Carlier, Guillaume},
  journal={Research in the Mathematical Sciences},
  volume={5},
  pages={1--30},
  year={2018},
  publisher={Springer}
}

@inproceedings{MoluxLevPyCDC,
  booktitle={IEEE 63rd Conference on Decision and Control (CDC)}, 
  title={The Python LevelSet Toolbox (LevelSetPy)}, 
  year={2024},
  author={Molu, Lekan},
  volume={},
  number={},
  pages={8938-8945},
  doi={10.1109/CDC56724.2024.10886640}}

@article{Dubins1957,
  title={On curves of minimal length with a constraint on average curvature, and with prescribed initial and terminal positions and tangents},
  author={Dubins, Lester E},
  journal={American Journal of mathematics},
  volume={79},
  number={3},
  pages={497--516},
  year={1957},
  publisher={JSTOR}
}

@article{BhatBernstein,
  title={Continuous finite-time stabilization of the translational and rotational double integrators},
  author={Bhat, Sanjay P and Bernstein, Dennis S},
  journal={IEEE Transactions on automatic control},
  volume={43},
  number={5},
  pages={678--682},
  year={1998},
  publisher={IEEE}
}

@book{EvansPDEBook,
  title={{Partial Differential Equations}},
  author={Evans, Lawrence C},
  volume={19},
  year={2022},
  publisher={American Mathematical Society}
}

@article{Wonham,
  title={Linear Multivariable Control: A Geometric Approach},
  author={Wonham, W Murray},
  journal={Applications of Mathematics},
  volume={10},
  year={1985}
}

@article{JadbabaieCoord,
  title={Coordination of groups of mobile autonomous agents using nearest neighbor rules},
  author={Jadbabaie, Ali and Lin, Jie and Morse, A Stephen},
  journal={IEEE Transactions on automatic control},
  volume={48},
  number={6},
  pages={988--1001},
  year={2003},
  publisher={IEEE}
}

@book{Bellman1957,
author = {Bellman, Richard},
booktitle = {Dynamic Programming},
isbn = {0-486-42809-5},
issn = {0036-8075},
pmid = {17730601},
publisher = {Princeton University Press},
title = {{Dynamic programming}},
year = {1957}
}

@misc{NatGeo,
  title={These birds flock in mesmerizing swarms of thousands—but why is still a mystery.},
  author={Haiken, Melanie},
  year={2021},
  url={https://tinyurl.com/4973byey},
  note={Accessed April 5, 2023},
}

@article{Helbing20,
  title={Simulating dynamical features of escape panic},
  author={Helbing, Dirk and Farkas, Ill{\'e}s and Vicsek, Tamas},
  journal={Nature},
  volume={407},
  number={6803},
  pages={487--490},
  year={2000},
  publisher={Nature Publishing Group}
}

@article {Ballerini1232,
  author = {Ballerini, M. and Cabibbo, N. and Candelier, R. and Cavagna, A. and Cisbani, E. and Giardina, I. and Lecomte, V. and Orlandi, A. and Parisi, G. and Procaccini, A. and Viale, M. and Zdravkovic, V.},
  title = {{interaction Ruling Animal Collective Behavior Depends On Topological Rather Than Metric Distance: Evidence From A Field Study}},
  volume = {105},
  number = {4},
  pages = {1232--1237},
  year = {2008},
  doi = {10.1073/pnas.0711437105},
  publisher = {National Academy of Sciences},
  issn = {0027-8424},
  URL = {https://www.pnas.org/content/105/4/1232},
  eprint = {https://www.pnas.org/content/105/4/1232.full.pdf},
  journal = {Proceedings of the National Academy of Sciences}
}

@article{Cavagna2010Scale,
title={Scale-fFee Correlations In Starling Flocks},
  author={Cavagna, Andrea and Cimarelli, Alessio and Giardina, Irene and Parisi, Giorgio and Santagati, Raffaele and Stefanini, Fabio and Viale, Massimiliano},
  journal={Proceedings of the National Academy of Sciences},
  volume={107},
  number={26},
  pages={11865--11870},
  year={2010},
  publisher={National Acad Sciences}
}

@inproceedings{CUPY,
  author       = "Okuta, Ryosuke and Unno, Yuya and Nishino, Daisuke and Hido, Shohei and Loomis, Crissman",
  title        = "CuPy: A NumPy-Compatible Library for NVIDIA GPU Calculations",
  booktitle    = "Proceedings of Workshop on Machine Learning Systems (LearningSys) in The Thirty-first Annual Conference on Neural Information Processing Systems (NIPS)",
  year         = "2017",
}

@article{LygerosReachability,
  title={On reachability and minimum cost optimal control},
  author={Lygeros, John},
  journal={Automatica},
  volume={40},
  number={6},
  pages={917--927},
  year={2004},
  publisher={Elsevier}
}

@techreport{Dreyfus1966,
  title={{Control Problems With Linear Dynamics, Quadratic Criterion, and Linear Terminal Constraints}},
  author={Dreyfus, Stuart E},
  year={1966},
  institution={Rand Corp, Santa Monica Calif}
}

@book{JacobsonMayne,
author={Jacobson, David H. and Mayne, David Q.},
title={{Differential Dynamic Programming}},
year={1970},
publisher={American Elsevier Publishing Company, Inc., New York, NY},
}

@article{Mitchell2005,
author = {Mitchell, Ian M. and Bayen, Alexandre M. and Tomlin, Claire J.},
issn = {00189286},
journal = {IEEE Transactions on Automatic Control},
number = {7},
pages = {947--957},
title = {{A Time-Dependent Hamilton-Jacobi Formulation of Reachable Sets for Continuous Dynamic Games}},
volume = {50},
year = {2005}
}

@article{LevelSetsBook,
author = {Osher, S and Fedkiw, R},
issn = {0003-6900},
journal = {Applied Mechanics Reviews},
number = {3},
pages = {B15--B15},
title = {{Level Set Methods and Dynamic Implicit Surfaces}},
volume = {57},
year = {2004}
}

@article{Merz1972,
  title={The game of two identical cars},
  author={Merz, AW},
  journal={Journal of Optimization Theory and Applications},
  volume={9},
  number={5},
  pages={324--343},
  year={1972},
  publisher={Springer}
}

@book{Isaacs1965,
  title={Differential Games: A Mathematical Theory with Applications to Warfare and Pursuit, Control and Optimization.},
  year={1999},
  author={Isaacs, R},
  adddress={Courier Corporation, 1999.},
  publisher={Kreiger, Huntigton, NY}
}

@article{Evans1984,
author = {Evans, L.C. and Souganidis, Panagiotis E.},
issn = {0022-2518},
journal = {Indiana Univ. Math. J},
number = {5},
pages = {773--797},
title = {{Differential Games And Representation Formulas For Solutions Of Hamilton-Jacobi-Isaacs Equations}},
volume = {33},
year = {1984}
}

@article{SethianVIIM,
author = {Saye, Robert I. and Sethian, James A.},
issn = {10916490},
journal = {{Proceedings of the National Academy of Sciences of the United States of America}},
number = {49},
pages = {19498--19503},
pmid = {22106269},
title = {{The Voronoi Implicit Interface Method for Computing Multiphase Physics}},
volume = {108},
year = {2011}
}

@article{SethianLSBook,
  title={{Level Set Methods And Fast Marching Methods: Evolving Interfaces In Computational Geometry, Fluid Mechanics, Computer Vision, And Materials Science}}, 
  author={Sethian, James A.},
  journal={Robotica},
  volume={18},
  number={1},
  pages={89--92},
  year={2000},
  publisher={{Cambridge University Press}}
}

@article{SethianFast,
  title={{A Fast Marching Level Set Method For Monotonically Advancing Fronts}},
  author={Sethian, James A},
  journal={Proceedings of the National Academy of Sciences},
  volume={93},
  number={4},
  pages={1591--1595},
  year={1996},
  publisher={National Acad Sciences}
}

@incollection{Sethian87Numerical,
  title={{Numerical Methods for Propagating Fronts}},
  author={Sethian, James A},
  booktitle={Variational methods for free surface interfaces},
  pages={155--164},
  year={1987},
  publisher={Springer}
}

@article{OsherFronts,
author = {Osher, Stanley and Sethian, James A.},
issn = {10902716},
journal = {{Journal of Computational Physics}},
number = {1},
pages = {12--49},
title = {{Fronts Propagating with Curvature-Dependent Speed: Algorithms based on Hamilton-Jacobi Formulations}},
volume = {79},
year = {1988}
}

@article{CrandallLaxFriedrichs,
  title={{Monotone Difference Approximations For Scalar Conservation Laws}},
  author={Crandall, Michael G and Majda, Andrew},
  journal={Mathematics of Computation},
  volume={34},
  number={149},
  pages={1--21},
  year={1980}
}

@article{CrandallMethodFracSteps,
author = {Crandall, Michael and Majda, Andrew},
issn = {0029599X},
journal = {Numerische Mathematik},
number = {3},
pages = {285--314},
title = {{The method of fractional steps for conservation laws}},
volume = {34},
year = {1980}
}

@article{Mitchell2001,
author = {Mitchell, Ian},
journal = {Dept. Aeronautics and Astronautics, Stanford Univ.},
number = {July},
pages = {1--29},
title = {{Games of two identical vehicles}},
year = {2001}
}

@article{Mitchell2020,
  title={{A Robust Controlled Backward Reach Tube with (Almost) Analytic Solution for Two Dubins Cars}},
  author={Mitchell, Ian},
  journal={{EPiC Series in Computing}},
  volume={74},
  pages={242--258},
  year={2020},
  publisher={EasyChair}
}

@article{Crandall1984,
author = {Crandall, M. G. and Evans, L. C. and Lions, P. L.},
issn = {00029947},
journal = {Transactions of the American Mathematical Society},
number = {2},
pages = {487},
title = {{Some Properties of Viscosity Solutions of Hamilton-Jacobi Equations}},
volume = {282},
year = {1984}
}

@article{Crandall1983viscosity,
  title={Viscosity solutions of Hamilton-Jacobi equations},
  author={Crandall, Michael G and Lions, Pierre-Louis},
  journal={Transactions of the American mathematical society},
  volume={277},
  number={1},
  pages={1--42},
  year={1983}
}

@book{Lions1982,
  title={Generalized solutions of Hamilton-Jacobi equations},
  author={Lions, Pierre-Louis},
  volume={69},
  year={1982},
  publisher={London Pitman}
}

@article{CrandallTwoApprox,
  title={{Two Approximations of Solutions of Hamilton-Jacobi Equations}},
  author={Crandall, Michael G and Lions, P-L},
  journal={{Mathematics of Computation}},
  volume={43},
  number={167},
  pages={1--19},
  year={1984}
}

@article{MitchellLSToolbox,
title={A Toolbox of Level Set Methods, version 1.0},
author={Mitchell, Ian},
journal={The University of British Columbia, UBC CS TR-2004-09},
pages={1-94},
year={2004},
month={July},
}

@misc{DAUVerif,
  title={Verification},
  author={{Defense Acquisition University}},
  url={https://www.dau.edu/tools/se-brainbook/Pages/Technical%20Processes/verification.aspx},
  note={Accessed April 5, 2023},
  year={2023}
}

@inproceedings{LevelSetPy,
  author={Molu, Lekan},
  booktitle={2024 IEEE 63rd Conference on Decision and Control (CDC)}, 
  title={{The Python LevelSet Toolbox (LevelSetPy)}}, 
  year={2024},
  volume={},
  number={},
  pages={8938-8945},
  doi={10.1109/CDC56724.2024.10886640}
}

@article{LevelSetTOMS,
  author={Molu, Lekan},
  journal={The ACM Transactions on Mathematical Software}, 
  title={{LevelSetPy: A GPU-Accelerated Package for Hyperbolic Hamilton-Jacobi Partial Differential Equations' Solubility}}, 
  year={2025},
}
\bibliographystyle{abbrvnat}

\etocdepthtag.toc{appendix}

\clearpage
\phantomsection
\section*{Appendices}


\begingroup
\etocsettagdepth{main}{none}
\etocsettagdepth{appendix}{subsubsection}

\etocsetstyle{section}
  {\vspace{0.8em}}
  {}
  {\noindent\textbf{\etocnumber\quad\etocname}\nobreak\dotfill\nobreak\textbf{\etocpage}\par\vspace{2pt}}
  {}

\etocsetstyle{subsection}
  {}
  {}
  {\noindent\hspace{1.8em}\etocnumber\enspace\etocname\nobreak\dotfill\nobreak\etocpage\par\vspace{1pt}}
  {}

\etocsetstyle{subsubsection}
  {}
  {}
  {\noindent\hspace{3.6em}\etocname\nobreak\dotfill\nobreak\etocpage\par}
  {}

\tableofcontents
\endgroup

\newpage
\appendix

\section{Background and Preliminaries.}
\label{app:back_unabridged}

\newcounter{appidx}
\setcounter{appidx}{1}
\renewcommand\theequation{\Alph{appidx}.\arabic{equation}}

\setcounter{equation}{0}


We first introduce the notations that are commonly used throughout the article. Reachable sets within the context of  two person games~\citep{Isaacs1965} and their accompanying  ``viscous" terminal HJ \pde~\citep{Evans1984}\ are then  introduced. This is followed by the HJ-Isaacs (HJI) PDEs commonly used to characterize reachable sets.  

\subsection{Notations and Terminologies}
\label{sec:back::notations}
This appendix builds up, in one place and for a reader outside the reachability subfield, the chain that the main text relies on: from the two-player differential game and its payoff, through the upper and lower \emph{values} of that game, to the Hamilton-Jacobi-Isaacs (HJI) PDE those values solve, and finally to the backward reachable tube \eqref{eq:rcbrt} whose zero level set is the safety certificate we compute. Each item is introduced only where it is first needed.

Conventions: Upper-case and lower-case bold Roman letters are matrices and vectors, respectively; calligraphic letters are sets. Time variables \eg $t_0, t, \tau, T$ are real. The state $\state$ lives in an open set $\openset \subseteq \ren$, with boundary $\partial\openset$ and closure $\bar{\openset}$; the initial/terminal data of the HJ PDE is prescribed over all of $\openset$ (not merely on $\partial\openset$). Two domains must be kept apart. We use $\openset$ only for the viscosity-solution and level-set discussion, \ie as the working region on which the HJ PDE is studied and from which the zero level set of $\valuefunc^\delta$ is extracted. Every heat-kernel representation and every Monte Carlo estimator in this paper is instead posed on all of $\ren$, since the free-space Gaussian kernel is the fundamental solution of the Cauchy problem on $\ren$ and has unit mass over $\ren$ alone~\citep{EvansPDEBook}; restricting such an integral to a proper subdomain would require a reflected Green's function, which we never use. The bounded boxes of \autoref{sec:results} are evaluation windows for isocontouring and error measurement, \ie they are not sampling constraints. We write $\langle \cdot, \cdot \rangle$ for the dot product, $D\bm w$ for the spatial gradient of $\bm w(\state,t)$, and $\nabla \bm w := (D \bm w, \bm w_t)$ for its full space-time gradient; where the differentiation variable matters we write $D_y w$. Two scalar costs recur and we differentiate them in notation: $\sdf(\state)$ (equivalently $\term$) is a \emph{signed-distance function of the state} whose zero sublevel set is the target set, whereas $\mc{J}(\bm u,\bm w)$ is the game's \emph{payoff functional of the control signals}, $(\bm{u}, \bm{w})$; the value functions $\lowervalue,\uppervalue$ are game-optimal values of $\mc J$; they reduce to $\sdf$ at the terminal time.

The dynamical system $\dot{\state}(\tau) = f(t; \state,  \control, \bm{w})$ is influenced by a pursuing player $\pursuer$ (with control $\bm{w} \in \mc{W}  \subseteq \reline^p$) and its evading pair $\evader$ (with control $\bm{u} \in \mc{U} \subseteq \bb{R}^m$). We write $\traj_{\state, t}^{\control, \bm{w}}(\tau) \in \ren$ for its unique state trajectory at time $\tau$ starting from state $\state$ at time $t$ under the control-disturbance pair $(\control,\bm{w})$; this same trajectory is abbreviated $\bm{\xi}(\tau)$ wherever the initial phase and inputs are clear from context (as in \eqref{eq:rcbrt} below). Let the game payoff that records the closest a trajectory comes to the target set  within a two-person differential game over the horizon $t \le \tau \le T$ be
\begin{align}
	\mc{J}(\tau; \state, \control, \bm{w}) = \min_{\tau \in [t, T]} \sdf(\traj_{\state,t}^{\control, \bm{w}}(\tau)),
	\tag{Payoff function}
	\label{eq:distance}
\end{align} 
 where $\sdf$ is a signed distance to the (boundary of the) target set/tube: negative inside the target, positive outside, and zero on its boundary. The controllers belong in compact sets which are  measurable functions \ie $\bar{\mc{U}} \equiv \bm{u}: [t, T] \rightarrow \mathcal{U}, \, \bar{\mc{W}} \equiv  \bm{w}: [t, T] \rightarrow \mathcal{W}$.

\begin{tcolorbox}[title=\textbf{HJ Sign and Time Conventions}, colback=yellow!3, colframe=orange!70, fonttitle=\small, fontlower=\small]
	\small
Note that  \eqref{eq:ivp-viscous} is a well-posed forward parabolic problem and the heat kernel of its linearization at horizon $t$ has covariance $\delta t\, I_n$, degenerating to the target data as $t \downarrow 0$. The dictionary to \emph{physical} time is $t_{\mathrm{phys}} = T - t$: a reader who carries physical time throughout should read every kernel covariance $\delta t\,I_n$ below as $\delta(T - t_{\mathrm{phys}})I_n$, the familiar $(T-t)$ form, in which the terminal data sits at $t_{\mathrm{phys}} = T$ and the PDE reads $\valuefunc_{t_{\mathrm{phys}}} + \hamfunc + \tfrac{\delta}{2}\Delta\valuefunc = 0$ --- the sign of the Laplacian flips with the direction of time, as it must for parabolic well-posedness. Negative signs appearing in the Hamiltonian expressions in this  ~\autoref{app:back_unabridged} arise from the backward-time transformation~\citep{MitchellLSToolbox} and the min-over-max structure of reachability propagation. This convention is standard in reachability literature~\citep{Mitchell2005} and is consistent with the viscosity  solution.
\end{tcolorbox}
%
%
%

\subsection{Dynamic Programming and Two-Person Games.}

\noindent The formal relationships between the dynamic programming (DP) optimality condition for the \textit{value} in differential two-person zero-sum games, and the solutions to PDEs that solve ``min-max" or ``max-min" type nonlinearity (the Isaacs' equation) were presented in~\citep{Isaacs1965}. Essentially, Isaacs' claim was that if the \textit{value} functions are smooth enough, then they solve certain first-order partial differential equations (PDE) problems with  ``max-min" or ``min-max"-type nonlinearity.  However, the DP value functions are seldom regular enough to admit a solution in the classical sense.  ``Weaker" solutions, on the other hand~\citep{Lions1982, Evans1984, Crandall1984, CrandallLaxFriedrichs}, provide generalized ``viscosity" solutions to HJ PDEs under relaxed regularity conditions; these viscosity solutions are not necessarily differentiable anywhere in the state space, and the only regularity prerequisite in the definition is continuity~\citep{Crandall1983viscosity}. However, wherever they are differentiable, they satisfy the  upper and lower values of HJ PDEs (discussed in \autoref{app:back_upper_lower}) in a classical sense. Thus, they lend themselves well to many real-world problems existing at the interface of discrete, continuous, and hybrid systems~\citep{LygerosReachability, OsherFronts, Mitchell2020, Evans1984, Mitchell2005}. 

Matter-of-factly, viscosity solutions to \textit{Cauchy-type}\footnote{Cauchy-type HJ equations are time-dependent versions of the HJ PDE.} HJ Equations are highly useful in backward reachability analysis~\citep{Mitchell2005}. 
For a state $\state \in \openset$ and a fixed time $t$: $0 \le t < T$, suppose that the set of all controls for players $\pursuer$ and $\evader$ are respectively
\begin{align}
	\mathcal{\bar{U}} &\equiv \{\bm{u}: [t, T] \rightarrow \mathcal{U} | \bm{u} \text{ measurable}, \, \mathcal{U} \in \bb{R}^m \}, \quad
	\mathcal{\bar{W}} \equiv \{\bm{w}: [t, T] \rightarrow \mathcal{V} | \bm{v} \text{ measurable},  \,\mathcal{W} \subset \reline^p\}.
\end{align}
\noindent Consider the  differential equation,
%
\begin{align}
	\dot{\state}(\tau) = f(\tau, \state(\tau), \bm{u}(\tau), \bm{w}(\tau)), \,\,
	\state(t) = \state,  \,\, t \le \tau \le T
	\label{eq:sys_dyn}
\end{align}
%
\noindent where $f(\tau, \cdot, \cdot, \cdot)$ and $\state(\cdot)$ are bounded and Lipschitz continuous. This bounded Lipschitz continuity property assures uniqueness of the system response $\state(\cdot)$ to controls $\bm{u}(\cdot)$ and $\bm{v}(\cdot)$~\citep{Evans1984}. 
Associated with \eqref{eq:sys_dyn} is the payoff functional 
\begin{align}
		\mc{J}(t; \state, \bm{u}(\cdot), \bm{w}(\cdot))  \,
	&:=\mc{J}(\bm{u}, \bm{w})  \equiv \int_{t}^{T} l(\tau, \state(\tau), \bm{u}(\tau), \bm{w}(\tau)) d\tau + \bm{g}(\state(T)),
	\label{eq:payoff}
\end{align}
where $\bm{g}(\cdot): \bb{R}^n \rightarrow \bb{R}$ 
satisfies
%
\begin{align}
	| \bm{g}(\state) | \le k_1, \quad 
	| \bm{g}(\state) - \bm{g}(\hat{\state}) \mid \le k_1 | \state - \hat{\state} \mid
\end{align}
%
and $l:[0, T] \times  \reline^n \times \mathcal{U} \times \mc{W} \rightarrow \bb{R}$ is bounded and uniformly continuous, with
%
\begin{align}
	\mid l(t; \state, \control, \disturb) \mid \le k_2, \,
	\mid l(t; \state, \control, \disturb)  -  l(t; \hat{\state}, \control, \disturb) \mid  \le k_2 \mid \state - \hat{\state} \mid
\end{align}
%
for constants $k_1, k_2$ and all 
$0 \le t \le T$, $\hat{\state}, \, \state \in \reline^n$, $\bm{u}\in \mathcal{U}$ and $\disturb \in \mathcal{W}$. 
We call $T$ the \textit{terminal time} (it may be infinity!) and the integral, when it does not depend on the control laws, is the \textit{performance index}. The evader's goal is to maximize the payoff \eqref{eq:payoff} and pursuer's goal is to minimize it. 

\subsection{Lower Values of the Differential Game.}
\label{app:back_upper_lower}

\noindent Suppose that the pursuer's mapping strategy (starting at $t$) is $\beta: \mathcal{\bar{U}}({t}) \rightarrow \mathcal{\bar{W}}({t})$ provided for each $t \le \tau \le T$ and $\bm{u}, \hat{\bm{u}} \in \mathcal{\bar{U}}({t})$; then $\bm{u}(\bar{t}) = \hat{\bm{u}}(\bar{t}) \,\, \text{ a.e. on } t \le \bar{t}  \le \tau$ implies $\beta[\bm{u}](\bar{t}) = \beta[\hat{\bm{u}}](\bar{t}) \,\, \text{ a.e. on } t \le \bar{t}  \le \tau$.
The differential game's lower value for a solution $\state(t)$ that solves \eqref{eq:sys_dyn} for $\bm{u}(t)$ and $\bm{v}(t) = \beta[\control](\cdot)$ is 
\begin{align}
	&\lowervalue(t; \state) = \inf_{\beta \in \mathcal{B}(t)} \sup_{\bm{u} \in \mathcal{U}(t)} \mc{J}(\bm{u}, \beta[\bm{u}])  \,
	\triangleq \inf_{\beta \in \mathcal{B}(t)} \sup_{\bm{u} \in \mathcal{U}(t)} 
	\int_{t}^{T} l(\tau, \bm{x}(\tau), \bm{u}(\tau), \beta[\bm{u}](\tau)) d\tau + \bm{g}\left(\bm{x}(T)\right).
	\label{eq:value_lower}
\end{align}

Similarly, suppose that  the evader's mapping strategy (starting at $t$) is $\alpha: \mathcal{\bar{W}}({t}) \rightarrow \mathcal{\bar{U}}({t})$ provided for each $t \le \tau \le T$ and $\bm{w}, \hat{\bm{w}} \in \mathcal{\bar{W}}({t})$; then  $\bm{w}(\bar{t}) = \hat{\bm{w}}(\bar{t}) \,\, \text{ a.e. on } t \le \bar{t}  \le \tau$ implies $\alpha[\bm{w}](\bar{t}) = \alpha[\hat{\bm{w}}](\bar{t}) \,\, \text{ a.e. on } t \le \bar{t}  \le \tau$. The differential game's upper value for a solution $\bm{x}(t)$ that solves \eqref{eq:sys_dyn} for $\bm{u}(t) = \alpha[\bm{w}](\cdot)$ and $\bm{w}(t)$  is 
\begin{align}
	&\uppervalue(t; \state) = \sup_{\alpha \in \mathcal{A}(t)} \inf_{\bm{w} \in \mathcal{W}(t)}  \mc{J}(\alpha[\bm{w}], \bm{w})  \
	\triangleq  \sup_{\alpha \in \mathcal{A}(t)} \inf_{w \in \mathcal{W}(t)} 
	\int_{t}^{T}l(\tau, \bm{x}(\tau), \alpha[\bm{w}](\tau), \bm{w}(\tau)) d\tau \nonumber \\
	&\qquad \qquad \qquad+ \bm{g}\left(\bm{x}(T)\right).
	\label{eq:value_upper}
\end{align}

The non-local PDEs (\eqref{eq:value_lower} and \eqref{eq:value_upper} are hardly smooth throughout the state space so that they lack classical solutions even for smooth Hamiltonian and boundary conditions. However, the values are ``viscosity" (generalized)  solutions~\citep{Lions1982, Crandall1983viscosity} of the respective HJ-Isaacs (HJI) PDEs, \ie solutions which are \textit{locally Lipschitz} in $\openset \times [0, T]$, and with at most first-order partial derivatives in the Hamiltonian. In backward reachability, we are mostly concerned with the lower value of the differential game for resolving the associated backward reachable sets and tubes. 

\subsection{Viscosity Solution of HJ-Isaac's Equations.}
\label{subsec:visc}

\noindent  
For any optimal control problem a value function is constructed based on the optimal cost (or payoff) of any input phase $(\state, T)$.  In reachability analysis, typically this is defined using a terminal cost function $g(\cdot): \bb{R}^n \rightarrow \bb{R}$ that satisfies 
	\begin{align}
		| \bm{g}(\state) | \le k, \,\,
		| \bm{g}(\state) - \bm{g}(\hat{\state}) \mid \le k | \state - \hat{\state} \mid
	\end{align}

for constant $k$ and all 
$0 \le t \le T$, $\hat{\state}, \, \state \in \reline^n$, $\bm{u}\in \mathcal{U}$ and $\disturb \in \mathcal{W}$.  The zero sublevel set of $\bm{g}(\state)$ \ie
\begin{align}
	\mathcal{L}_0 = \{ \state \in \bar{\Omega} \,|\, \bm{g}(\state) \le 0 \},
	\label{eq:target_set_syk}
\end{align}

%
\begin{lemma}
	The lower value $\lowervalue$ in \eqref{eq:value_lower} is the viscosity solution to the lower Isaac's equation 
	%
		\begin{align}
			&\lowervalue_t + \lowerham (t; \state, \bm{u}, \bm{w}, D\lowervalue) = 0, \,\, t\in \left[0, T\right]\, \state \in \ren,  \quad
			\lowervalue(T; \state) = \bm{g}(\state(T)), \quad \state \in \reline^m
			\label{eq:lower_visc}
		\end{align}
	%
	with lower Hamiltonian, 
	\begin{align}
		&\lowerham (t; \state, \bm{u}, \bm{w}, p) = \max_{u \in \mathcal{U}} \min_{w \in \mathcal{W}} \, \langle \bm{f}(t; \state, \bm{u}, \bm{w}), p  \rangle.
		\label{eq:lower_visc_ham}
	\end{align}
where $p$, the co-state, is the spatial derivative of $\lowervalue$ w.r.t $\state$.
	\label{lemma:lower_visc_lemma}
\end{lemma}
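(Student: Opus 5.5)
The plan follows the Evans--Souganidis route: derive a dynamic programming principle for $\lowervalue$, establish its regularity, and then check the sub- and supersolution inequalities with smooth test functions. The argument is deterministic and does not use the heat-kernel machinery of \S\ref{sec:methods}.

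\textbf{Step 1 (dynamic programming).} For $t<t+h\le T$ we prove
\begin{align*}
\lowervalue(t;\state)=\inf_{\beta\in\mathcal B(t)}\sup_{\bm u\in\mathcal U(t)}\Big\{\int_t^{t+h} l\,d\tau+\lowervalue\big(t+h;\state(t+h)\big)\Big\}.
\end{align*}
Nonanticipativity of $\beta$ is what allows strategies defined on $[t,t+h]$ and on $[t+h,T]$ to be concatenated. The two inequalities then follow from $\epsilon$-optimal selections, one in each direction.

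\textbf{Step 2 (regularity).} The bounds $|\bm g|\le k_1$, $|l|\le k_2$ and the Lipschitz assumptions on $f,\bm g,l$ give, via Gr\"onwall, $|\state(\tau)-\hat\state(\tau)|\le e^{L(\tau-t)}|\state-\hat\state|$. Hence $\lowervalue$ is bounded and Lipschitz in $\state$. Combining Step 1 with the bound $|\state(t+h)-\state|\le \|f\|_\infty h$ shows that it is also Lipschitz in $t$. The terminal condition $\lowervalue(T;\cdot)=\bm g$ is immediate from the definition.

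\textbf{Step 3 (viscosity inequalities).} Let $\phi\in C^1$ be such that $\lowervalue-\phi$ has a local maximum at $(t_0,\state_0)$. Suppose, for contradiction, that $\phi_t+\lowerham(t_0;\state_0,D\phi)\le-\theta<0$; here we read the running cost inside $\lowerham$, i.e.\ $\max_u\min_w\{\langle f,p\rangle+l\}$, since with $l\equiv0$ this reduces to \eqref{eq:lower_visc_ham}. Then there is some $u^*$ such that for every $w$ we have $\phi_t+\langle f(\cdot,u^*,w),D\phi\rangle+l\le-\theta/2$ near $(t_0,\state_0)$, by continuity.

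Fix an arbitrary $\beta$ and feed it the constant control $u^*$. Integrating along the resulting trajectory over $[t_0,t_0+h]$ and using the maximum property yields
\[
\lowervalue(t_0;\state_0)-\sup_{\bm u}\{\cdots\}\ge \theta h/2 .
\]
Taking the infimum over $\beta$ contradicts Step 1, so $\phi$ satisfies the subsolution inequality.

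The supersolution case is the main obstacle. When $\phi_t+\lowerham\ge\theta>0$ at a local minimum, we must build a strategy $\beta$ that answers every $\bm u$ with $\phi_t+\langle f,D\phi\rangle+l\ge\theta/2$. The idea is: for each $u$ pick $w(u)$ achieving this; by compactness of $\mathcal U$ and uniform continuity, obtain a finite cover by balls $B_j$ on which a single $w_j$ works; set $\beta[\bm u](\tau)=w_j$ whenever $\bm u(\tau)\in B_j\setminus\bigcup_{i<j}B_i$. This $\beta$ is measurable and nonanticipating, and integrating again contradicts Step 1.

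Finally, we reconcile time conventions. The equation \eqref{eq:lower_visc} is posed with terminal data at $T$, whereas the main text uses the reversal $t\mapsto T-t$, which flips the sign in front of $\lowervalue_t$. The proof is carried out in physical time, and the result is translated at the end.
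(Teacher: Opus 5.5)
The paper gives no proof of this lemma; it is quoted as background from Evans--Souganidis. Your architecture (dynamic programming, Lipschitz regularity, test-function contradiction) is the standard one for that theorem. Your remark that $l$ must sit inside $\lowerham$ unless $l\equiv 0$ is also correct.

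The gap is in Step 3: the roles of the two players are reversed in both halves, so neither contradiction goes through. Write $\Lambda(u,w)\triangleq\phi_t+\langle f(t_0;\state_0,u,w),D\phi\rangle+l(t_0,\state_0,u,w)$, so that $\lowerham=\max_u\min_w\Lambda$.

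In the subsolution case, $\max_u\min_w\Lambda\le-\theta$ says that for \emph{every} $u$ some $w$ gives $\Lambda(u,w)\le-\theta$. It does not produce one $u^*$ with $\Lambda(u^*,w)\le-\theta/2$ for all $w$; that is the different statement $\min_u\max_w\Lambda\le-\theta/2$. (Take $\mathcal U=\mathcal W=[-1,1]$ and $\Lambda=-\theta+K(u-w)^2$ with $K$ large.) Even granting such a $u^*$, an upper bound on the payoff along the single control $u^*$ gives no upper bound on $\sup_{\bm u}$. So $\lowervalue(t_0;\state_0)-\sup_{\bm u}\{\cdots\}\ge\theta h/2$ does not follow.

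To push $\inf_\beta\sup_{\bm u}$ below $\lowervalue(t_0;\state_0)$ you need one strategy that is good against every $\bm u$. This is exactly where your finite-cover selection belongs:
\begin{itemize}
\item choose balls $B_j$ and points $w_j$ with $\Lambda(u,w_j)\le-3\theta/4$ on $B_j$;
\item set $\gamma(u)=w_j$ on $B_j\setminus\bigcup_{i<j}B_i$ and $\beta^*[\bm u](\tau)=\gamma(\bm u(\tau))$, which is nonanticipating;
\item continuity keeps the integrand $\phi_t+\langle f,D\phi\rangle+l$ below $-\theta/2$ along every resulting trajectory on $[t_0,t_0+h]$;
\item the local-maximum inequality together with Step 1 then gives $\lowervalue(t_0;\state_0)\le\lowervalue(t_0;\state_0)-\theta h/2$.
\end{itemize}

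The supersolution case is the easy one. From $\max_u\min_w\Lambda\ge\theta$ there is a $u^*$ with $\Lambda(u^*,w)\ge\theta$ for all $w$. Playing the constant control $u^*$ against an arbitrary $\beta$ gives $\sup_{\bm u}\{\cdots\}\ge\lowervalue(t_0;\state_0)+\theta h/2$ for every $\beta$, and taking $\inf_\beta$ contradicts Step 1.

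Your version of this half fails on two counts. First, the hypothesis controls only $u^*$, so for other $u$ there need be no $w$ with $\Lambda(u,w)\ge\theta/2$. Second, a lower bound proved for one particular $\beta$ cannot contradict an infimum over all $\beta$.

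The structure of $\max_u\min_w$ dictates the assignment. The outer existential yields a constant control for the maximizer, which settles the supersolution case. The inner ``for every $u$ some $w$'' forces a nonanticipating response strategy for the minimizer, which is what the subsolution case needs.

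A minor point on ordering: the $\epsilon$-optimal concatenation in Step 1 must pick strategies from time $t+h$ on a countable Borel partition of the intermediate state. That step uses the Gr\"onwall estimate, so the spatial part of Step 2, which does not depend on Step 1, should come first.
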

\begin{lemma}
	The upper value $\uppervalue$ in \eqref{eq:value_upper} is the viscosity solution of the upper Isaac's equation 
	\begin{subequations}
		\begin{align}
			&\uppervalue_t + \upperham (t; \state, \bm{u}, \bm{w}, D\uppervalue)= 0, \, t\in \left[0,T\right],\, \state \in \ren,
			 \quad
			\uppervalue(T; \state) = \bm{g}(\state(T)), \quad \state \in \ren
		\end{align}
		\label{eq:upper_visc}
	\end{subequations}
	with upper Hamiltonian, 
	\begin{align}
		\upperham (t; \state, \bm{u}, \bm{w}, p) = \min_{\bm{w} \in \mathcal{W}} \max_{\bm{u} \in \mathcal{U}} \, \langle \bm{f}(t; \state, \bm{u}, \bm{w}), p \rangle,
	\end{align}
	with $p$ being appropriately defined.
	\label{lemma:upper_visc_lemma}
\end{lemma}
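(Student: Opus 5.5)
The cleanest route is to reduce this lemma to Lemma~\ref{lemma:lower_visc_lemma} by a symmetry argument instead of repeating the Evans--Souganidis proof. Throughout I take $l\equiv 0$, which is the reachability setting and is implicit in the stated Hamiltonian; otherwise the running cost would have to appear inside $\upperham$. The plan is to build an auxiliary game with the roles of the two players exchanged and the payoff negated. Set $\tilde{\mc J}(\bm w,\bm u) \triangleq -\mc J(\bm u,\bm w)$, with terminal data $\tilde{\bm g} = -\bm g$. In this game the player choosing $\bm w$ plays the maximizer and the player choosing $\bm u$ plays the minimizer, using nonanticipative strategies $\alpha:\bar{\mc W}(t)\to\bar{\mc U}(t)$. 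From \eqref{eq:value_upper},
\begin{align*}
	\uppervalue(t;\state) = \sup_{\alpha}\inf_{\bm w}\mc J(\alpha[\bm w],\bm w) = -\inf_{\alpha}\sup_{\bm w}\tilde{\mc J}(\bm w,\alpha[\bm w]) = -\tilde{\bm v}^-(t;\state),
\end{align*}
where $\tilde{\bm v}^-$ is exactly the lower value \eqref{eq:value_lower} of the swapped game. Nonanticipativity is a property of the maps themselves, so the class $\mc A(t)$ is precisely the class $\mc B(t)$ for the swapped game.

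By Lemma~\ref{lemma:lower_visc_lemma}, $\tilde{\bm v}^-$ is the viscosity solution of $\tilde{\bm v}_t + \tilde{\hamfunc}^-(t;\state,D\tilde{\bm v}) = 0$. The swapped dynamics are $\tilde f(t;\state,\bm w,\bm u) = f(t;\state,\bm u,\bm w)$, so its lower Hamiltonian is
\begin{align*}
	\tilde{\hamfunc}^-(t;\state,p) = \max_{\bm w}\min_{\bm u}\langle f,p\rangle .
\end{align*}
Two facts then finish the argument. The first is that $v$ is a viscosity subsolution (resp.\ supersolution) of $v_t + F(Dv) = 0$ if and only if $-v$ is a viscosity supersolution (resp.\ subsolution) of $w_t - F(-Dw) = 0$. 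This is checked directly on test functions: $\phi$ touches $v$ from above exactly when $-\phi$ touches $-v$ from below. The second is the algebraic identity
\begin{align*}
	-\max_{\bm w}\min_{\bm u}\langle f,-p\rangle = -\max_{\bm w}\bigl(-\max_{\bm u}\langle f,p\rangle\bigr) = \min_{\bm w}\max_{\bm u}\langle f,p\rangle = \upperham(t;\state,p).
\end{align*}
Together these show that $\uppervalue = -\tilde{\bm v}^-$ solves $\uppervalue_t + \upperham = 0$ in the viscosity sense. The terminal condition is $-\tilde{\bm g} = \bm g$.

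The step I expect to be the main obstacle is checking that the swapped game really meets the hypotheses under which Lemma~\ref{lemma:lower_visc_lemma} holds, namely those of \citep{Evans1984}. Boundedness and Lipschitz continuity of $\tilde f$ and $\tilde{\bm g}$ are immediate. The point needing care is the time orientation: $\uppervalue$ carries its terminal condition at $T$, so the sign conventions and the direction of the Laplacian-free equation must match the lemma exactly, per the convention box above. A careless flip there would turn $\min\max$ into $\max\min$.

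If the reduction seems too slick, the fallback is a direct proof. First, prove the dynamic programming principle
\begin{align*}
	\uppervalue(t;\state) = \sup_{\alpha}\inf_{\bm w}\uppervalue\bigl(t+h;\traj(t+h)\bigr).
\end{align*}
Second, prove bounded Lipschitz continuity of $\uppervalue$ using Gronwall. Third, argue the sub- and supersolution inequalities by contradiction at a smooth test function $\phi$. For the subsolution direction, suppose $\phi_t + \min_{\bm w}\max_{\bm u}\langle f,D\phi\rangle \ge \theta > 0$ at the touching point. Then for every $\bm w$ there is a $\bm u$ making $\phi_t + \langle f, D\phi\rangle \ge \theta$; by continuity and compactness this choice can be made measurably and nonanticipatively, giving a strategy $\alpha$. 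Integrating over $[t,t+h]$ then contradicts the dynamic programming principle. The delicate part of this fallback is constructing that nonanticipative $\alpha$ for the reverse inequality, which is Evans--Souganidis' Lemma~4.3.
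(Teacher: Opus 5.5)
Your reduction is correct, and it takes a different route from the paper. The paper states this lemma without proof and defers to Evans--Souganidis. You instead derive it from Lemma~\ref{lemma:lower_visc_lemma}: exchanging the players and negating the payoff gives $v^+=-\tilde v^-$ with $\tilde H^-(t;x,p)=\max_w\min_u\langle f,p\rangle$, and you then push the equation through $v\mapsto -v$ using $-\tilde H^-(t;x,-p)=\min_w\max_u\langle f,p\rangle$.

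The pieces check out:
\begin{itemize}
\item the evader's class $\mathcal A(t)$ is exactly the minimizer's nonanticipative class in the swapped game;
\item the hypotheses (bounded Lipschitz $f$ and $-g$, compact $\mathcal U,\mathcal W$) carry over unchanged;
\item $v\mapsto -v$ is a bijection between the viscosity solution sets of the two equations, so uniqueness transfers as well.
\end{itemize}

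Your route needs only the lower-value theorem. The $\min\max$ order of $H^+$ then comes from a two-line identity rather than a second strategy construction. It also extends to a running cost: the swapped game carries $-l$, and the same computation gives $\min_w\max_u\{\langle f,p\rangle+l\}$. This confirms that the displayed Hamiltonian matches the stated payoff only when $l\equiv 0$, as you assumed. Citing the result is shorter but gives no insight; your fallback is self-contained but repeats the full dynamic-programming machinery.

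The time-orientation issue you single out as the main obstacle does not arise in the reduction. Nothing is time-reversed, and both lemmas are terminal-value problems at $T$ under one convention. The test-function duality ($\phi$ touches $v$ from above iff $-\phi$ touches $-v$ from below) holds under either the standard inequalities or the reversed ones appropriate to terminal problems.

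The issue does affect your fallback. For the terminal problem the value function satisfies the reversed inequalities:
\begin{itemize}
\item At a local maximum of $v^+-\phi$ one gets $\phi_t+H^+\ge 0$. The contradiction comes from $\phi_t+H^+\le-\theta$ using a constant $w^\ast$, with no strategy needed.
\item At a local minimum one gets $\phi_t+H^+\le 0$. Only here does the hypothesis $\phi_t+H^+\ge\theta$, together with your constructed $\alpha$, contradict the dynamic programming principle.
\end{itemize}
At a local maximum, which is what ``subsolution direction'' normally means, the principle only gives $\sup_\alpha\inf_w[\phi(t+h,\xi(t+h))-\phi(t,x)]\ge 0$. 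That $\alpha$ does not contradict this, so the pairing of cases and hypotheses in your sketch needs to be swapped.
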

\begin{corollary}
	\begin{inparaenum}[(i)]
		\item  $\lowervalue \le \uppervalue \, \text{ over } (t\in \left[0,T\right]\, \state \in \ren)$
		\item if for all $t\in \left[0,T\right], (\state, p) \, \in \ren$,
	\end{inparaenum} 
	the minimax condition is satisfied \ie 	$\upperham(t; \state, \bm{u}, \bm{w}, p) = \lowerham(t; \state, \bm{u}, \bm{w}, p)$, then
	$\lowervalue \equiv \uppervalue$.
\end{corollary}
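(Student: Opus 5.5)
The plan is to derive both parts from the viscosity characterizations in Lemma~\ref{lemma:lower_visc_lemma} and Lemma~\ref{lemma:upper_visc_lemma}, together with the standard comparison and uniqueness theory for bounded, uniformly continuous viscosity solutions of first-order Cauchy problems~\citep{Crandall1983viscosity, Crandall1984, Evans1984}. I will not use a direct strategy-based argument on the values \eqref{eq:value_lower}--\eqref{eq:value_upper}.

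\textbf{Step 1: ordering of the Hamiltonians.} For every fixed $(t,\state,p)$ I will show $\lowerham(t;\state,p)\le\upperham(t;\state,p)$. This is the elementary max--min inequality. For any $\bar{\bm u}\in\mc U$ and $\bar{\bm w}\in\mc W$,
\begin{align*}
\min_{\bm w}\langle f(t;\state,\bar{\bm u},\bm w),p\rangle\le\langle f(t;\state,\bar{\bm u},\bar{\bm w}),p\rangle\le\max_{\bm u}\langle f(t;\state,\bm u,\bar{\bm w}),p\rangle .
\end{align*}
Maximizing the left side over $\bar{\bm u}$ and then minimizing the right side over $\bar{\bm w}$ gives the claim. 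Compactness of $\mc U,\mc W$ and continuity of $f$ guarantee that the extrema are attained.

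\textbf{Step 2: $\lowervalue$ is a viscosity subsolution of the upper equation.} Take a smooth $\varphi$ such that $\lowervalue-\varphi$ has a local maximum at $(t_0,\state_0)$. The viscosity inequality for $\lowervalue$ from Lemma~\ref{lemma:lower_visc_lemma} then transfers to the upper equation through Step 1. I have to be careful with the sign. The equation is posed with terminal data at $T$. With $s=T-t$ it reads $w_s=\hamfunc(Dw)$, so a larger Hamiltonian yields a larger solution. In backward time, $\lowervalue$ is therefore a subsolution of the $\upperham$-problem with the same terminal datum $\bm g$.

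I then apply the comparison principle for bounded, uniformly continuous sub/supersolutions. Its hypotheses are that $\upperham$ is Lipschitz in $p$ (the bound $|\upperham(p)-\upperham(q)|\le\sup|f|\,|p-q|$) and satisfies $|\upperham(t;\state,p)-\upperham(t;\hat\state,p)|\le C|\state-\hat\state|(1+|p|)$, which follows from $f$ being bounded and Lipschitz. Both terminal data equal $\bm g$, which is bounded and Lipschitz. Comparison gives $\lowervalue\le\uppervalue$, which is (i).

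\textbf{Step 3: part (ii).} If $\upperham=\lowerham$, then $\lowervalue$ and $\uppervalue$ are both bounded, uniformly continuous viscosity solutions of the same terminal-value problem with the same datum $\bm g$. Uniqueness, which is just comparison applied in both directions, forces $\lowervalue\equiv\uppervalue$.

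I expect the main obstacle to be Step 2. The difficulty is getting the direction of the inequality right under the backward-time convention, and checking the structural hypotheses of the comparison theorem. In particular, uniform continuity of $\lowervalue$ and $\uppervalue$ has to be taken from the Evans--Souganidis estimates~\citep{Evans1984}, which rely on the stated bounds on $f$, $l$ and $\bm g$. The notational dependence of $\lowerham$ and $\upperham$ on $(\bm u,\bm w)$ is only notational, because the controls are optimized out. I will treat both Hamiltonians as functions of $(t,\state,p)$ alone.
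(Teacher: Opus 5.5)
Your argument is correct and is the intended one. The paper gives no separate proof and states the result as an immediate consequence of Lemmas~\ref{lemma:lower_visc_lemma} and~\ref{lemma:upper_visc_lemma}, following Evans--Souganidis \citep{Evans1984}, whose proof is exactly yours: the pointwise ordering $\lowerham\le\upperham$ makes $\lowervalue$ a subsolution of the upper terminal-value problem, comparison then gives (i), and uniqueness gives (ii), with your time-reversal check (a larger Hamiltonian yields a larger value) correctly fixing the direction of the inequality.
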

%

\subsection{Reachability for Systems Verification.}
\noindent Reachability analysis is one of many verification methods that allows us to reason about (control-affine) dynamical systems. 
The verification problem may consist in finding a \textit{set of reachable states} that lie along the trajectory of the solution to a first order nonlinear partial differential equation that originates from some initial state $\state_0 = \state(0)$ up to a specified time bound, $t=t_f$. 
\begin{tcolorbox}[title=\textbf{Reachability Definition and Modes}, colback=yellow!3, colframe=magenta!70, fonttitle=\small, fontlower=\small] 
	\small
	From a set of initial and unsafe state sets, the time-bounded safety verification problem is to determine if there is an initial state and a particular time within the bound that the solution to HJI PDE enters the unsafe set. 	Reachability could be analyzed in a 
	\begin{enumerate}[(i)]
		\item \textit{forward} sense, whereupon system trajectories are examined to determine if they enter certain states from an \textit{initial set};
		\item \textit{backward} sense, whereupon system trajectories are examined to determine if they enter certain \textit{target sets};
		\item \textit{reach set} sense, in which they are examined to see if states reach a set at a \textit{particular time}; or
		\item \textit{reach tube} sense, in which they are evaluated that they reach a set at a point \textit{during a time interval}.	
	\end{enumerate} 
\end{tcolorbox}


Backward reachability consists in avoiding an unsafe set of states under the worst-possible disturbance at all times; relying on nonanticipative control strategies~\citep{Mitchell2005}.  
Backward reachable sets (BRS) and backward reachable tubes (BRTs) are popularly analyzed in a game of two vehicles with non-stochastic dynamics~\citep{Merz1972}. Such BRTs possess discontinuity at cross-over points (which exist at edges) on the surface of the  tube, and may be non-convex. Therefore, treating the end-point constraints under these discontinuity characterizations need careful consideration and analysis when switching control laws if the underlying PDE does not have continuous partial  derivatives (we discuss this further in \autoref{sec:methods}). 
 
\subsection{Robustly Controlled Backward Reachable Set and Tube}
Suppose that the goal of $\pursuer$ is to drive the system into a user-specified target region $\targetset_0(\tau)$ within $\tau\le T$ time steps of playing the game; 
while $\evader$ simultaneously seeks to prevent this from happening. The target set \eqref{eq:target_set} \ie
	\begin{align}
	\mathcal{L}_0(T) = \{ \state \in {\bb{R}}^n \,|\, \valuefunc(0; \state) \le 0 \}
\end{align}
 cost $\bm{g}(0; \state)$ has constraints $| \bm{g}(0; \state) | \le k, \,\,
| \valuefunc(0; \state) - \valuefunc(t; \hat{\state}) \mid \le k | \state - \hat{\state}|$
where $k>0$ is a  constant, and  all  $-T \le t \le 0$, $\{\hat{\state}, \, \state \in \reline^n\}$\footnote{Time is reversed in BRT computational scenarios.}. The distance to $\targetset_0(\tau)$ is typically found by optimizing $\term(\state, t)$ as in \eqref{eq:distance}. 
Substituting the lower value \eqref{eq:value_lower} and its Hamiltonian \eqref{eq:lower_visc_ham} into the HJI PDE \eqref{eq:lower_visc}, and then modifying it so that the target set is captured as a \emph{tube} rather than a set at a single instant, yields the governing equation for this problem. The modification is the $\min\{0,\cdot\}$ \emph{freezing} operator: without it, \eqref{eq:lower_visc} propagates the value in both directions and a state that momentarily touches the target could leave it; the operator clips the growth of $\valuefunc$ so that once a trajectory has entered $\targetset_0$ its value can never climb back above zero, thereby recording membership over the whole interval $[-T,0]$ rather than at time $0$ alone (see~\citep[\S3]{Mitchell2005}). 

The value function of the RCBRT \eqref{eq:HJ-RCBRT} \ie 
\begin{align}
	\valuefunc_t(t; \state) + \min\{0, \hamfunc\left(t; \state, D \valuefunc(t; \state)\right)\} = 0, \quad
	\valuefunc(0; \state) = \bm{g}(0; \state),
\end{align}  
has $\hamfunc$ as the lower Hamiltonian \eqref{eq:lower_visc_ham} and $\bm g(0;\state)$ as the signed-distance target data.
%

\begin{tcolorbox}[title=\textbf{The Robustly Controlled Backward Reachable Tube (RCBRT)}, colback=pink!3, colframe=orange!70, fonttitle=\small, fontlower=\small] 
	\small
	For the safety problem setup in \eqref{eq:value_lower},  the corresponding \textit{robustly controlled backward reachable \textbf{tube}} (RCBRT)~\citep{Mitchell2020} on $(0, T]$ 
	is the closure of the open set,
	\begin{align}
		\mathcal{L}([-T, 0], \mathcal{L}_0) = \{\state \in \ren \,| \, \exists \, \beta \in \mathcal{B}(t)  \forall \, \bm{u} \in \mathcal{U}(t), \,
		\qquad  \exists \,
		\bar{\tau} \in [-T, 0], \bm{\xi}(\bar{\tau}) \in  \mathcal{L}_0 \}.
	\end{align}
	Read: The set of states from which there exists a strategy of $\pursuer$ such that, for all controls of $\evader$, the resulting trajectory \textit{reaches and remains in the target set} within the interval $[-T, 0]$.  Following Lemma 2 of \citep{Mitchell2005}, the states in the reachable set admit the following properties w.r.t the value function $\valuefunc$,
	\begin{subequations}
		\begin{align}
			\state(t)\in \mathcal{L}(\cdot) \implies \valuefunc(t; \state) \le 0, \,\,
			\valuefunc(t; \state) \le 0 \implies \state(t) \in \mathcal{L}(\cdot). 
		\end{align}
	\end{subequations}
\end{tcolorbox}
In \textit{backward reach avoid tubes}, the agent must avoid the unsafe region at all times. We can write \eqref{eq:HJ-RCBRT}  a \textit{robustly controlled backward reach-avoid tube} (RCBRAT) as,
\begin{align}
	\min\{\valuefunc_t(t; \state) + \hamfunc\left(t; \state, D \valuefunc\right), \bm{g}(t; \state) - \sdf(t; \state)\} \le 0,  \quad
	\valuefunc(0; \state) = \bm{g}(0; \state).
	\tag{HJI-RCBRAT}
	\label{eq:HJI-RCBRAT}
\end{align}

\newpage
\section{HJ PDE Linearization}
\label{app:proofs}

\setcounter{appidx}{2}
\setcounter{equation}{0}

In this appendix, we construct the Cole-Hopf-type linearization of the viscous HJI PDE and propose a sampling machinery for the solution, $\valuefunc(t; \state)$.  We show that the transformation is \emph{exact} only when the Hamiltonian  is quadratic in the co-state \ie $\hamfunc = \tfrac{1}{2}\langle \bp^\top, \bp \rangle$, and we derive the residual it incurs for general Hamiltonians. We then formulate
the quasi-linearization iterative algorithm for computing \eqref{eq:HJ-RCBRT-Visc} and \eqref{eq:HJI-RCBRAT-Visc}. We finish this appendix with the correct Gaussian expectation formulas for recovering the value and its spatial gradient.

\subsection{The viscous HJ Equation's solution}
We express the viscous HJ equation's solution as the logarithm of the Gaussian kernels' expectation that parameterize the state space. Let us construct the spatial and time derivatives~\eqref{eq:ivp-viscous}.
%

\begin{lemma}[Exact residual of the generalized Cole-Hopf transformation]
	\label{lem:exact_residual}
	Let $\valuefunc^\delta \in C^{1,2}$ solve \eqref{eq:ivp-viscous}, let $\bc \in C^{1,2}\left(\bar\openset \times [0,T]\right)$, and set $\valueparam^\delta \triangleq \exp(-\bc\,\valuefunc^\delta)$. Then, writing $\valuefunc$ for $\valuefunc^\delta$ and suppressing arguments,
	\begin{align}
		\label{eq:exact_residual}
		\valueparam_t^\delta - \frac{\delta}{2}\Delta\valueparam^\delta
		= \valueparam^\delta \underbrace{\bigg[\bc\Big(\hamfunc - \frac{\delta}{2}\bc\,|D\valuefunc|^2\Big)\bigg]}_{\text{algebraic residual, } \bR_{\mathrm{alg}}}
		+ \valueparam^\delta \underbrace{\bigg[- \bc_t \valuefunc + \frac{\delta}{2}(\Delta \bc)\valuefunc - \frac{\delta}{2}|D\bc|^2\valuefunc^2 + \delta\,(1 - \bc\valuefunc)\, D\bc \cdot D\valuefunc\bigg]}_{\text{derivative residual, } \bR_{\mathrm{der}}}.
	\end{align}
\end{lemma}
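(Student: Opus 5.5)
The plan is a direct chain-rule computation on $\valueparam^\delta = \exp(-\bc\,\valuefunc)$, in which the only PDE information used is \eqref{eq:ivp-viscous}, substituted for $\valuefunc_t$. Since $\valuefunc\in C^{1,2}$ and $\bc\in C^{1,2}$, every derivative below exists classically. I would factor $\valueparam^\delta$ out of every term from the start, so the identity reduces to an identity between the bracketed scalar expressions.

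\emph{Step 1 (time derivative).} By the product rule, $\valueparam^\delta_t = \valueparam^\delta\,(-\bc_t\valuefunc - \bc\,\valuefunc_t)$.

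\emph{Step 2 (spatial derivatives).} Write $D\valueparam^\delta = \valueparam^\delta\,\bm a$ with $\bm a \triangleq -(\valuefunc\,D\bc + \bc\,D\valuefunc)$. Then
\begin{align*}
\Delta\valueparam^\delta = \valueparam^\delta\big(|\bm a|^2 + \operatorname{div}\bm a\big).
\end{align*}
Expanding the two pieces gives
\begin{align*}
|\bm a|^2 &= \valuefunc^2|D\bc|^2 + 2\bc\valuefunc\,D\bc\cdot D\valuefunc + \bc^2|D\valuefunc|^2,\\
\operatorname{div}\bm a &= -\big(\valuefunc\,\Delta\bc + 2\,D\bc\cdot D\valuefunc + \bc\,\Delta\valuefunc\big).
\end{align*}

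\emph{Step 3 (substitute the PDE).} Insert $\valuefunc_t = -\hamfunc + \tfrac{\delta}{2}\Delta\valuefunc$ from \eqref{eq:ivp-viscous} into Step 1. This gives $-\bc\,\valuefunc_t = \bc\hamfunc - \tfrac{\delta}{2}\bc\,\Delta\valuefunc$. The term $-\tfrac{\delta}{2}\bc\,\Delta\valuefunc$ cancels exactly against the $+\tfrac{\delta}{2}\bc\,\Delta\valuefunc$ that comes from $-\tfrac{\delta}{2}\operatorname{div}\bm a$. This cancellation is the point of the computation: it removes every second derivative of $\valuefunc$. Only $\hamfunc$, first derivatives of $\valuefunc$, and derivatives of $\bc$ remain.

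\emph{Step 4 (sort the terms).} I would group what remains into two families.
\begin{itemize}
\item Terms containing no derivative of $\bc$: $\bc\hamfunc - \tfrac{\delta}{2}\bc^2|D\valuefunc|^2$, which is $\bR_{\mathrm{alg}}$.
\item Terms containing a derivative of $\bc$: $-\bc_t\valuefunc$, $\;+\tfrac{\delta}{2}\valuefunc\,\Delta\bc$, $\;-\tfrac{\delta}{2}\valuefunc^2|D\bc|^2$, and the two cross terms $-\delta\bc\valuefunc\,D\bc\cdot D\valuefunc$ and $+\delta\,D\bc\cdot D\valuefunc$.
\end{itemize}
The two cross terms combine to $\delta(1-\bc\valuefunc)\,D\bc\cdot D\valuefunc$, and the second family is then exactly $\bR_{\mathrm{der}}$ in \eqref{eq:exact_residual}.

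The main obstacle is bookkeeping rather than analysis. The sign conventions must be kept consistent: the forward-parabolic form with $+\tfrac{\delta}{2}\Delta$ on the right-hand side, and the $-\tfrac{\delta}{2}$ factor in front of $\Delta\valueparam^\delta$. The cross term also receives contributions from two different sources, namely the factor $2$ in $|\bm a|^2$ and the factor $2$ in $\operatorname{div}\bm a$, and both must be tracked.

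As a sanity check I would verify two consistency properties. First, with $\bc$ constant, $\bR_{\mathrm{der}}\equiv 0$. Second, the choice \eqref{eq:coeff_frozen} makes $\bR_{\mathrm{alg}}=0$, and for $\hamfunc=\tfrac12|p|^2$ with $\bc=1/\delta$ the whole residual vanishes. Both are consistent with Proposition~\ref{prop:heat_transformation} and Remark~\ref{rem:exact_vs_approx}.
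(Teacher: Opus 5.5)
Your proposal is correct and follows essentially the paper's proof: the paper sets $\varphi = \bc\,\valuefunc$, uses $\Delta e^{-\varphi} = (|D\varphi|^2 - \Delta\varphi)e^{-\varphi}$ (your $\bm a$ is $-D\varphi$), substitutes $\valuefunc_t = \frac{\delta}{2}\Delta\valuefunc - \hamfunc$, and groups the terms into $\bR_{\mathrm{alg}}$ and $\bR_{\mathrm{der}}$ as you do. Your explicit tracking of the $\bc\,\Delta\valuefunc$ cancellation and of the two sources of the cross term $\delta(1-\bc\valuefunc)\,D\bc\cdot D\valuefunc$ writes out the step the paper leaves as ``it can be verified.''
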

\begin{proof}[Proof of Lemma \ref{lem:exact_residual}]
	\label{proof:exact_residual}
	Write $\varphi \triangleq \bc\,\valuefunc$ so that $\valueparam^\delta = e^{-\varphi}$. Then,
	\begin{align}
		\label{eq:phi_derivatives}
		\valueparam^\delta_t = -\varphi_t\,\valueparam^\delta, \qquad
		D\valueparam^\delta = -D\varphi\,\valueparam^\delta, \qquad
		\Delta\valueparam^\delta = \left(|D\varphi|^2 - \Delta\varphi\right)\valueparam^\delta,
	\end{align}
	with $\varphi_t = \bc_t\valuefunc + \bc\valuefunc_t$, $D\varphi = \valuefunc\,D\bc + \bc\,D\valuefunc$, and $\Delta\varphi = \valuefunc\,\Delta\bc + 2\,D\bc\cdot D\valuefunc + \bc\,\Delta\valuefunc$. Substituting these into $\valueparam^\delta_t - \frac{\delta}{2}\Delta\valueparam^\delta$ and eliminating $\valuefunc_t = \frac{\delta}{2}\Delta\valuefunc - \hamfunc$ via \eqref{eq:ivp-viscous}, it can be verified that
	\begin{align}
		\label{eq:residual_expansion}
		\valueparam^\delta_t - \frac{\delta}{2}\Delta\valueparam^\delta
		&= \valueparam^\delta\left[\bc\hamfunc - \frac{\delta}{2}\bc^2|D\valuefunc|^2 - \bc_t\valuefunc + \frac{\delta}{2}(\Delta\bc)\valuefunc - \frac{\delta}{2}|D\bc|^2\valuefunc^2 + \delta(1 - \bc\valuefunc)D\bc\cdot D\valuefunc\right].
	\end{align}
	%
	Grouping the first two terms as $\bR_{\mathrm{alg}}$ and the remainder as $\bR_{\mathrm{der}}$ gives \eqref{eq:exact_residual}.
\end{proof}

Notice that the \emph{algebraic} residual $\bR_{\mathrm{alg}}$ vanishes if we set $	\bc(t; \state)$ to \eqref{eq:coeff_frozen} \ie
\begin{align}
	\bc(t; \state) = \tfrac{2}{\delta}\cdot\hamfunc^\delta/|D\valuefunc^\delta|^2.
	\label{eq:coeff_frozen_app}
\end{align}
In addition, the \emph{derivative} residual $\bR_{\mathrm{der}}$ consists of ``live" terms $\bc_t, D\bc, \Delta\bc$, and no freezing of the value of $\bc$ removes the derivatives of $\bc$. The frozen-coefficient step of Algorithm~\ref{alg:quasi_lin} therefore does \emph{not} transform \eqref{eq:ivp-viscous} into a heat equation; rather, it \emph{defines} the next iterate as the exact solution of the linear heat initial-value problem, and $\bR_{\mathrm{der}}$, which is discarded at each step, is  the defect whose  norm we bound in Theorem~\ref{thm:defect} (\autoref{sec:defect-bound}).

\begin{corollary}[Exactness in the quadratic case]
	\label{cor:exact_case}
	The full residual in \eqref{eq:exact_residual} vanishes identically iff $\bR_{\mathrm{alg}} \equiv 0$ and $\bc$ is constant in $(t; \state)$. In particular, when $\hamfunc = \tfrac{1}{2}\langle \bp, \bp \rangle$, the choice \eqref{eq:coeff_frozen_app} yields $\bc = 1/\delta = \mathrm{const}$ so that both residuals vanish, and $\valueparam^\delta = \exp(-\valuefunc^\delta/\delta)$ satisfies the heat equation exactly.
\end{corollary}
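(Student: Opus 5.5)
The statement has two parts: an ``iff'' characterization of when the residual of Lemma~\ref{lem:exact_residual} vanishes, and the quadratic special case. I would treat them in that order, reading everything off the decomposition \eqref{eq:exact_residual}. Since $\valueparam^\delta = e^{-\bc\valuefunc}>0$, the full residual $\valueparam^\delta_t - \tfrac{\delta}{2}\Delta\valueparam^\delta$ vanishes identically iff $\bR_{\mathrm{alg}}+\bR_{\mathrm{der}}\equiv 0$.

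\emph{Sufficiency.} If $\bc$ is constant, then $\bc_t$, $D\bc$ and $\Delta\bc$ are all zero. Every term of $\bR_{\mathrm{der}}$ carries one of these factors, so $\bR_{\mathrm{der}}\equiv 0$, and together with $\bR_{\mathrm{alg}}\equiv0$ the sum vanishes.

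\emph{Necessity.} This is the main obstacle, because the sum could in principle cancel without each piece vanishing. I would handle it by treating the characterization as a statement about the coefficient field, uniformly over admissible data. In that setting $\bR_{\mathrm{alg}}$ is algebraic in $D\valuefunc$, while $\bR_{\mathrm{der}}$ is affine-plus-quadratic in the value $\valuefunc$ itself, through the terms $\valuefunc$, $\valuefunc^2$ and $\bc\valuefunc\,D\bc\cdot D\valuefunc$. The solution is only defined up to the data $\bm g$, so I would shift the data by a constant, $\bm g\mapsto \bm g+s$. In the Hamilton--Jacobi equation this shifts $\valuefunc$ by $s$ and leaves $D\valuefunc$ unchanged, since $\hamfunc$ depends only on $p$.

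The residual then becomes a polynomial in $s$. Its $s^2$ coefficient is $-\tfrac{\delta}{2}|D\bc|^2$, so vanishing for all $s$ forces $D\bc\equiv0$. The $s$ coefficient then reduces to $-\bc_t$, forcing $\bc_t\equiv 0$. Hence $\bc$ is constant, $\bR_{\mathrm{der}}\equiv0$, and consequently $\bR_{\mathrm{alg}}\equiv 0$. I would state explicitly that ``identically'' is understood in this invariant sense: with $\bc$ pinned to \eqref{eq:coeff_frozen_app}, $\bc$ is invariant under the shift, which makes the argument clean.

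\emph{Quadratic case.} Take $\hamfunc=\tfrac12|p|^2$ and plug it into \eqref{eq:coeff_frozen_app}. This gives $\bc=\tfrac{2}{\delta}\cdot\tfrac{\frac12|D\valuefunc|^2}{|D\valuefunc|^2}=1/\delta$ wherever $D\valuefunc\neq0$, and I would extend by this constant at critical points (a removable choice). Then $\bR_{\mathrm{alg}}=\tfrac1\delta\big(\tfrac12|D\valuefunc|^2-\tfrac{\delta}{2}\cdot\tfrac1\delta|D\valuefunc|^2\big)=0$, and constancy kills $\bR_{\mathrm{der}}$. Alternatively, I would verify this directly as the classical Cole--Hopf computation: with $\valueparam=e^{-\valuefunc/\delta}$, one has $\valueparam_t-\tfrac\delta2\Delta\valueparam=-\tfrac{\valueparam}{\delta}\big(\valuefunc_t+\tfrac12|D\valuefunc|^2-\tfrac\delta2\Delta\valuefunc\big)=0$ by \eqref{eq:ivp-viscous}.
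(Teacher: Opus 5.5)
Your sufficiency direction and your quadratic-case computation (both via \eqref{eq:coeff_frozen_app} and via the direct Cole--Hopf check) are correct. They are also all the paper supports: it gives no separate proof of this corollary, only the remark after Lemma~\ref{lem:exact_residual} that \eqref{eq:coeff_frozen_app} annihilates $\bR_{\mathrm{alg}}$ and that every term of $\bR_{\mathrm{der}}$ carries $\bc_t$, $D\bc$ or $\Delta\bc$. The gap is in your necessity step. The shift computation itself is sound, but it proves something weaker: \emph{if} the residual vanishes for every constant shift $\bm g\mapsto\bm g+s$ of the data (with $\bc$ pinned to \eqref{eq:coeff_frozen_app}, so that it is shift-invariant), then $\bc$ is constant. ``Vanishes identically'' means vanishing in $(t;\state)$ for the given data. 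The quantifier over $s$ is an extra hypothesis, and declaring it the intended reading does not prove the stated converse.

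In fact no argument can prove it, because the literal converse is false, through exactly the cancellation you worried about. Take $\hamfunc=\tfrac12|p|^2$ and data with $\bm g\ge 1$, so that $\valuefunc^\delta\ge 1$ by \eqref{eq:hj_sol} with $\bc=1/\delta$. Set $\bc=1/\delta+a/\valuefunc^\delta$ with a constant $a>0$.
\begin{itemize}
\item Then $\bc\,\valuefunc^\delta=\valuefunc^\delta/\delta+a$, so $\valueparam^\delta=e^{-a}e^{-\valuefunc^\delta/\delta}$ solves the heat equation and the full residual vanishes.
\item Yet $\bc$ is non-constant, and $\bR_{\mathrm{alg}}=-\tfrac{\delta a\,\bc}{2\valuefunc^\delta}|D\valuefunc^\delta|^2=-\bR_{\mathrm{der}}\neq 0$ wherever $D\valuefunc^\delta\neq0$.
\end{itemize}

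Pinning $\bc$ to \eqref{eq:coeff_frozen_app} does not rescue the converse once $\hamfunc$ may depend on the state. Work in one dimension near $x=1$.
\begin{itemize}
\item Let $\varphi=5-\log x$, let $\valuefunc$ solve $\valuefunc\,\valuefunc''=\varphi\,(\valuefunc')^2$ with $\valuefunc(1)=\valuefunc'(1)=1$, and set $\bc=\varphi/\valuefunc$ and $\hamfunc(x,p)=\tfrac{\delta}{2}\bc(x)p^2$.
\item Then $\valuefunc$ is a time-independent solution of \eqref{eq:ivp-viscous}, and the pinned coefficient equals this $\bc$.
\item This $\bc$ is non-constant, since $\valuefunc'/\valuefunc=1\neq-1/5=\varphi'/\varphi$ at $x=1$.
\item Still, $e^{-\bc\valuefunc}=e^{-5}x$ solves the heat equation, so the residual vanishes.
\end{itemize}

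So the corollary can only hold with ``if'' in place of ``iff''. Your shift argument is worth keeping, but as a corrected characterization: the residual vanishes for all constant shifts of the data $\Leftrightarrow$ $\bc$ is constant. It is not a proof of the statement as written.
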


\begin{proof}[Proof of Proposition \ref{prop:heat_transformation}]
	\label{proof:heat_transformation}
	With $\bc^{(k)}$ frozen at the current iterate, the algorithm step is, by construction, the initial value problem,
	\begin{align}
	\valueparam_t^\delta - \dfrac{\delta}{2} \Delta \valueparam^\delta = 0
	 \, \text{ in } \, \ren \times (0, T], \quad
	\valueparam^\delta(0; \state) = \exp\left(-\bc^{(k)}(0; \state)\, \bm g(\state)\right) \, \text{ on } \, \ren \times \{t=0\},
		\label{eq:Heat-Equation-App}
	\end{align}
	whose solution is unique and bounded for bounded continuous data. By Lemma~\ref{lem:exact_residual} and Corollary~\ref{cor:exact_case}, this step is an exact transformation of \eqref{eq:ivp-viscous} when $\hamfunc = \tfrac{1}{2}|\bp|^2$; for general Hamiltonians it discards the derivative residual $\bR_{\mathrm{der}}$ of \eqref{eq:exact_residual}, and is to be read as one step of the Picard quasi-linearization whose fixed-point defect is quantified in \autoref{sec:defect-bound}. Equation \eqref{eq:Heat-Equation-App} is the classical Cole-Hopf-transformed equation~\citep{EvansPDEBook} of the Eikonal version of \eqref{eq:ivp-viscous}\footnote{For the Eikonal version of \eqref{eq:ivp-viscous}, we set $\hamfunc(D\valuefunc(\state))=0$ in $\openset$, and $\valuefunc(0; \state)=0$ on $\partial \openset$.}.

	It follows that the unique bounded solution of \eqref{eq:Heat-Equation-App} is given by the Green's function convolution, 
	\begin{subequations}
		\begin{align}
		\valueparam^\delta(t; \state) &= \dfrac{1}{(\sqrt{2 \pi \delta t })^{n}} \int_{\ren} \exp \left(-\frac{1}{2 \delta t} \mid \state - \bm{y} \mid^2\right)
		\exp\left(-\bc\bm g(\bm y)\right) d\bm{y}, \,\, 
		\label{eq:colehopf_app}\\
		&\triangleq \bb{E}_{\statey \sim \mc{N}(\state, \delta t I_n)} \left[\exp\left(-\bc \bm g(\bm y)\right)\right]
		:=\bb{E}_{\statey \sim \mc{N}(\state, \delta t I_n)} \left[\exp\left( -\bc \cdot \bm g(\bm y)\right)\right]
		\label{eq:colehopf_app_b}
		\end{align}
		\label{eq:feynman_kac}
	\end{subequations}
	where \eqref{eq:colehopf_app} is the standard heat kernel convolution and the kernel has been rewritten in \eqref{eq:colehopf_app_b}, via the Feynman-Kac formula, as a Gaussian density with mean $\state$ and covariance $\delta t I_n$. Both lines are posed on $\ren \times (0, T]$, \ie the integration is over all of $\ren$ and no boundary condition is imposed on a proper subdomain. This is the only setting in which the free-space kernel has unit mass and \eqref{eq:colehopf_app_b} is a genuine expectation; restricting the integral to a bounded $\openset$ would retain only part of the mass and would fail to recover the datum as $t \downarrow 0$. The transformed datum $\exp(-\bc\bm g)$ is continuous, positive and bounded on $\ren$ and decays at infinity by Remark~\ref{rem:admissible_data}, so \eqref{eq:feynman_kac} is the unique bounded solution of \eqref{eq:Heat-Equation-App} and $\valueparam^\delta(t;\state) \to 0$ as $|\state| \to \infty$.
\end{proof}

\begin{proof}[Proof of Lemma \ref{lemma:hj_solution}]
Going by the transformation  \eqref{eq:cole-hopf}, we can write
\begin{align}
	\valuefunc^\delta = -(\delta/2) \dfrac{|D\valuefunc^\delta|^2}{\hamfunc(t; \state, D\valuefunc^\delta)}\log \valueparam^\delta := -\dfrac{1}{\bc} \log \valueparam^\delta,
	\label{eq:value_heat_kernel_rel}
\end{align}
so that the unique bounded solution to the initial-value \eqref{eq:ivp-viscous} \ie \eqref{eq:ivp-viscous-solution} becomes
\begin{subequations}
	\begin{align}
		\valuefunc^\delta(t; \state) &= -\dfrac{1}{\bc} \cdot \log \bigg\{\dfrac{1}{(\sqrt{2 \pi \delta t })^{n}}\int_{\ren} \exp \left(-\frac{1}{2\delta t}\cdot |\state-\bm{y}|^2\right) \exp\left(-\bc \cdot  \bm g(\bm y)\right) \bigg\}, \\
		%
		%
		&\triangleq -\dfrac{1}{\bc} \cdot  \log \bigg\{\bb{E}_{\statey \sim \mc{N}(\state, \delta tI_n)} \left[\exp\left(-\bc \cdot  \bm g (\bm{y})\right)\right]\bigg\}.
	\end{align}
\end{subequations}
\begin{corollary}
	For the backward reachability problems \eqref{eq:HJ-RCBRT-Visc} and \eqref{eq:HJI-RCBRAT-Visc} over time horizon $[0, T]$, the solution is given by the \textit{log-sum-exp} identity with samples from the correct backward-time covariance:
	\begin{align}
	\valuefunc^\delta(t; \state) = \dfrac{-1}{\bc}\log \frac{1}{N} \sum_{i=1}^N \exp\left(-\bc \bm{g}(\bm{y}_i)\right), \quad \text{ with } \bm{y}_i \stackrel{\text{iid}}{\sim}
	\mc{N}(\state, \delta tI_n).
	\end{align}
\end{corollary}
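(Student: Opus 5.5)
The corollary is the Monte Carlo instantiation of the representation already established in Lemma~\ref{lemma:hj_solution}. I would treat the displayed log-sum-exp as the plug-in estimator $\hat{\valuefunc}_{\bc,N}$ of Theorem~\ref{thm:mc_complexity}. The claim to prove is that it recovers $\valuefunc^\delta(t;\state)$ with the stated covariance $\delta t I_n$ and no other normalization. Equality then holds in the sense of almost-sure convergence as $N\to\infty$, with the finite-$N$ error quantified by \eqref{eq:mc-concentration}.

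\emph{Reduction to the unconstrained problem.} Under the frozen coefficient $\bc^{(k)}$, \eqref{eq:HJ-RCBRT-Visc} and \eqref{eq:HJI-RCBRAT-Visc} differ from \eqref{eq:ivp-viscous} only through the $\min\{0,\cdot\}$ freezing operator and the obstacle term. I would absorb the freezing operator into the Hamiltonian by replacing $\hamfunc$ with $\min\{0,\hamfunc\}$ in \eqref{eq:coeff_frozen}. This gives a coefficient $\bc^{(k)}$ of the same form, with the sign and zero cases handled by Remark~\ref{rem:coeff_sign}. The linear step of Algorithm~\ref{alg:quasi_lin} is then literally \eqref{eq:Heat-Equation} with datum $\exp(-\bc\bm g)$. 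For the reach-avoid obstacle, I would apply the formula on the region where the PDE branch of the minimum is active and impose the obstacle pointwise afterward. I would flag this as an interpretation rather than a derivation.

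\emph{Time convention.} The backward-time covariance follows from the convention fixed in \S\ref{sec:methods}: $t$ is time-to-go and the data sit at $t=0$. By Proposition~\ref{prop:heat_transformation} and the Feynman--Kac form \eqref{eq:heat_kernel_expectation}, the kernel is therefore $\mc N(\state,\delta t I_n)$. Equivalently, it is $\delta(T-t_{\mathrm{phys}})I_n$ in physical time, as in the sign-convention box of \autoref{app:back_unabridged}. Lemma~\ref{lemma:hj_solution} then gives $\valuefunc^\delta=-\bc^{-1}\log\mu$ with $\mu=\bb E[e^{-\bc\bm g(\statey)}]$.

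\emph{Consistency of the estimator.} Remark~\ref{rem:admissible_data} shows that $e^{-\bc\bm g}\in(0,e^{-\bc g_{\min}}]$. Hence the $Z_i$ are i.i.d. and integrable, with $\mu>0$. The strong law gives $\bar Z_N\to\mu$ almost surely. Since $-\bc^{-1}\log$ is continuous on $(0,\infty)$, the continuous mapping theorem yields $\hat{\valuefunc}_{\bc,N}\to\valuefunc^\delta$ almost surely. The nonasymptotic version is Theorem~\ref{thm:mc_complexity} verbatim. I would also note that the estimator is biased for finite $N$, by Jensen's inequality applied to $-\log$, even though $\bar Z_N$ is unbiased. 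This is why the identity must be read in the limit or up to the concentration bound, not as an exact equality.

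\emph{Main obstacle.} The hard step is the first reduction. The freezing and obstacle nonlinearities are not covered by Lemma~\ref{lem:exact_residual}, so I expect to only be able to assert the formula as the frozen-coefficient surrogate solve. Its deviation from the true viscous RCBRT value is deferred to the defect bound of Theorem~\ref{thm:defect}.
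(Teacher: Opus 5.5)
Your backbone is the paper's. Its proof of this corollary rewrites Lemma~\ref{lemma:hj_solution} as a Gaussian expectation and substitutes an $N$-sample average. It gives no argument for the $\min\{0,\cdot\}$ or obstacle terms: Algorithm~\ref{alg:quasi_lin} builds $\bc$ from $\hamfunc$, not from $\min\{0,\hamfunc\}$. It also says nothing about what ``$=$'' means at finite $N$. Reading the display as almost-sure convergence of a plug-in estimator that is biased at finite $N$ is the right interpretation, and sharper than the paper's.

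The gap is in the reduction you add on top, and it undercuts the justifications you then cite.
\begin{itemize}
\item \emph{The reduced coefficient is never positive.} Replacing $\hamfunc$ by $\min\{0,\hamfunc\}$ in \eqref{eq:coeff_frozen} gives $\bc^{(k)}=\tfrac{2}{\delta}\min\{0,\hamfunc\}/|D\valuefunc^\delta|^2\le 0$ everywhere. So the sign and zero cases of Remark~\ref{rem:coeff_sign} are not exceptions; $\bc>0$ never occurs. In the quadratic case $\bc\equiv 0$, and only the limit $\tfrac1N\sum_i\bm g(\statey_i)$ remains.
\item \emph{Your consistency step assumes $\bc>0$.} For $\bc<0$ the datum is $e^{|\bc|\bm g}$, which is unbounded on $\bb R^n$ because the signed distance is. Then the enclosure $(0,e^{-\bc g_{\min}}]$ of Remark~\ref{rem:admissible_data} fails, and the bounded-data uniqueness behind Proposition~\ref{prop:heat_transformation} does not apply.
\item \emph{Theorem~\ref{thm:mc_complexity} is not available ``verbatim''.} Its $\bc<0$ variant in Remark~\ref{rem:coeff_sign} needs a finite $g_{\max}$. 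Remark~\ref{rem:which_bound} waives that only for $\bc>0$.
\item \emph{The deferral to Theorem~\ref{thm:defect} is blocked for the same reason.} That bound carries $1/c_{\min}$ with $c_{\min}>0$. As do Assumption~\ref{ass:grad_est} and Theorem~\ref{thm:algorithm_convergence}. Under your reduction, $\bc^{\star}$ vanishes wherever $\hamfunc\ge 0$.
\end{itemize}

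You can repair the consistency claim in one of two ways. First, assume $|\bm g|\le G$ as in Assumption~\ref{ass:grad_est}. Then the swapped $\alpha,\beta$ of Remark~\ref{rem:coeff_sign} are finite, the data are bounded, and Hoeffding applies. Second, keep the unbounded signed distance and use $\bm g(\statey)\le\bm g(\state)+|\statey-\state|$. This makes $e^{|\bc|\bm g(\statey)}$ square-integrable under $\mc N(\state,\delta tI_n)$. That is enough for the strong law and a Chebyshev rate, with uniqueness of the heat problem taken in the Tychonoff growth class. It is not enough for an exponential tail, since $Z$ then has log-normal-type tails.

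Either way, the defect statement for \eqref{eq:HJ-RCBRT-Visc} and \eqref{eq:HJI-RCBRAT-Visc} stays open. The paper's own proof leaves exactly this unaddressed too.
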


\textit{A fortiori}, we have the solution to the viscous HJ equation as the log of the expectation of a Gaussian density with mean $\state$ and covariance $\delta tI_n$.
\end{proof}

\subsection{Spatial Gradient of the HJ Payoff}
\begin{proof}[Proof of Lemma \ref{lemma:grad_hj_sol}]	
	From \eqref{eq:colehopf_app} observe, 
	\begin{subequations}
		\begin{align}
		D \valueparam^\delta(t; \state)&= -\dfrac{1}{(\sqrt{2 \pi \delta t })^{n}} \int_{\ren} \frac{(\state - \statey)}{ \delta t }\exp \left(-\frac{\mid \state - \bm{y} \mid^2}{2 \delta t}\right) \exp\left(-\bc \cdot  \bm g(\bm y)\right) d\bm{y}, \label{eq:cole-hopf-spatial} \\ 
		&= -\frac{1}{\delta t}\bb{E}_{\statey \sim \mc{N}(\state, \delta tI_n)} \left[(\state - \statey)\exp\left(-\bc \cdot \bm g(\bm y) \right)\right].
		\end{align}
	\end{subequations}
	Inspecting \eqref{eq:value_heat_kernel_rel}, we may write
	%
		\begin{align}
		D\valuefunc^\delta(t; \state) &= -\dfrac{1}{\bc} D[\log \valueparam^\delta(t; \state)] = -\dfrac{1}{\bc} \dfrac{D\valueparam^\delta(t; \state)}{\valueparam^\delta(t; \state)}\nonumber \\
		& = \dfrac{1}{\delta \cdot t \cdot \bc} \cdot
		\dfrac{\bb{E}_{\statey \sim\mc{N}(\state, \delta tI_n)}\left[(\state-\statey) \exp(-\bc \cdot \bm{g(y)})\right]}{\bb{E}_{\statey \sim\mc{N}(\state, \delta tI_n)}\left[\exp(-\bc \cdot \bm{g(y)})\right]}, \\
		&\triangleq  \dfrac{1}{t \cdot \delta \cdot \bc} \cdot
		\left(\state -\dfrac{\bb{E}_{\statey \sim\mc{N}(\state, \delta tI_n)}\left[\statey \exp(-\bc \cdot \bm{g(y)})\right]}{\bb{E}_{\statey \sim\mc{N}(\state, \delta tI_n)}\left[\exp(-\bc \cdot \bm{g(y)})\right]}\right).
		\label{eq:spatial_cole_hopf}
		\end{align}
\end{proof}

\begin{corollary}[Log-sum-exp estimator for the value function]
	\label{cor:logsumexp}
	For \eqref{eq:HJ-RCBRT-Visc} and \eqref{eq:HJI-RCBRAT-Visc}, the terminal cost argument transforms to $\bm{g}(\state + \sqrt{\delta t}\statey_i)$. For numerical stability the \emph{log-sum-exp} identity gives,
	\begin{align}
		\label{eq:logsumexp}
		\valuefunc^\delta(t; \state) &= \dfrac{-1}{\bc^{(k)}}\log \frac{1}{N} \sum_{i=1}^N \exp\!\left(-\bc^{(k)} \bm{g}\!\left(\state + \sqrt{\delta t}\bm{y}_i\right)\right),
	\end{align}
	with $\statey_i \stackrel{\text{iid}}{\sim} \mc{N}(\bm{0}, I_n)$.
\end{corollary}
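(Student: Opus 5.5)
The plan is to obtain \eqref{eq:logsumexp} from the Gaussian-expectation form \eqref{eq:hj_sol} of Lemma~\ref{lemma:hj_solution} by a change of variables, then replace the expectation with its empirical mean, and finally stabilize the logarithm of that mean.

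\emph{Step 1 (reparametrization).} With $\bc^{(k)}$ frozen, Lemma~\ref{lemma:hj_solution} gives $\valuefunc^\delta(t;\state) = -\frac{1}{\bc^{(k)}}\log \bb{E}_{\bm z\sim\mc N(\state,\delta tI_n)}[\exp(-\bc^{(k)}\bm g(\bm z))]$. We write $\bm z = \state + \sqrt{\delta t}\,\statey$ with $\statey\sim\mc N(\bm 0, I_n)$. This affine map pushes the standard Gaussian forward to $\mc N(\state,\delta t I_n)$, because the Jacobian factor $(\delta t)^{n/2}$ exactly cancels the normalization in \eqref{eq:heat_kernel}. Consequently the expectation equals $\bb{E}_{\statey\sim\mc N(\bm 0,I_n)}[\exp(-\bc^{(k)}\bm g(\state+\sqrt{\delta t}\statey))]$. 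This is the claimed transformation of the terminal-cost argument.

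\emph{Step 2 (Monte Carlo).} We replace the expectation by the empirical mean over i.i.d.\ draws $\statey_i\sim\mc N(\bm 0,I_n)$. By Remark~\ref{rem:admissible_data}, the integrand is bounded in $(0,e^{-\bc g_{\min}}]$, so the empirical mean is unbiased and converges almost surely by the strong law of large numbers. Since $-\frac{1}{\bc}\log$ is continuous on $(0,\infty)$, the continuous mapping theorem carries this convergence to the value, with finite-sample accuracy given by Theorem~\ref{thm:mc_complexity}. The displayed identity is therefore to be read as the consistent estimator $\hat\valuefunc_{\bc,N}$, exact in the limit $N\to\infty$.

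\emph{Step 3 (stabilization).} This is the only delicate point, because $e^{-\bc g}$ underflows when $\bc g$ is large. Set $a_i = -\bc^{(k)}\bm g(\state+\sqrt{\delta t}\statey_i)$ and $a^\star=\max_i a_i$. Then
$\log\frac1N\sum_i e^{a_i} = a^\star + \log\frac1N\sum_i e^{a_i-a^\star}$. This is an algebraic identity, so the estimator's value is unchanged, but every summand now lies in $(0,1]$ and at least one equals $1$. For $\bc^{(k)}<0$ the same shift applies unchanged. For $\bc^{(k)}\to 0$ we use the log1p form of Remark~\ref{rem:coeff_sign}.
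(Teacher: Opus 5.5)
Your argument is correct and is essentially the paper's own route. The paper gives no separate proof: it obtains \eqref{eq:logsumexp} from the Gaussian-expectation form of Lemma~\ref{lemma:hj_solution} (restated in that lemma's proof with samples $\statey_i \sim \mc{N}(\state,\delta t I_n)$) by exactly your affine reparametrization $\state + \sqrt{\delta t}\,\statey_i$ and your replacement of the expectation by a sample mean. Your Step~3 max-shift, and your reading of the displayed equality as a consistent estimator rather than a finite-$N$ identity, simply spell out what the paper leaves implicit in the phrase ``log-sum-exp identity for numerical stability.''
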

\begin{corollary}[Monte Carlo gradient estimator]
	\label{cor:mc_gradient}
	With samples $\bs_i = \state + \sqrt{\delta t}\statey_i$, $\statey_i \stackrel{\text{iid}}{\sim} \mc{N}(\bm{0},I_n)$, the importance-weighted estimator for \eqref{eq:spatial_grad} is,
	\begin{align}
		\label{eq:mc_grad}
		D\valuefunc^\delta(t; \state) &= \dfrac{1}{t \cdot \delta \cdot \bc^{(k)}}
		\left(\state - \dfrac{\dfrac{1}{N}\sum_{i=1}^{N}\bs_i \cdot \exp\!\left(-\bc^{(k)} \bm{g}(\bs_i)\right)}{\dfrac{1}{N}\sum_{i=1}^{N}\exp\!\left(-\bc^{(k)} \bm{g}(\bs_i)\right)}\right).
	\end{align}
\end{corollary}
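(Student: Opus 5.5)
The statement is Corollary~\ref{cor:mc_gradient}. It follows by combining the exact gradient formula \eqref{eq:spatial_grad} of Corollary~\ref{lemma:grad_hj_sol} with the reparameterization used in Corollary~\ref{cor:logsumexp}. We then replace each Gaussian expectation by its empirical mean over the same i.i.d.\ draws.

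The plan has three steps.

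\textbf{Step 1 (reparameterization).} Start from \eqref{eq:spatial_grad} with $\bc^{(k)}$ frozen. Use the reparameterization $\statey' = \state + \sqrt{\delta t}\,\statey$ with $\statey \sim \mc{N}(\bm 0, I_n)$. This has exactly the law $\mc{N}(\state, \delta t I_n)$, since an affine image of a standard Gaussian is Gaussian with mean $\state$ and covariance $\delta t I_n$. Both the numerator $\bb E[\statey' e^{-\bc^{(k)} \bm g(\statey')}]$ and the denominator $\bb E[e^{-\bc^{(k)} \bm g(\statey')}]$ can therefore be written as expectations over $\bs = \state + \sqrt{\delta t}\,\statey$. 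Both are finite. For the denominator this is Remark~\ref{rem:admissible_data}: the datum is bounded by $e^{-\bc g_{\min}}$. For the numerator, the same bound is multiplied by $|\bs|$, which has a finite Gaussian first moment.

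\textbf{Step 2 (empirical means).} Draw $\bs_i = \state + \sqrt{\delta t}\,\statey_i$ and replace each expectation by the sample mean over the common draws. This gives exactly \eqref{eq:mc_grad}. The estimator is a ratio of two unbiased estimators, so it is self-normalized importance weighting with weights $e^{-\bc^{(k)}\bm g(\bs_i)}$. That is the sense of ``importance-weighted'' in the statement.

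\textbf{Step 3 (justifying the ratio).} Each mean converges almost surely to its expectation by the strong law of large numbers. The denominator limit $\mu$ is strictly positive because the integrand is strictly positive. The continuous mapping theorem applied to $(a,b) \mapsto \tfrac{1}{t\delta\bc^{(k)}}(\state - a/b)$ on $\{b > 0\}$ then shows that the estimator is consistent for $D\valuefunc^\delta$.

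The main obstacle is interpretive. The ratio estimator is not unbiased for finite $N$; it is only consistent, with $O(1/N)$ bias by a delta-method expansion. The corollary's ``$=$'' must therefore be read as the defining Monte Carlo surrogate of \eqref{eq:spatial_grad}. I would say explicitly that the identity holds in the limit $N\to\infty$ almost surely, with fluctuations of order $N^{-1/2}$ in line with Theorem~\ref{thm:mc_complexity}.
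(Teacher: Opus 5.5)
Your proposal is correct and follows the paper's route. The paper gives no separate proof: it obtains \eqref{eq:mc_grad} directly from \eqref{eq:spatial_grad} by writing the $\mc{N}(\state,\delta t I_n)$ draws as $\state+\sqrt{\delta t}\,\statey_i$ and replacing both expectations by sample means, and your strong-law/continuous-mapping consistency step is the same justification it gives for the tilted version in Corollary~\ref{cor:tilted_estimators}. One thing to tighten is that your finiteness bound $e^{-\bc^{(k)}\bm g}\le e^{-\bc^{(k)} g_{\min}}$ assumes $\bc^{(k)}>0$; for a negative (clipped) coefficient, integrability follows instead from the at-most-linear growth of the signed distance against Gaussian tails.
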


\begin{corollary}[Tilted-proposal estimators]
	\label{cor:tilted_estimators}
	Fix a shift $\theta \in \ren$ and let $\sigma^2 \triangleq \delta t$. Draw $\statey_i \stackrel{\text{iid}}{\sim} q_\theta \triangleq \mc{N}(\state + \theta, \sigma^2 I_n)$ and set the density-ratio and total weights,
	\begin{align}
		\label{eq:tilt_weights}
		w_\theta(\statey) \triangleq \exp\!\left(\frac{|\theta|^2 - 2\langle \statey - \state, \theta\rangle}{2\sigma^2}\right),
		\qquad
		\tilde w_i \triangleq w_\theta(\statey_i)\,\exp\!\left(-\bc^{(k)}\bm g(\statey_i)\right).
	\end{align}
	Then, the tilted value and gradient estimators,
	\begin{align}
		\label{eq:tilted_value}
		\hat\valuefunc^\delta_\theta(t;\state) &= -\frac{1}{\bc^{(k)}}\log\left(\frac{1}{N}\sum_{i=1}^N \tilde w_i\right), \\
		\label{eq:tilted_gradient}
		\widehat{D\valuefunc}^\delta_\theta(t;\state) &= \frac{1}{t\cdot\delta\cdot\bc^{(k)}}\left(\state - \frac{\sum_{i=1}^N \statey_i\, \tilde w_i}{\sum_{i=1}^N \tilde w_i}\right),
	\end{align}
	are consistent for \eqref{eq:hj_sol} and \eqref{eq:spatial_grad} respectively, for \emph{every} $\theta$; the inner average of \eqref{eq:tilted_value} is unbiased for $\bb{E}_{\statey\sim\mc{N}(\state,\sigma^2 I_n)}[e^{-\bc^{(k)}\bm g(\statey)}]$. The first-order Laplace choice $\theta^{\star} = -\sigma^2\bc^{(k)} D\valuefunc^{(k)}(t;\state)$ of \eqref{eq:tilt_shift} Gaussianizes the zero-variance proposal.
\end{corollary}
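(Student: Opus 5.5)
The statement to prove is Corollary~\ref{cor:tilted_estimators}. The tool is a change of measure between the nominal Gaussian $p \triangleq \mc N(\state,\sigma^2 I_n)$ and the shifted one $q_\theta = \mc N(\state+\theta,\sigma^2 I_n)$, followed by the strong law of large numbers and the continuous mapping theorem.

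\textbf{Step 1: the density ratio.} Write both Gaussian densities explicitly; the normalizing constants cancel. Expanding the quadratic,
\[
|\statey-\state-\theta|^2-|\statey-\state|^2 = |\theta|^2-2\langle\statey-\state,\theta\rangle ,
\]
so
\[
p(\statey)/q_\theta(\statey) = \exp\!\left(\frac{|\theta|^2-2\langle\statey-\state,\theta\rangle}{2\sigma^2}\right) = w_\theta(\statey).
\]
Hence, for any measurable $h$ that is integrable under $p$,
\[
\bb E_{q_\theta}[w_\theta h] = \int h\,p\,d\statey = \bb E_p[h],
\]
and this holds for every $\theta\in\ren$.

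\textbf{Step 2: unbiasedness.} Take $h = e^{-\bc^{(k)}\bm g}$. This is bounded by $e^{-\bc^{(k)} g_{\min}}$, as in Remark~\ref{rem:admissible_data}, so it is integrable. Step 1 then gives $\bb E_{q_\theta}[\tilde w_i] = \mu \triangleq \bb E_p[e^{-\bc^{(k)}\bm g}]$, and the inner average in \eqref{eq:tilted_value} is unbiased.

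Now take $h = \statey\, e^{-\bc^{(k)}\bm g}$. It is integrable because $|h|\le |\statey|\,e^{-\bc^{(k)}g_{\min}}$ and Gaussian moments are finite. Step 1 gives $\bb E_{q_\theta}[\statey_i\tilde w_i] = \bb E_p[\statey\, e^{-\bc^{(k)}\bm g}]$.

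\textbf{Step 3: consistency.} The pairs $(\tilde w_i,\statey_i\tilde w_i)$ are i.i.d.\ with finite means. By the strong law of large numbers, $\frac1N\sum_i\tilde w_i\to\mu$ and $\frac1N\sum_i\statey_i\tilde w_i\to\bb E_p[\statey e^{-\bc\bm g}]$ almost surely. For the gradient estimator, divide numerator and denominator of the ratio in \eqref{eq:tilted_gradient} by $N$.

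Since $\mu>0$ (the integrand is strictly positive), the maps $m\mapsto -\frac{1}{\bc^{(k)}}\log m$ and $(a,m)\mapsto \state - a/m$ are continuous at the limit point. The continuous mapping theorem then gives almost sure convergence of \eqref{eq:tilted_value} and \eqref{eq:tilted_gradient} to \eqref{eq:hj_sol} and \eqref{eq:spatial_grad}. The constant prefactor $1/(t\delta\bc^{(k)})$ is harmless. Note that only the inner average is unbiased: the log and the ratio are merely consistent, which is exactly what the statement claims.

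\textbf{Step 4: the Laplace claim.} This is the least formal part, and I would treat it as a first-order heuristic statement. The zero-variance proposal is proportional to $p(\statey)e^{-\bc\bm g(\statey)}$. Linearize $\bc\bm g(\statey)\approx \bc\bm g(\state)+\bc\langle D\bm g,\statey-\state\rangle$ and substitute the running iterate gradient $D\valuefunc^{(k)}$ for $D\bm g$; this is the only approximate step. Completing the square in the exponent,
\[
-\frac{|\statey-\state|^2}{2\sigma^2}-\bc\langle D\valuefunc^{(k)},\statey-\state\rangle ,
\]
yields a Gaussian with covariance $\sigma^2 I_n$ and mean $\state-\sigma^2\bc D\valuefunc^{(k)}$. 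That mean is $\state+\theta^\star$ with $\theta^\star$ as in \eqref{eq:tilt_shift}.

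The only genuine obstacle is making sure the integrability conditions hold for all $\theta$. Under $q_\theta$, the weight $w_\theta$ has Gaussian-type tails, so $\bb E_{q_\theta}[\tilde w_i]$ is finite, and it is finite by Step 1 anyway. Finite variance is not needed for consistency, only for rates.
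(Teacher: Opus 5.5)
Your proposal is correct and follows essentially the same route as the paper: the change of measure $w_\theta = \varphi_{\state}/q_\theta$ gives unbiasedness of the inner average and of the gradient numerator, the continuous mapping theorem handles the logarithm and the ratio, and a first-order linearization of $\bm g$ followed by completing the square (with $D\bm g$ replaced by $D\valuefunc^{(k)}$) yields $\theta^\star$. You add the integrability, strong-law, and $\mu>0$ details that the paper leaves implicit; note only that your bound $e^{-\bc^{(k)} g_{\min}}$ presumes $\bc^{(k)}>0$, and for the negative coefficients allowed by Remark~\ref{rem:coeff_sign} integrability instead follows from the linear growth of the signed-distance datum against the Gaussian tail.
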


\begin{proof}[Proof of Corollary~\ref{cor:tilted_estimators}]
	\label{proof:tilted_estimators}
	Unbiasedness is the change of measure: with $\varphi_{\state}$ the $\mc{N}(\state,\sigma^2I_n)$ density, $w_\theta = \varphi_{\state}/q_\theta$ by direct computation of the two exponents, so $\bb{E}_{q_\theta}[w_\theta e^{-\bc\bm g}] = \int \varphi_{\state}(\statey)e^{-\bc\bm g(\statey)}d\statey = \bb{E}_{\varphi_{\state}}[e^{-\bc\bm g}]$, and likewise for the numerator of \eqref{eq:tilted_gradient}; the ratio and the logarithm preserve consistency by the continuous mapping theorem. For the shift, the zero-variance proposal is proportional to the integrand $\varphi_{\state}(\statey)e^{-\bc\bm g(\statey)} \propto \exp\left(-|\statey-\state|^2/2\sigma^2 - \bc\bm g(\statey)\right)$; linearizing $\bm g$ about $\state$ and completing the square in the exponent gives $\mc{N}\left(\state - \sigma^2\bc\, D\bm g(\state), \sigma^2 I_n\right)$, and replacing $D\bm g$ by the running gradient iterate $D\valuefunc^{(k)}$ yields \eqref{eq:tilt_shift}.
\end{proof}

\subsection{Sampling Complexity Analysis}
We now analyze the sampling complexity of \autoref{alg:quasi_lin}.
\begin{proof}[Proof of Theorem~\ref{thm:mc_complexity}]
	\label{proof:mc_complexity}
	Recall that,
	\begin{align}
		\label{eq:proof_defs}
		Z \triangleq \exp(-\bc \, \bm{g}(\bm{\zeta})), \quad
		\mu \triangleq \bb{E}[Z], \quad
		\valuefunc_{\bc}(t;\state) \triangleq -\frac{1}{\bc}\log \mu.
	\end{align}
	For i.i.d. samples $\bm{\zeta}_1,\ldots,\bm{\zeta}_N \sim \mc{N}(\state, \delta tI_n)$, 
	$ Z_i \triangleq \exp(-\bc \, \bm g(\bm{\zeta}_i))$, so that 
	$\bar{Z}_N \triangleq \frac{1}{N}\sum_{i=1}^N Z_i$, and  
	$\hat{\valuefunc}_{\bc,N}(t;\state) \triangleq -\frac{1}{\bc}\log \bar Z_N$.
	
	Let $\alpha \triangleq e^{-\bc g_{\max}}, \, \beta \triangleq e^{-\bc g_{\min}}.$
	Since $g_{\min} \le \bm g(\bm{\zeta}) \le g_{\max}$ and $\bc > 0$, monotonicity of the exponential implies
	\begin{align}
		e^{-\bc g_{\max}} \le e^{-\bc \bm g(\bm{\zeta})} \le e^{-\bc g_{\min}}
	\end{align}
	almost surely. By definition of $\alpha$ and $\beta$, this is exactly $\alpha \le Z \le \beta$ almost surely. Hence each $Z_i$ is bounded in $[\alpha,\beta]$.
	
	\paragraph{Value residuals, $\bar Z_N$.}
Since $\valuefunc_{\bc}(t;\state) = -\frac{1}{\bc}\log \mu$ and $\hat{\valuefunc}_{\bc,N}(t;\state) = -\frac{1}{\bc}\log \bar Z_N$, 	we find that
	\begin{align}
		\hat{\valuefunc}_{\bc,N}(t;\state) - \valuefunc_{\bc}(t;\state)
		= -\frac{1}{\bc}\log\left({\bar Z_N}/{\mu}\right).
	\end{align}
	
	Hence, $
	\left|\hat{\valuefunc}_{\bc,N}(t;\state) - \valuefunc_{\bc}(t;\state)\right| \ge \varepsilon$ implies that $
	\left|\log\left({\bar Z_N}/{\mu}\right)\right| \ge \bc\varepsilon$,
	or that $\bar Z_N \ge \mu \exp({\bc\varepsilon})
	\, \text{or } 
	\bar Z_N \le \mu \exp({-\bc\varepsilon})$. 
	Now,  we can convert the log-scale deviation in the viscous HJ value $|\hat v_{c,N} - v_c| \ge \varepsilon$ into two
	linear-scale deviation events on the sample mean $\bar Z_N$, as 
	\begin{align}
		\bb P\left(\left|\hat{\valuefunc}_{\bc,N}(t;\state) -\valuefunc_{\bc}(t;\state)\right| \ge \varepsilon\right) &\le \bb P\left(\bar Z_N - \mu \ge \mu(\exp({\bc\varepsilon})-1)\right) \nonumber  \\ 
		&\qquad + \bb P\left(\bar Z_N - \mu \le -\mu(1-\exp({-\bc\varepsilon})\right).
		\label{eq:value_residuals}
	\end{align}
	
	\paragraph{Hoeffding bounds for tails.} We now employ  Hoeffding's concentration inequality to bound how far the sample mean, $\bar{Z}_N$ can deviate from the true mean in the presence of the bounded variables $\hat{\valuefunc}_{\bc,N}(t;\state), \valuefunc_{\bc}(t;\state)$.
	
	Because $Z_1,\ldots,Z_N$ are i.i.d. and each lies in $[\alpha, \beta]$, Hoeffding's inequality implies that for every $s > 0$,
	\begin{align}
		\bb P(\bar Z_N - \mu \ge s) \le \exp\left(-\frac{2Ns^2}{(\beta-\alpha)^2}\right),
		\qquad
		\bb P(\bar Z_N - \mu \le -s) \le \exp\left(-\frac{2Ns^2}{(\beta-\alpha)^2}\right).
	\end{align}
	Applying the first inequality with $s_+ \triangleq \mu(\exp(\bc\varepsilon)-1)$ and the second with $s_- \triangleq \mu(1-\exp(-\bc\varepsilon))$, we find that
	\begin{subequations}
		\begin{align}
			\bb P\left(\bar Z_N - \mu \ge \mu(\exp(\bc\varepsilon)-1)\right)
			&\le \exp\left(-\frac{2N\mu^2(\exp(\bc\varepsilon)-1)^2}{(\beta-\alpha)^2}\right), \\
			\bb P\left(\bar Z_N - \mu \le -\mu(1-\exp(-\bc\varepsilon))\right)
			&\le \exp\left(-\frac{2N\mu^2(1-\exp(-\bc\varepsilon))^2}{(\beta-\alpha)^2}\right).
		\end{align}
		\label{eq:hoeffding_bounds}
	\end{subequations}
	
	\paragraph{Union bound and simplification.}
	Factoring the previous bounds \eqref{eq:hoeffding_bounds} into \eqref{eq:value_residuals}, we must have 
	\begin{align}
		\label{eq:union_bound}
		\bb P\left(\left|\hat{\valuefunc}_{\bc,N}(t;\state) - \valuefunc_{\bc}(t;\state)\right| \ge \varepsilon\right)
		&\le \exp\left(-\frac{2N\mu^2(\exp(\bc\varepsilon)-1)^2}{(\beta-\alpha)^2}\right)
		\nonumber \\
		& \qquad + \exp\left(-\frac{2N\mu^2(1-\exp(-\bc\varepsilon))^2}{(\beta-\alpha)^2}\right).
	\end{align}
	Since $\exp(\bc\varepsilon)-1 \ge 1-\exp(-\bc\varepsilon)$ for every $\varepsilon > 0$, the first exponential is no larger than the second. Therefore
	\begin{align}
		\bb P\left(\left|\hat{\valuefunc}_{\bc,N}(t;\state) - \valuefunc_{\bc}(t;\state)\right| \ge \varepsilon\right)
		\le 2\exp\left(-\frac{2N\mu^2(1-e^{-\bc\varepsilon})^2}{(\beta-\alpha)^2}\right).
	\end{align}
	\textit{A  fortiori}, this proves the claim of \autoref{thm:mc_complexity}.
\end{proof}

\begin{proof}[Proof of Corollary~\ref{cor:mc_complexity}]
	\label{proof:cor_mc_complexity}	
	Since $\alpha \le Z$ almost surely, taking expectations yields $\mu = \bb E[Z] \ge \alpha$. Hence it is enough to require
	\begin{align}
		2\exp\left(-\frac{2N\alpha^2(1-e^{-\bc\varepsilon})^2}{(\beta-\alpha)^2}\right) \le \alpha.
	\end{align}
	Taking logarithms and solving for $N$ gives
	\begin{align}
		N \ge \frac{(\beta-\alpha)^2}{2\alpha^2(1-e^{-\bc\varepsilon})^2}\log\frac{2}{\alpha},
	\end{align}
	which proves the second claim.
\end{proof}

\begin{proof}[Proof of Theorem~\ref{thm:algorithm_convergence}]
	\label{proof:algorithm_convergence}
	We proceed in several steps.	
	\paragraph{Step 1: Lipschitz continuity of the frozen-coefficient solve map $\Phi$.}
	Fix an index $m$ and define
	\begin{align}
		\psi_m(s) \triangleq -\frac{1}{s}\log M_m(s),
		\qquad
		M_m(s) \triangleq \bb E_{\bm{\zeta} \sim \mc N(\state_m,\delta tI_n)}\left[\exp\left(-s\bm g(\bm{\zeta})\right)\right].
	\end{align}
	Then $(\Phi(c))_m = \psi_m(c_m)$. Since $M_m(s) > 0$,
	\begin{align}
		\frac{d}{d s}(\psi_m(s)) = \frac{1}{s^2}\log M_m(s) - \frac{1}{s}\frac{M_m'(s)}{M_m(s)}.
	\end{align}
	Differentiating under the expectation gives
	\begin{align}
		\frac{d}{ds}(M_m(s))
		= \bb E_{\bm{\zeta} \sim \mc N(\state_m,\delta tI_n)}\left[-\bm g(\bm{\zeta})\exp\left(-s\bm g(\bm{\zeta})\right)\right].
	\end{align}
	Thus, 
	\begin{align}
		-\frac{M_m'(s)}{M_m(s)} = \frac{\bb E\left[\bm g(\cdot)e^{-s\bm g(\cdot)}\right]}{\bb E\left[e^{-s\bm g(\cdot)}\right]},
	\end{align}
	which is a weighted expectation of $\bm g$; and possesses absolute value of at most $G$ by assumption \ref{ass:grad_est::Gpos}. In addition, $|\bm g| \le G$ implies that 
	\begin{align}
		e^{-sG} \le M_m(s) \le e^{sG}.
	\end{align}
	We can therefore write $|\log M_m(s)| \le sG$. Substituting these two bounds into the derivative formula yields
	\begin{align}
		|\psi_m'(s)| \le \frac{1}{s^2}(sG) + \frac{1}{s}G = \frac{2G}{s} \le \frac{2G}{c_{\min}}
	\end{align}
	for every $s \in [c_{\min},c_{\max}]$. By the mean-value theorem,
	\begin{align}
		|\psi_m(c_m)-\psi_m(\widetilde c_m)| \le \frac{2G}{c_{\min}}|c_m-\widetilde c_m|.
	\end{align}
	Taking the maximum over $m$, we must have
	\begin{align}
		\|\Phi(c)-\Phi(\widetilde c)\|_\infty \le \frac{2G}{c_{\max}}\|c-\widetilde c\|_\infty.
		\label{eq:lipschitz_phi}
	\end{align}
	
	\paragraph{Step 2: Lipschitz continuity of the coefficient-update map $\Gamma$.}
	Fix $m$ and write $p \triangleq p_m(v)$, $q \triangleq p_m(w)$. Then
	\begin{align}
		|(\Gamma(v))_m-(\Gamma(w))_m|
		= \frac{2}{\delta}\left|\frac{\hamfunc(t;\state_m,p)}{|p|^2} - \frac{\hamfunc(t;\state_m,q)}{|q|^2}\right|.
	\end{align}
	Add and subtract $\hamfunc(t;\state_m,q)/|p|^2$ to obtain
	\begin{align}
		\left|\frac{\hamfunc(t;\state_m,p)}{|p|^2} - \frac{\hamfunc(t;\state_m,q)}{|q|^2}\right|
		&\le \frac{|\hamfunc(t;\state_m,p)-\hamfunc(t;\state_m,q)|}{|p|^2}
		+ |\hamfunc(t;\state_m,q)|\left|\frac{1}{|p|^2}-\frac{1}{|q|^2}\right|.
	\end{align}
	By assumption (ii), $|p|,|q| \ge m_0$, so that $|p|^2, |q|^2 \ge m_0^2$. Using assumption (iii), the first term is bounded by
	\begin{align}
		\frac{|\hamfunc(t;\state_m,p)-\hamfunc(t;\state_m,q)|}{|p|^2}
		\le \frac{L_H}{m_0^2}|p-q|.
	\end{align}
	For the second term, note that
	\begin{align}
		\left|\frac{1}{|p|^2}-\frac{1}{|q|^2}\right|
		= \frac{\bigl||q|^2-|p|^2\bigr|}{|p|^2|q|^2}
		\le \frac{\bigl||q|-|p|\bigr|(|q|+|p|)}{m_0^4}
		\le \frac{2P_*}{m_0^4}|p-q|,
	\end{align}
	where the final inequality uses $|p|,|q| \le P_*$ from assumption (ii). Using $|\hamfunc(t;\state_m,q)| \le H_*$ from assumption (iii), we conclude that
	\begin{align}
		\left|\frac{\hamfunc(t;\state_m,p)}{|p|^2} - \frac{\hamfunc(t;\state_m,q)}{|q|^2}\right|
		\le \left(\frac{L_H}{m_0^2} + \frac{2H_*P_*}{m_0^4}\right)|p-q|.
	\end{align}
	Therefore
	\begin{align}
		|(\Gamma(v))_m-(\Gamma(w))_m|
		\le \frac{2}{\delta}\left(\frac{L_H}{m_0^2} + \frac{2H_*P_*}{m_0^4}\right)|p-q|.
	\end{align}
	Finally, by assumption (iv),
	\begin{align}
		|p-q| \le \|\mc G(v)-\mc G(w)\|_\infty \le L_D\|v-w\|_\infty.
	\end{align}
	Taking the maximum over $m$ yields
	\begin{align}
		\|\Gamma(v)-\Gamma(w)\|_\infty
		\le \frac{2L_D}{\delta}\left(\frac{L_H}{m_0^2} + \frac{2H_*P_*}{m_0^4}\right)\|v-w\|_\infty.
		\label{eq:lipschitz_gamma}
	\end{align}
	
	\paragraph{Step 3: $\Lambda = \Phi \circ \Gamma$ is a contraction.}
	Combining \eqref{eq:lipschitz_phi} and \eqref{eq:lipschitz_gamma}, we obtain
	\begin{align}
		\|\Lambda(v)-\Lambda(w)\|_\infty
		&= \|\Phi(\Gamma(v)) - \Phi(\Gamma(w))\|_\infty \le \frac{2G}{c_{\min}}\|\Gamma(v)-\Gamma(w)\|_\infty, \\
		&\le \frac{2G}{c_{\min}}\cdot \frac{2L_D}{\delta}\left(\frac{L_H}{m_0^2} + \frac{2H_*P_*}{m_0^4}\right)\|v-w\|_\infty = q\|v-w\|_\infty.
	\end{align}
	Assumption (v) states that $q < 1$, hence $\Lambda$ is a contraction on $\mc A$.
	
	\paragraph{Step 4: Existence, uniqueness, and linear convergence.}
	Since $\mc A$ is closed in the Banach space $(V,\|\cdot\|_\infty)$ and invariant under $\Lambda$, Banach's fixed-point theorem applies. Therefore there exists a unique $v^* \in \mc A$ such that $\Lambda(v^*) = v^*$. Moreover, for every $v^{(0)} \in \mc A$, the sequence generated by $v^{(k+1)} = \Lambda(v^{(k)})$ converges to $v^*$ and satisfies
	\begin{align}
		\|v^{(k+1)}-v^*\|_\infty \le q\|v^{(k)}-v^*\|_\infty \le q^{k+1}\|v^{(0)}-v^*\|_\infty.
	\end{align}
	This proves the first convergence statement.
	
	\paragraph{Step 5: Convergence of the coefficients.}
	Applying \eqref{eq:lipschitz_gamma} with $w = v^*$ gives
	\begin{align}
		\|\Gamma(v^{(k)})-\Gamma(v^*)\|_\infty
		\le \frac{2L_D}{\delta}\left(\frac{L_H}{m_0^2} + \frac{2H_*P_*}{m_0^4}\right)\|v^{(k)}-v^*\|_\infty.
	\end{align}
	
	\paragraph{Step 6: Residual decay.}
	Since $v^{(k+1)} = \Lambda(v^{(k)})$ and $\Lambda$ is a contraction,
	\begin{align}
		\|v^{(k+1)}-v^{(k)}\|_\infty
		= \|\Lambda(v^{(k)})-\Lambda(v^{(k-1)})\|_\infty
		\le q\|v^{(k)}-v^{(k-1)}\|_\infty.
	\end{align}
	Applying this recursively yields
	\begin{align}
		\|v^{(k+1)}-v^{(k)}\|_\infty \le q^k\|v^{(1)}-v^{(0)}\|_\infty.
	\end{align}
	
	\paragraph{Step 7: A posteriori error estimate.}
	Because $v^{(k)} \to v^*$, we may write
	\begin{align}
		v^* - v^{(k)} = \sum_{j=k}^{\infty}(v^{(j+1)}-v^{(j)}).
	\end{align}
	Taking sup norms and using the triangle inequality,
	\begin{align}
		\|v^*-v^{(k)}\|_\infty
		&\le \sum_{j=k}^{\infty}\|v^{(j+1)}-v^{(j)}\|_\infty \le \sum_{j=k}^{\infty}q^{j-k}\|v^{(k+1)}-v^{(k)}\|_\infty \\
		&= \frac{1}{1-q}\|v^{(k+1)}-v^{(k)}\|_\infty.
	\end{align}
	Using once more that $\|v^{(k+1)}-v^{(k)}\|_\infty \le q\|v^{(k)}-v^{(k-1)}\|_\infty$, we conclude that
	\begin{align}
		\|v^{(k)}-v^*\|_\infty \le \frac{q}{1-q}\|v^{(k)}-v^{(k-1)}\|_\infty.
	\end{align}
	\textit{A fortiori, this completes the proof.}
\end{proof}

\newpage
\section{Error Bounds, Convergence Rates, and Robustness}
\label{app:errors}

This appendix provides a rigorous analysis of the error bounds, convergence rates, and robustness properties of the quasi-linearized Cole-Hopf transformation scheme for viscous Hamilton-Jacobi PDEs. We establish theoretical guarantees that justify the numerical method proposed in the main text.

\textbf{Justification and Impact.} While \autoref{thm:algorithm_convergence} establishes that the discrete algorithm is a contraction with linear convergence rate $q$, it addresses only the \emph{quasi-linearization error}, \textit{with assumptions on the exact heat-kernel evaluation and exact gradients}. 

This section extends the result of \autoref{thm:algorithm_convergence} by quantifying the following error sources viz.,
\begin{itemize}
	 \item \textbf{quasi-linearization defect}: the distance between the fixed point of the frozen-coefficient map and the viscous HJ solution (\S\ref{sec:defect-bound});
	 \item \textbf{Monte Carlo sampling error}: replacing the exact expectations with finite-sample estimates;
	 \item \textbf{viscosity approximation error}: bounding the difference between the regularized and inviscid HJ solutions; and
	 \item \textbf{robustness to model uncertainty} \ie perturbations in the Hamiltonian and terminal cost.
\end{itemize}

We introduce a new theorem (\autoref{thm:total-error}) that combines these errors via triangle inequality, revealing a fundamental \textbf{bias-variance tradeoff} controlled by the viscosity parameter $\delta$: smaller $\delta$ reduces viscosity bias but amplifies Monte Carlo variance, and vice versa. This tradeoff may guide parameters  selection during numerical optimization and explains why the optimal choice scales as $\delta \sim N^{-1/3}$, yielding a slower but more scalable convergence rate $O(N^{-1/6})$ than standard Monte Carlo $O(N^{-1/2})$. The robustness theorems further establish that the algorithm is stable under small model perturbations, making it suitable for real-world applications where exact dynamics and terminal costs are unavailable.

\subsection{Convergence Analysis of the Iterative Scheme}

We now establish convergence of the quasi-linearized iteration to the fixed point of the frozen-coefficient operator. The limit of the iteration is the fixed point $\valuefunc^{\star}$ of the surrogate solve map, and its identification with the viscous solution $\valuefunc^\delta$ of \eqref{eq:ivp-viscous} is \emph{not} asserted here; the distance between the two --- the quasi-linearization defect --- is quantified separately below and enters the total error budget of \autoref{thm:total-error} as an explicit term.

\begin{theorem}[Convergence of Quasi-Linearization to the Frozen-Coefficient Fixed Point]
\label{thm:convergence}
Let $\{\valuefunc^{(k)}\}_{k=0}^\infty$ be the sequence generated by Algorithm \ref{alg:quasi_lin} starting from $\valuefunc^{(0)} = \term$, and let $\mc{A}$ be a closed, $\mathcal{T}$-invariant subset of the Banach space $\left(C_b(\bar\openset \times [0,T]), \normop{\cdot}_\infty\right)$ on which the nondegeneracy and Lipschitz hypotheses of Assumption~\ref{ass:grad_est} hold. Assume that,
\begin{enumerate}[(1)]
    \item $\hamfunc(t; \state, p)$ is $C^2$ in $(\state, p)$ and convex in $p$;
    \item $\term(\state)$ is $C^2$ with compact support;
    \item the contraction constant $\rho = q$ of \eqref{eq:contraction_const} satisfies $\rho < 1$ on $\mc{A}$ (in particular, $\delta$ is not taken so small that Assumption~\ref{ass:grad_est}, item~\ref{item:invariance}, is violated).
\end{enumerate}
Then $\mathcal{T}$ admits a unique fixed point $\valuefunc^{\star} \in \mc{A}$, and the sequence $\{\valuefunc^{(k)}\}$ converges geometrically to $\valuefunc^{\star}$:
\begin{equation}
\normop{\valuefunc^{(k)} - \valuefunc^{\star}}_\infty \leq \kappa \rho^k \normop{\valuefunc^{(0)} - \valuefunc^{\star}}_\infty,
\label{eq:geometric-convergence}
\end{equation}
for some constant $\kappa > 0$.
\end{theorem}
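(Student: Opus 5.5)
The plan is to reduce the statement to the Banach fixed-point argument already carried out for Theorem~\ref{thm:algorithm_convergence}, now posed on the function space $C_b(\bar\openset\times[0,T])$ rather than on $\bb R^M$. I will identify $\mathcal{T}$ with the Picard solve map $\Phi\circ\Gamma$ of Algorithm~\ref{alg:quasi_lin}, acting pointwise in $(t;\state)$. Here $\Gamma$ sends $\valuefunc$ to the coefficient field
\begin{align*}
\bc(t;\state)=\frac{2\,\hamfunc(t;\state,\mc G(\valuefunc))}{\delta\,|\mc G(\valuefunc)|^2},
\end{align*}
and $\Phi$ sends a frozen coefficient to the log-heat-kernel expectation \eqref{eq:hj_sol}.

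\textbf{Step 1 (well-definedness).} I first check that $\mathcal{T}$ maps $\mc A$ into $C_b$. Boundedness comes from $|\bm g|\le G$: the log-expectation lies in $[-G,G]$ whenever $\bc\in[c_{\min},c_{\max}]$. Continuity in $(t;\state)$ holds because the Gaussian convolution of a bounded datum is continuous for $t>0$, and it converges to the datum as $t\downarrow 0$. Assumptions (1)--(2), that $\hamfunc$ is $C^2$ and $\term$ is $C^2$ with compact support, are used only here. They guarantee $\bc\in C^{1,2}$ and that the reconstruction $\mc G$ is meaningful, with the bounds of Assumption~\ref{ass:grad_est} on $\mc A$. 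Convexity of $\hamfunc$ in $p$ is not needed for the contraction.

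\textbf{Step 2 (Lipschitz bounds, pointwise).} I repeat Steps 1--2 of the proof of Theorem~\ref{thm:algorithm_convergence} at each fixed $(t;\state)$ instead of at each index $m$. The mean-value bound $|\psi'(s)|\le 2G/s\le 2G/c_{\min}$ gives $\|\Phi(c)-\Phi(\tilde c)\|_\infty\le (2G/c_{\min})\|c-\tilde c\|_\infty$. The quotient estimate for $\hamfunc/|p|^2$, combined with the Lipschitz bound $L_D$ on $\mc G$, gives
\begin{align*}
\|\Gamma(v)-\Gamma(w)\|_\infty\le \frac{2L_D}{\delta}\Big(\frac{L_H}{m_0^2}+\frac{2H_*P_*}{m_0^4}\Big)\|v-w\|_\infty .
\end{align*}
All constants in these bounds are uniform in $(t;\state)$. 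Taking suprema over $\bar\openset\times[0,T]$ in place of maxima over $m$ therefore yields $\|\mathcal T v-\mathcal T w\|_\infty\le\rho\|v-w\|_\infty$ with $\rho=q<1$ by hypothesis (3).

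\textbf{Step 3 (conclusion).} The set $\mc A$ is closed in a Banach space and is $\mathcal T$-invariant, so Banach's theorem gives a unique fixed point $\valuefunc^\star\in\mc A$ and the estimate $\|\valuefunc^{(k)}-\valuefunc^\star\|_\infty\le\rho^k\|\valuefunc^{(0)}-\valuefunc^\star\|_\infty$. This is \eqref{eq:geometric-convergence} with $\kappa=1$.

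\textbf{Main obstacle: the initial iterate.} One subtlety is that $\valuefunc^{(0)}=\term$ must itself lie in $\mc A$. Assumption~\ref{ass:grad_est}(ii) requires $|p|\ge m_0$, and this can fail for a compactly supported $\term$ on the flat region outside its support. The step I would try first is to begin the contraction at $\valuefunc^{(1)}=\mathcal T\term$, assuming $\mathcal T\term\in\mc A$; the heat smoothing plus the regularization of Lemma~\ref{lem:coeff_regularization} makes this plausible. I would then bound
\begin{align*}
\|\valuefunc^{(1)}-\valuefunc^\star\|_\infty\le 2G+\|\valuefunc^\star\|_\infty\le C\,\|\term-\valuefunc^\star\|_\infty ,
\end{align*}
using that both $\valuefunc^{(1)}$ and $\valuefunc^\star$ are log-expectations of bounded data and hence bounded by $G$. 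This is where a constant $\kappa=C/\rho\ge1$ appears, which explains why the statement carries a generic $\kappa$. The degenerate case $\term=\valuefunc^\star$ is trivial.
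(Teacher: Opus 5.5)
Your Steps 2--3 are the paper's proof: the mean-value bound $2G/c_{\min}$ for the log-expectation map, the quotient bound for the coefficient map, composition to get $\rho=q$, and then Banach's theorem on the closed $\mathcal{T}$-invariant set $\mathcal{A}$; the paper's proof likewise tacitly takes $v^{(0)}=\ell\in\mathcal{A}$ and so effectively gets $\kappa=1$. Your detour through $v^{(1)}=\mathcal{T}\ell$ does not really weaken that tacit hypothesis: the coefficient $2H/(\delta|\mathcal{G}(\ell)|^2)$, and hence $\mathcal{T}\ell$, is undefined exactly where $\mathcal{G}(\ell)$ vanishes, and since $(\mathcal{T}v)(0;\cdot)=\ell$ for every $v$, heat smoothing cannot produce $|p|\ge m_0$ at $t=0$ when $\mathcal{G}$ is the exact gradient, so it is cleaner to state $\ell\in\mathcal{A}$ (or impose nondegeneracy only on the tube of Remark~\ref{rem:assumption_scope}) as an explicit hypothesis.
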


\begin{proof}
\label{proof:convergence}
The proof establishes that the iteration is a contraction mapping on $\mc{A}$.

\textbf{Step 1: Fixed-point formulation.} Define the operator $\mathcal{T}: \mc{A} \to \mc{A}$ by
\begin{align}
\label{eq:frozen-operator}
(\mathcal{T}\valuefunc)(t; \state) := -\frac{1}{c[\valuefunc]} \log \mathbb{E}_{y \sim \mathcal{N}(\state, \delta t I_n)} \left[\exp(-c[\valuefunc] \cdot \term(y))\right],
\end{align}
where $c[\valuefunc] := \frac{2}{\delta}\frac{\hamfunc(t; \state, D\valuefunc)}{\normop{D\valuefunc}^2}$. \footnote{Note that $c[\valuefunc]$ is the \emph{coefficient functional}: the operator that maps a solution function $\valuefunc$ to its frozen coefficient. At a point $(t; \state)$, this evaluates to $c[\valuefunc](t; \state)$, which coincides with the pointwise notation $c(t; \state)$ in \autoref{sec:methods}. The functional notation emphasizes the dependence of the entire operator $\mathcal{T}$ on the solution function.} One step of Algorithm~\ref{alg:quasi_lin} is exactly $\valuefunc^{(k+1)} = \mathcal{T}\valuefunc^{(k)}$: the coefficient is frozen at the current iterate, the linear heat equation is solved exactly by the Gaussian expectation, and the value is recovered by the logarithmic inverse. We emphasize that the viscous solution $\valuefunc^\delta$ of \eqref{eq:ivp-viscous} is, in general, \emph{not} a fixed point of $\mathcal{T}$: the frozen-coefficient solve discards the derivative terms of $c$, so $\mathcal{T}\valuefunc^\delta$ differs from $\valuefunc^\delta$ by the Duhamel integral of the residual (\autoref{sec:defect-bound}).

\textbf{Step 2: Contraction property.} For any $\valuefunc_1, \valuefunc_2 \in \mc{A}$,
\begin{align}
\label{eq:contraction-modulus}
&\normop{\mathcal{T}\valuefunc_1 - \mathcal{T}\valuefunc_2}_\infty = \sup_{t,\state} \left| \frac{1}{c_1} \log \mathbb{E}[\exp(-c_1 \term)] - \frac{1}{c_2} \log \mathbb{E}[\exp(-c_2 \term)] \right| \leq \rho \normop{\valuefunc_1 - \valuefunc_2}_\infty,
\end{align}
where $c_i = c[\valuefunc_i]$. The inequality follows thus: first, the map $c \mapsto -\tfrac{1}{c}\log\mathbb{E}[e^{-c\term}]$ is Lipschitz in $c$ with constant $2G/c_{\min}$ on $c \in [c_{\min}, c_{\max}]$, $|\term| \le G$ (mean value theorem applied to the log-sum-exp functional, as in the proof of \autoref{thm:algorithm_convergence}); second, the coefficient functional $\valuefunc \mapsto c[\valuefunc]$ is Lipschitz with constant $\frac{2L_D}{\delta}\left(\frac{L_H}{m_0^2} + \frac{2H_*P_*}{m_0^4}\right)$ under Assumption~\ref{ass:grad_est}. Their composition gives $\rho = q$ of \eqref{eq:contraction_const}, and hypothesis (3) asserts $\rho < 1$.

\textbf{Step 3: Application of the Banach fixed-point theorem.} The set $\mc{A}$ is a closed subset of the complete space $\left(C_b, \normop{\cdot}_\infty\right)$ and is $\mathcal{T}$-invariant by hypothesis, hence complete; since $\mathcal{T}$ is a $\rho$-contraction on it, the sequence $\valuefunc^{(k+1)} = \mathcal{T}\valuefunc^{(k)}$ converges geometrically to the unique fixed point $\valuefunc^{\star} \in \mc{A}$ as in \eqref{eq:geometric-convergence}.
\end{proof}

\begin{remark}[Remark on the fixed point]
\label{rem:fixed-point-honesty}
Theorem~\ref{thm:convergence} is a statement about the numerical map that Algorithm~\ref{alg:quasi_lin} implements. The limit $\valuefunc^{\star}$ solves the implicit relation $\valuefunc^{\star} = \mathcal{T}\valuefunc^{\star}$, \ie, the viscous HJ equation with the Hamiltonian replaced by its quadratic quasi-linearization at $\valuefunc^{\star}$ itself. When $\hamfunc = \tfrac{1}{2}|p|^2$ the surrogate is the Hamiltonian and $\valuefunc^{\star} = \valuefunc^\delta$ exactly; for general Hamiltonians the two differ by the quasi-linearization defect $\eqldefect \triangleq \normop{\valuefunc^{\star} - \valuefunc^\delta}_\infty$, which we bound in \autoref{sec:defect-bound} and carry, undiluted, into the total error bound of \autoref{thm:total-error}.
\end{remark}

\begin{corollary}[Convergence Rate]
\label{cor:convergence-rate}
Under the conditions of Theorem \ref{thm:convergence}, the number of iterations $K$ required to achieve $\normop{\valuefunc^{(K)} - \valuefunc^{\star}}_\infty \leq \varepsilon$ is bounded by
\begin{equation}
\label{eq:convergence-rate-K}
K \leq \left\lceil \frac{\log(\varepsilon/\kappa) - \log\normop{\valuefunc^{(0)} - \valuefunc^{\star}}_\infty}{\log \rho} \right\rceil = O\left(\log \frac{1}{\varepsilon}\right).
\end{equation}
\end{corollary}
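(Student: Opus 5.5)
The plan is to invert the geometric estimate \eqref{eq:geometric-convergence} of Theorem~\ref{thm:convergence} for the smallest admissible iteration count. Write $e_0 \triangleq \normop{\valuefunc^{(0)} - \valuefunc^{\star}}_\infty$. If $e_0 = 0$ the claim is trivial, since every iterate equals $\valuefunc^{\star}$, so I will assume $e_0 > 0$. I will also assume $0<\rho<1$; when $\rho = 0$ the fixed point is reached after one step and any $K \ge 1$ works.

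\textbf{Step 1: a sufficient condition.} By Theorem~\ref{thm:convergence}, $\normop{\valuefunc^{(K)} - \valuefunc^{\star}}_\infty \le \kappa\rho^K e_0$. It is therefore enough to choose $K$ with $\kappa\rho^K e_0 \le \varepsilon$. Taking logarithms gives
\begin{align}
K\log\rho \le \log(\varepsilon/\kappa) - \log e_0 .
\end{align}

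\textbf{Step 2: solve for $K$.} Since $\log\rho<0$, dividing by it reverses the inequality:
\begin{align}
K \ge K_* \triangleq \frac{\log(\varepsilon/\kappa) - \log e_0}{\log\rho}.
\end{align}
The least integer meeting this is $\lceil K_*\rceil$. When $\varepsilon \ge \kappa e_0$ the numerator is nonnegative, so $K_* \le 0$ and $K=0$ already suffices; this is consistent with the ceiling formula, read as $\max\{0,\lceil K_*\rceil\}$. The number of iterations actually needed to reach accuracy $\varepsilon$ is at most this first admissible index, which is the stated bound.

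\textbf{Step 3: asymptotics.} Rewrite the bound as
\begin{align}
K_* = \frac{\log(1/\varepsilon) + \log(\kappa e_0)}{\log(1/\rho)}.
\end{align}
Here $\kappa$, $e_0$ and $\rho$ do not depend on $\varepsilon$; $\rho$ is the constant $q$ of \eqref{eq:contraction_const}. Hence $\lceil K_*\rceil \le \log(1/\varepsilon)/\log(1/\rho) + C$ for a constant $C$, which is $O(\log(1/\varepsilon))$ as $\varepsilon\to 0$.

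There is no real obstacle in this argument. The only care needed is the sign flip when dividing by $\log\rho<0$, together with the degenerate cases $e_0=0$, $\rho=0$ and $\varepsilon\ge\kappa e_0$.
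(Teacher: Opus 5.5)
Your proof is correct and follows the same route as the paper: invert the geometric decay estimate, take logarithms, reverse the inequality because $\log\rho<0$, and take the ceiling. The one difference is that you invert the $\kappa\rho^K$ bound of Theorem~\ref{thm:convergence} directly, which reproduces the $\log(\varepsilon/\kappa)$ in the stated formula, whereas the paper inducts on the one-step contraction (effectively $\kappa=1$) and its final ceiling drops $\kappa$; your explicit treatment of the degenerate cases $e_0=0$, $\rho=0$ and $\varepsilon\ge\kappa e_0$ is a sound addition the paper omits.
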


\begin{proof}[Proof of Corollary \ref{cor:convergence-rate}]
\label{proof:convergence-rate}
From Theorem~\ref{thm:convergence}, the iterates satisfy $\normop{\valuefunc^{(k+1)} - \valuefunc^{\star}}_\infty \leq \rho \normop{\valuefunc^{(k)} - \valuefunc^{\star}}_\infty$ for $\rho < 1$. By induction,
\begin{align}
\label{eq:rate-induction}
\normop{\valuefunc^{(K)} - \valuefunc^{\star}}_\infty \leq \rho^K \normop{\valuefunc^{(0)} - \valuefunc^{\star}}_\infty.
\end{align}
To achieve $\normop{\valuefunc^{(K)} - \valuefunc^{\star}}_\infty \leq \varepsilon$, we require,
\begin{align}
\label{eq:rate-requirement}
\rho^K \normop{\valuefunc^{(0)} - \valuefunc^{\star}}_\infty \leq \varepsilon
\quad \implies \quad
K \geq \frac{\log(\varepsilon/\normop{\valuefunc^{(0)} - \valuefunc^{\star}}_\infty)}{\log \rho}.
\end{align}
Since $\rho < 1$, we have $\log \rho < 0$, so the minimum integer $K$ satisfying this is,
\begin{align}
\label{eq:rate-ceiling}
K = \left\lceil \frac{\log \varepsilon - \log\normop{\valuefunc^{(0)} - \valuefunc^{\star}}_\infty}{\log \rho} \right\rceil = O(\log(1/\varepsilon)).
\end{align}
\end{proof}

\subsection{The Quasi-Linearization Defect}
\label{sec:defect-bound}

The gap between the fixed point $\valuefunc^{\star}$ of Theorem~\ref{thm:convergence} and the viscous solution $\valuefunc^\delta$ of \eqref{eq:ivp-viscous} is the price of freezing the coefficient. We quantify it here as the Duhamel norm of the discarded residual of Lemma~\ref{lem:exact_residual}; the resulting bound $\eqldefect$ enters \autoref{thm:total-error} as the fourth error source.

\begin{theorem}[Fixed-point defect of the frozen-coefficient scheme]
\label{thm:defect}
Let $\valuefunc^{\star}$ be the fixed point of Theorem~\ref{thm:convergence} with converged coefficient $\bc^{\star} \triangleq c[\valuefunc^{\star}] \in C^{1,2}\left(\bar\openset\times[0,T]\right)$, and let $\valuefunc^\delta \in C^{1,2}$ solve \eqref{eq:ivp-viscous}. Assume,
\begin{enumerate}[(1)]
	\item coefficient bounds: $c_{\min} \le \bc^{\star} \le c_{\max}$ and $\max\left\{\normop{\bc^{\star}_t}_\infty, \normop{D\bc^{\star}}_\infty, \normop{\Delta\bc^{\star}}_\infty\right\} \le L_c$;
	\item value bounds: $\normop{\valuefunc^{\star}}_\infty, \normop{\valuefunc^\delta}_\infty \le G_v$ and $m_0 \le |D\valuefunc^{\star}| \le P_*$, $|D\valuefunc^\delta| \le P_*$;
	\item the Hamiltonian bounds of Assumption~\ref{ass:grad_est}, item (3).
\end{enumerate}
Then it follows that the quasi-linearization residual is,
\begin{align}
\label{eq:defect-bound}
\eqldefect \;=\; \normop{\valuefunc^{\star} - \valuefunc^\delta}_\infty
\;\le\; \frac{T}{c_{\min}}\, e^{2 c_{\max} G_v}\, \left(\bar{\bR}_{\mathrm{alg}} + \bar{\bR}_{\mathrm{der}}\right),
\end{align}
where the algebraic and derivative residual ceilings are,
\begin{align}
\label{eq:defect-residual-ceilings}
\bar{\bR}_{\mathrm{alg}} &\le c_{\max}\left(L_H + \frac{2H_*P_*}{m_0^2}\right)\normop{D\valuefunc^\delta - D\valuefunc^{\star}}_\infty, \nonumber \\
\bar{\bR}_{\mathrm{der}} &\le L_c\, G_v\left(1 + \frac{\delta}{2}\right) + \frac{\delta}{2}\,L_c^2\, G_v^2 + \delta\left(1 + c_{\max} G_v\right) L_c\, P_*.
\end{align}
\end{theorem}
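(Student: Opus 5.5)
The plan is to compare $\valuefunc^\delta$ and $\valuefunc^{\star}$ in the linearized variable, where both live under one linear heat operator with the \emph{same} frozen coefficient $\bc^{\star}$. I will then pull the difference back to the value scale. First set $\valueparam^{\star}\triangleq\exp(-\bc^{\star}\valuefunc^{\star})$. By construction of the fixed point (Theorem~\ref{thm:convergence}, Remark~\ref{rem:fixed-point-honesty}), this solves the homogeneous heat problem \eqref{eq:Heat-Equation} with datum $e^{-\bc^{\star}\bm g}$. Next set $\tilde\valueparam\triangleq\exp(-\bc^{\star}\valuefunc^\delta)$, using the converged coefficient but the true viscous solution. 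Lemma~\ref{lem:exact_residual}, applied with $\bc=\bc^{\star}\in C^{1,2}$, gives
\begin{align*}
\tilde\valueparam_t-\tfrac{\delta}{2}\Delta\tilde\valueparam=\tilde\valueparam\,(\bR_{\mathrm{alg}}+\bR_{\mathrm{der}}).
\end{align*}
The two functions share the datum $e^{-\bc^{\star}(0;\cdot)\bm g}$. Hence $e\triangleq\tilde\valueparam-\valueparam^{\star}$ solves the inhomogeneous heat equation with zero initial data, and Duhamel's principle yields $e(t)=\int_0^t S(t-s)[\tilde\valueparam(\bR_{\mathrm{alg}}+\bR_{\mathrm{der}})](s)\,ds$. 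Here $S$ is the Gaussian heat semigroup of Proposition~\ref{prop:heat_transformation}. Since $S$ is a sup-norm contraction (a unit-mass positive kernel, Remark~\ref{rem:admissible_data}), I get $\|e\|_\infty\le T\,\|\tilde\valueparam\|_\infty(\bar\bR_{\mathrm{alg}}+\bar\bR_{\mathrm{der}})$, with $\|\tilde\valueparam\|_\infty\le e^{c_{\max}G_v}$.

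The second step returns to the value scale. I write $\valuefunc^\delta-\valuefunc^{\star}=-\frac{1}{\bc^{\star}}(\log\tilde\valueparam-\log\valueparam^{\star})$. By the mean value theorem on the logarithm, $|\log a-\log b|\le|a-b|/\min(a,b)$. Both arguments are bounded below by $e^{-c_{\max}G_v}$ by the value bounds in hypothesis (2). So $\eqldefect\le\frac{1}{c_{\min}}e^{c_{\max}G_v}\|e\|_\infty$, and combining the two exponential factors produces the $e^{2c_{\max}G_v}$ and $T/c_{\min}$ prefactors in \eqref{eq:defect-bound}.

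The third step bounds the residual ceilings term by term from \eqref{eq:exact_residual}.

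For $\bR_{\mathrm{der}}$, I substitute $|\bc^{\star}_t|,|D\bc^{\star}|,|\Delta\bc^{\star}|\le L_c$, $|\valuefunc^\delta|\le G_v$ and $|D\valuefunc^\delta|\le P_*$. This gives $L_cG_v+\tfrac{\delta}{2}L_cG_v+\tfrac{\delta}{2}L_c^2G_v^2+\delta(1+c_{\max}G_v)L_cP_*$, which is exactly the stated ceiling.

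For $\bR_{\mathrm{alg}}=\bc^{\star}\bigl(\hamfunc(D\valuefunc^\delta)-\tfrac{\delta}{2}\bc^{\star}|D\valuefunc^\delta|^2\bigr)$, I use the definition of $\bc^{\star}$: the fixed-point choice \eqref{eq:coeff_frozen} gives $\tfrac{\delta}{2}\bc^{\star}=\hamfunc(D\valuefunc^{\star})/|D\valuefunc^{\star}|^2$. I then add and subtract $\hamfunc(D\valuefunc^{\star})$, so that
\begin{align*}
\bR_{\mathrm{alg}}=\bc^{\star}\Bigl[\hamfunc(D\valuefunc^\delta)-\hamfunc(D\valuefunc^{\star})+\hamfunc(D\valuefunc^{\star})\bigl(1-|D\valuefunc^\delta|^2/|D\valuefunc^{\star}|^2\bigr)\Bigr].
\end{align*}
The Lipschitz bound of Assumption~\ref{ass:grad_est}(3) controls the first difference by $L_H|D\valuefunc^\delta-D\valuefunc^{\star}|$. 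For the second term I bound $\bigl||q|^2-|p|^2\bigr|\le 2P_*|p-q|$ and divide by $m_0^2$, giving $2H_*P_*/m_0^2$. Together these give the stated $\bar\bR_{\mathrm{alg}}$.

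The main obstacle is in the first step: justifying that $\tilde\valueparam$ and $\valueparam^{\star}$ genuinely share initial data and that the Duhamel representation on $\ren$ is legitimate for a bounded source. I would handle this as follows. At $t=0$ both $\valuefunc^\delta$ and $\valuefunc^{\star}$ equal $\bm g$, and the same $\bc^{\star}(0;\cdot)$ multiplies both. The source $\tilde\valueparam\,\bR$ is bounded and continuous by the $C^{1,2}$ hypotheses, so uniqueness of bounded solutions (as in the proof of Proposition~\ref{prop:heat_transformation}) identifies $e$ with the Duhamel integral.

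A subtler point is that the fixed-point heat solve uses $\bc^{\star}$ only as a frozen datum parameter. I must read "$\valueparam^{\star}$ solves the homogeneous heat equation" as the defining property of the surrogate map, not something to be derived. If the frozen coefficient is treated as time-independent within a step while $\bc^{\star}$ depends on $t$, I would absorb the mismatch into the $\bc^{\star}_t$ term already present in $\bR_{\mathrm{der}}$.
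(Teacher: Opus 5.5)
Your route is the paper's own, step for step: the same pair $\valueparam^{\star}=e^{-\bc^{\star}\valuefunc^{\star}}$ and $\tilde{\valueparam}=e^{-\bc^{\star}\valuefunc^{\delta}}$, Lemma~\ref{lem:exact_residual} for the forcing of $\tilde{\valueparam}$, Duhamel with the $L^\infty$-contractive heat semigroup, and the logarithmic mean-value conversion. Your majorizations of $\bR_{\mathrm{alg}}$ and $\bR_{\mathrm{der}}$ are also identical to the paper's, and they check out.

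The gap is the step you flag at the end: that $\valueparam^{\star}$ solves the homogeneous heat equation. The paper asserts this without argument too, and neither of your two resolutions works. You cannot take it as a defining property. The statement concerns the fixed point of $\mathcal{T}$ in \eqref{eq:frozen-operator}, which is the map whose contraction Theorem~\ref{thm:convergence} proves. There the coefficient enters as the scalar $\bc^{\star}(t;\state)$ at the output point. Thus $\valueparam^{\star}(t;\state)=F\bigl(\bc^{\star}(t;\state);t,\state\bigr)$ with $F(s;t,\state)\triangleq\mathbb{E}_{\statey\sim\mathcal{N}(\state,\delta tI_n)}\bigl[e^{-s\term(\statey)}\bigr]$. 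Each $F(s;\cdot)$ solves the heat equation, but the composition does not:
\[
\valueparam^{\star}_t-\tfrac{\delta}{2}\Delta\valueparam^{\star}=F_s\bigl(\bc^{\star}_t-\tfrac{\delta}{2}\Delta\bc^{\star}\bigr)-\delta\,D\bc^{\star}\cdot D_{\state}F_s-\tfrac{\delta}{2}F_{ss}\,|D\bc^{\star}|^2\triangleq S^{\star},\qquad s=\bc^{\star}(t;\state).
\]
This commutator involves $D\bc^{\star}$ and $\Delta\bc^{\star}$, not only $\bc^{\star}_t$. It is a source in the equation for $\valueparam^{\star}$, built from exponentially tilted moments of $\term$ rather than from $\valuefunc^{\delta}$. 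So it is not a piece of $\bR_{\mathrm{der}}(\bc^{\star},\valuefunc^{\delta})$ and cannot be absorbed into it. The difference $\tilde{\valueparam}-\valueparam^{\star}$ actually solves the heat equation with forcing $\tilde{\valueparam}(\bR_{\mathrm{alg}}+\bR_{\mathrm{der}})-S^{\star}$.

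The fix is to keep $S^{\star}$ in the Duhamel integral. Since $\valuefunc^{\delta}(0;\cdot)=\term$, hypothesis (2) gives $|\term|\le G_v$ and $|D\term|\le P_*$. Hence, for $s\in[c_{\min},c_{\max}]$, $|F_s|\le G_v e^{c_{\max}G_v}$ and $|F_{ss}|\le G_v^2 e^{c_{\max}G_v}$. Integrating by parts against the Gaussian gives $|D_{\state}F_s|=\bigl|\mathbb{E}\bigl[(1-s\term)\,D\term\,e^{-s\term}\bigr]\bigr|\le(1+c_{\max}G_v)P_*e^{c_{\max}G_v}$. So $\normop{S^{\star}}_\infty$ is at most $e^{c_{\max}G_v}$ times exactly the ceiling you derived for $\bar{\bR}_{\mathrm{der}}$, and the rest of your argument yields
\[
\eqldefect\le\frac{T}{c_{\min}}e^{2c_{\max}G_v}\bigl(\bar{\bR}_{\mathrm{alg}}+2\bar{\bR}_{\mathrm{der}}\bigr).
\]
The structure of \eqref{eq:defect-bound} survives, but the derivative ceiling doubles. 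Recovering the stated constant would require a cancellation between $\tilde{\valueparam}\,\bR_{\mathrm{der}}$ and $S^{\star}$, which neither you nor the paper supplies.

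Alternatively, you could redefine the surrogate step as a genuine heat solve with datum $e^{-\bc^{\star}(0;\statey)\term(\statey)}$, the datum written in the proof of Proposition~\ref{prop:heat_transformation}. Then your proof goes through verbatim with the stated constant. But $\valuefunc^{\star}$ is then the fixed point of a different map, whose existence Theorem~\ref{thm:convergence} does not provide.
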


\begin{proof}
\label{proof:defect}
Set $\valueparam^{\star} \triangleq \exp(-\bc^{\star}\valuefunc^{\star})$ and $\tilde\valueparam \triangleq \exp(-\bc^{\star}\valuefunc^\delta)$. By the fixed-point property $\valuefunc^{\star} = \mathcal{T}\valuefunc^{\star}$, the function $\valueparam^{\star}$ \emph{is} the heat-kernel expectation of \eqref{eq:frozen-operator}; \ie it solves the homogeneous heat equation $\valueparam_t - \frac{\delta}{2}\Delta\valueparam = 0$ with initial data $\exp(-\bc^{\star}(0;\cdot)\,\term)$. By Lemma~\ref{lem:exact_residual} applied with coefficient $\bc^{\star}$ and value $\valuefunc^\delta$, the function $\tilde\valueparam$ solves the same equation with a source,
\begin{align}
\label{eq:defect-source}
\tilde\valueparam_t - \frac{\delta}{2}\Delta\tilde\valueparam = \tilde\valueparam\, B, \qquad
B \triangleq \underbrace{\bc^{\star}\left(\hamfunc(t;\state,D\valuefunc^\delta) - \frac{\delta}{2}\bc^{\star}|D\valuefunc^\delta|^2\right)}_{B_{\mathrm{alg}}} + \underbrace{\bR_{\mathrm{der}}\left(\bc^{\star}, \valuefunc^\delta\right)}_{B_{\mathrm{der}}},
\end{align}
and with the \emph{same} initial data, since $\valuefunc^\delta(0;\cdot) = \valuefunc^{\star}(0;\cdot) = \term$. The difference $w \triangleq \tilde\valueparam - \valueparam^{\star}$ therefore solves $w_t - \frac{\delta}{2}\Delta w = \tilde\valueparam B$ with $w(0;\cdot) = 0$, and Duhamel's principle with the heat semigroup $S_\delta(t)$, an $L^\infty$-contraction, gives
\begin{align}
\label{eq:defect-duhamel}
\normop{w(t)}_\infty \le \int_0^t \normop{S_\delta(t-s)\left[\tilde\valueparam(s) B(s)\right]}_\infty ds
\le \int_0^t \normop{\tilde\valueparam(s)}_\infty \normop{B(s)}_\infty\, ds
\le T\, e^{c_{\max}G_v} \sup_{s \le T}\normop{B(s)}_\infty.
\end{align}
For the algebraic part, insert the definition $\bc^{\star} = 2\hamfunc(t;\state,D\valuefunc^{\star})/(\delta|D\valuefunc^{\star}|^2)$, so that $\frac{\delta}{2}\bc^{\star}|D\valuefunc^\delta|^2 = \hamfunc(\cdot, D\valuefunc^{\star})\,|D\valuefunc^\delta|^2/|D\valuefunc^{\star}|^2$ and,
\begin{align}
\label{eq:defect-alg-split}
|B_{\mathrm{alg}}| &\le c_{\max}\left|\hamfunc(\cdot,D\valuefunc^\delta) - \hamfunc(\cdot,D\valuefunc^{\star})\right| + c_{\max}\left|\hamfunc(\cdot,D\valuefunc^{\star})\right|\cdot\frac{\left||D\valuefunc^{\star}|^2 - |D\valuefunc^\delta|^2\right|}{|D\valuefunc^{\star}|^2} \nonumber \\
&\le c_{\max}\left(L_H + \frac{2H_*P_*}{m_0^2}\right)\normop{D\valuefunc^\delta - D\valuefunc^{\star}}_\infty,
\end{align}
using the Lipschitz property of $\hamfunc$ in the co-state and $\left||a|^2 - |b|^2\right| \le (|a|+|b|)|a-b| \le 2P_*|a-b|$. For the derivative part, term-by-term majorization of $\bR_{\mathrm{der}}$ in \eqref{eq:exact_residual} with the assumed ceilings gives the second line of \eqref{eq:defect-residual-ceilings}. Finally, the mean value theorem for the logarithm ($|\log a - \log b| \le |a-b|/\min\{a,b\}$) with $\valueparam^{\star}, \tilde\valueparam \ge e^{-c_{\max}G_v}$ yields,
\begin{align}
\label{eq:defect-log-conversion}
\left|\valuefunc^{\star} - \valuefunc^\delta\right| = \frac{1}{\bc^{\star}}\left|\log\valueparam^{\star} - \log\tilde\valueparam\right|
\le \frac{e^{c_{\max}G_v}}{c_{\min}}\,\normop{w}_\infty,
\end{align}
and chaining \eqref{eq:defect-log-conversion} through \eqref{eq:defect-duhamel} gives \eqref{eq:defect-bound}.
\end{proof}

\begin{remark}[Reading the residual bound]
\label{rem:defect-reading}
Three features deserve emphasis. First, the bound vanishes in the exact case: when $\hamfunc = \tfrac{1}{2}|p|^2$, the coefficient is the constant $1/\delta$, so that $L_c = 0$ eliminates $\bar{\bR}_{\mathrm{der}}$, and $\valuefunc^{\star} = \valuefunc^\delta$ forces $\bar{\bR}_{\mathrm{alg}} = 0$; we thus recover Corollary~\ref{cor:exact_case}. Second, the residual is governed by $L_c$, the  variation rate of the converged coefficient ---  largest near the boundary of the usable part where the Hamiltonian-to-gradient ratio may cause shocks. We notice that this is where the pointwise errors concentrate in our experiments (\autoref{fig:rockets_slices}). Third, the algebraic ceiling is proportional to $\normop{D\valuefunc^\delta - D\valuefunc^{\star}}_\infty$: the residual contracts when the surrogate gradient field tracks the true one, i.e., the quantity the Picard iteration refines. The bound certifies smallness of the residual where $\bc^{\star}$ is slowly varying and offers no reprieve where it is not.
\end{remark}

\subsection{Monte Carlo Error Analysis}

The Gaussian expectation is approximated via Monte Carlo sampling. We analyze the error introduced by this approximation.

\begin{theorem}[Monte Carlo Error]
\label{thm:mc-error}
Let $\hat\valuefunc^{(\delta,N)}(t; \state)$ denote the value function computed using $N$ Monte Carlo samples. Then for any $\delta_p \in (0,1)$, with probability at least $1-\delta_p$,
\begin{equation}
\normop{\hat\valuefunc^{(\delta,N)} - \valuefunc^\delta}_\infty \leq \frac{C_{\mathrm{eff}}}{\sqrt{N}} \sqrt{\log(1/\delta_p)},
\label{eq:mc-error}
\end{equation}
where $C_{\mathrm{eff}}$ is the \emph{effective sample size} that depends on the range of $\term$ and the concentration properties of the exponential weights.
\end{theorem}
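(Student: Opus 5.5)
The plan is to derive the statement from the pointwise concentration bound of Theorem~\ref{thm:mc_complexity}, applied with a fixed frozen coefficient. A union bound then makes it uniform over the finite set of evaluation states $\state_1,\ldots,\state_M$ of Remark~\ref{rem:eval_vs_mc_samples}. I read $\valuefunc^\delta$ on the right-hand side as the exact frozen-coefficient value $\valuefunc_{\bc}$ of \eqref{eq:hj_sol}, i.e.\ the quantity the sampler is unbiased for inside the logarithm. Reading it as the true viscous solution would import $\eqldefect$, which Theorem~\ref{thm:total-error} accounts for separately. The sup norm is interpreted over the evaluation states, since the estimator is only ever formed there.

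\emph{Step 1: invert the pointwise tail.} At a fixed $(t;\state_m)$ with $\bc\in[c_{\min},c_{\max}]$, set the right side of \eqref{eq:mc-concentration} equal to $\delta_p/M$. Solving for the deviation level gives
\[
1-e^{-\bc\varepsilon}=\frac{\beta-\alpha}{\mu}\sqrt{\frac{\log(2M/\delta_p)}{2N}}.
\]
Whenever the right side is at most $1/2$, the elementary bound $-\log(1-x)\le 2x$ on $[0,1/2]$ yields
\[
\varepsilon \le \frac{2(\beta-\alpha)}{\bc\,\mu}\sqrt{\frac{\log(2M/\delta_p)}{2N}}.
\]
Using $\mu\ge\alpha$ (Corollary~\ref{cor:mc_complexity}) and $|\bm g|\le G$ from Assumption~\ref{ass:grad_est}, the prefactor is bounded by $\frac{\sqrt2\,(e^{2c_{\max}G}-1)}{c_{\min}}$, uniformly in $m$. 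For $\bc<0$ I would use the sign-swapped $\alpha,\beta$ of Remark~\ref{rem:coeff_sign}, which gives the same form.

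\emph{Step 2: union bound and constants.} Summing the $M$ failure probabilities gives total failure probability at most $\delta_p$. The factor $\log(2M/\delta_p)$ is at most a constant multiple of $\log(1/\delta_p)$ once $\delta_p\le 1/(2M)$. For larger $\delta_p$ the claim follows by monotonicity of the failure probability in the level. This lets me define $C_{\mathrm{eff}}$ as the uniform prefactor times $\sqrt{1+\log(2M)/\log 2}$, which is indeed governed by the range of $\term$ through $\beta/\alpha$. Optionally, Remark~\ref{rem:tightness-bernstein-jensen} suggests replacing Hoeffding by Bernstein, so that $\beta-\alpha$ becomes $\sqrt{\mathrm{Var}(Z)}/\mu$, closer to an ESS interpretation.

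\emph{Main obstacle.} The hardest step is the interaction with the Picard loop. Within Algorithm~\ref{alg:quasi_lin}, the coefficient $\bc^{(k)}$ is itself built from Monte Carlo gradient estimates, so it is random and not independent of future samples. I would handle this by conditioning on $\bc^{(k)}$: by Remark~\ref{rem:eval_vs_mc_samples}, samples are drawn fresh at each iteration. So, conditionally on the past, Theorem~\ref{thm:mc_complexity} applies verbatim, and the conditional bound integrates to the unconditional one. A further union bound over the $K$ iterations costs a factor $\log K$ inside the square root. Propagating per-step errors through $\Lambda$ via its $q$-Lipschitz property then multiplies by at most $1/(1-q)$, and this is absorbed into $C_{\mathrm{eff}}$.

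The small-$N$ regime, where the linearization of $-\log(1-x)$ fails, needs separate treatment. There I would simply enlarge $C_{\mathrm{eff}}$ so the bound exceeds the trivial range $2G$ of the estimator, which holds because $\hat\valuefunc$ is a log-mean-exp of values in $[-G,G]$.
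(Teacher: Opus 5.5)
Your core argument (Steps 1--2) is correct and uses the same mechanism as the paper's proof: Hoeffding for the sample mean of the bounded weights $e^{-c\ell}$ at a fixed coefficient $c$, then a conversion to the log scale that costs a factor of order $1/(c\cdot\text{floor of the weights})$. Your reading of $v^\delta$ as the frozen-coefficient value is also the paper's reading.

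The conversion differs. Write $G$ for the bound on $|\ell|$; the paper calls it $M$, which clashes with your $M$. The paper applies Hoeffding directly to $|\hat Z_N-Z|$ and inverts at level $\delta_p$. It then uses $|\log a-\log b|\le|a-b|/\min\{a,b\}$ with a floor $Z_{\min}$, which must bound $\hat Z_N$ as well as $Z$; $e^{-cG}$ does. This gives a constant of order $(e^{cG}-e^{-cG})/(cZ_{\min})$ that is valid for every $N$. You instead invert the multiplicative-event tail of Theorem~\ref{thm:mc_complexity}. That route needs the $-\log(1-x)\le 2x$ step and the small-$N$ fallback to the trivial range $2G$, and you handle both correctly. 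In return, $\mu$ rather than the worst-case floor stays in the constant until the last line, which is what a Bernstein or Jensen refinement would exploit.

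The more substantive difference is your union bound over the $M$ evaluation states. The paper's argument is pointwise at a single $(t;x)$ and silently reads the result as a sup norm. Yours actually justifies $\|\cdot\|_\infty$ over the evaluation set, at the honest price of a $\sqrt{\log M}$ factor in $C_{\mathrm{eff}}$. Neither argument covers a sup over a continuum of states.

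Two pieces do not go through as written.

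\emph{The large-$\delta_p$ step.} The ``monotonicity'' argument fails. The inequality $\log(2M/\delta_p)\le(1+\log(2M)/\log 2)\log(1/\delta_p)$ holds exactly when $\delta_p\le 1/2$. As $\delta_p\to 1$, the target level $C_{\mathrm{eff}}\sqrt{\log(1/\delta_p)/N}$ tends to zero. It therefore cannot be inherited from the guarantee at a smaller $\delta_p$, which certifies only a larger level. This is not merely a proof artifact. With $M\ge 3$ independently sampled states whose errors have bounded densities, $\mathbb{P}(\max_m|e_m|\le s)=O(s^M)$, whereas the claim demands roughly $Ns^2/C_{\mathrm{eff}}^2$ as $s\to 0$. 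So state the result for $\delta_p\le 1/2$, or keep $\log(2M/\delta_p)$. The paper's proof has the same slip: it derives $\log(2/\delta_p)$ and writes $\log(1/\delta_p)$.

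\emph{The Picard-loop extension.} It is not needed for the statement as the paper proves it, since the coefficient there is fixed. As sketched, it is also incomplete. Conditioning on $c^{(k)}$ controls the value estimator, but Theorem~\ref{thm:mc_complexity} says nothing about the Monte Carlo gradient estimator that produces $c^{(k)}$. So the term $\Phi(\hat\Gamma(v))-\Phi(\Gamma(v))$ is uncontrolled. In addition, the $q$-contraction of $\Lambda$ holds only on $\mathcal{A}$, and noisy iterates need not remain in $\mathcal{A}$.
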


\begin{proof}
From Lemma~\ref{lemma:hj_solution}, we have
\begin{equation}
\valuefunc^\delta(t; \state) = -\frac{1}{c} \log \mathbb{E}_{y \sim \mathcal{N}(\state, \delta tI_n)}[\exp(-c \term(y))].
\end{equation}

The Monte Carlo estimator is
\begin{equation}
\hat\valuefunc^{(\delta,N)}(t; \state) = -\frac{1}{c} \log \frac{1}{N}\sum_{i=1}^N \exp(-c \term(y_i)),
\end{equation}
where $y_i \overset{\mathrm{iid}}{\sim} \mathcal{N}(\state, \delta tI_n)$.

Let $Z := \mathbb{E}[\exp(-c\term)]$ be the true expectation and $\hat{Z}_N := \frac{1}{N}\sum_{i=1}^N \exp(-c\term(y_i))$ be the sample mean.

By Hoeffding's inequality for bounded random variables (assuming $\term$ is bounded, say $|\term(\state)| \leq M$), we have
\begin{equation}
\mathbb{P}\left(|\hat{Z}_N - Z| \geq \epsilon\right) \leq 2\exp\left(-\frac{N\epsilon^2}{2(\exp(cM) - \exp(-cM))^2}\right).
\end{equation}

Setting the right-hand side equal to $\delta$ and solving for $\epsilon$,
\begin{equation}
\epsilon = \left(\exp(cM) - \exp(-cM)\right)\sqrt{\frac{2\log(2/\delta)}{N}}.
\end{equation}

Using the Lipschitz continuity of $\log$, we find that
\begin{align}
|\hat\valuefunc^{(\delta,N)} - \valuefunc^\delta| &= \frac{1}{c}|\log \hat{Z}_N - \log Z| \leq \frac{1}{c} \cdot \frac{|\hat{Z}_N - Z|}{\min\{\hat{Z}_N, Z\}} \leq \frac{1}{c \cdot Z_{\min}} \cdot \epsilon,
\end{align}
where $Z_{\min}$ is a lower bound on $Z$ (which exists since $\term$ is bounded and continuous).

Combining these bounds yields \eqref{eq:mc-error} with effective constant
\begin{equation}
C_{\mathrm{eff}} := \frac{2(\exp(cM) - \exp(-cM))}{c \cdot Z_{\min}} \sqrt{2}.
\end{equation}

The effective sample size shrinks when $c$ is large (i.e., when $\normop{\term}_\infty \gg 1$), which is expected since the exponential weights become concentrated. Importance sampling with the tilted proposal $e^{-c\term(y)}$ (as in Corollary~\ref{cor:tilted_estimators}) can reduce $C_{\mathrm{eff}}$ significantly.
\end{proof}

\begin{remark}[Variance Reduction]
The standard MC estimator in Theorem \ref{thm:mc-error} has variance $O(1/N)$. However, when $\normop{\term}_\infty$ is large, the exponential weights $\exp(-c\term)$ become highly concentrated, leading to high variance. The log-sum-exp identity of Corollary~\ref{cor:logsumexp}, used for numerical stability, combined with importance sampling, can reduce the effective variance. For well-designed importance distributions, the convergence rate can be improved to $O(1/\sqrt{N})$ with a much smaller constant.
\end{remark}

\subsection{Viscosity Approximation Error}

The viscosity solution $\valuefunc^\delta$ approximates the inviscid solution $\valuefunc$ (the true viscosity solution of the original HJ PDE). We bound this approximation error.

\begin{theorem}[Viscosity Approximation Error]
\label{thm:viscosity-error}
Let $\valuefunc$ be the (unique) viscosity solution of \eqref{eq:ivp-viscous} and $\valuefunc^\delta$ be the viscosity solution of \eqref{eq:ivp-viscous-solution}. Under standard regularity assumptions on $\hamfunc$ and $\term$ \citep{Crandall1983viscosity}, we have
\begin{equation}
\normop{\valuefunc^\delta - \valuefunc}_\infty \leq k\sqrt{\delta},
\label{eq:viscosity-error}
\end{equation}
for some constant $k$ depending on $T$, $\normop{D\hamfunc}$, and $\normop{D^2\term}$.
\end{theorem}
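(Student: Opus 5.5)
The plan is the classical Crandall--Lions vanishing-viscosity comparison, read with $\valuefunc$ the viscosity solution of the inviscid problem \eqref{eq:ivp} and $\valuefunc^\delta$ the solution of \eqref{eq:ivp-viscous}. First, record uniform a priori bounds on $\valuefunc^\delta$ that do not depend on $\delta$. Under Lipschitz continuity of $\hamfunc$ in $(\state,p)$ and $\term \in C^2$ (as in Theorem~\ref{thm:convergence}), apply comparison to translates $\valuefunc^\delta(t;\state+h)$ and to $\term \pm Ct$. This gives $\normop{D\valuefunc^\delta}_\infty \le L$ and $\normop{\valuefunc^\delta_t}_\infty \le C$, with $L$ depending on $T$, $\normop{D\hamfunc}$ and $\normop{D\term}$. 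A semiconcavity-type bound $D^2\valuefunc^\delta \le C(1+1/t)$ is not needed if we use the doubling argument below.

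Second, run the doubling-of-variables argument. For $\varepsilon,\lambda>0$ consider
\begin{align*}
\Psi(t,\state,\statey) = \valuefunc^\delta(t;\state) - \valuefunc(t;\statey) - \frac{|\state-\statey|^2}{2\varepsilon} - \lambda t - \gamma(|\state|^2+|\statey|^2),
\end{align*}
with $\gamma \downarrow 0$ used only to localize the maximum. At an interior maximum point $(\bar t,\bar\state,\bar\statey)$, use $\valuefunc^\delta$ as a classical subsolution and $\valuefunc$ as a viscosity supersolution with test function in $\statey$. Subtracting the two relations gives $\lambda \le \frac{\delta}{2}\Delta_{\state}\bigl(\tfrac{|\state-\bar\statey|^2}{2\varepsilon}\bigr) + L_x|\bar\state-\bar\statey|(1+|p|) + O(\gamma)$. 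Since $\valuefunc^\delta(\bar t;\cdot)$ is $L$-Lipschitz, $|\bar\state-\bar\statey| \le L\varepsilon$ and $|p| \le L$. The Laplacian of the penalty contributes $n\delta/(2\varepsilon)$. Hence any $\lambda > n\delta/(2\varepsilon) + C\varepsilon$ forces the maximum to sit at $t=0$, where both functions equal $\term$. In that case $\Psi \le L|\state-\statey| - |\state-\statey|^2/(2\varepsilon) \le L^2\varepsilon/2$. Unwinding yields
\begin{align*}
\valuefunc^\delta - \valuefunc \le C\varepsilon + T\bigl(n\delta/(2\varepsilon) + C\varepsilon\bigr).
\end{align*}
Optimizing with $\varepsilon = \sqrt{\delta}$ gives a bound of order $\sqrt{\delta}$. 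The symmetric argument, exchanging the roles of the two functions, bounds $\valuefunc - \valuefunc^\delta$, and together they give \eqref{eq:viscosity-error} with $k$ depending on $T$, $n$, $L$ and $\normop{D\hamfunc}$. If $\term \in C^2$ one can alternatively use the $D^2\term$ bound to control the semiconcavity constant, which matches the stated dependence.

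The main obstacle is the first step, the $\delta$-uniform Lipschitz bound for $\valuefunc^\delta$ with the state-dependent, nonconvex reachability Hamiltonian on the unbounded domain $\ren$. The role of $\gamma$ is to make the maximum of $\Psi$ attained, and the error it introduces must vanish as $\gamma \to 0$. For \eqref{eq:HJ-RCBRT-Visc} the $\min\{0,\cdot\}$ operator preserves the Lipschitz structure of $\hamfunc$, so the same argument applies verbatim. If a direct proof proves unwieldy, we would cite \citep{CrandallTwoApprox, Crandall1984}, whose hypotheses (bounded, uniformly continuous data; $\hamfunc$ Lipschitz in $p$ and satisfying $|\hamfunc(t;\state,p)-\hamfunc(t;\statey,p)| \le C|\state-\statey|(1+|p|)$) are exactly the "standard regularity assumptions" referenced in the statement.
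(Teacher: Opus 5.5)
Your argument is correct under your reading of $\valuefunc$ as the inviscid solution and $\valuefunc^\delta$ as the viscous one. That is the only reading under which the statement makes sense, and it is the one the paper's proof uses. Your route is genuinely different from the paper's, and more complete.

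The paper subtracts the two equations as though $\hamfunc$ were linear in the co-state. It then invokes a bound $\normop{\Delta\valuefunc^\delta}_\infty\le M_2$ to get $\normop{\valuefunc^\delta-\valuefunc}_\infty\le\delta M_2T/2$, and defers the $O(\sqrt{\delta})$ rate to a citation. Read charitably, that first step compares $\valuefunc$ with the inviscid sub/supersolutions $\valuefunc^\delta\mp\tfrac{\delta}{2}M_2t$, which would even give $O(\delta)$. But this needs $M_2$ uniform in $\delta$, and that fails once $\valuefunc$ has kinks (as reachability values typically do), since $D^2\valuefunc^\delta$ then grows like $1/\delta$ there.

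Your doubling argument is essentially the Crandall--Lions proof being cited. Placing the Laplacian on the penalty $|\state-\statey|^2/(2\varepsilon)$ means you need only the one-sided bound $D^2\valuefunc^\delta\le\varepsilon^{-1}I$ at the maximum. The only $\delta$-uniform input is the Lipschitz constant $L$, and balancing $n\delta/(2\varepsilon)$ against $C\varepsilon$ is exactly where $\sqrt{\delta}$ comes from.

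Relative to the statement's wording, your $k$ depends on $n$, $T$, $\normop{D\term}_\infty$ and the Lipschitz constants of $\hamfunc$, not on $\normop{D^2\term}_\infty$. Your closing aside about recovering a $D^2\term$ dependence via semiconcavity would need convexity of $\hamfunc$ in $p$, which the max-min reachability Hamiltonian lacks. Your main line never uses it, so this costs nothing.

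One step to tighten: comparing translates $\valuefunc^\delta(t;\state+h)$ gives the $\delta$-uniform Lipschitz bound directly only for $\state$-independent $\hamfunc$. Under $|\hamfunc(t;\state,p)-\hamfunc(t;\statey,p)|\le C|\state-\statey|(1+|p|)$, run the same doubling on $\valuefunc^\delta(t;\state)-\valuefunc^\delta(t;\statey)-Ke^{2Ct}|\state-\statey|$ with $K>\max\{1,\normop{D\term}_\infty\}$. At a positive maximum one has $\bar{\state}\neq\bar{\statey}$, and $\Delta\valuefunc^\delta(\bar{\state})\le\Delta\valuefunc^\delta(\bar{\statey})$ because both copies carry the same diffusion. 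The time derivative of the penalty then absorbs the $\state$-dependence of $\hamfunc$, giving $L=Ke^{2CT}$ independent of $\delta$.
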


\begin{proof}
The proof follows from the classical results of \citep{Crandall1983viscosity} on viscosity approximations to first-order Hamilton-Jacobi equations.

By the comparison principle for viscosity solutions (Theorem 2.1 in \citep{EvansPDEBook}), since $\valuefunc$ and $\valuefunc^\delta$ satisfy the same terminal condition $\valuefunc(0; \state) = \valuefunc^\delta(0; \state) = \term(\state)$, the difference $\tilde{\valuefunc} := \valuefunc^\delta - \valuefunc$ satisfies
\begin{equation}
\tilde{\valuefunc}_t + \hamfunc(t; \state, Dw) = \frac{\delta}{2}\Delta \valuefunc^\delta \quad \text{in } \Omega \times (0,T],
\end{equation}
with $\tilde{\valuefunc}(0; \state) = 0$.

By the maximum principle and the fact that $\normop{\Delta \valuefunc^\delta}_\infty \leq M_2$ (since $\valuefunc^\delta$ is smooth; $M_2$ implicitly depends on the constant $c$ through the terms $T$, $\normop{D\hamfunc}$, and $\normop{D^2\term}$), we obtain
\begin{equation}
\normop{\tilde{\valuefunc}}_\infty \leq \frac{\delta M_2 T}{2}.
\end{equation}

More refined estimates using energy methods show that $\normop{\tilde{\valuefunc}}_\infty = O(\sqrt{\delta})$ as $\delta \to 0$ (see \cite{CrandallTwoApprox}, Theorem 6.4).
\end{proof}

\subsection{Combined Error Bound}

Combining the quasilinearization, Monte Carlo, and viscosity errors, we obtain the total error of our numerical scheme.

\begin{theorem}[Total Error Bound]
\label{thm:total-error}
Let $\hat{\valuefunc}^{K,N,\delta}$ denote the numerical approximation after $K$ iterations with $N$ Monte Carlo samples and viscosity parameter $\delta$, and let $\eqldefect \triangleq \normop{\valuefunc^{\star} - \valuefunc^\delta}_\infty$ be the quasi-linearization defect of \S\ref{sec:defect-bound}. Then for any $\delta_p \in (0,1)$, with probability at least $1-\delta_p$,
\begin{equation}
\normop{\hat{\valuefunc}^{K,N,\delta} - \valuefunc}_\infty \leq \kappa \rho^K + \frac{C_{\text{eff}}}{\sqrt{N}}\sqrt{\log(1/\delta_p)} + \eqldefect + k\sqrt{\delta},
\label{eq:total-error}
\end{equation}
where $\valuefunc$ is the true (inviscid) viscosity solution of \eqref{eq:ivp}.
\end{theorem}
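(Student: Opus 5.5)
The plan is to chain four error sources through the triangle inequality, taking the probabilistic statement from the Monte Carlo term alone. Write $\bar\valuefunc^{(K)}$ for the \emph{exact-expectation} $K$-th Picard iterate (Algorithm~\ref{alg:quasi_lin} run with the heat-kernel expectations evaluated exactly). Then decompose
\begin{align*}
\hat{\valuefunc}^{K,N,\delta} - \valuefunc
= \big(\hat{\valuefunc}^{K,N,\delta} - \bar\valuefunc^{(K)}\big)
+ \big(\bar\valuefunc^{(K)} - \valuefunc^{\star}\big)
+ \big(\valuefunc^{\star} - \valuefunc^\delta\big)
+ \big(\valuefunc^\delta - \valuefunc\big),
\end{align*}
and bound each bracket in sup norm by an earlier result:
\begin{itemize}
\item[] the second bracket by Theorem~\ref{thm:convergence}, giving $\kappa\rho^K\normop{\valuefunc^{(0)}-\valuefunc^\star}_\infty$; the factor $\normop{\valuefunc^{(0)}-\valuefunc^\star}_\infty$ is absorbed into $\kappa$, which is finite since both functions lie in $\mc A$ and are bounded by $G$;
\item[] the third bracket by the definition of $\eqldefect$, and Theorem~\ref{thm:defect} if an explicit value is wanted;
\item[] the fourth bracket by Theorem~\ref{thm:viscosity-error}, giving $k\sqrt{\delta}$.
\end{itemize}
These three pieces are deterministic.

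The only random piece is the first bracket, and this is where I expect the real work to be. Theorem~\ref{thm:mc-error} controls a \emph{single} log-sum-exp evaluation with a fixed coefficient. The sampled iteration, however, feeds noisy gradients into $\Gamma$, so the sampling errors propagate across iterations.

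I would handle this propagation with a perturbed-contraction argument. Let $\hat\Lambda$ be the sampled map and set $e_k \triangleq \normop{\hat\valuefunc^{(k)}-\bar\valuefunc^{(k)}}_\infty$. Then
\[
e_{k+1} \le \normop{\hat\Lambda(\hat\valuefunc^{(k)})-\Lambda(\hat\valuefunc^{(k)})}_\infty + \normop{\Lambda(\hat\valuefunc^{(k)})-\Lambda(\bar\valuefunc^{(k)})}_\infty \le \eta_k + q\,e_k .
\]
Here $\eta_k$ is the one-step sampling error. Since fresh samples are drawn at each iteration and evaluation state, conditionally on the current iterate the frozen coefficients are deterministic. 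Theorem~\ref{thm:mc_complexity} (or Theorem~\ref{thm:mc-error}) then bounds each of the $MK$ value evaluations. The gradient estimator of Corollary~\ref{cor:mc_gradient} is controlled analogously through the stability of $\mc G$ from Assumption~\ref{ass:grad_est}.

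A union bound over these evaluations, with confidence level $\delta_p/(MK)$ each, gives $\eta_k \le C\sqrt{\log(MK/\delta_p)/N}$ for all $k$ simultaneously with probability at least $1-\delta_p$. Summing the geometric recursion gives $e_K \le \eta/(1-q)$. The logarithmic factors in $M,K$ and the factor $1/(1-q)$ are then absorbed into $C_{\mathrm{eff}}$, which yields the stated form $\frac{C_{\mathrm{eff}}}{\sqrt N}\sqrt{\log(1/\delta_p)}$ for $\delta_p$ bounded away from $1$.

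This absorption is the main obstacle. Done honestly, it requires the sampled iterates to remain in the invariant set $\mc A$, so that the contraction still applies. If I cannot verify that invariance, my fallback is to interpret $\hat{\valuefunc}^{K,N,\delta}$ as exact Picard iterates followed by a single final Monte Carlo evaluation. In that reading Theorem~\ref{thm:mc-error} applies directly, and adding the four pieces gives \eqref{eq:total-error}.
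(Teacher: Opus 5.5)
This is the paper's proof: it uses the same triangle inequality into iteration error ($\kappa\rho^K$ from Theorem~\ref{thm:convergence}, with $\|\valuefunc^{(0)}-\valuefunc^{\star}\|_\infty$ absorbed into $\kappa$), sampling error (Theorem~\ref{thm:mc-error}), $\eqldefect$ by definition, and $k\sqrt{\delta}$ (Theorem~\ref{thm:viscosity-error}). The paper applies Theorem~\ref{thm:mc-error} to the sampling bracket directly and never tracks how noise propagates through $\Gamma$ across iterations, which is in effect your fallback reading; your perturbed-contraction step is therefore a refinement the paper does not attempt, and completing it would need both the $\mc A$-invariance you flag and a separate concentration bound for the gradient estimator (Lipschitz stability of $\mc G$ in $v$ does not control its sampling noise), or else a deterministic $\mc G$ as in Theorem~\ref{thm:algorithm_convergence}.
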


\begin{proof}
\label{proof:total-error}
By the triangle inequality,
\begin{align}
\label{eq:total-error-triangle}
\normop{\hat{\valuefunc}^{K,N,\delta} - \valuefunc}_\infty &\leq \underbrace{\normop{\hat{\valuefunc}^{K,N,\delta} - \valuefunc^{\star}}_\infty}_{\text{Iteration + MC error}} + \underbrace{\normop{\valuefunc^{\star} - \valuefunc^\delta}_\infty}_{\text{Quasi-linearization defect}} + \underbrace{\normop{\valuefunc^\delta - \valuefunc}_\infty}_{\text{Viscosity error}}.
\end{align}
The first bracket splits once more into the deterministic iteration error, bounded by $\kappa\rho^K$ via Theorem~\ref{thm:convergence}, and the sampling error of replacing exact expectations by $N$-sample averages, bounded by Theorem~\ref{thm:mc-error} with probability $1-\delta_p$. The second bracket is $\eqldefect$ by definition, and the third is Theorem~\ref{thm:viscosity-error}. Note that the term $\normop{\valuefunc^{(0)}-\valuefunc^{\star}}_\infty$ is absorbed into $\kappa$ in the first term since after $K$ iterations the initial gap shrinks by the factor $\rho^K$ of the contraction mapping analysis.
\end{proof}

\begin{proof}[Proof of Corollary \ref{cor:conservative_cert}]
\label{proof:conservative_cert}
On the event of Theorem~\ref{thm:total-error} --- which has probability at least $1-\delta_p$ --- we have $\normop{\hat{\valuefunc}^{K,N,\delta} - \valuefunc}_\infty \le E$. If $\hat{\valuefunc}^{K,N,\delta}(t;\state) > E$ then $\valuefunc(t;\state) \ge \hat{\valuefunc}^{K,N,\delta}(t;\state) - E > 0$, so the state is truly safe; symmetrically, $\hat{\valuefunc}^{K,N,\delta}(t;\state) < -E$ implies $\valuefunc(t;\state) < 0$. Misclassification is therefore possible only on the abstention band $|\hat{\valuefunc}^{K,N,\delta}(t;\state)| \le E$.
\end{proof}

\begin{remark}[Bias-Variance Tradeoff]
Equation \eqref{eq:total-error} reveals a fundamental bias-variance tradeoff controlled by $\delta$ \ie,
\begin{itemize}
    \item \textbf{Smaller $\delta$:} Reduces viscosity bias ($k\sqrt{\delta} \to 0$) but increases MC variance (larger $C_{\text{eff}}$ due to higher concentrated weights).
    \item \textbf{Larger $\delta$:} Reduces MC variance (more diffused weights) but increases viscosity bias.
\end{itemize}

The choice of $\delta \sim N^{-1/3}$ yields overall rate $O(N^{-1/6})$, which is slower than the standard $O(N^{-1/2}$ MC rate; however, it avoids the curse of dimensionality from grid discretization.
\end{remark}

\subsection{Robustness to Model Uncertainty}

In practice, the Hamiltonian $\hamfunc$ and terminal cost $\term$ may be uncertain. We establish robustness of the value function to perturbations.

\begin{theorem}[Robustness to Hamiltonian Perturbations]
\label{thm:robustness-H}
Suppose $\hamfunc_1$ and $\hamfunc_2$ are two Hamiltonians satisfying the standing assumptions, and let $\valuefunc_i$ denote the corresponding viscosity solutions. If
\begin{equation}
\normop{\hamfunc_1(t; \state, p) - \hamfunc_2(t; \state, p)}_\infty \leq \epsilon_H,
\end{equation}
uniformly over $(t; \state, p) \in [0,T] \times \Omega \times \mathbb{R}^n$, then
\begin{equation}
\normop{\valuefunc_1 - \valuefunc_2}_\infty \leq T \cdot \epsilon_H.
\label{eq:robustness-H}
\end{equation}
\end{theorem}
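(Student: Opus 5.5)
We argue by comparison of viscosity solutions. Write $\valuefunc_1,\valuefunc_2$ for the solutions of \eqref{eq:ivp} (or of \eqref{eq:ivp-viscous}; the argument is identical for any fixed $\delta \ge 0$, since the Laplacian is linear and commutes with adding functions of $t$ alone) with Hamiltonians $\hamfunc_1,\hamfunc_2$ and the same datum $\bm g$.

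The plan is to build explicit sub- and supersolutions for problem $1$ out of $\valuefunc_2$ by adding a term linear in time. Set
\begin{align*}
w^{\pm}(t;\state) \triangleq \valuefunc_2(t;\state) \pm \epsilon_H\, t.
\end{align*}
Since the shift depends only on $t$, we have $Dw^{\pm} = D\valuefunc_2$ and $\Delta w^{\pm} = \Delta\valuefunc_2$, while $w^{\pm}_t = (\valuefunc_2)_t \pm \epsilon_H$. We verify this in the viscosity sense. Any smooth test function $\phi$ touching $w^{+}$ from below at $(t_0;\state_0)$ gives the test function $\phi - \epsilon_H t$, which touches $\valuefunc_2$ from below there. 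The supersolution inequality for $\valuefunc_2$ then yields
\begin{align*}
\phi_t - \epsilon_H + \hamfunc_2(t_0;\state_0,D\phi) - \tfrac{\delta}{2}\Delta\phi \ge 0.
\end{align*}
Using $\hamfunc_1 \ge \hamfunc_2 - \epsilon_H$ pointwise in $p$, we get $\phi_t + \hamfunc_1(t_0;\state_0,D\phi) - \tfrac{\delta}{2}\Delta\phi \ge 0$. So $w^{+}$ is a viscosity supersolution of problem $1$. Symmetrically, $w^{-}$ is a subsolution, using $\hamfunc_1 \le \hamfunc_2 + \epsilon_H$. Both satisfy $w^{\pm}(0;\cdot) = \bm g$.

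Next we apply the comparison principle for problem $1$ on $[0,T]\times\ren$. This is part of the standing assumptions on $\hamfunc_1$: continuity, uniform continuity in $\state$ locally uniformly in $p$, and growth conditions under which uniqueness holds \citep{Crandall1983viscosity, Crandall1984}. It gives $w^{-} \le \valuefunc_1 \le w^{+}$, that is, $|\valuefunc_1 - \valuefunc_2| \le \epsilon_H t \le T\epsilon_H$. Taking the supremum yields \eqref{eq:robustness-H}.

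The main obstacle is making the comparison step legitimate. On the whole space, comparison requires the sub- and supersolutions to lie in a class where it holds, for example bounded uniformly continuous functions or functions with at most linear growth. Here $w^{\pm}$ inherit such bounds from $\valuefunc_2$, since BUC data and bounded Lipschitz dynamics give BUC values; for the unbounded signed distance datum we would instead use the linear growth version of comparison. If the domain is a proper $\openset$ rather than $\ren$, the same argument needs boundary data handled consistently, and we would state the result on $\ren$. For the RCBRT variant with $\min\{0,\hamfunc\}$, the map $\hamfunc \mapsto \min\{0,\hamfunc\}$ is $1$-Lipschitz in sup norm, so the same $\epsilon_H$ applies unchanged.
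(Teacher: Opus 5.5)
Your argument is correct, and it reaches the bound by a different and more rigorous route than the paper.

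\textbf{The paper's route.} The paper writes an equation for the difference $w=\valuefunc_1-\valuefunc_2$, namely $w_t+\hamfunc_1(t;\state,Dw)=\eta$ with $\eta\triangleq\hamfunc_2(t;\state,D\valuefunc_2)-\hamfunc_1(t;\state,D\valuefunc_2)$. It then integrates in time at fixed $\state$ (``applying Gronwall'') to get $|w(t;\state)|\le\int_0^t|\eta|\,ds\le t\,\epsilon_H$. This is short and conveys the right intuition: the Hamiltonian mismatch acts as a source of size at most $\epsilon_H$ that accumulates linearly in time. As written, however, it is only formal, for three reasons:
\begin{itemize}
\item For nonlinear $\hamfunc_1$, the difference actually satisfies $w_t+\hamfunc_1(t;\state,D\valuefunc_1)-\hamfunc_1(t;\state,D\valuefunc_2)=\eta$. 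This is not an equation of the form $w_t+\hamfunc_1(t;\state,Dw)=\eta$.
\item The transport term is silently dropped when integrating pointwise in $t$.
\item Classical derivatives are assumed for functions that are only viscosity solutions.
\end{itemize}

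\textbf{Your route.} Your barriers $\valuefunc_2\pm\epsilon_H t$ encode the same ``source of size $\epsilon_H$'' idea. The difference is that the shift is absorbed inside the test-function inequalities, so you need neither differentiability nor any structure of $\hamfunc_1(D\valuefunc_1)-\hamfunc_1(D\valuefunc_2)$. The only analytic input is a comparison principle for problem~1, in a growth class containing $\valuefunc_2\pm\epsilon_H t$. You correctly identify this as the real content of the ``standing assumptions''. It is also the tool the paper itself cites for Theorem~\ref{thm:robustness-g}.

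\textbf{What your route buys.} It covers $\delta>0$ and the $\min\{0,\hamfunc\}$ variant transparently. Your caveat is also a genuine point the paper's statement glosses over: the result should be posed on $\ren$, or with compatible boundary data on a proper $\openset$, since comparison fails without boundary conditions.
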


\begin{proof}
The difference $w := \valuefunc_1 - \valuefunc_2$ satisfies
\begin{equation}
w_t + \hamfunc_1(t; \state, Dw) = \hamfunc_2(t; \state, D\valuefunc_2) - \hamfunc_1(t; \state, D\valuefunc_2) =: \eta(t; \state),
\end{equation}
with $w(0; \state) = 0$ (same terminal condition).

Since $|\eta(t; \state)| \leq \epsilon_H$ by assumption, integrating over $[0,t]$ and applying Gronwall's inequality:
\begin{equation}
|w(t; \state)| \leq \int_0^t |\eta(s; \state)| ds \leq t \cdot \epsilon_H \leq T \cdot \epsilon_H.
\end{equation}
\end{proof}

\begin{theorem}[Robustness to Terminal Cost Perturbations]
\label{thm:robustness-g}
Under the same setup, if the terminal costs satisfy $\normop{\term_1 - \term_2}_\infty \leq \epsilon_g$, then
\begin{equation}
\normop{\valuefunc_1 - \valuefunc_2}_\infty \leq \epsilon_g.
\label{eq:robustness-g}
\end{equation}
\end{theorem}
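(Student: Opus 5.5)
The plan is to compare the two value functions through their defining problems and use monotonicity of the solution operator in the terminal datum. I prove the bound first for the inviscid problem \eqref{eq:ivp} via the comparison principle for viscosity solutions, and then note that the same argument carries over verbatim to the viscous problem \eqref{eq:ivp-viscous} and to the frozen-coefficient map $\mathcal T$. Let $\valuefunc_1,\valuefunc_2$ be the viscosity solutions with the same Hamiltonian $\hamfunc$ and data $\term_1,\term_2$, where $\normop{\term_1-\term_2}_\infty\le\epsilon_g$.

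\textbf{Step 1 (constant shifts).} The Hamiltonian depends on $\valuefunc$ only through $D\valuefunc$ (there is no zeroth-order term). Hence $\valuefunc_2\pm\epsilon_g$ is again a viscosity solution of the same equation, with datum $\term_2\pm\epsilon_g$. This holds for \eqref{eq:ivp} and \eqref{eq:ivp-viscous}, and also for the $\min\{0,\cdot\}$ variant \eqref{eq:HJ-RCBRT}, because adding a constant changes neither $D\valuefunc$ nor $\Delta\valuefunc$ nor $\valuefunc_t$. Test functions shift by the same constant, so the sub- and supersolution inequalities are preserved.

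\textbf{Step 2 (ordering of data and comparison).} The hypothesis gives $\term_2-\epsilon_g\le\term_1\le\term_2+\epsilon_g$ pointwise. Thus $\valuefunc_2-\epsilon_g$ is a subsolution and $\valuefunc_2+\epsilon_g$ a supersolution whose initial traces bracket $\term_1$. The comparison principle for bounded uniformly continuous viscosity solutions (Crandall--Lions, \citep{Crandall1983viscosity,Crandall1984}) gives $\valuefunc_2-\epsilon_g\le\valuefunc_1\le\valuefunc_2+\epsilon_g$ on $\ren\times[0,T]$, which is \eqref{eq:robustness-g}.

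\textbf{Step 3 (the main obstacle).} The difficulty is making sure comparison actually applies under the paper's standing assumptions. The data must be BUC, and $\hamfunc$ must satisfy the usual structure condition $|\hamfunc(t;\state,p)-\hamfunc(t;\state',p)|\le C|\state-\state'|(1+|p|)$, which holds for $\hamfunc=\max_u\min_w\langle p,f\rangle$ with $f$ bounded and Lipschitz as in \eqref{eq:sys_dyn}. Signed-distance data are unbounded. For these I would first truncate, or work in the class of uniformly continuous solutions with linear growth, where comparison still holds because $\term_1-\term_2$ is bounded.

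For the sampling representation there is an alternative, elementary check. With $c>0$ frozen, the map $\term\mapsto-\tfrac1c\log\bb E[e^{-c\term}]$ is monotone and commutes with constant shifts, so it is $1$-Lipschitz in sup norm. This gives the same bound $\epsilon_g$ directly at each Picard step of Algorithm~\ref{alg:quasi_lin}.
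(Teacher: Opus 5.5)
Your Steps 1--2 are the standard unpacking of the comparison principle that the paper's one-line proof cites. Constant shifts of $\valuefunc_2$ remain solutions because there is no zeroth-order term, and comparing against the bracketing data $\term_2-\epsilon_g\le\term_1\le\term_2+\epsilon_g$ gives $\valuefunc_2-\epsilon_g\le\valuefunc_1\le\valuefunc_2+\epsilon_g$; so your argument is correct and essentially the paper's.

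One caveat, confined to your side remarks: the claim that this carries over verbatim to $\mathcal T$ and yields $\epsilon_g$ at each Picard step does not hold in general. The frozen coefficient is built from $D\term_i$ and $D\valuefunc_i^{(k)}$, which sup-norm closeness of the data does not control. Hence $\mathcal T$ has no comparison principle, and your $1$-Lipschitz bound (valid only at a common coefficient) combined with the contraction gives only $\normop{\valuefunc_1^{\star}-\valuefunc_2^{\star}}_\infty\le\epsilon_g/(1-q)$ for the two fixed points.
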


\begin{proof}
Direct consequence of the comparison principle for viscosity solutions \citep{EvansPDEBook}.
\end{proof}

\begin{corollary}[Combined Robustness]
\label{cor:combined-robustness}
For simultaneous perturbations in both $\hamfunc$ and $\term$,
\begin{equation}
\label{eq:combined-robustness}
\normop{\valuefunc_1 - \valuefunc_2}_\infty \leq T \cdot \epsilon_H + \epsilon_g.
\end{equation}
\end{corollary}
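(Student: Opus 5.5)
The combined bound follows from the triangle inequality, splitting the simultaneous perturbation into two one-at-a-time perturbations. Let $\valuefunc_1$ solve the problem with data $(\hamfunc_1,\term_1)$ and $\valuefunc_2$ the one with $(\hamfunc_2,\term_2)$. We introduce an intermediate solution $\tilde\valuefunc$ with the Hamiltonian $\hamfunc_2$ and the terminal cost $\term_1$. This hybrid problem satisfies the standing assumptions, because each of its data is admissible on its own. Then
\begin{align*}
\normop{\valuefunc_1-\valuefunc_2}_\infty \le \normop{\valuefunc_1-\tilde\valuefunc}_\infty + \normop{\tilde\valuefunc-\valuefunc_2}_\infty .
\end{align*}

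The first term compares two problems with the same terminal cost $\term_1$ whose Hamiltonians differ by at most $\epsilon_H$ uniformly. Theorem~\ref{thm:robustness-H} therefore bounds it by $T\cdot\epsilon_H$.

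The second term compares two problems with the same Hamiltonian $\hamfunc_2$ whose terminal costs differ by at most $\epsilon_g$. Theorem~\ref{thm:robustness-g}, i.e.\ the comparison principle applied to $\valuefunc_2\pm\epsilon_g$, bounds it by $\epsilon_g$.

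Adding the two bounds gives \eqref{eq:combined-robustness}. The only point to check is that the hybrid problem $(\hamfunc_2,\term_1)$ has a unique viscosity solution, so that both theorems apply to it. This follows because the standing assumptions are imposed separately on the Hamiltonian and on the terminal data, and are not a joint condition on the pair. I expect no real obstacle: the argument is a direct composition of the two preceding theorems.
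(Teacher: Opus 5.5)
Your proposal is correct and follows the argument the paper intends. The paper states the corollary without a separate proof, as the direct composition of Theorems~\ref{thm:robustness-H} and~\ref{thm:robustness-g} through the triangle inequality, and your intermediate problem with data $(\hamfunc_2,\term_1)$ is exactly what lets each theorem handle one perturbation at a time; checking that this hybrid problem is well posed is the only bookkeeping required.
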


These robustness results show that small errors in model specification (Hamiltonian or terminal cost) lead to proportionally small errors in the value function, making the method suitable for practical applications where exact models are unavailable. They are also load-bearing for the scheme itself: the coefficient regularization of Lemma~\ref{lem:coeff_regularization} is priced through Theorem~\ref{thm:robustness-H}, as we now prove.

\begin{proof}[Proof of Lemma \ref{lem:coeff_regularization}]
\label{proof:coeff_regularization}
By Lemma~\ref{lem:exact_residual}, running the frozen-coefficient step with any coefficient field $\hat\bc$ annihilates the algebraic residual $\bR_{\mathrm{alg}} = \hat\bc\,(\hamfunc' - \tfrac{\delta}{2}\hat\bc|D\valuefunc^\delta|^2)$ precisely for the Hamiltonian $\hamfunc' = \tfrac{\delta}{2}\hat\bc\,|D\valuefunc^\delta|^2$. With $\hat\bc = \bc_\eta$ of \eqref{eq:coeff_regularized},
\begin{align}
\label{eq:regularized-surrogate}
\hamfunc' = \frac{\delta}{2}\cdot\frac{2}{\delta}\cdot\frac{\hamfunc}{|D\valuefunc^\delta|^2+\eta}\cdot|D\valuefunc^\delta|^2 = \hamfunc\,\frac{|D\valuefunc^\delta|^2}{|D\valuefunc^\delta|^2+\eta} = \hamfunc_\eta,
\end{align}
which proves the equivalence. For the uniform bound, write $r \triangleq |D\valuefunc^\delta|$; positive $1$-homogeneity gives $|\hamfunc(t;\state,p)| = |p|\cdot|\hamfunc(t;\state,p/|p|)| \le C_{\hamfunc}\,r$, hence,
\begin{align}
\label{eq:regularized-max}
\left|\hamfunc_\eta - \hamfunc\right| = |\hamfunc|\,\frac{\eta}{r^2+\eta} \le C_{\hamfunc}\,\frac{r\,\eta}{r^2+\eta} \le \frac{C_{\hamfunc}\sqrt{\eta}}{2},
\end{align}
since $r \mapsto r\eta/(r^2+\eta)$ is maximized at $r = \sqrt{\eta}$ with value $\sqrt{\eta}/2$. The final claim is Theorem~\ref{thm:robustness-H} applied with $\epsilon_H = C_{\hamfunc}\sqrt{\eta}/2$.
\end{proof}

\newpage 
\section{Further Numerical Results}
\label{app:results}

\setcounter{appidx}{3}
\setcounter{equation}{0}

This example was originally proposed by~\citet{Merz1972} as an iteration upon ~\citet{Isaacs1965}'s homicidal chauffeur game, whereupon a pursuit-evasion game between two players with similar speeds and minimum turn radii, is thoroughly analyzed. In~\citet{Mitchell2020}, this problem was established as a benchmark for testing the solubility of capturable set of states (the backward reachable tube) in Merz's classical pursuit-evasion game. In this example, we solve the problem with our LevelSetPy toolbox and establish that the approximated barrier surface to the two-player game conforms with standard results.

\begin{figure}[tb!]
	\centering 
	\includegraphics[width=.8\columnwidth]{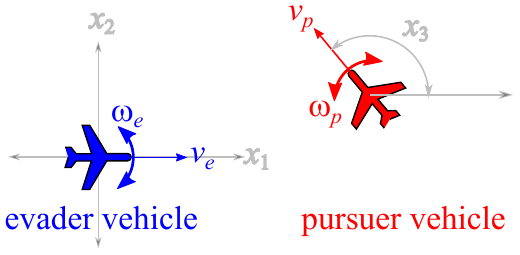}
	\caption{\footnotesize{Two Dubins' vehicles in relative Cartesian coordinates. Reprinted from \citet{Mitchell2020}.}}
	\label{fig:Dubins}
\end{figure}
The game is that of two cars sharing similar Dubins dynamics~\cite{Dubins1957}: $\pursuer$ and $\evader$ both have a positive minimum turn radii, $w$, and constant speeds $v$ -- with motion restricted to a plane as we have for the rocket launch differential game above. In relative coordinates, the diagrammatic structure of the motion is as depicted in \autoref{fig:Dubins}. Choosing the Cartesian coordinate for motion representation, the state vector of the game with $\evader$ at the origin can be characterized by its $x_1,x_2$ position relative to $\pursuer$ and the angle $\theta$ between the two vehicles. Capture occurs when the distance $\|\pursuer \evader\|_2$ between the pursuer and the evader becomes less than a specified radius.  

The relative equations of motion, going by \autoref{fig:Dubins}, is 
\begin{align}
	\left(\begin{array}{c}
		\dot{x}_1
		\\
		\dot{x}_2 
		\\
		\dot{x}_3
	\end{array}\right) = \left(\begin{array}{c}
		-v_e + v_p \cos x_3 + w_e x_2 \\
		v_p \sin x_3 - w_e x_1 \\
		w_p - w_e
	\end{array}\right).
	\label{eq:dubins_dyna}
\end{align}

We adopt specialization to a case where the two vehicles only possess a unit velocity and unit maximum turn rates. Here, as Merz notes, if the initial velocities are parallel such as $x_3=0$, then the equations of relative motion imply that $\evader$ can be separated from $\pursuer$ forever by the initial radial separation if it replicates $\pursuer$'s strategy. Whence, the barrier surface is closed and we are presented with ~\citet{Isaacs1965}'s game of kind where we must determine the nature of the surface. This terminal surface possesses a closed-form solution and we refer readers to the treatment by~\citet{Merz1972}. In this example, our chief concern is to judge the efficacy of our toolbox with respect to the analytical solution of the barrier surface. 

The backward reachable tube that consists of the paths taken by the trajectories of either player is defined as in the rockets pursuit-evasion game so that we have 
\begin{align}
	\ell(0, x) = \{x \in \mc{X} | x_1^2 + x_2^2 \le r^2 \}, 
\end{align}
where again $r$ is the capture radius. The target set is a cylinder as $\ell$ above excludes the heading, $x_3$. It is represented as shown in \autoref{fig:dubins_bup}.
\begin{figure}[tb!]
	\centering 
	\begin{minipage}[b]{.44\columnwidth}
		\includegraphics[width=\textwidth]{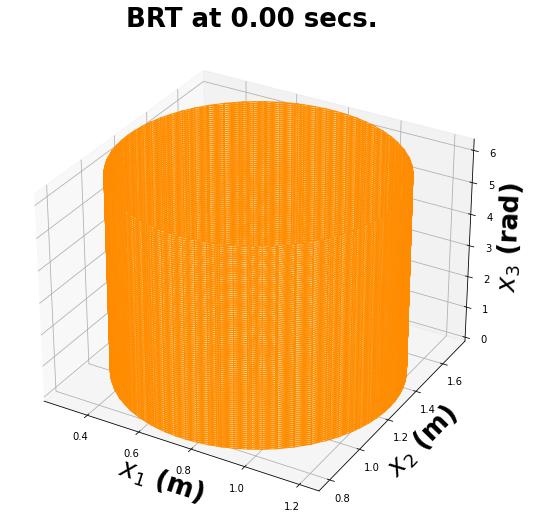}
	\end{minipage}
	\begin{minipage}[b]{.44\columnwidth}
		\includegraphics[width=\textwidth]{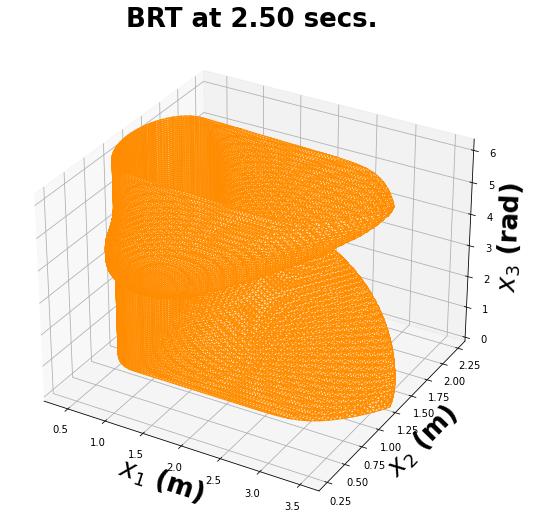}
	\end{minipage}
	\caption{\footnotesize{The target set (left) and the boundary of the useable part of the state space after the differential game between $\pursuer$ and $\evader$.}}
	\label{fig:dubins_bup}
\end{figure}

For a detailed treatment of the barrier surface, we refer readers to a proper analysis as elucidated in~\cite{Mitchell2020}. Here, we focus on the construction of the BUP. The set of states that constitute the useable part and its boundary are respectively a function of the implicit surface function representation $\ell: [-T, 0] \times \mc{X} \rightarrow \reline$ so that for a $t \in [0, T]$, where $T > 0$ is
\begin{align}
	\mc{T} &= \{x \in \mc{X} | \ell(0, x) \le 0\} \\
	R([-t, 0], \mc{T}) &= \{x \in \mc{X} | \ell(t, x) \le 0\}, 
\end{align}

\begin{figure}[tb]
	\centering 
	\includegraphics[width=\columnwidth]{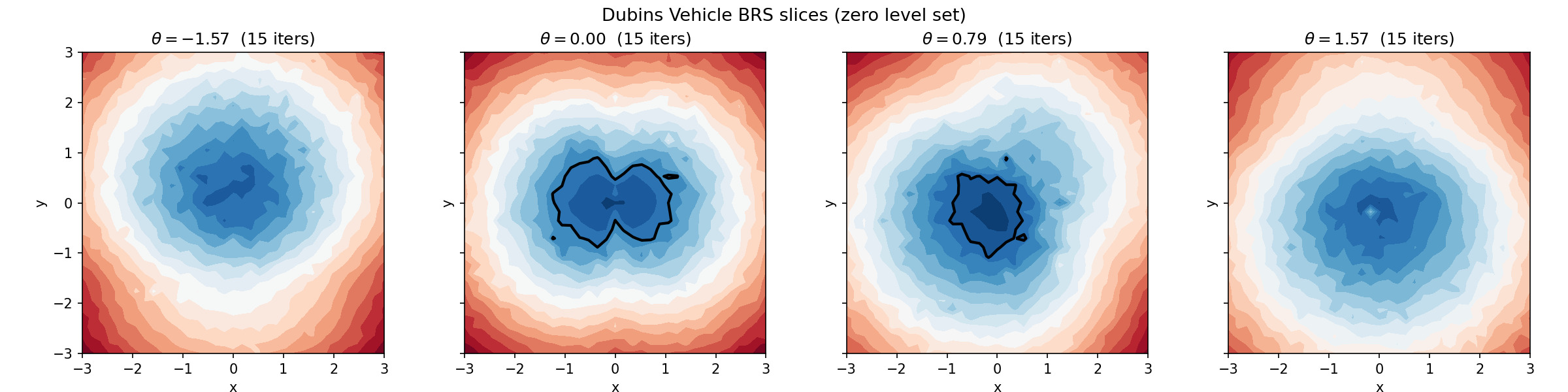}
	\caption{Dubins Vehicle BRT Slices}
	\label{fig:dubins_slices}
\end{figure}
%
When $t>0$, the implicit surface representation is the following HJI PDE
\begin{align}
	\dfrac{\partial}{\partial t} \ell(t, x) + \min \left(0, \hamfunc(x, \nabla_x \ell(t, x))\right) = 0.
\end{align}

It is easy to verify that the Hamiltonian is
\begin{align}
	\label{eq:dubins_ham_general}
	H(x,p) = p_1 (v_p \cos x_3 - v_e) - p_2 (v_p \sin x_3) + w | p_1 x_2 - p_2 x_1 - p_3 | - w |p_3|.
\end{align}

Since we are concerned with the special case that the linear and angular speeds are equal, we set $v_e = v_p = w \triangleq +1$ in the foregoing so that the Hamiltonian, in the final analysis is
\begin{align}
	\label{eq:dubins_ham_unit}
	H(x,p) = p_1 (\cos x_3 - 1) - p_2 (\sin x_3) + | p_1 x_2 - p_2 x_1 - p_3 | - |p_3|.
\end{align}

The results and comparisons are provided in \autoref{fig:dubins_slices} and  \autoref{fig:dubins_comparison}.

\subsection{The Double Integral Plant}
\label{app:double_integrator}

Here, we analyze a time-optimal control problem to determine what admissible control\footnote{A control law is admissible when its range belongs in the admissible input set where it is bounded.} can ``transport" the system under consideration to a desired ``origin" in the shortest possible time. We consider the  double integral plant~\cite{BhatBernstein, Wonham} as an illustrative example of our objective, which is to compute the points in the state space that can reach the origin in \textit{finite-time} under the influence of a time-optimal controller. 

We shall leverage standard necessary conditions from the principle of optimality~\cite{Bellman1957} to obtain a time-optimal feedback control design; introduce the notion of isochrones and switching surfaces; and discuss the analytic and approximate solutions (with our library) to the time-optimal control problem for a double integrator. We shall conclude the section by comparing the analytic and the overapproximated numerical solution (using the LevelSetPy toolbox) to the time to reach the origin problem.

\subsubsection{Dynamics and Problem Setup}
The double integrator is controllable, so that open-loop strategies may be employed in driving specific states to the origin in finite time~\cite{Wonham}. The plant  has the following second-order dynamics
\begin{align}
	\ddot{\state}(t) = \control(t)
	\label{eq:double_integ}
\end{align}
and admits bounded control signals $\mid \control(t) \mid \le 1$ for all time $t$. After a change of variables,we have the following system of first-order differential equations
\begin{align}
	\dot{\state}_1(t) &= \state_2(t), \nonumber \\
	\dot{\state}_2(t) &= \control(t), \quad \mid \control(t) \mid \le 1. 
	\label{eq:double_integ_first_ord}
\end{align}
The \textit{reachability problem that we consider is to address the question of what states can reach a certain point (here, the origin) in a transient manner}. That is, we would like to find point sets on the state space, at a particular time step, such that we can bring the system to the equilibrium, $\left(0, 0\right)$. 

\subsubsection{Time-optimal control scheme}
This is an $\hamfunc$-minimal control problem whereupon we must find the control law that minimizes the Hamiltonian
\begin{figure}[tb!]
	\centering
	\includegraphics[width=\columnwidth]{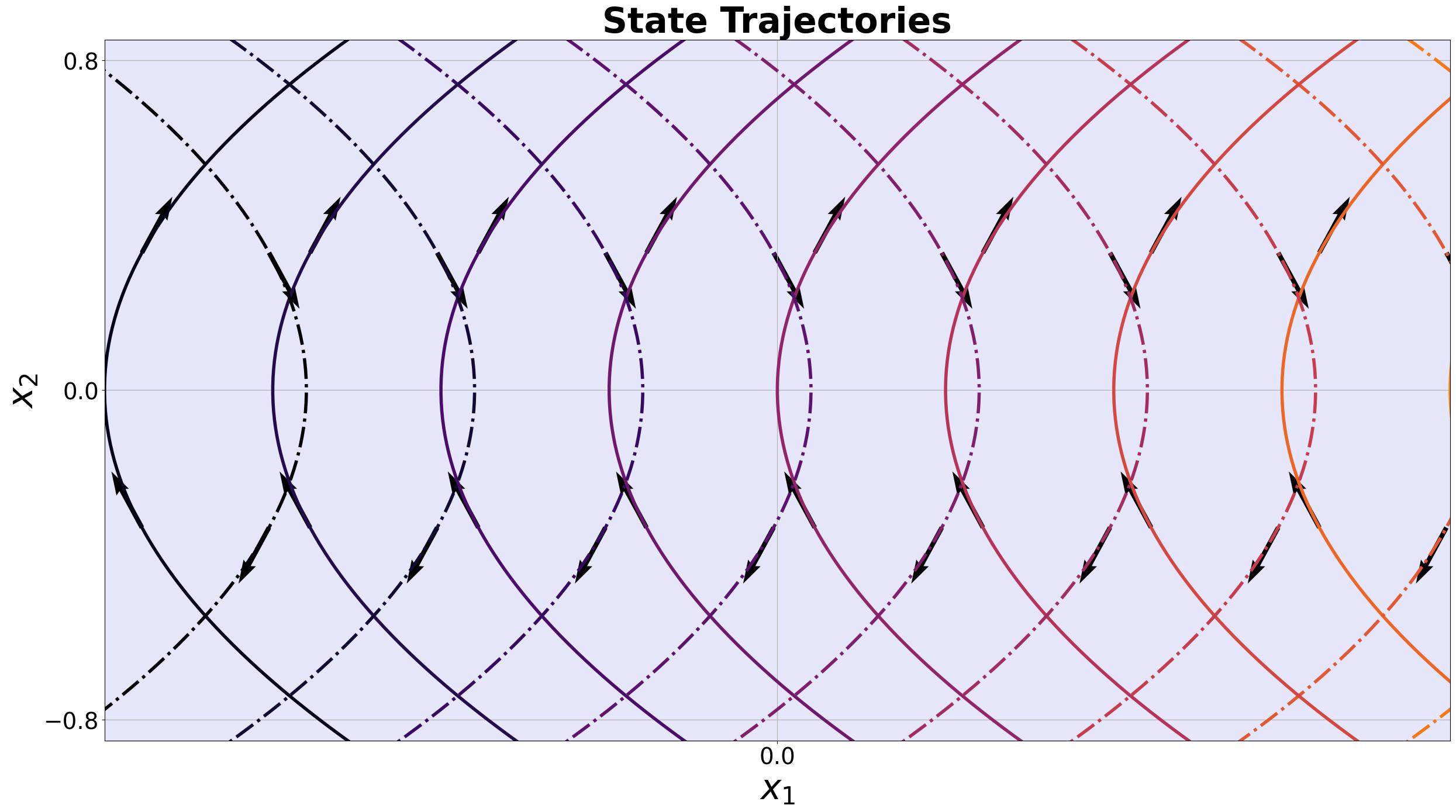}
	\caption{State trajectories of the double integral plant. The solid curves with upward-pointing arrows are trajectories generated under $\control=+1$ (switching curve $\switchcurve_+$); dashed curves with downward-pointing arrows are trajectories under $\control=-1$ (switching curve $\switchcurve_-$).}
	\label{fig:ttr_trajectories}
\end{figure}
\begin{align}
	\hamfunc(\state, p) = p_1 \dot{\state}_1 + p_2 \dot{\state}_2.
	\label{eq:dint_ham}
\end{align}
The necessary optimality condition stipulates that the minimizing control law be 
\begin{align}
	\control(t) = -\text{ sign }(p_2(t)) \triangleq \pm 1.
\end{align} 

For the co-states in question, suppose that their initial values (for constants $k_1$ and $k_2$) are $p_1(t_0) = k_1$ and $p_2(t_0) = k_2$,  only four candidates can serve as time-optimal control sequences \ie $\{[+1], [-1], [+1, -1], [-1, +1]\}$.  On a finite time interval, $t \in [t_0, t_f]$, the time-optimal $\control(t)$ is a constant $k \equiv \pm 1$ so that for initial conditions $\state_1(t_0) = \bm{\xi}_1$ and $\state_2(t_0) = \bm{\xi}_2$, it can be  verified that the state trajectories obey the relation
\begin{align}
	\state_1(t) = \bm{\xi}_1 + \frac{1}{2} k  \left(\state_2^2 - \bm{\xi}_2^2\right), \,\text{for } t = k \left(\state_2(t) - \bm{\xi}_2\right).
	\label{eq:trajectories}
\end{align}

\begin{figure}[tb!]
	\centering 
	\includegraphics[width=\columnwidth]{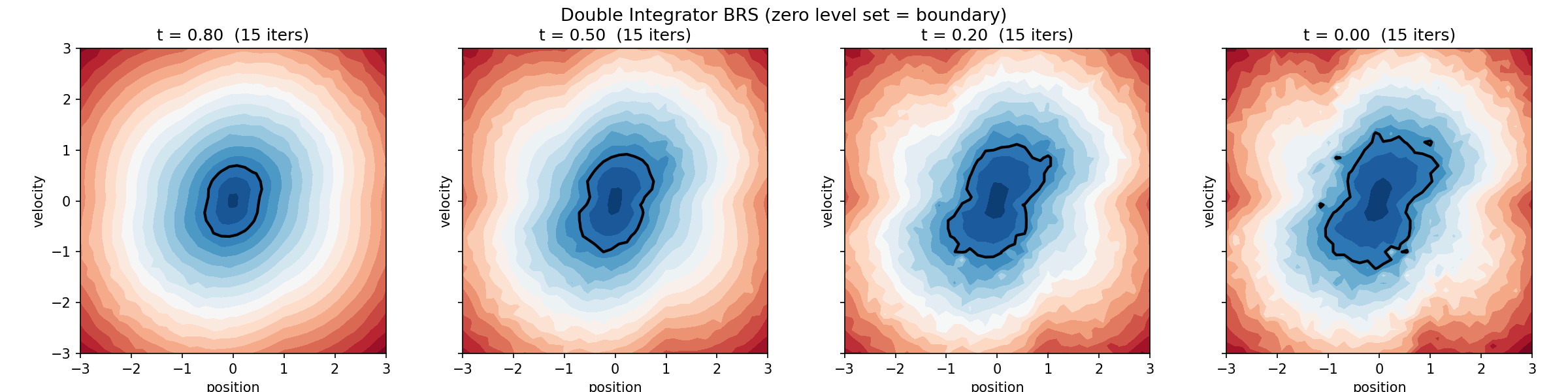}
	\caption{The time to reach the origin as 2D slice comparison for the double integratral plant}
	\label{fig:double_integrator}
\end{figure}
The  trajectories of \eqref{eq:trajectories} traced out over a finite time horizon $t=[-1, 1]$ with \textit{piecewise constant control laws}, $u = \pm 1$ on a state space and under the control laws $\control(t) = \pm 1$ is depicted in \autoref{fig:ttr_trajectories}. Curves with arrows that point upwards denote trajectories under the control law  $\control=+1$; call these trajectories $\switchcurve_+$; while the  trajectories marked by dashed arrows pointing downward on the curves were executed under $\control=-1$; call these trajectories $\switchcurve_-$. \\

 Table~\ref{tab:double_integrator_convergence} reports convergence at four backward times; all cases achieve $15$ iterations with residuals well below the  $O(\sqrt{\delta}) \approx 0.28$ bound.
\begin{table}[tb]
	\centering
	\caption{Double integrator BRS; Algorithm~\ref{alg:quasi_lin} convergence at backward times $t \in \{0.8, 0.5, 0.2, 0.0\}$ with $N = 8{,}000$ and $\delta = 0.08$.}
	\label{tab:double_integrator_convergence}
	\begin{tabular}{l c c}
		\toprule
		$t$ & Iterations & Final Residual \\
		\midrule
		0.80 & 15 & $0.0143$ \\
		0.50 & 15 & $0.0334$ \\
		0.20 & 15 & $0.0596$ \\
		0.00 & 15 & $0.0828$ \\
		\bottomrule
	\end{tabular}
\end{table}
\subsection{The Game of Two rockets on a Plane}

We adopt the rocket launch problem of Dreyfus~\cite{Dreyfus1966} which is to launch a rocket in fixed time to a desired altitude, given a final vertical velocity component and a maximum final horizontal component as constraints. The  rocket's motion is dictated by the following differential equations (under Dreyfus' assumptions)
\begin{subequations}
	\begin{align}
		\dot{x}_{1} &= x_{3}; \,\, &x_{1}(t_0) = 0; 
		\\ 
		\dot{x}_{2} &= x_{4},\,\, &x_{2}(t_0)= 0; \label{eq:dreyfus_mitter_ii}  
		\\ 
		\dot{x}_{3} &= a \cos u,\, &x_{3}(t_0)= 0; 
		\\ 
		\dot{x}_{4} &= a \sin u - g,\,\, &x_{4}(t_0)= 0; \label{eq:dreyfus_mitter_iv} 
	\end{align}
	\label{eq:dreyfus_mitter}
\end{subequations}
where, $(x_1, x_2)$ are respectively the horizontal and vertical range of the rockets (in feet), $(x_3, x_4)$ are respectively the horizontal and vertical velocities of the rockets (in feet per second), while $a$ and $g$ are respectively the acceleration and gravitational accelerations (in feet per square second). 
\begin{figure}[tb!]
	\centering 
	\includegraphics[width=0.48\textwidth]{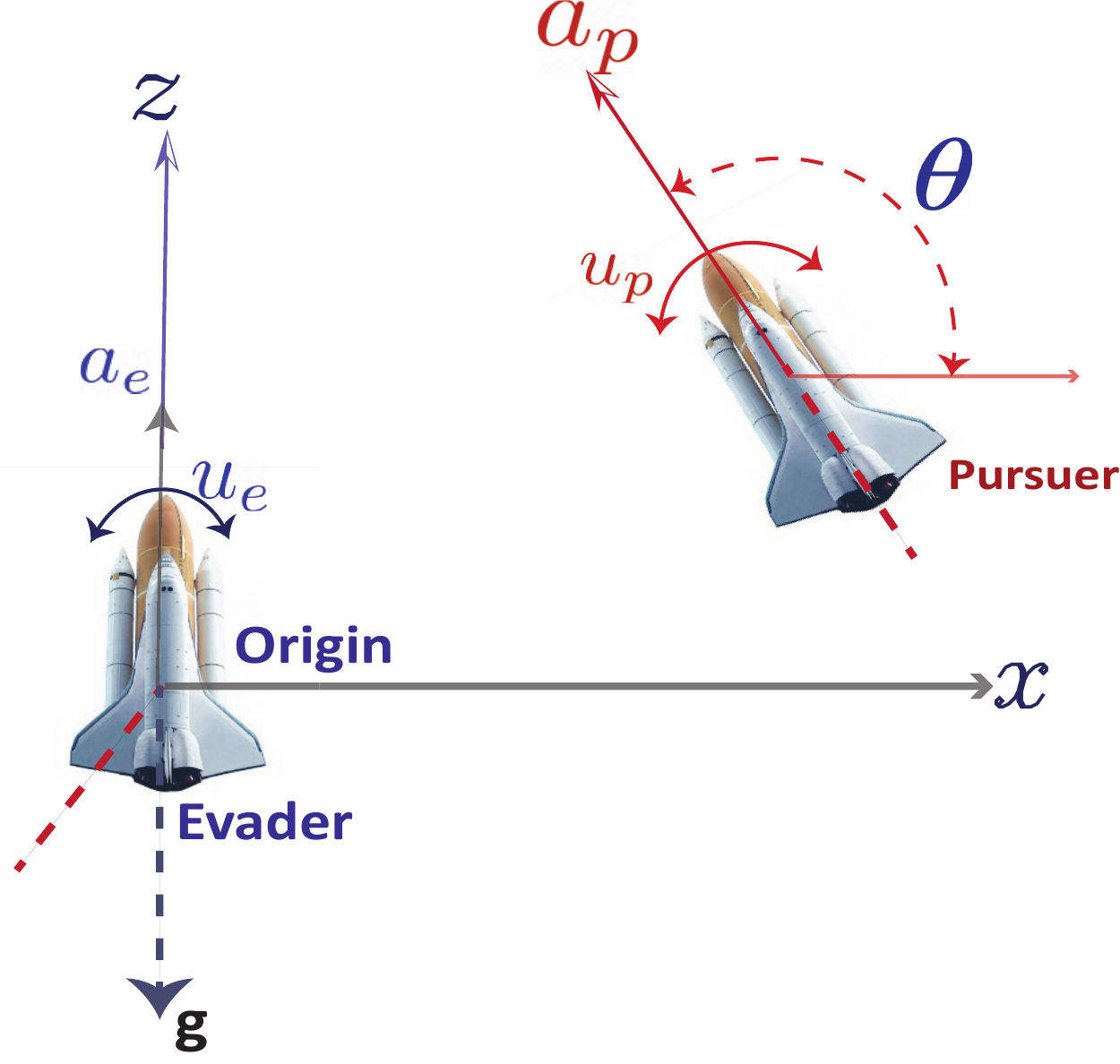} 
	\hfill 
	\includegraphics[width=0.48\textwidth]{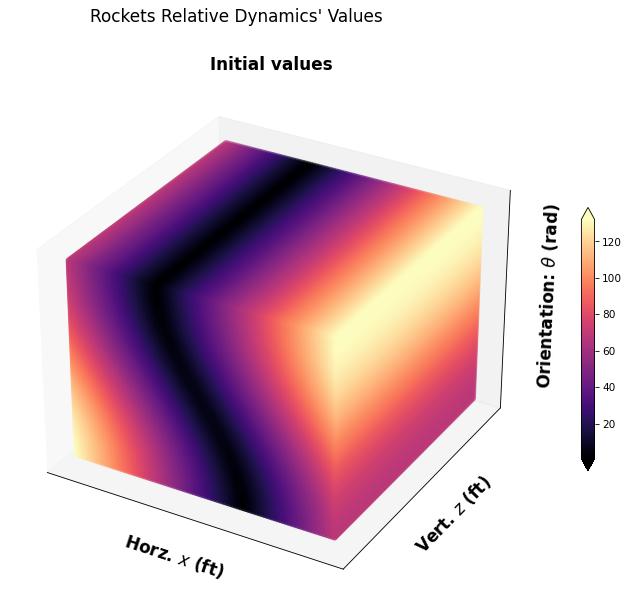} 
	\caption{\textit{Left:} Motion of two rockets on a Cartesian $\bm{xz}$-plane with a thrust inclination in relative coordinates given by $\theta:=u_p- u_e$. \textit{Right:} Representation of the initial values as a heat map for the two rockets described in \eqref{eq:rocket_spatial_bounds}.}	\label{fig:rocket_relative}
\end{figure}
Being a free endpoint problem, we transform it into a game between two players \eqref{eq:dreyfus_mitter} without the terminal time constraints as defined in ~\cite{JacobsonMayne}. The states of $\pursuer$ and $\evader$ are now denoted as $(x_p, x_e)$ respectively which are driven by their thrusts $(u_p, u_e)$ respectively in the $xz$-plane (see Figure \ref{fig:rocket_relative}). The relevant kinematic equations are \eqref{eq:dreyfus_mitter_ii} and \eqref{eq:dreyfus_mitter_iv}. 
 \begin{subequations}
	 	\begin{align} 
		 		\dot{x}_{2e} &= x_{4e};   &\dot{x}_{2p} &= x_{4p},  \\ 
		 		\dot{x}_{4e} &= a \sin u_e - g, &\dot{x}_{4p} &= a \sin u_p - g
		 	\end{align}
	 \label{eq:dreyfus_mitter_relevant_eq}
	 \end{subequations}
\noindent where $a$ and $g$ are respectively the acceleration and gravitational accelerations (in feet per square second) \ie $a=64 ft/sec^2$ and $g=32 ft/sec^2$. 
We reformulate the problem as a two-player differential game where the optimization operations remain in the interior, avoiding discontinuous switches in control. This formulation implicitly bounds each rocket's speed through the gravitational dynamics; the natural speed limit is $a\sin u / g$, which corresponds to each rocket's asymptotic speed when launched vertically.


Therefore, we rewrite \eqref{eq:dreyfus_mitter} with $\pursuer$'s motion relative to $\evader$'s  along  the $(x,z)$ plane so that the relative orientation as shown in \autoref{fig:rocket_relative} is $\theta=u_p- u_e$. The coordinates of $\pursuer$ are freely chosen; however, the coordinates of $\evader$ are chosen a distance $r$ away from $(x,z)$ so that the $\evader \pursuer$ vector's inclination measured counterclockwise from the $x-$axis is $\theta$. Following the conventions in \autoref{fig:rocket_relative}, the game's relative equations of motion in reduced space 
is $\mc{X} = (x, z, \theta)$ where $\theta \in \left[-\frac{\pi}{2}, \frac{\pi}{2}\right)$ and $(x,z) \in \bb{R}^2$ are 
\begin{subequations}
	\begin{align}
		\dot{x} &= a_p \cos \theta + u_e x, \\
		\dot{z} &=a_p \sin \theta + a_e + u_e x - g, \\
		\dot{\theta} &= u_p -u_e.
	\end{align}
\end{subequations}

The payoff, $\payoff$, is the distance of $\pursuer$ from $\evader$ when capture occurs denoted as $\|\pursuer \evader\|_2$. Capture occurs when $\| \pursuer \evader \|_2 \le r$ for a pre-specified capture radius, $r>0$. In \eqref{eq:rocket_me},  we say $\pursuer$ controls $u_p$ and is minimizing $\payoff$, and $\evader$ controls $u_e$ and is maximizing $P$. The boundary of the \textit{usable part} of the origin-centered circle of radius $r$\footnote{We set $r=1.5ft$ in our evaluations.} is $\|\pursuer \evader\|_2 $ so that
\begin{subequations}
	\begin{align}
		r^2 &=  x^2 + z^2,
	\end{align}
\end{subequations}
and all capture points are specified by 
\begin{align}
	\dot{r}(x,t) + \min \left[0, \hamfunc(\state, \frac{\partial r(x, t)}{\partial x})\right] \le 0,
	\label{eq:rocket_me}
\end{align}
with the corresponding Hamiltonian
\begin{align}
	\hamfunc(\state, p) = -\max_{u_e \in \mc{U}_e} \min_{u_p \in \mc{U}_p
	} \begin{bmatrix}
		p_1 \\ p_2 \\ p_3
	\end{bmatrix}^T
	\begin{bmatrix}
		a_p \cos \theta + u_e x \\
		a_p \sin \theta + a_e + u_p x - g \\
		u_p -u_e
	\end{bmatrix}.
	\label{eq:ham_def}
\end{align}
Suppose that the maximizing $u_e$ is $\bar{u}_e$ and the minimizing $u_p$ is $\bar{u}_p$. We have at the point of slowest-quickest descent on the capture surface, that
\begin{subequations}
	\begin{align}
		\bar{u}_e &= p_1 x - p_3, \\
		\bar{u}_p &= p_3 - p_2 x.
	\end{align}
\end{subequations}
We set the linear velocities and accelerations equal to one another \ie $u_e = u_p$ and $a_e = a_p$. Thus, the Hamiltonian takes the form
\begin{align}
	\hamfunc(\state, p) &= -\cos(u) |a p_1| + \cos(u) |a p_1| -\sin (u) |a p_2| - \nonumber \\ 
	& \qquad \qquad \sin (u) | ap_2 | + u | p_3| - u |p_3|.
\end{align}
%
Using a distributed version of the levelset toolbox~\cite{LevelSetPy}, the backward reachable tube of the game is depicted in \autoref{fig:rockets_results}. A game between the two players was run over 11 global optimization time steps. The initial value function (left inset of \autoref{fig:rockets_results}) is represented as a dynamic implicit surface over all point sets in the state space with a signed distance function. We made the third coordinate axis of the state space (here the common heading of the two rockets) to align with the third cylinder axis. The final BRT at the end of the optimization run is shown in the right inset of \autoref{fig:rockets_results}. 
\begin{figure}[tb!]
	\centering
	\begin{minipage}[tb]{.45\textwidth}
		\includegraphics[width=\textwidth]{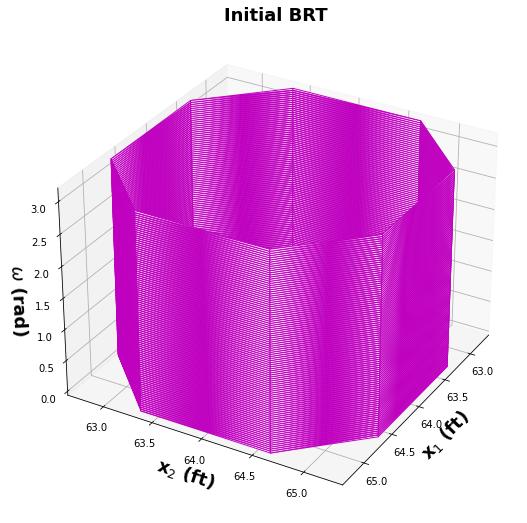}
	\end{minipage}
	\hfill
	\begin{minipage}[tb]{.45\textwidth}
		\includegraphics[width=\textwidth]{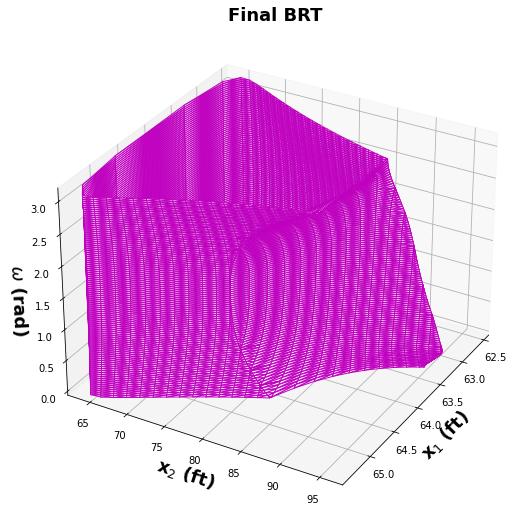}
	\end{minipage}
	\caption{Initial and final backward reachable tubes for the rocket system (\cf \autoref{fig:rocket_relative}) computed using the method outlined in \cite{Evans1984, OsherFronts, MitchellLSToolbox}. We set $a_e = a_p = 64ft/sec^2$ and $g=32 ft/sec^2$ as in Dreyfus' original example. 
		We compute the reachable set by optimizing for the paths of slowest-quickest descent in equation \eqref{eq:ham_def}.
	}
	\label{fig:rockets_results}
\end{figure}
%

\subsubsection{Detailed Hamiltonian Derivation for Two-Rockets Game}
\label{app:ham_derivation}

For the two-rockets pursuit-evasion game in relative coordinates, we derive the simplified Hamiltonian from first principles. The general Hamiltonian for a two-player zero-sum game is
\begin{align}
	\hamfunc(t; \state, p) &= -\max_{u_e \in [\underline{u}_e, \bar{u}_e]} \min_{u_p \in [\underline{u}_p, \bar{u}_p]}
	\begin{bmatrix}
		p_1 & p_2 & p_3
	\end{bmatrix}
	\begin{bmatrix}
		a_p \cos \theta + u_e x \\
		a_p \sin \theta + a_e + u_p x - g \\
		u_p - u_e
	\end{bmatrix},
	\label{eq:ham_general}
\end{align}
where $p_1, p_2, p_3$ are the co-states (spatial derivatives of the value function), and $[\underline{u}_e, \bar{u}_e]$ and $[\underline{u}_p, \bar{u}_p]$ are the evader's and pursuer's control bounds, respectively. The game structure reflects the evader's ability to choose controls first (outer max over $u_e$) followed by the pursuer's response (inner min over $u_p$).

Setting symmetric control bounds $\underline{u}_e = \underline{u}_p = -1$ and $\bar{u}_e = \bar{u}_p = +1$, and using the symmetry assumption $a_e = a_p = a$, we optimize over the control variables:
\begin{align}
	\hamfunc &= -\max_{u_e \in [-1, 1]} \min_{u_p \in [-1, 1]}
	\left[ p_1 (a \cos \theta + u_e x) + p_2 (a \sin \theta + a + u_p x - g) + p_3(u_p - u_e) \right].
\end{align}
Rearranging by control terms:
\begin{align}
	\hamfunc &= -\max_{u_e \in [-1, 1]} \min_{u_p \in [-1, 1]}
	\left[ p_1 a \cos \theta + p_2 a(1 + \sin \theta) - p_2 g + (p_1 x + p_3) u_p + (p_2 x - p_3) u_e \right].
\end{align}
Since the pursuer minimizes over $u_p$, it chooses $u_p = -\text{sign}(p_1 x + p_3)$ to minimize the linear term. Similarly, the evader maximizes over $u_e$, choosing $u_e = \text{sign}(p_2 x - p_3)$. For bang-bang optimal controls, these become:
\begin{align}
	\min_{u_p} (p_1 x + p_3) u_p &= -|p_1 x + p_3|, \\
	\max_{u_e} (p_2 x - p_3) u_e &= |p_2 x - p_3|.
\end{align}
Thus the simplified Hamiltonian becomes:
\begin{align}
	\hamfunc(t; \state, p) &= -\left[ a p_1 \cos \theta + p_2 (a + a \sin \theta - g) - |p_1 x + p_3| - |p_2 x - p_3| \right],
	\label{eq:rocket_ham_simplified}
\end{align}
which, when substituted into the HJ PDE yields the simplified rockets HJI equation used in the numerical experiments (cf.\ Section~\ref{subsec:quant_results}).

\begin{figure}[tb]
	\centering 
	\includegraphics[width=\columnwidth]{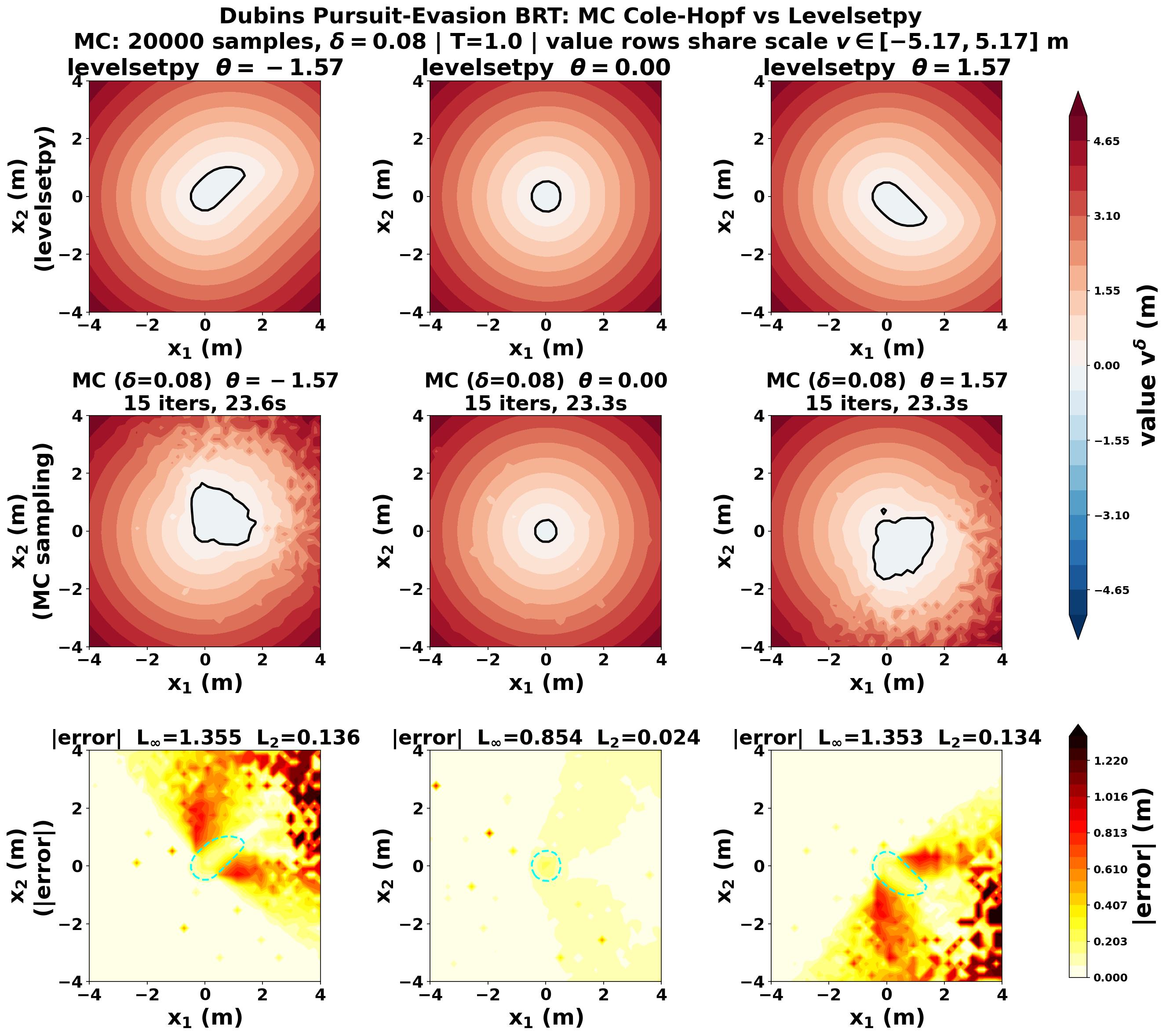}
	\caption{Dubins Vehicles' BRT Slices: Levelsetpy vs Monte Carlo (Ours). The bottom row shows a heat map of the errors at various relative vehicle orientations (the control input) between the Monte Carlo sampling scheme and the \texttt{LevelSetPy} scheme. Layout, color scale, and colorbars match \autoref{fig:rockets_slices}. A single representative Monte Carlo seed is shown; \autoref{tab:error_comparison} reports mean $\pm$ std over 30 seeds.}
	\label{fig:dubins_comparison}
\end{figure}

\subsection{Starlings murmurations safety analysis in a high dimensions}
\label{app:starlings}

We take inspiration from natural swarms (see \autoref{fig:murmurations}), particularly the murmuration of European starlings (\textit{sturnus vulgaris}). In these settings, local flocks within large murmurations  maintain an anisotropic formation based on a topological interaction, regardless of sparsity of birds on a phase space~\citep{Cavagna2010Scale}. Thus, intra- and inter-flock collisions are avoided and attacks are fended off~\citep{Ballerini1232}.  Approximating the viscosity solutions of nonconvex Hamilton-Jacobi partial differential equations with our quasilinearization scheme and importance sampling, the Hamiltonian, control laws, and strategies that govern the transient behaviors of many systems that possess structural subsystems with unique nearest neighbor properties may be computed.

\begin{figure}[tb!]
	\centering
	\begin{tabular}{ccc} 
		\includegraphics[height=8em,width=12.5em]{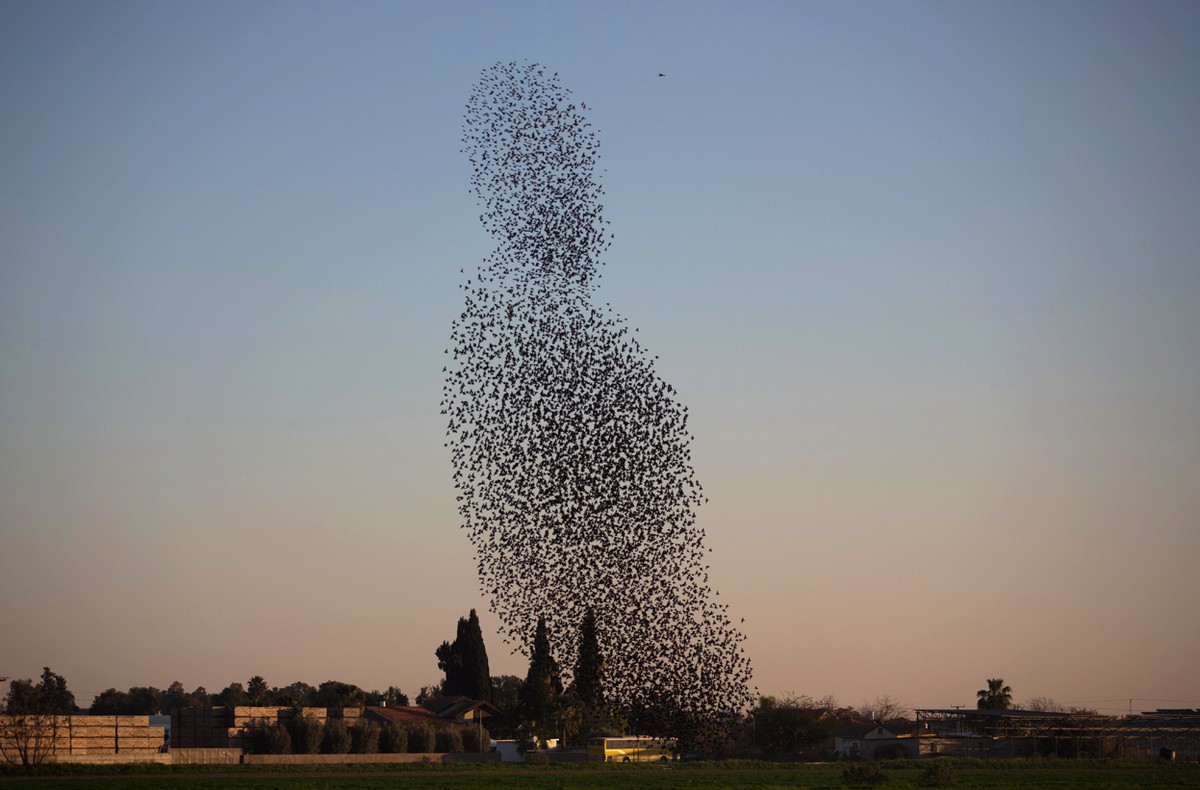} 
		&
		\includegraphics[height=8em,width=12.5em]{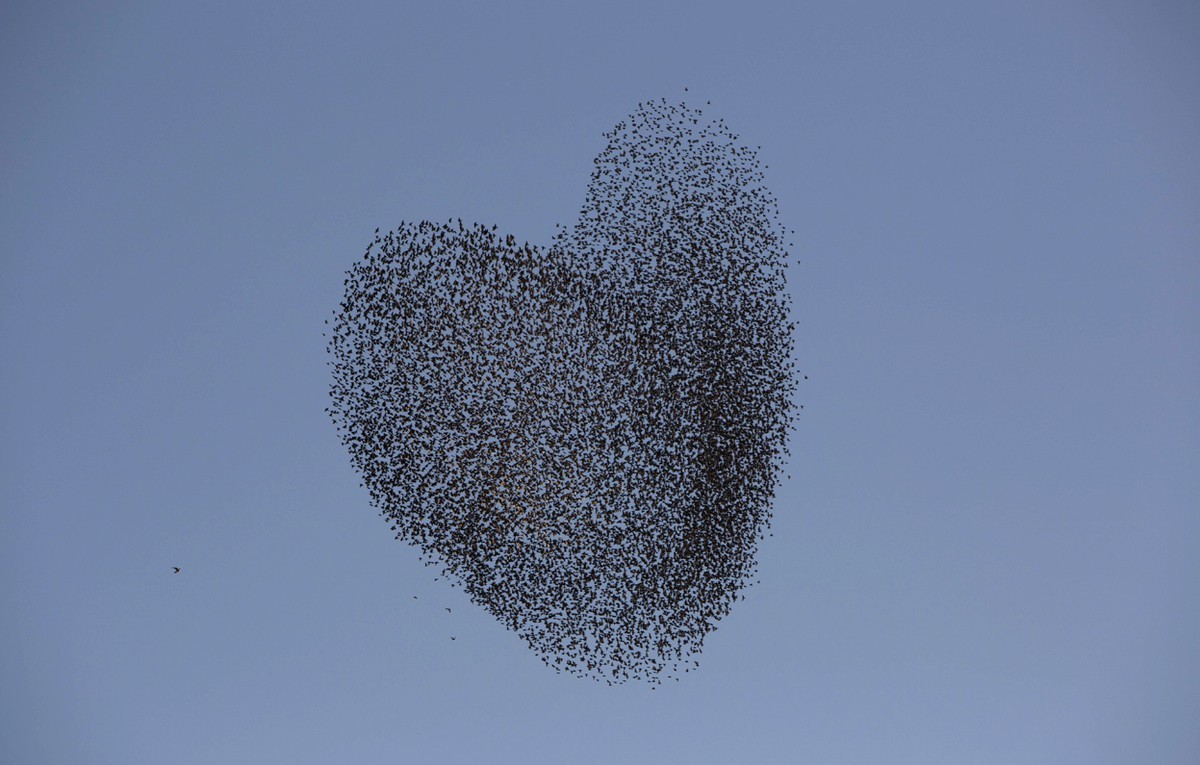} %
		& 
		\includegraphics[height=8em,width=12.5em]{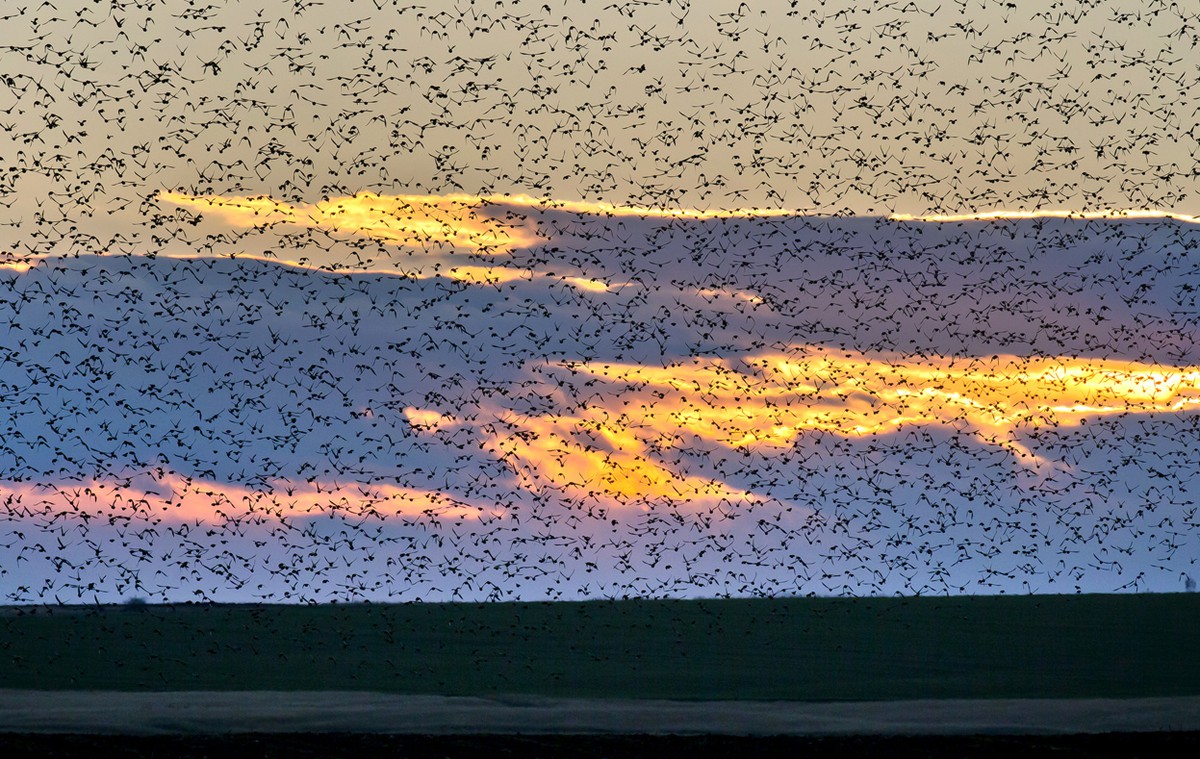} 
		\\	
		\includegraphics[height=8em,width=12.5em]{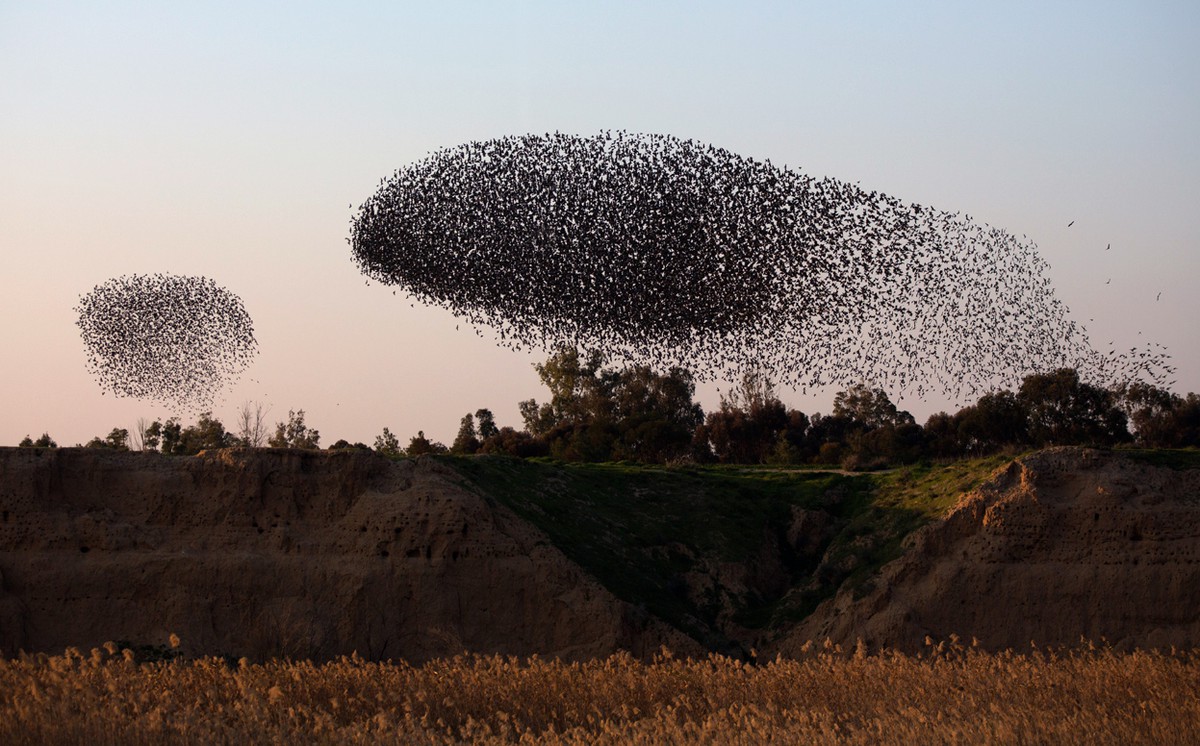}
		&
		\includegraphics[height=8em,width=12.5em]{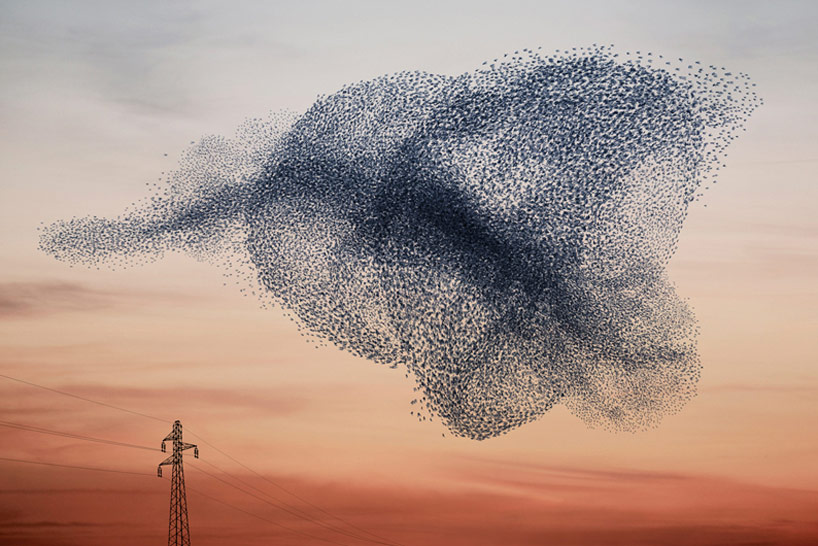} 
		&
		\includegraphics[height=8em,width=12.5em]{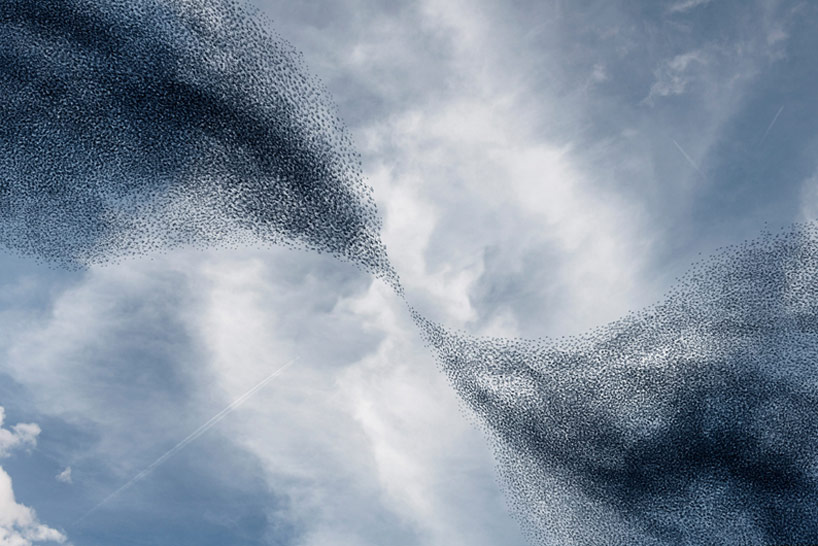}
	\end{tabular}
	\caption{The scalability veracity of our formalism is evaluated on $100,000$ European starlings (\textit{sturnus vulgaris}) in murmurations. From the top-left and clockwise. (i) A starlings flock rises into the air, in a dense structure (Reuters/Amir Cohen).  (ii) Starlings migrating over an Israeli village (AP Photo/Oded Balilty). (iii) Starlings feeding on laid seeds  in the ground in Romania. (iv) Two flocks of migrating starlings (Menahem Kahana/AFP/Getty Images). (v) A concentric conical formation of starlings (Courtesy of \href{http://www.thegatheringsite.net/qcgems/2014/1/24/murmuration}{The Gathering Site.}). (vi)  Splitting and joining of a flock of starlings.} 
	\label{fig:murmurations}
\end{figure}
Through empirical~\citep{Ballerini1232, Cavagna2010Scale, Helbing20, VicsekPhaseNovel, Bialek2012Statistical} and theoretical findings~\citep{JadbabaieCoord}, evidence now abounds that in certain natural species that exhibit collective behavior, convergence and group cohesion is based on simple topological interaction rules that they employ to keep a tab on one another in \textit{local flocks} for collision avoidance, preserving density and structure in an anisotropic formation, and exhibiting flock splitting, vacuole, cordon, and flash expansion isotropically~\citep{NatGeo}. This aids these animals in emerging an eye-pleasing local anisotropic synchrony, which taken together among possibly hundreds of thousands of local interactions\footnote{It has been reported that no birds fly together with greater coordination and complexity than European starlings,  with murmurations counting upwards of 750,000 individual birds!}~\citep{NatGeo}, keep these animals whirling, swooping, and flying in isotropic formations~\citep{Ballerini1232}. Thus, individual agents aggregate into substructures within the overall system, and overall group motion is synergized via local topological interactions so that  a stable global heading and cohesion~\citep{JadbabaieCoord} is preserved.

\subsubsection{Notations and Background}
A few mathematical notations, conventions, and taxonomy used throughout this section are in order at this juncture.
%
Capital and lower-case Roman letters are matrices and vectors respectively. Exceptions: time variables such as $t, t_0, t_f, T$ are real numbers throughout. Calligraphic letters are sets. Exception: the HJ equation's solution (shortly introduced) is the set $\openset \in \ren$ in which agents move. We work in a multi-agent system context where individual agents self-organize into phases or regions $\subgroup$ which are in turn members of a union of multiple regions $\group$. Note that every $\subgroup \subseteq \group$ and all members of $\group$ are disjoint from one another \ie $\subgroup_i \cup \subgroup_j = \emptyset$ for any $i\neq j$. The total number of elements in $\mc{S}$ is denoted $\left[\mc{S}\right]$, and we denote by $\text{ int } \openset$ the  interior of $\openset$.  The closure of $\openset$ is $\bar{\openset}$. We let $\delta \openset \,(:= \bar{\openset} \backslash \text{int } \openset)$ be the boundary of $\openset$. 
%

Structural homogeneity of starlings' motion in every region  $\subgroup_i \in \group$ for $i=1, \cdots, \left[\group\right]$ applies; this is enforced by introducing an external disturbance on the zeroth-index agent (this aids compactness of the zero levelset of an $\subgroup_i$ as we will introduce shortly). Hence, each safety verification episode can be characterized as a pursuit \textit{game}, $\Gamma$.  And by a game, we do not necessarily refer to a single game, but rather a \textit{collection of games}, $\Upsilon = \{\Gamma_1, \cdots, \Gamma_g\}$. Such a game terminates when \textit{capture} occurs, that is the distance between players falls below a predetermined threshold. As in previous examples, each player in a game shall constitute either a pursuer ($\pursuer$) or an evader ($\evader$). Let the cursory reader not interpret $\pursuer$ or $\evader$ as controlling a single agent. In our setup, we are poised with several pursuers (e.g. Falcons) or evaders (starlings). However, when $\pursuer$ or $\evader$ governs the behavior of but one bird, these symbols will denote the bird itself. An evading bird in a region $\subgroup_i$ has a state notation $\state_a^i$ (read: the state of bird $a$ in region $i$).  A state $\state_a^i$ has linear velocity components, $\state_{a_1}^i, \state_{a_2}^i$, and heading $\state_{a_3}^i:=w_a^i$. When we must distinguish an bird $\state_a^i \in \group_x$ from some other bird \eg in another multiphase $\group_y$, we shall write $^x\state_a^i$ and $^y\state_a^i$ respectively. 
Given the various possibilities of outcomes, the ``best" outcome is resolved by a \textit{payoff}, $\Phi$, whose extremal over a time interval will constitute a \textit{value}, $\valuefunc$\footnote{The functional $\bm{\Phi}$ may be considered a functional mapping from an infinite-dimensional space to the space of real numbers.}. We adopt~\citep{Isaacs1965}'s language so that if the payoff for a game is finite we shall have a \textit{game of kind} (a qualitative game); and for a game with a continuum of payoffs we shall have \textit{a game of degree} (quantitative games). The \textit{strategy} executed by $\pursuer$ or $\evader$ during a game shall be denoted by $\alpha \in \mc{A}$ (resp. $\beta \in \mc{B}$). 

The many interacting subsystems under consideration employ
\begin{inparaenum}[(i)]
	\item natural units of measurements that are the same for all birds; 
	\item kinematics with same linear speeds but with a capacity for orientation changes;
	\item inter-region interaction occurs within unique and distinct state space manifolds; and by birds maneuvering their direction, a kinematic alignment is obtained with other regions;
	\item region-to-region interaction occurs when a pursuer is within a threshold of capturing any bird in a region;
	\item the interaction among respective regions is described by the time-evolution of an interface, which is the zero-level set of the objective functional of the respective local subgroups.
\end{inparaenum} 

\subsubsection{Murmuration as a Levelsets Fronts  Equation}

The key idea is that when the multi-bird problem has separable structure, one can resolve the associated HJ equations in a numerically consistent manner by considering the interface of these separable structures, $i$, as evolving dynamic interfaces, $\interface_i$, via the modified levelset equation \eqref{eq:ivp}
\begin{align}
	\phi_t + \Gamma_i H(x, t, \phi, D\phi) = 0
\end{align}
At each time step, we advance each interface $\phi_i$ for birds in a region $i$ by solving the levelset equation for a small time step $\Delta t$; afterwards, we reconstruct the {unsigned} distance function that captures the safe set, which informs the controller  for the agents $\mc{S}_i \in \mc{C}$. Contrary to distributed consensus-type algorithms such as~\citep{MDDDP}, this framework guarantees safe exploration by directly incubating safety into the control optimization problem.

The motion of interfaces between subgroups are cast as Gaussian kernel approximations of the zero level set of implicitly defined {unsigned} distance functions (UDFs). Interfaces evolve by solving time-dependent Eulerian initial value partial differential equations of these SDFs in the form of viscosity solutions~\citep{Crandall1983viscosity} to hyperbolic conservation laws~\citep{CrandallMethodFracSteps}, which are essentially the original HJ equations~\citep{Evans1984}. Hence, the front's position gets updated by this means and the interface velocity is derived from physics on and off the interface.

Popular numerical tools for simulating the evolution of interfaces in applied physics, mathematics, and computational sciences include fast marching front tracking  methods~\citep{Sethian87Numerical, SethianFast}, Voronoi implicit interface methods (or VIIMs)~\citep{SethianVIIM,ZaitzeffVoronoi}, discretization schemes for Hamilton-Jacobi equations~\citep{Tsitsiklis95}, or multilayer volume-of-fluid methods~\citep{FoamingKarnakov} for foaming across scales.  While each of these schemes has its own advantages, we will resort to levelset methods~\citep{SethianLSBook} in developing algorithmic efficiency for the trajectory optimization among a \textit{collection} ($\mc{C}$) of \textit{disjoint} multiple starlings \textit{sets}, $\mc{S} \subseteq \group$. The entire collection of starlings moves over an open set $\openset \subset \ren$. 
When subgroups $\{\subgroup_i\}_{i=1}^{i=k} \subset \group, k<n$ of aerial starlings must traverse a narrow opening, or temporarily break apart to avoid collisions, for example, we expect topological changes to occur without explicit surgery.  Furthermore, group geometric formation and cohesion are achieved by local changes to subgroups' collective heading and speed. This is a challenging problem numerically and we look to natural behaviors in animals and multiphase simulations for inspiration. In our problem construction, a \textit{finite set} of individual starlings self-organize into local structural groups (which we refer to as subgroups, $\mc{S}$); subgroups interact based on nearest neighbor rules so that effective group consensus is dictated by interactions among separate subgroups~\citep{JadbabaieCoord}.

Throughout, our theater of operations involve multiple aerial Dubins vehicles~\citep{Dubins1957} with an extra dimension along the vertical position to describe the 4-dimensional motion of birds. We treat bird motions as kinematic models that possess linear and angular speeds as state variables. The safety verification optimization problem is played as a game of multiple vehicles in a four-dimensional space. As we are dealing with multiple starlings, objectives such as collision avoidance, starlings' spatial separation, overall group coherence become paramount. At issue is continually computing the points set that belong on the \textit{reachable set} boundary as the game advances forward in time. These reachable sets are those state space subsets where starlings may collide into one another, hit obstacles, lose topological integrity or miss group coherence behavior. 

\subsubsection{Related Works}
As all starlings move in $\openset \subset \ren$, we want a \text{stable and safe} numerical algorithm that correctly represents dynamic interface boundary conditions, accurately represents kinematics, whilst sensitive enough to rapidly interpret subgroup topological and structural changes. \citet{Tsitsiklis95} worked under the restrictive assumption that HJ's running cost is independent of the control laws' update -- providing an $O(n/p)$ parallel algorithm that resolved the discretized HJ-Hamiltonian for $p$ processors on $n$ grid points provided that $p=O(\sqrt{n}/\log n)$. In front tracking methods, a Lagrangian geometric representation techniques track the surfaces between separate structures with mechanisms such as triple point junctions which are endowed with shared nodes so that members of the front set are updated as time evolves~\citep{YongsamFrontTracking}. In applied mechanics and microfluidics, methods such as volume-of-fluid~\citep{VOF} resolve the interaction among multiple phases that are separated by thin boundaries with volume fraction fields for each region or unique functions in levelset methods -- leading to coalescence prevention among multiple regions. However, they come at a computational cost of $\mc{O}(N_{regions}N_{cells})$.  To improve its scalability, ~\citep{FoamingKarnakov} compactly stored many fields thereby keeping needed scalar fields constant and independent of regions to simulate.

\subsubsection{Problem Formulation}

We resolve local payoff extremals, $\{\payoff_1, \cdots, \payoff_{n_f}\}$. as a state space partition induced by an aggregation of desired collective behavior from local flocks' values $\{\valuefunc_1, \cdots, \valuefunc_{n_f}\}$.  Suppose that the local control laws are properly coordinated, the region of the state space across which their coordinated influence might be exerted constitute a larger \eg \textit{manipulability volume} for a dexterous kinematic task. We now formalize definitions that will aid the modularization of the problem into manageable forms.
\begin{definition}[Neighbors of an Agent]
	We define the neighbors $\mc{N}_i(t)$ of agent $i$ at time $t$ as the set of all agents that lie within a predefined radius, $r_i$.
\end{definition}
\begin{definition}
	We define a \textit{flock}, $\flock$, consisting of agents labeled $\{1, 2, \cdots, n_a\}$  as a collection of agents within a phase space $(\mc{X}, T)$ such that all agents within the flock interact with their nearest neighbors in a topological sense.
	\label{def:flock}
\end{definition}
\begin{remark}
	
	Every agent within a flock has similar dynamics to that of its neighbor(s). Furthermore, agents travel at the same linear speed, $v$; climb/dive rate (the control) $\climb \in [-\gamma_{\max}, \gamma_{\max}]$, where $\gamma_{\max} \in \reline$; the angular headings, $w$, however, may be different between agents, seeing we are dealing with a many-bodied system.  Each agent's continuous-time dynamics, $\dot{\state}^{(i)}(t)$, evolves as
	\begin{align}
		\begin{bmatrix}
			\dot{\state}^{(i)}_1(t) \\ \dot{\state}^{(i)}_2 (t) \\ \dot{\state}^{(i)}_3 (t) \\ \dot{\state}^{(i)}_4 (t)
		\end{bmatrix} 
		&= \begin{bmatrix}
			v(t) \cos \state_4^{(i)}(t) \\ v(t)\sin \state_4^{(i)}(t) 
			\\
			u_z(t)
			\\ \langle w^{(i)}(t) \rangle_r
		\end{bmatrix}, \,\, \langle w^{(i)}(t) \rangle_r &= \dfrac{1}{1+n_i(t)}\left(w^{(i)}(t) + \sum_{j \in \mc{N}_i(t)}^{} w_j(t)\right) 
		%
		\label{eq:DubinsJadbabaie}
	\end{align}
	\text{ for agents } $i = \{1, 2, 3, ..., n_a\}$, where $t$ is the continuous-time index, $n_i(t)$ is the number of agent $i$'s neighbors at time $t$, $\mc{N}_i(t)$ denotes the sets of labels of  agent $i$'s neighbors at time $t$, and $\langle w^{(i)}(t) \rangle_r$ is the average orientation of agent $i$ and its neighbors at time $t$. Note that for a game where all agents share the same constant linear speed and heading, and $u_z = 0$, \eqref{eq:DubinsJadbabaie} reduces to the dynamics of a Dubins' vehicle in absolute coordinates with $-\pi \le w^{(i)}(t) < \pi$. The averaging over the degrees of freedom of other agents in \eqref{eq:DubinsJadbabaie} is consistent with \textit{mean field theory}, where the effect of all other agents on any one agent is an approximation of a single averaged influence.
\end{remark}


\begin{definition}[Payoff of a Flock]
	To every flock $F_j$ (with a finite number of agents $n_{a}$) within a murmuration, $j = \{1, 2, \cdots, n_f\}$ , we associate a payoff, $\bm{\Phi}_j$, that is the union of all respective agent's payoffs for expressing the outcome of a desired kinematic behavior. 
	\label{def:payoff}
\end{definition}
%
%
%
%

Flock $\flock_j$ and $\flock_k$ within a murmuration, $\flock_j \cup \flock_k \cup \flock_l \cdots $ are separated by \textit{partitions}, or \textit{interfaces}, $\interface_{jk}, \interface_{kl}, \cdots$. This interface may be  implicitly represented as a signed distance function $\payoff(\state)$ which is negative on the interior of each flock, and zero on the edges. The zero-level set (\ie $\payoff(\state)=0$) corresponds to the  interface $\valuefunc$~\citep{Sethian87Numerical}. As the system evolves over time, $\flock_j$'s interface (zero-level set) motion can be parameterized by time, so that the flow field $\valuefunc(\state, t)$ is equivalent to the solution of the 
Cauchy-type Hamilton Jacobi partial differential equation~\citep{Evans1984, Crandall1983viscosity}:
\begin{align}
	\valuefunc_t + v_j | \nabla \valuefunc_j | = 0, \quad j = 1, \cdots, n_f,
	\label{eq:level_set}
\end{align} 
where $v_j$ is the flow speed for $\flock_j$. Equation \eqref{eq:ivp} is the level set equation~\citep{SethianLSBook}.

In the sentiment of \citep{Mitchell2005}, we say the zero sublevel set of $g(\cdot)$ \ie $\mathcal{L}_0$ in \eqref{eq:target_set} 
%
%
is the \textit{target set} in the phase space $\openset \times \mathbb{R}$ for a backward reachability problem~\citep{Mitchell2001}. 
This target set\footnote{Note that the target set, $\mathcal{L}_0$, is a closed subset of $\ren$ and is in the closure of $\openset$.} can represent the failure set, regions of danger, or obstacles to be avoided e.t.c. in the vectogram.  And the \textit{robustly controlled backward reachable tube} for $\tau \in [-T, 0]$\footnote{The (backward) horizon, $-T$ is negative for $T>0$.} is the closure of the open set $\mathcal{L}([\tau, 0], \mathcal{L}_0) $ in \eqref{eq:rcbrt}.

Each agent within a flock interacts with a fixed number of neighbors, $n_c$, within a fixed topological range, $r_c$. This topological range 
is consistent with findings in collective swarm behaviors and it reinforces \textit{group cohesion}~\citep{Ballerini1232}. However, we are interested in \textit{robust group cohesion} in reachability analysis. Therefore, we let a pursuer, $\pursuer$, with a worst-possible disturbance attack the flock, and we take it that flocks of agents constitute an evading player, $\evader$. 
Returning to \eqref{eq:DubinsJadbabaie}, for a single flock, we now provide a sketch for the HJI formulation for a heading consensus problem. 

\subsubsection{Framework for Separated Payoffs.}
\label{subsec:framework}
Suppose that a murmuration's global heading is predetermined and each agent $i$ within each flock, $\flock_j,\, (j=\{1, \cdots, n_f\})$ in the murmuration has a constant linear velocity, $v^i$. An agent's orientation is its control input, given by the average of its own orientation and that of its neighbors. Instead of metric distance interaction rules that make agents very vulnerable to predators~\citep{Ballerini1232}, we resort to a topological interaction rule. With metric distance rules, we will have to formulate the breaking apart of value functions that encode a consensus heading problem in order to resolve the extrema of multiple payoffs; which is typically what we want to prevent in real-world autonomous tasks.

%
What constitutes an agent's neighbors are computed based on empirical findings and studies from the lateral vision of birds and fishes~\citep{Ballerini1232, JadbabaieCoord, Helbing20} that provide insights into their anisotropic kinematic density and structure. Importantly, starlings' lateral visual axes and their lack of a rear sector reinforces their lack of nearest neighbors in the front-rear direction. As such, this enables them to maintain a tight density and robust heading during formation and flight.

Each agent within a flock $\flock_j$ interacts with a fixed number of neighbors, $n_c$, within a fixed topological range, $r_c$. The topological range can be set as the distance between the labels of agents in a flock. This topological range 
is consistent with findings in collective swarm behaviors and it reinforces \textit{group cohesion}~\citep{Ballerini1232}. However, we are interested in \textit{robust group cohesion} in reachability analysis. Therefore, we let a pursuer, $\pursuer$, with a worst-possible disturbance attack the flock, and we take it that flocks of agents constitute an evading player, $\evader$. 

\subsubsection{Global Isotropy via Local Anisotropy.}
\label{subsec:aniso}
%
Structural anisotropy is not merely an effect of a preferential velocity in animal flocking kinematics but rather an explicit effect of the anisotropic interaction character itself: agents choose a mutual position on the state space in order to maximize the sensitivity to changes in heading and speed of neighbors as the neighbors' anisotropy is optimized via vision-based collision avoidance  characteristically unrelated to the eye's structure~\citep{Ballerini1232}.

To reinforce robust group cohesion in local flocks, we randomly simulate a pursuer $\pursuer_j$ against an evading agent in every flock $\flock_j$ so that one agent is always relative coordinates with $\pursuer^j$. 
In this specialized case,   the $\evader$ and $\pursuer$'s speeds and maximum turn radii are equal: if both players start the game with the same initial velocity and orientation, the relative equations of motion show that $\evader$ can mimic $\pursuer$'s strategy by forever keeping the starting radial separation. As such, the \textit{barrier} is closed and \textit{the central theme in this \textit{game of kind} is to determine the surface}~\citep{Merz1972}. We defer a thorough analysis of the nature of the surface to a future work. 

Owing to the high-dimensionality of the state space, we cannot resolve this barrier analytically, hence we resort to numerical approximation methods -- in particular, we leverage a parallel Lax-Friedrichs integration scheme~\citep{Crandall1984} which we implement in Cupy~\citep{CuPy} in order to provide a \textit{consistent} and \textit{monotone} solution to the Hamiltonians of the flocks. The assembly in the large of these respective Hamiltonians, and hence numerically robust solutions to the variational backward reeachability problem is resolved with a Voronoi tesselation of the zero-level sets of the boundaries of the flocks. 

Therefore, for an agent $i$ within a flock with index $j$ in a murmuration, the equations of motion under attack from a predator $p$ in relative coordinates is 
\begin{align}
	\left[\begin{array}{c}
		\dot{\state}_1^{(i)_j}(t) \\ \dot{\state}_2^{(i)_j} (t) \\ \dot{\state}_3^{(i)_j} (t) \\ \dot{\state}_4^{(i)_j} (t)
	\end{array}\right] = \begin{bmatrix}
		-v_e^{(i)_j}(t) + v_p^{(j)} \cos \state_4^{(i)_j}(t) + \langle w_e^{(i)_j} \rangle_r \state_2^{(i)_j}(t)
		\\
		v_p^{(i)_j}(t)\sin \state_4^{(i)_j}(t) - \langle w_e^{(i)_j} \rangle_r \state_1^{(i)_j}(t)
		\\ 
		u_p^{(j)}(t) -  u_e^{(i)_j}(t) 
		\\ 
		w_p^{(j)}(t) - \langle w_e^{(i)_j}(t) \rangle_r
	\end{bmatrix} 
	\label{eq:DubinsRelative}
\end{align}
\text{for} $i=1,\cdots, n_a$ where $n_a$ is the number of agents within a flock, $\left(\state_1^{(i)_j}(t), {\state}_2^{(i)_j} (t)\right) \in \bb{R}^2$, $\state_3 \in [-\gamma_{\max}, \gamma_{\max}]$, and  we have $\state_4^{(i)_j} (t) \in \left[-\pi, +\pi\right)$\footnote{We have multiplied the dynamics by $-1$ so that the extremal's resolution evolves backwards in time.}. Read ${\state}_1^{(i)_j}(t)$: the first component of the state of an agent $i$ at time $t$ which belongs to the flock $j$ in the murmuration at time $t$. In absolute coordinates, the equation of motion for \textit{free agents} is 
\begin{align}\left[\begin{array}{c}
		\dot{\state}_1^{(i)_j}(t) \\ \dot{\state}_2^{(i)_j} (t) \\ \dot{\state}_3^{(i)_j} (t) \\ \dot{\state}_4^{(i)_j} (t)
	\end{array}\right] = \begin{bmatrix}
		v_e^{(i)_j}(t) \cos \state_4^{(i)_j}(t) 
		\\
		v_e^{(i)_j}(t)\sin \state_4^{(i)_j}(t) 
		\\ 
		u_z^{(i)_j}(t)
		\\
		\langle w_e^{(i)_j}(t) \rangle_r
	\end{bmatrix}.
\end{align}

\subsubsection{Flock Motion from Aggregated Value Functions.}
\label{subsec:flockbrat}
We introduce the union operator \ie $\cup $ below as an aggregation symbol since the respective payoffs of each agent in a flock may be implicitly or explicitly constructed\footnote{In resolving the zero-level sets of HJ value functions, it is typical to represent the payoff's surface as the isocontour of some function (usually a signed distance function).} -- when it is implicitly represented, say from a signed distance function, we shall aggregate the  payoff of agents 1 and 2 as
\begin{align}
	\cup &\left\{\payoff_1(\state, t), \payoff_2(\state, t)\right\} \equiv 	 \payoff_1(\state, t)\cup \payoff_2(\state, t) = 	\min(\payoff_1(\state, t), \payoff_2(\state, t))
	\label{eq:sdf_union}
\end{align}
otherwise, other appropriate arithmetic or logical operation shall apply.

We assume that the \textit{value} of a flock heading control (differential game) exists.  And by an extension of Hamilton's principle of least action, the terminal motion of a flock coincide with the extremal of the payoff functional 
\begin{align}
	\valuefunc(\state, t) &= \inf_{\beta^{(1)} \in \mathcal{B}^{(1)}} \sup_{\bm{u^{(1)}} \in \mathcal{U}^{(1)}} 	 g^{(1)}(\state(T)) \cup  \cdots
	\cup \inf_{\beta^{(n_f)} \in \mathcal{B}^{(n_f)}} \sup_{\bm{u}^{(n_f)} \in \mathcal{U}^{(n_f)}} 
	g^{(n_f)}(\state(T))
\end{align}
%
%
where $n_f$ is the total number of distinct flocks in a murmuration. The resolution of this equation  admits a viscosity solution to the following variational terminal HJI PDE \citep{Mitchell2005}
\begin{align}
	\cup _{j=1}^{n_f}&\left[
	\cup _{i=1}^{n_a}\left(\dfrac{\partial \valuefunc_i}{\partial t}(\state, t) + \min \left[0, \hamfunc^{(i)}(\state^{(i)}, \valuefunc_x(\state, t))\right]\right)\right]
	\nonumber \\
	&= 0.
	\label{eq:value_aggregate}
\end{align}
with Hamiltonian, 
\begin{align}
	\hamfunc^{(i)} &(t; \state^{(i)}, \bm{u}^{(i)}, \bm{v}^{(i)}, p^{(i)}) = \max_{u^{(i)} \in \mathcal{U}^{(i)}} \min_{v^{(i)} \in \mathcal{V}^{(i)}}  \langle f^{(i)}(t; \state, \bm{u}^{(i)}, \bm{v}^{(i)}), p^{(i)}  \rangle. 
	\label{eq:lower_visc_ham_surfaces}
\end{align}

In swarms' collective motion, when \eg a Peregrine Falcon attacks, immediate nearest agents change direction almost instantaneously. And because of the interdependence of the orientations of individual agents with respect to one another, all other agents respond instantaneously. Thus, we only simulate a single attack against a flock within the murmuration to realize robust cohesion.  

A pursuer can attack any flock within the murmuration from a distinct surface: a $\pursuer$ direction: this side of the surface reached after penetration in the $\pursuer-[\evader-]$ direction is the $\pursuer-[\evader]$ \textit{side}\citep{Isaacs1965}. We  attribute the term \textit{in the small} to determine the smooth parts of the singular surface solution when a pursuer attacks, and when they are stitched together into the total solution, we shall describe them as \textit{in the large}. There exists at least one value $\bar{\alpha}$ of $\alpha$ such that if $\alpha = \bar{\alpha}$, no vector in the $\beta$-vectogram\footnote{A $\beta-$vectogram is the resulting state space when a the strategy $\beta$ is applied in computing the optimal control law for an agent.} penetrates the surface in the $E$-direction. Similar arguments can be made for $\bar{\beta}$ which prevents penetration in the $P$-direction. We adopt ~\citep{Isaacs1965}'s terminology and call these surfaces semi-permeable surfaces (SPS).

Throughout the game, we assume that the roles of $\pursuer$  and $\evader$ do not change, so that when capture can occur, a necessary condition to be satisfied by the saddle-point controls of the players is the Hamiltonian, $\hamfunc^i(\state, p)$.  

\begin{tcolorbox}[title=\textbf{Benchmark Toolkit}, colback=yellow!3, colframe=blue!70, fonttitle=\small, fontlower=\small] 
	\small
	In comparison with benchmarks, the zero-level set is constructed implicitly from the  isocontour of a signed distance function as described in \cite[Chapter II]{LevelSetsBook}. 
	We introduce the union operator, $\cup $, below in lieu of the arithmetic summation symbol since the respective payoffs of each agent in a flock are implicitly initialized as signed distance functions on the state space. It is trivial to extend these results to other arithmetic or Boolean operators depending on different task domains. 
	Recall that if $\Phi_1(\state, t)$ and $\Phi_2(\state, t)$ are respectively the payoffs for agents $1$ and $2$, constructed from signed distance functions, then the \textit{union of their interiors}\footnote{We are only concerned with states that are within or on the boundary of the zero-level set or tube in backward reachability analysis. This follows from \eqref{eq:reachables}.} is given by \eqref{eq:sdf_union}.
\end{tcolorbox}

\begin{theorem}
	For  a flock, $\flock_j$, the Hamiltonian is the total energy given by a summation of the exerted energy by each agent $i$ so that we can write the \textit{main equation} or total Hamiltonian of a murmuration as 
	\begin{align}
		&\hamfunc(\state, p)= \max_{w_e^{(k)_j} \in [\underline{w}_e^j, \bar{w}_e^j]}  \min_{w_p^{(k)_j}  \in [\underline{w}_p^{j}, \bar{w}_p^j]} \cup _{j=1}^{n_f} \left[ H^{(k)_j}_a(\state, p) \cup  \left( \cup _{i=1}^{n_a-1} H^{(i)_j}_f(\state, p) \right) \right] 
		\\
		& = \cup _{j=1}^{n_f} \left( \cup _{i=1}^{n_a-1} 
		\left[p_1^{(i)_j} \, v^{(i)_j} \cos \state_4   + p_2^{(i)_j} \, v^{(i)_j} \sin \state_4  + p_3^{(i)_j} \, u_z^{(i)_j} + p_4^{(i)_j}\, \langle w_e^{(i)_j}\rangle_r\right] \right.\nonumber \\
		& \quad \left. \cup  \left[
		p_1^{(k)_j} \left(v^{(k)_j}  - v^{(k)_j} \cos \state_4^{(k)_j}\right) -   p_2^{(k)_j} v^{(k)_j} \sin \state_4^{(k)_j} - \underline{w}_p^j |p_4^{(k)_j}|   \right. \right. \nonumber \\
		&  \quad \left.  \left. 
		+ \bar{w}_e^j \bigg|p_2^{(k)_j} \state_1^{(k)_j} - p_1^{(k)_j}\state_2^{(k)_j} + p_3^{(k)_j}\bigg|
		\right] \right).
		\label{eq:HamiltonianOverall}
	\end{align}
	where $\hamfunc^{(k)_j}_a(\state, p)$ is the  Hamiltonian of the individual under attack by a pursuing agent, $\pursuer$, and $H^{(i)_j}_f(\state, p)$ are the respective Hamiltonians of the free agents, $i=\{1, \cdots, n_f\}$, within an evading flock in a murmuration, and not under the direct influence of capture or attack by $\pursuer$; we denote by  $w_e^{(i)_j}$ the heading of an evader $i$ within a flock $j$ and $w_p^{(j)}$  the heading of a pursuer aimed at flock $j$; $\underline{w}_e^{(k)_j}$ is the orientation that corresponds to  the orientation of the agent with minimum turn radius among all the neighbors of agent $k$, inclusive of agent $k$ at time $t$; similarly, $\bar{w}_e^{(k)_j}$ is  the maximum orientation among all of the orientation of agent $k$'s neighbors. 
	\label{th:ham_sum}
\end{theorem}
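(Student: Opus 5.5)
The plan is to treat the murmuration Hamiltonian as the lower Hamiltonian \eqref{eq:lower_visc_ham_surfaces} of the stacked system and then read off its structure. The state is the concatenation of all agents' states across all flocks, and the co-state $p$ is partitioned into the matching blocks $p^{(i)_j}$. The flow $f$ is block structured. For each flock $j$, one attacked agent $k$ obeys the relative Dubins dynamics \eqref{eq:DubinsRelative}, and the remaining $n_a-1$ free agents obey the absolute dynamics. Because of this, the inner product $\langle f, p\rangle$ splits into a sum of per-agent terms. Each term depends only on that agent's own controls and on the pursuer heading $w_p^{(j)}$ assigned to its flock.

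\textbf{Step 1: decoupling the max-min.} I would show that the max-min of this sum decouples into a max-min over each agent block. For free agents, no pursuer control appears. Their term $p_1 v\cos\state_4 + p_2 v\sin\state_4 + p_3 u_z + p_4\langle w_e\rangle_r$ is therefore evaluated along the prescribed consensus heading and climb rate, and no optimization remains. For the attacked agent, the evader's control enters only through the coefficient of $\langle w_e\rangle_r$, namely $p_2\state_1 - p_1\state_2 (+p_3)$. The pursuer's control enters only through $p_4 w_p$. Both enter linearly over interval constraint sets.

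\textbf{Step 2: bang-bang evaluation.} Next I would apply the same bang-bang argument used for the rockets and Dubins Hamiltonians in \eqref{eq:rocket_ham_simplified} and \eqref{eq:dubins_ham_general}. A linear function over an interval is extremized at an endpoint, which gives a maximum of the form $\bar w_e^j|\cdot|$ and a minimum of the form $-\underline w_p^j|p_4|$. The drift terms then reduce to $p_1(v - v\cos\state_4) - p_2 v\sin\state_4$, after the sign flip noted in the footnote to \eqref{eq:DubinsRelative}. This yields the attacked-agent bracket.

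\textbf{Step 3: aggregation.} Finally, the per-agent and per-flock pieces are combined with the aggregation operator $\cup$ of \eqref{eq:sdf_union}, in keeping with the aggregated value \eqref{eq:value_aggregate}. Since the extremizations act on disjoint control blocks, the max-min commutes with the aggregation. This gives the displayed equality.

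\textbf{Main obstacle.} The hard step is this commutation. It holds exactly when $\cup$ is read as a sum, because the max-min of a separable sum equals the sum of the block max-mins. It fails in general when $\cup = \min$, since then the operation is not separable. I would therefore first prove the statement with $\cup$ interpreted as the additive "total energy" named in the statement, and then argue that the $\min$-aggregation is applied afterwards, at the level of the decoupled per-flock values, as in \eqref{eq:value_aggregate}. A secondary difficulty is that $\langle w_e\rangle_r$ couples neighbors. I would handle it by treating the neighbor headings as frozen within each time step, with the evader's extremization taken only over the neighborhood bounds $[\underline w_e^j,\bar w_e^j]$. Under that convention, the consensus average is just a constrained control of agent $k$.
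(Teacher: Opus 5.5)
Your three steps follow the paper's own route. Free agents are written in absolute coordinates with nothing left to optimize, the attacked agent's $\langle p,f\rangle$ is expanded from \eqref{eq:DubinsRelative} and extremized at interval endpoints, and the pieces are aggregated by $\cup$. However, the step you single out as the main obstacle is misdiagnosed, and your workaround changes what gets proved.

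Commuting the max--min with the aggregation needs only disjoint control blocks and an aggregator that is nondecreasing in each argument; additivity plays no role. Set $g_i(a_i)\triangleq\min_{b_i}f_i(a_i,b_i)$. Then $\min_{b_1,b_2}\min\{f_1(a_1,b_1),f_2(a_2,b_2)\}=\min\{g_1(a_1),g_2(a_2)\}$. Also $\max_{a_1,a_2}\min\{g_1(a_1),g_2(a_2)\}=\min\{\max g_1,\max g_2\}$: the inequality $\le$ holds because $\min$ is monotone in each slot, and $\ge$ follows by inserting the blockwise maximizers. The free-agent terms carry no optimized control, so they enter as constants. Hence the displayed identity holds with $\cup=\min$ exactly as defined in \eqref{eq:sdf_union}. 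This is what the paper's proof uses, without comment, when it evaluates each block inside the $\cup$. Proving the additive version and then ``applying $\min$ afterwards at the level of values'' establishes a different identity from the one displayed.

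Separately, Step 1 misreads \eqref{eq:DubinsRelative}. The climb rates are controls too; the paper's proof extremizes over the pairs $(w_e,u_e)$ and $(w_p,u_p)$. They enter through $p_3(u_p-u_e)$, not through the heading channel. The coefficient of $\langle w_e\rangle_r$ is $p_1x_2-p_2x_1-p_4$. So the absolute value the computation actually produces is $|p_2x_1-p_1x_2+p_4|$, as in \eqref{eq:ham_flock_interm}, not the $+p_3$ you carried over from the displayed statement.

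You then have to dispose of the climb channel explicitly. With both climb rates in $[-\gamma_{\max},\gamma_{\max}]$, $\max_{u_e}\min_{u_p}\bigl(\pm p_3(u_p-u_e)\bigr)=\gamma_{\max}|p_3|-\gamma_{\max}|p_3|=0$. This is what justifies the absence of a stand-alone $p_3$ term in the attacked-agent bracket. The paper's proof records $\underline{\gamma}|p_3|-\bar{\gamma}|p_3|$ at this point, a term the displayed statement drops.

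Finally, the closed forms $\bar{w}_e^j|\cdot|$ and $-\underline{w}_p^j|p_4|$ in Step 2 presuppose symmetric turn-rate intervals. Over a general interval $[a,b]$, the endpoint rule gives $\max\{ac,bc\}$ or $\min\{ac,bc\}$ for a coefficient $c$, not a multiple of $|c|$. So that hypothesis must be stated. Under your sign convention, $\underline{w}_p^j$ must also be read as the magnitude of the pursuer's bound.
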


\begin{proof}
	From \eqref{eq:HamiltonianOverall}, the total Hamiltonian of a flock is a union of the mechanical energy of the free agents in a flock and the individual under attack. At each instant, we are extremizing over the headings ${w_e^{(k)_j} \in [\underline{w}_e^j, \bar{w}_e^j]},  {w_p^{(k)_j}  \in [\underline{w}_p^{j}, \bar{w}_p^j]}$ and climb rates $u_e^{(k)_j}, u_p^{(k)_j} \in [-\gamma_{\max}, \gamma_{\max}]$ respectively \ie
	\begin{align}
		\hamfunc(\state, p) &= \max_{[w_e^{(k)_j},u_e^{(k)_j}]}  \min_{[w_p^{(k)_j}, u_p^{(k)_j}]}   \cup _{j=1}^{n_f} \left[ H^{(k)_j}_a(\state, p) \cup  \left( \cup _{i=1}^{n_a-1} H^{(i)_j}_f(\state, p) \right) \right] 
	\end{align}
	We now set $\underline{\gamma}=-\gamma_{\max}$ and $\bar{\gamma} = \gamma_{\max}$ for ease of readability. We write the Hamiltonian of the free agents in absolute coordinates and the Hamiltonian of the agent under attack in relative coordinates with respect to the pursuer. A flock's Hamiltonian is  
	the aggregation of all the mechanical energy in the system in absolute coordinates \ie 
	%
	%
	\begin{align}
		&\cup _{i=1}^{n_a-1} \hamfunc^{(i)_j}_f(\state, p) =  \cup _{i=1}^{n_a-1} 
		\left[p_1^{(i)_j} \, v^{(i)_j} \cos \state_4 + p_2^{(i)_j} \, v^{(i)_j} \sin \state_4 + p_3^{(i)_j} u_{z_e}^{(i)_j} + p_4^{(i)_j}\, \langle w_e^{(i)_j}\rangle_r\right],
	\end{align}
	where we have again dropped the time arguments for convenience. And,
	\begin{align}
		\hamfunc^{(k)_j}_a(\state, p)&= - \left(\max_{[w_e^{(k)_j},u_e^{(k)_j}]}  \min_{[w_p^{(k)_j}, u_p^{(k)_j}]}   \begin{bmatrix}p_1^{(k)_j}(t) & p_2^{(k)_j}(t) & p_3^{(k)_j}(t) & p_4^{(k)_j}(t) \end{bmatrix}\right. \nonumber \\ 
		& \qquad\qquad  \left. 
		\begin{bmatrix}
			-v_e^{(k)_j}(t) + v_p^{(j)} \cos \state_4^{(k)_j}(t) + \langle w_e^{(k)_j} \rangle_r (t) \state_2^{(k)_j}(t)
			\\ 
			v_p^{j}(t)\sin \state_4^{(k)_j}(t) -\langle w_e^{(k)_j} \rangle_r (t) \state_1^{(k)_j}(t) 
			\\
			u_p^{(i)_j}(t)-u_e^{(i)_j}(t)
			\\ 
			w_p^j(t) - \langle w_e^{(k)_j}(t) \rangle_r
		\end{bmatrix}\right),
		\label{eq:Hamiltonian}
	\end{align}
	where $p_l^{(k)_j}(t)\mid_{l=1,2,3}$ are the adjoint vectors~\citep{Merz1972}. For the pursuer, its minimum and maximum turn and climb rates are fixed so that we have $(\underline{w}_p^{j}, \underline{\gamma})$ as the minimum turn and climb bounds of the pursuing vehicle, and $(\bar{w}_p^j, \bar{\gamma})$ are the maximum turn, and climb bounds of the pursuing vehicle, respectively. Henceforth, we drop the templated time arguments for ease of readability. From \eqref{eq:Hamiltonian}, we have 
	\begin{align}
		\begin{split}
			&\hamfunc^{(k)_j}_a(\state, p)=- \left(\max_{[w_e^{(k)_j},u_e^{(k)_j}]}  \min_{[w_p^{(k)_j}, u_p^{(k)_j}]}  
			\left[
			-p_1^{(k)_j} v_e^{(k)_j} + p_1^{(k)_j} v_p^{j} \cos \state_4^{(k)_j} +  p_1^{(k)_j}\langle w_e^{(k)_j} \rangle_r \state_2^{(k)_j} 
			\right. \right.  \\ 
			& \left. \left. \qquad
			+ p_2^{(k)_j} v_p^{j} \sin \state_4^{(k)_j} - p_2^{(k)_j} \langle w_e^{(k)_j} \rangle_r  \state_1^{(i)_j} 
			+ p_3^{(k)_j} \left(u_p^{(i)_j}-u_e^{(i)_j}\right)
			+ p_4^{(k)_j}\left(w_p^j - \langle w_e^{(k)} \rangle_r\right)
			\right] 
			\right),
			\\
			&\triangleq p_1^{(k)_j} \left(v_e^{(k)_j} - v_p^{j} \cos \state_4^{(k)_j}\right) -  p_2^{(k)_j} v_p^{j} \sin \state_4^{(k)_j} + \left(		
			\max_{\langle w_e^{(k)_j}\rangle_r \in [\underline{w}_e^j, \bar{w}_e^j], u_e^{(k)_j}\in [\underline{\gamma}, \bar{\gamma}]} \min_{w_p^{j}  \in [\underline{w}_p^{j}, \bar{w}_p^j], u_p^{j}  \in [\underline{\gamma}, \bar{\gamma}]} \right.\nonumber \\
			& \left. \qquad \qquad
			\begin{bmatrix}
				\langle w_e^{(k)_j}\rangle_r  
				 &\left(p_2^{(k)_j} \state_1^{(k)_j} - p_1^{(k)_j}  \state_2^{(k)_j} + p_4^{(k)_j}\right) 
				+ p_3^{(k)_j} \left(u_p^{(i)_j}-u_e^{(i)_j}\right) 
				\nonumber \\
				& 
				\qquad \qquad- p_4^{(k)_j} w_p^j 
			\end{bmatrix}
			\right).
		\end{split}
		\label{eq:ham_flock_interm}
	\end{align}
	So that,
	\begin{align}
		\hamfunc^{(k)_j}_a(\state, p) &= p_1^{(k)_j} \left(v_e^{(k)_j} - v_p^{j} \cos \state_4^{(k)_j}\right) - p_2^{(k)_j} v_p^{j}  \sin \state_4^{(k)_j} - \underline{w}_p^j |p_4^{(k)_j}| +  \bar{w}_e^j \bigg|p_2^{(k)_j} \state_1^{(k)_j} \nonumber 	\\
		& \quad 
		 - p_1^{(k)_j}\state_2^{(k)_j} + p_4^{(k)_j}\bigg|+ \underline{\gamma}|p_3^{(k)_j} |-\bar{\gamma}| p_3^{(k)_j} |
	\end{align}
	and 
	\begin{align}
		\hamfunc^{(i)_j}_f(\state, p) &= \left[p_1^{(i)_j} \, v^{(i)_j} \cos \state_4 + p_2^{(i)_j} \, v^{(i)_j} \sin \state_4 + p_3^{(i)_j} u_{z_e}^{(i)_j} + p_4^{(i)_j}\, \langle w_e^{(i)_j}\rangle_r\right].
	\end{align}
	Therefore, the main equation \eqref{eq:HamiltonianOverall} becomes 
	\begin{align}
		&\hamfunc(\state, p)  = \cup _{j=1}^{n_f} \left( \cup _{i=1}^{n_a-1} 
		\left[p_1^{(i)_j} \, v^{(i)_j} \cos \state_4 + p_2^{(i)_j} \, v^{(i)_j} \sin \state_4 + p_3^{(i)_j} u_{z_e}^{(i)_j} + p_4^{(i)_j}\, \langle w_e^{(i)_j}\rangle_r\right] \cup_{k_j} \right. \nonumber \\ 
		& \qquad \left.
		\left[
		p_1^{(k)_j} \left(v_e^{(k)_j} - v_p^{j} \cos \state_4^{(k)_j}\right) - p_2^{(k)_j} v_p^{j}  \sin \state_4^{(k)_j} - \underline{w}_p^j |p_4^{(k)_j}| +  \bar{w}_e^j \bigg|p_2^{(k)_j} \state_1^{(k)_j} 	- p_1^{(k)_j}\state_2^{(k)_j}
		\right. \right. \nonumber \\ 
		& \qquad\qquad \left. \left.
	 + p_4^{(k)_j}\bigg|-{\gamma}|p_3^{(k)_j} |-{\gamma}| p_3^{(k)_j} |
		\right] \right).
	\end{align}
\end{proof}

\begin{remark}
For the special case where the linear speeds of the evading agents and pursuer are equal \ie $v_e^{(i)_j}(t) = v_p(t) = +1 m/s$, a murmuration's Hamiltonian reduces to 
\begin{align}
	&\hamfunc(\state, p)  = \cup _{j=1}^{n_f} \left( \cup _{i=1}^{n_a-1} 
	\left[p_1^{(i)_j} \cos \state_4 + p_2^{(i)_j} 	\, \sin \state_4 + p_3^{(i)_j} u_{z_e}^{(i)_j} + p_4^{(i)_j}\, \langle w_e^{(i)_j}\rangle_r\right] \cup_{k_j} \right. \nonumber \\ 
	& \qquad \left.
	\left[
	p_1^{(k)_j} \left(1 - \cos \state_4^{(k)_j}\right) - p_2^{(k)_j}  \sin \state_4^{(k)_j} - \underline{w}_p^j |p_4^{(k)_j}| +  \bar{w}_e^j \bigg|p_2^{(k)_j} \state_1^{(k)_j} 	- p_1^{(k)_j}\state_2^{(k)_j}
	\right. \right. \nonumber \\ 
	& \qquad\qquad \left. \left.
	+ p_4^{(k)_j}\bigg| -{\gamma}|p_3^{(k)_j} | - {\gamma}| p_3^{(k)_j} |
	\right] \right).
	\label{eq:ham_murmur}
\end{align}
\end{remark}

\subsubsection{Vacuole Nucleation Topology}
\label{app:vacuole}

When an attacking predator penetrates (e.g. a Peregrine Falcon~\citep{Ballerini1232}) the interior of a flock's backward
reachable tube~(BRT), a topological hole or a \emph{vacuole}~\citep{vanleeuwenhoek1800select} nucleates in
the zero-sublevel set $\mc{L}(t) \triangleq \{x : v(t,x) \leq 0\}$,  an event characterized by the Euler characteristic.

\begin{definition}[Euler Characteristic of the BRT]
	\label{def:euler_char}
	Let $\mc{L}(t)$ be the BRS at time~$t$, discretised on a uniform
	$m\times m$ grid with vertex, edge, and face counts $V$, $E$, $F$, respectively.  The
	\emph{Euler characteristic} ~\citep{Hatcher2002EulerChar} is
	\begin{align}
		\label{eq:euler_char}
		\chi(t) \;\triangleq\; V - E + F \;\triangleq\; \beta_0(t) - \beta_1(t),
	\end{align}
	where $\beta_0$ is the number of connected components and $\beta_1$ is the
	first Betti number (number of independent loops/holes).
\end{definition}

\begin{theorem}[Vacuole Nucleation Topology]
	\label{thm:vacuole_nucleation}
	 Suppose that the evader agent's  Hamiltonian
	$\hamfunc_f(x, p)$  satisfies $\hamfunc_f(x^*, p^*) = 0$
	for some interior state $x^* \in \Omega$ with costate $p^* = Dv(t^*, x^*)$. 	Then at time $t^*$ when the zero-crossing occurs, if the predator penetrates
	the interior of the flock's reachable set, the topology of $\mc{L}(t^\star)$
	undergoes a transition characterized by,
	\begin{align}
		\label{eq:vacuole_threshold}
		\chi(t^{*+}) \;=\; \chi(t^{*-}) - 1,
	\end{align}
	where $\chi$ is the Euler characteristic~\eqref{eq:euler_char}. This jump
	indicates the nucleation of a vacuole (topological hole) in the flock's safe set,
	marking the transition from simply connected ($\beta_1 = 0$) to multiply connected
	($\beta_1 \geq 1$) topology.
\end{theorem}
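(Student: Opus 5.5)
The plan is a Morse-theoretic (generic level-set transition) argument on the zero sublevel set $\mc{L}(t)=\{x: v(t,x)\le 0\}$, read in the planar slice where Definition~\ref{def:euler_char} computes $\chi$. First, from the viscous equation \eqref{eq:HJ-RCBRT-Visc}, $v_t = \tfrac{\delta}{2}\Delta v - \min\{0,\hamfunc_f\}$. At the interior point $x^*$ where $\hamfunc_f(x^*,p^*)=0$, the Hamiltonian contribution changes sign. This makes $x^*$ the location where $v(t,\cdot)$ is pushed from $v\le 0$ to $v>0$ as the predator's dynamics enter the flock. Concretely, I would show that $x^*$ is a nondegenerate local maximum of $v(t^*,\cdot)$ with $v(t^*,x^*)=0$ and $\partial_t v(t^*,x^*)>0$. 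This follows if penetration of the predator into $\mathrm{int}\,\mc{L}$ is encoded as the requirement that $\hamfunc_f$ changes sign along $t$ at $(t^*,x^*)$ with $D^2 v(t^*,x^*)$ negative definite. Since the statement leaves this implicit, I will state it explicitly as the meaning of ``penetrates''.

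Second, I will apply the local normal form. With $v(t,x)\approx (t-t^*)a - \tfrac12 (x-x^*)^\top Q (x-x^*)$, where $a>0$ and $Q\succ 0$, the set $\{v>0\}$ is empty near $x^*$ for $t<t^*$. For $t>t^*$ it is a small ellipse. By the implicit function theorem and the nondegeneracy, no other topology change occurs in a neighbourhood of $(t^*,x^*)$. Globally, I need $\mc{L}(t)$ to be unchanged up to isotopy away from $x^*$ for $t$ near $t^*$. I would obtain this from $Dv\ne 0$ on $\{v=0\}\setminus\{x^*\}$ (item~\ref{item:grad_shocks} of Assumption~\ref{ass:grad_est} on $\tubeband$) together with the regular-value/isotopy lemma.

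Third, I will do the topology bookkeeping. Removing an open disc from the interior of a compact planar region keeps $\beta_0$ and raises $\beta_1$ by one. Hence $\chi = \beta_0-\beta_1$ drops by exactly $1$, which gives \eqref{eq:vacuole_threshold}. If $\mc{L}(t^{*-})$ is simply connected, then $\beta_1$ goes from $0$ to $1$. For the discretised $V-E+F$ version, the grid must be fine enough that the discrete complex is homotopy equivalent to $\mc{L}$ (e.g.\ a marching-squares consistency condition); I will cite this rather than prove it.

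The main obstacle is the first step. The hypothesis $\hamfunc_f(x^*,p^*)=0$ alone does not force a nondegenerate interior maximum of $v$ crossing zero. It could equally be a saddle, which would split a component and increase $\beta_0$ instead, or a degenerate point. I expect to have to add a genericity or transversality hypothesis: $x^*$ is a strict local max of $v(t^*,\cdot)$ and $\partial_t v>0$ there. I would phrase the theorem's conclusion as holding under this transversality, noting that the saddle case yields the fragmentation event instead.
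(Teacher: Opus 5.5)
\textbf{Gap in Step 1: the hypothesis you add contradicts the PDE.} Your Step 1 cannot be realised by any solution of \eqref{eq:HJ-RCBRT-Visc}. The argument therefore proves the jump only under a hypothesis that is never satisfied.

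At the crossing point the first-order term vanishes. The theorem's own hypothesis gives $\hamfunc_f(x^*,p^*)=0$, so $\min\{0,\hamfunc_f(x^*,p^*)\}=0$. This would hold anyway at a genuine critical point, since $\max_{\bm u}\min_{\bm w}\langle 0,f\rangle=0$. The equation you wrote therefore reduces at $(t^*,x^*)$ to $\partial_t v=\tfrac{\delta}{2}\Delta v$, and at any local maximum $\Delta v\le 0$.

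Concretely, substitute your normal form $v\approx a(t-t^*)-\tfrac12(x-x^*)^\top Q(x-x^*)$ into the PDE. You get $a=-\tfrac{\delta}{2}\operatorname{tr}Q<0$, which contradicts $a>0$. The sign change of $\hamfunc_f$ in $t$ does not help, because it perturbs $\partial_t v$ only at order $|t-t^*|$. If $Q$ is meant as the in-slice Hessian only, you would need transverse second derivatives positive enough to make the full Laplacian positive. That assumption is unrelated to predator penetration, and you do not state it.

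This is just the parabolic maximum principle: interior maxima sink. At a maximum the local picture is the time-reverse of yours: a small disc of $\{v>0\}$ disappears and $\chi$ \emph{rises} by one. That is exactly the hole-closing event the paper's own run reports at $\tau=1.29$. Your heuristic that the Hamiltonian ``pushes'' $v$ upward at $x^*$ is backwards, since $x^*$ is precisely where it contributes nothing.

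\textbf{The route that works is the paper's saddle argument.} The mechanism compatible with the equation is a saddle of $v(t^*,\cdot)$ on $\partial\{v\le 0\}$ whose value falls through zero. The sublevel set then acquires a 1-handle and $\chi$ drops by one. At a saddle $\Delta v$ is indefinite, so $\partial_t v=\tfrac{\delta}{2}\Delta v<0$ is admissible. Your aside that a saddle ``would split a component'' describes only the $\partial_t v>0$ direction; in the opposite direction the saddle is precisely the $\chi$-decreasing event.

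To repair the proposal, make the following changes.
\begin{itemize}
\item Replace the maximum by a nondegenerate saddle with $\partial_t v(t^*,x^*)<0$, equivalently $\Delta v(t^*,x^*)<0$. Your Steps 2 and 3 then go through with a 1-handle attachment in place of a removed disc.
\item Require that both ends of the handle land on the same component of $\mc L(t^{*-})$. Otherwise the handle merges two components: $\beta_0$ falls and $\chi$ still drops by one, but no vacuole forms. The paper's proof also passes over this case when it asserts that $\beta_1$ increases.
\item Do not rely on item~\ref{item:grad_shocks} of Assumption~\ref{ass:grad_est} for your regular-value condition. It bounds reconstructed gradients at finitely many evaluation states, not $Dv$ on the zero level set. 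Read literally on $\tubeband$, it would also exclude the critical point at which the transition happens. State $Dv\ne 0$ on $\{v=0\}\setminus\{x^*\}$ as a separate hypothesis instead.
\end{itemize}
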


\begin{proof}
	\label{proof:vacuole_nucleation}
	By Morse theory for sublevel sets of smooth functions, the topology of $\Omega^{-}(t)$
	changes only at critical values of $v(\cdot, t)$.  Specifically, when the parameter
	$t$ crosses a value $t^*$ at which a saddle point of $v(\cdot, t^*)$ lies on the
	boundary of $\Omega^{-}(t^*)$, the sublevel set attaches a 1-handle.
	
	The attachment of a 1-handle increases the first Betti number by one: $\beta_1(t^{*+}) = \beta_1(t^{*-}) + 1$.
	Since the Euler characteristic is defined as $\chi = \beta_0 - \beta_1$ (Definition~\ref{def:euler_char}),
	this increase in $\beta_1$ directly reduces $\chi$ by one:
	\begin{align*}
		\chi(t^{*+}) &= \beta_0(t^{*+}) - \beta_1(t^{*+}) \\
		&= \beta_0(t^{*-}) - (\beta_1(t^{*-}) + 1) \\
		&= \chi(t^{*-}) - 1.
	\end{align*}
	The condition $H_{\mathrm{att}}(x^*, p^*) = 0$ identifies the critical state-costate pair
	where the attacked agent's optimal trajectory encounters the reachable set boundary.
	At this instant, the saddle-point singularity of $v$ causes the topological change described above.
	The geometric interpretation is that the predator has penetrated sufficiently deep into the flock
	to create an enclosed region, manifested as a topological hole in the backward reachable set.
\end{proof}

\subsubsection{Flock Splitting Dynamics}
\label{sec:d39_splitting}

When a predator applies sufficient control to reduce the attacked agent's
optimal escape velocity, the BRT bifurcates into disjoint components.
This bifurcation corresponds to the moment when the attacking predator can
successfully isolate one or more agents from the main group.

\begin{proposition}[Bifurcation Condition for Flock Splitting]
	\label{prop:bifurcation}
	Let $\Omega^{-}(t)$ be the zero-sublevel set of the BRT at time~$t$.
	The reachable set undergoes a bifurcation into $n_{\mathrm{comp}}(t) > 1$
	connected components if and only if the attacked-agent Hamiltonian satisfies
	the control-constrained minimax condition,
	\begin{align}
		\label{eq:bifurcation_condition}
		H_{\mathrm{att}}(x^*, p^*) \;=\; \min_{u_p \in \mathcal{U}_p} \max_{u_e \in \mathcal{U}_e}
		\left[
		p_1(u_p - v_e) + p_2 u_e + p_3(u_p^{z} - u_e^z) + p_4(\omega_p - \omega_e)
		\right] \;<\; -\delta_c,
	\end{align}
	where $\delta_c > 0$ is a separation margin, $\mathcal{U}_p, \mathcal{U}_e$ are
	admissible control sets, and $(u_p, \omega_p)$, $(u_e, \omega_e)$ are the
	pursuer and evader acceleration and angular velocity controls respectively.
\end{proposition}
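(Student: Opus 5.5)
The plan is a Morse-theoretic, parameter-dependent argument in the spirit of the proof of Theorem~\ref{thm:vacuole_nucleation}, but tracking $\beta_0$ rather than $\beta_1$. Throughout, $\Omega^{-}(t)=\{x : \valuefunc(t;x)\le 0\}$ is treated as the zero sublevel set of the smooth viscous value $\valuefunc^\delta(t;\cdot)$, and $n_{\mathrm{comp}}(t)=\beta_0(t)$. The first step is to localize topology changes. By the implicit function theorem, as long as $D\valuefunc^\delta\neq 0$ on $\{\valuefunc^\delta(t;\cdot)=0\}$, the zero level set moves by a $C^1$ isotopy and $\beta_0$ is constant in $t$. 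A change in $n_{\mathrm{comp}}$ can therefore only occur at a time $t^*$ when a critical point $x^*$ of $\valuefunc^\delta(t^*;\cdot)$ crosses the zero level. A split specifically requires a saddle of index $1$ (in the relevant plane) that \emph{leaves} the sublevel set, so that $\valuefunc^\delta(t;x^*(t))$ passes from $\le 0$ to $>0$ and a $1$-handle is detached, dropping the connectivity.

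The second step links this crossing to the Hamiltonian. Along the critical curve $x^*(t)$ we have $D\valuefunc^\delta=0$, so
\[
\frac{d}{dt}\valuefunc^\delta(t;x^*(t))=\valuefunc^\delta_t=-\hamfunc+\tfrac{\delta}{2}\Delta\valuefunc^\delta ,
\]
using \eqref{eq:ivp-viscous}. With the freezing operator of \eqref{eq:HJ-RCBRT-Visc}, $\hamfunc$ is replaced by $\min\{0,\hamfunc\}$. The saddle value therefore rises through zero exactly when $-\min\{0,H_{\mathrm{att}}\}$ dominates the diffusion term. The plan is to show that $H_{\mathrm{att}}(x^*,p^*)<-\delta_c$ forces strict growth. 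For this I would take $\delta_c\ge \tfrac{\delta}{2}\sup|\Delta\valuefunc^\delta|$ on the band $\tubeband$, and interpret $p^*$ as the one-sided limiting costate at the kink, since the exact gradient vanishes there. That gives sufficiency: the saddle exits transversally and $n_{\mathrm{comp}}$ increases.

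For necessity, the argument runs by contrapositive. If $H_{\mathrm{att}}\ge-\delta_c$, the same derivative is $\le 0$ up to the diffusion term, so saddles cannot exit the sublevel set and no component is detached. This is the step I expect to be the main obstacle, and I doubt the statement holds as an ``if and only if'' without further hypotheses. Components can also split through degenerate critical points, or through the flat interior where Assumption~\ref{ass:grad_est} item~\ref{item:grad_shocks} fails. Moreover, the sign of the diffusion term is not controlled by $\delta_c$ alone. I would first try to restrict to a generic (Morse) family $t\mapsto\valuefunc^\delta(t;\cdot)$ with a single transversal saddle crossing. I would then state necessity under that genericity assumption, choosing $\delta_c$ from the $C^2$ bound on $\valuefunc^\delta$ on $\tubeband$.

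The last step verifies that the displayed expression is the minimax Hamiltonian of the attacked agent from \eqref{eq:DubinsRelative}, using the same bang-bang computation as in the proof of Theorem~\ref{th:ham_sum}. This identifies the critical costate with the evader's optimal escape direction, and so matches the bifurcation to the predator's ability to isolate agents.
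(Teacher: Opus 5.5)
The paper gives no proof of Proposition~\ref{prop:bifurcation}; it is followed directly by Remark~\ref{rem:bifurcation_time}. Your doubt about the ``if and only if'' is justified, since the statement never defines $x^*$ or $p^*$ and does not quantify $\delta_c$.

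\textbf{The central step fails.} The restricted version you plan to prove breaks at the link between $H_{\mathrm{att}}$ and the saddle crossing. Along your critical curve, $D\valuefunc^\delta(t;x^*(t))=0$. With $p^*=D\valuefunc^\delta(t^*;x^*)$, the only definition the paper offers (Theorem~\ref{thm:vacuole_nucleation}), this gives $p^*=0$. The bracket in \eqref{eq:bifurcation_condition} is linear in $p$, so $H_{\mathrm{att}}(x^*,0)=0$. The same holds for every reachability Hamiltonian $\max\min\langle p,f\rangle$ and for $\min\{0,\cdot\}$ applied to it. Your derivative formula therefore collapses to $\frac{d}{dt}\valuefunc^\delta(t;x^*(t))=\frac{\delta}{2}\Delta\valuefunc^\delta(t;x^*(t))$. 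The Hamiltonian drops out, and whether the saddle value crosses zero is decided by the diffusion term alone, which no threshold on $H_{\mathrm{att}}$ controls.

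Reading $p^*$ as a ``one-sided limiting costate at the kink'' does not help. Your first step (implicit function theorem, Morse theory) needs $\valuefunc^\delta$ smooth, so there is no kink and every limit of $D\valuefunc^\delta$ at $x^*$ is $0$. Evaluating $p$ at a nearby regular point instead gives a number that does not enter $\frac{d}{dt}\valuefunc^\delta(t;x^*(t))$. Consequently sufficiency is vacuous, because its hypothesis cannot hold at the critical point. Necessity, even under your Morse genericity assumption, asserts that no split ever occurs. That is false: already for $\hamfunc\equiv 0$, heat flow pinches off the neck of a thin dumbbell-shaped sublevel set, since $\Delta\valuefunc^\delta>0$ at the neck saddle.

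\textbf{Two further steps also fail.} First, a saddle exit need not split a component. Even when an index-$1$ saddle leaves the sublevel set, detaching the $1$-handle may cut a loop instead of raising $n_{\mathrm{comp}}$. This is exactly the paper's annular cordon (Proposition~\ref{prop:cordon}, $\beta_1=1$): removing the neck turns the annulus into an arc and leaves $n_{\mathrm{comp}}$ unchanged. Conversely, $n_{\mathrm{comp}}$ can increase by nucleation, a local minimum dropping through zero, with no saddle involved. A correct criterion must therefore be global, e.g.\ that the two descending branches of the saddle lie in different components once the neck is removed. It cannot be a pointwise inequality at one state.

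Second, your last step cannot be carried out. The bracket in \eqref{eq:bifurcation_condition} is not $\langle p,f\rangle$ for \eqref{eq:DubinsRelative}. That system's first two components are $-v_e+v_p\cos x_4+\langle w_e\rangle_r x_2$ and $v_p\sin x_4-\langle w_e\rangle_r x_1$. The bracket instead has $p_1(u_p-v_e)+p_2u_e$, with no heading or rotation terms. It is also a min-max (upper) Hamiltonian. The bang-bang computation of Theorem~\ref{th:ham_sum} produces the max-min expression \eqref{eq:ham_murmur}, which contains $|p_2x_1-p_1x_2+p_4|$.

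What survives of your plan is a statement about generic index-$1$ saddles whose exit is governed by the sign of $\Delta\valuefunc^\delta$ there, together with a global separation condition. It contains no Hamiltonian threshold.
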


\begin{remark}[Temporal Bifurcation Pattern]
	\label{rem:bifurcation_time}
	The bifurcation occurs at a critical time $t_{\mathrm{split}}$ determined by
	when the predator's control authority first permits the pursuer to violate
	the costate constraint in the evader's optimal strategy. The sequence of
	bifurcation times is monotone increasing: $t_{\mathrm{split}}^{(1)} < t_{\mathrm{split}}^{(2)} < \cdots$,
	with the $m$-th bifurcation occurring when isolation of the $m$-th agent
	becomes achievable under the game-theoretic equilibrium. The flock's resilience
	to splitting depends on the coupling strength in the risk-reduction term
	(the $p_4$ coefficient in the Hamiltonian), which represents the predation
	risk shared across connected components.
\end{remark}

\subsubsection{Cordon Formation and Capture Prevention}
\label{sec:d310_cordon}

A flock executing a defensive \emph{cordon formation} creates an annular
perimeter around threatened agents, preventing isolated predator incursions into
the interior. This configuration is geometrically characterized by a multiply
connected reachable set with one independent loop ($\beta_1 = 1$).

\begin{proposition}[Annular BRT and Barrier Certificate]
	\label{prop:cordon}
	Let $\mathcal{C}(r_{\mathrm{in}}, r_{\mathrm{out}}) \triangleq \{x \in \Omega : r_{\mathrm{in}}
	\leq \|x_{1:2}\| \leq r_{\mathrm{out}}\}$ denote the annular barrier region (``cordon'')
	with inner radius $r_{\mathrm{in}}$ and outer radius $r_{\mathrm{out}}$.
	If the BRT restricted to $\mathcal{C}$ satisfies,
	\begin{align}
		\label{eq:cordon_euler}
		v(t, x) \;>\; 0 \quad \text{for all } x \in \{y \in \mathcal{C} : \|y_{1:2}\| < r_{\mathrm{in}}\},
	\end{align}
	then the inner region is a \emph{barrier-certificate-protected domain}, and any trajectory of
	the attacked agent passing through $\mathcal{C}(r_{\mathrm{in}}, r_{\mathrm{out}})$ remains captured
	(i.e., reaches the terminal set) within finite time. The topology of $\Omega^{-}(t) \cap \mathcal{C}$
	is annular, with Betti number $\beta_1 = 1$ and Euler characteristic,
	\begin{align}
		\label{eq:cordon_euler_char}
		\chi\bigl(\Omega^{-}(t) \cap \mathcal{C}\bigr) \;=\; 0.
	\end{align}
\end{proposition}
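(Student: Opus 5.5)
The plan is to treat the three conclusions separately: the barrier property, the capture property, and the topological count. First I have to fix the reading of the hypothesis. As written, \eqref{eq:cordon_euler} quantifies over $\{y\in\mathcal C:\|y_{1:2}\|<r_{\mathrm{in}}\}$, which is empty because $\mathcal C$ already imposes $\|y_{1:2}\|\ge r_{\mathrm{in}}$. So the proof will use the intended reading:
\begin{itemize}
\item $v(t,x)>0$ on the inner core $\mathcal D\triangleq\{x\in\Omega:\|x_{1:2}\|<r_{\mathrm{in}}\}$;
\item the cordon itself lies in the zero sublevel set, that is, $v(t,x)\le 0$ on $\mathcal C(r_{\mathrm{in}},r_{\mathrm{out}})$, so that $\Omega^-(t)\cap\mathcal C=\mathcal C$.
\end{itemize}
I will state this second condition as an explicit added hypothesis. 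Without it the topology of $\Omega^-(t)\cap\mathcal C$ is simply not determined by the data.

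\textbf{Step 1 (barrier and capture).} These follow from the \eqref{eq:reachables} characterization, i.e.\ Lemma 2 of \citep{Mitchell2005}.
\begin{itemize}
\item Every $x\in\mathcal C$ has $v(t,x)\le 0$, so $x\in\mathcal L([-T,0],\mathcal L_0)$. By the definition \eqref{eq:rcbrt}, there is a nonanticipative strategy $\beta$ such that for every evader control $\bm u$ the trajectory enters $\mathcal L_0$ at some $\bar\tau\in[-T,0]$. This is the ``remains captured within finite time'' claim, with horizon at most $T$.
\item Every $x\in\mathcal D$ has $v>0$. The contrapositive of the first implication in \eqref{eq:reachables} then gives $x\notin\mathcal L$. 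This is exactly the barrier-certificate statement that the core is separated from the capture set by the level set $\{v=0\}$. Since $v$ is continuous, that level set must meet every path from $\mathcal D$ to the outside of $\mathcal C$, and it lies on $\partial\mathcal D\cap\partial\mathcal C$.
\end{itemize}

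\textbf{Step 2 (topology).} I would compute the homology of $\Omega^-(t)\cap\mathcal C=\mathcal C$ by homotopy invariance. In the planar coordinates $x_{1:2}$, the closed annulus $\{r_{\mathrm{in}}\le\|x_{1:2}\|\le r_{\mathrm{out}}\}$ deformation retracts onto the circle $\|x_{1:2}\|=r_{\mathrm{in}}$ by radial projection. The remaining coordinates (heading, climb) range over a product factor. If that factor is convex or an interval in $\Omega$, it contracts, so $\mathcal C\simeq S^1$. This gives $\beta_0=1$, $\beta_1=1$, $\beta_k=0$ for $k\ge2$, and hence $\chi=\beta_0-\beta_1=0$ by Definition~\ref{def:euler_char}. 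For the discretized version $V-E+F$ on the $m\times m$ grid, I would argue that for $m$ large enough that the mesh resolves $r_{\mathrm{out}}-r_{\mathrm{in}}$, the cubical complex is homotopy equivalent to the continuum annulus. The combinatorial count then agrees with the Betti-number count.

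\textbf{Main obstacle.} I expect Step 2 to be the hardest part, specifically making the product structure in the non-planar coordinates rigorous. A periodic heading coordinate $\theta\in[-\pi,\pi)$ is a circle, not an interval. In that case $\mathcal C\simeq S^1\times S^1$, which gives $\beta_1=2$ while still $\chi=0$. I would first try to restrict to a fixed-heading slice, as in the 2D slices used throughout \S\ref{sec:results}, where the claim $\beta_1=1$ holds cleanly. I would then note that the vanishing of $\chi$ survives in the full space in either case.
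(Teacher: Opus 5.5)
The paper gives no proof of this proposition to compare against: it is followed only by Remark~\ref{rem:barrier_cert} and the numerics of \S\ref{sec:d313_setup}. Your diagnosis is correct. The hypothesis ranges over the empty set, so as written the proposition is false. For instance, $v>0$ on all of $\mathcal C$ is admissible, but then $\Omega^-(t)\cap\mathcal C=\emptyset$, $\beta_1=0$, and nothing in $\mathcal C$ is captured.

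Under your added hypothesis your route is the natural one and it works:
\begin{itemize}
\item Capture at the instant $t$ follows from \eqref{eq:reachables} and \eqref{eq:rcbrt}.
\item The homology follows from the radial retraction. This needs $\Omega$ to be a product whose planar factor contains the annulus (or is star-shaped about its centre with the inner circle inside) and whose remaining factor is contractible.
\item Falling back to a fixed-heading slice is the right call.
\end{itemize}
One slip: on the torus, $\chi=0$ holds only for the full alternating sum, which includes $\beta_2=1$. The formula $\chi=\beta_0-\beta_1$ of Definition~\ref{def:euler_char}, which you invoke, gives $-1$ there.

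Two places need tightening.

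\emph{First, the core hypothesis is unused and over-restrictive.} Step 2 never uses it: the topology follows from $v\le 0$ on $\mathcal C$ alone. Conversely, your two conditions together force $v\equiv 0$ on the whole inner circle $\|x_{1:2}\|=r_{\mathrm{in}}$. So they admit only cordons whose hole is exactly the round disk $\mathcal D$, which generically excludes configurations like the ring of overlapping capture disks in \S\ref{sec:d313_setup}. Requiring $v\le 0$ on $\mathcal C$ and $v>0$ at some point of $\mathcal D$ is weaker. It leaves the topological and capture claims unchanged and still separates the safe points of the core from the exterior. Your reading also presupposes a flock-centred frame. In the pursuer-relative coordinates of \eqref{eq:DubinsRelative} the target set contains the origin, so $v>0$ on $\mathcal D$ cannot hold there.

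\emph{Second, your barrier conclusion is only static.} It amounts to the hypothesis plus the intermediate value theorem at a single instant. Remark~\ref{rem:barrier_cert} attaches a dynamic meaning to it: an agent inside the cordon cannot get out without becoming capturable. That does not follow from data at one $t$. The tube \eqref{eq:rcbrt} is nested in the horizon, so $v(t,\cdot)>0$ on $\mathcal D$ persists at every shorter horizon $s\le t$. But $v(t,\cdot)\le 0$ on $\mathcal C$ does not persist, and an agent leaving the core reaches $\mathcal C$ only at some time-to-go $s<t$. You therefore need the cordon condition for every $s\in(0,t]$. Letting $s\downarrow 0$, this forces $\mathcal C\subseteq\mathcal L_0$. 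With that in place, the intermediate value theorem applied to $\|x_{1:2}(\cdot)\|$ along the trajectory finishes the argument. The same caveat applies to the clause about ``any trajectory passing through $\mathcal C$''.
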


\begin{remark}[Barrier Certificate Interpretation]
	\label{rem:barrier_cert}
	The annular topology (Eq.~\eqref{eq:cordon_euler_char}) physically represents
	a mutual-defense configuration where the safe set (valued region, $v \leq 0$)
	wraps around the predator or protected interior, creating a topological
	obstruction to escape. From the reachability perspective, any agent trapped
	inside the cordon cannot escape; from outside, no agent can cross the barrier
	without incurring catastrophic cost. This geometric property is independent
	of the detailed dynamics and depends only on the level-set structure of $v$.
\end{remark}

The numerical validation of the cordon configuration is visualised through
the reachability level sets of our simulations, which show the barrier topology explicitly.

\subsubsection{Flash Expansion Events}
\label{sec:d311_flash}

The \emph{flash expansion} is characterised by a rapid,
isotropic outward motion of the flock boundary in response to a predator
approach.  In the BRT framework, this corresponds to a linear growth of the
zero-level-set radius.

\begin{lemma}[Bounded Flash Expansion Rate]
	\label{lem:flash_rate}
	Under the 4D aerial dynamics~\eqref{eq:DubinsRelative} with evader linear speed $v_e$, the maximum radial
	extent of the BRT zero-level-set in the horizontal plane grows at most linearly,
	\begin{align}
		\label{eq:flash_rate}
		\max_i \lVert x_i(t+1)_{1:2} \rVert
		- \max_i \lVert x_i(t)_{1:2} \rVert
		\;\le\; v_e\, \Delta t,
	\end{align}
	per backward time step of size $\Delta t$.  The altitude component $x_3$ does not contribute to the radial
	growth rate because the capture set is a cylinder --- the terminal cost excludes the heading and altitude coordinates from the capture metric.
\end{lemma}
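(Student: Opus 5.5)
The plan is to read the left side of \eqref{eq:flash_rate} as the one-step growth of the \emph{support radius} of the BRT slice,
\[
R(t)\triangleq\max\{\lVert x_{1:2}\rVert : v(t,x)\le 0\},
\]
where the $x_i$ are the extreme points of the zero sublevel set. I would bound $R(t+\Delta t)-R(t)$ by a characteristics (reachability) argument applied to the relative dynamics \eqref{eq:DubinsRelative}.

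\textbf{Step 1: reduce to a closing speed.} By the tube characterization \eqref{eq:reachables}, a state $x$ lies in $\mc L(t+\Delta t)$ only if, under the game-optimal inputs, its trajectory enters $\mc L(t)$ within one backward step $\Delta t$. Hence
\[
R(t+\Delta t)-R(t)\le \Delta t\cdot\sup\Big(-\tfrac{d}{d\tau}\lVert x_{1:2}(\tau)\rVert\Big),
\]
where the supremum runs over trajectories of \eqref{eq:DubinsRelative} and over the admissible sets of all inputs.

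\textbf{Step 2: the rotation terms cancel.} For $x_{1:2}\neq 0$ the radial rate is $\tfrac{d}{d\tau}\lVert x_{1:2}\rVert = n\cdot\dot x_{1:2}$ with $n=x_{1:2}/\lVert x_{1:2}\rVert$. The rotation contribution is $\langle w_e\rangle_r(x_2 n_1 - x_1 n_2)=0$. So neither the neighbour-averaged heading $\langle w_e\rangle_r$ nor the altitude input $u_z$ affects the radial rate. This is also why $x_3$ does not enter \eqref{eq:flash_rate}: the capture set is a cylinder in $(x_1,x_2)$, so \eqref{eq:flash_rate} can be proved slice by slice in $x_3$ and $x_4$. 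What remains is
\[
n\cdot\dot x_{1:2} = -v_e n_1 + v_p\,(n_1\cos x_4 + n_2\sin x_4).
\]

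\textbf{Step 3: control the pursuer term (main obstacle).} The evader contribution is bounded by $|v_e n_1|\le v_e$. A naive bound on the pursuer term only gives $v_e+v_p$, so the crux is to show that at the extremal point the term $v_p(n_1\cos x_4+n_2\sin x_4)$ cannot make the radius shrink faster than $v_e$.

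My first attempt would use the structure of the outermost point. There the outward normal of the tube's horizontal section is $\pm n$, and the horizontal part of the costate $p=Dv$ is parallel to $x_{1:2}$. I would then substitute $p_{1:2}=\lambda n$ with $\lambda\ge0$ into the attacked-agent Hamiltonian $\hamfunc_a$ of Theorem~\ref{th:ham_sum}. The aim is to show that the heading slice $x_4$ attaining the maximal radius is one in which the pursuer's velocity has a non-positive inward component. The argument would be that the heading is carried along by the pursuer's own turn input, so a heading with a positive inward component is not extremal over $x_4$: rotating $x_4$ toward the direction of $n$ only enlarges the capturable radius further. With that sign established, the closing speed is bounded by the evader term alone, giving $\sup(-n\cdot\dot x_{1:2})\le v_e$.

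\textbf{Step 4: assemble.} Multiplying by $\Delta t$ gives \eqref{eq:flash_rate}. Because the bound holds at every extreme point, it also holds for the maximum over $i$.

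If the sign argument in Step 3 fails in general, the honest fallback is to state the rate as $(v_e+v_p)\Delta t$. In the equal-speed specialization of \eqref{eq:ham_murmur} this is $2v_e\Delta t$, and one would then need the extra hypothesis that the pursuer's inward speed is dominated at the extremal heading.
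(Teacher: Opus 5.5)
The gap is Step~3. Under your reading of \eqref{eq:flash_rate} it cannot be closed, because the sign you need is reversed. The heading slice that maximizes the horizontal extent is the head-on one, where the pursuer's velocity points \emph{maximally} inward. Rotating $x_4$ toward $n$ shrinks the capturable radius rather than enlarging it.

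A concrete counterexample: take $x_{1:2}=(\rho,0)$ with $x_4=\pi$. Your own formula gives $n\cdot\dot x_{1:2}=-(v_e+v_p)$.
\begin{itemize}
\item By your Step~2, the evader's turn input acts only tangentially.
\item The nonanticipative pursuer strategy $w_p=\langle w_e\rangle_r$ holds $x_4\equiv\pi$. It is admissible whenever the pursuer's turn-rate set contains the evader's, e.g.\ identical vehicles.
\item The line-of-sight angle then obeys $\dot\phi=-\langle w_e\rangle_r+(v_e+v_p)\sin\phi/\rho$, so $\phi=O(s)$. The radial rate is exactly $-(v_e+v_p)\cos\phi$.
\end{itemize}
Hence the state $(\rho,0)$ with $x_4=\pi$ is captured within $\Delta t$ whenever $\rho\le r+(v_e+v_p)\Delta t-O(\Delta t^3)$. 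The support radius of the tube of the relative game therefore grows by $(v_e+v_p)\Delta t$ to first order, which is $2v_e\Delta t$ at equal speeds and violates the $v_e\Delta t$ bound.

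What your Steps~1--2 actually prove, $R(t+\Delta t)-R(t)\le(v_e+v_p)\Delta t$, is the sharp first-order rate for that object. So your fallback is the correct conclusion, not a weakening. The extra hypothesis you propose, that the pursuer's inward speed is dominated at the extremal heading, cannot hold. Your Step~1 is the right direction for an upper bound: every new point must be able to reach $\mc{L}(t)$ within $\Delta t$.

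The paper obtains the constant $v_e$ only by changing the object. Its proof carries $v\le 0$ along admissible trajectories via the comparison principle. It then invokes the \emph{free-agent} dynamics in absolute coordinates, $\dot x_{1:2}=v_e(\cos x_4,\sin x_4)$, whose horizontal speed is exactly $v_e$, and reads the $x_i$ as the birds' states. The bound is then just the triangle inequality $\lVert x_i(t+\Delta t)_{1:2}\rVert-\lVert x_i(t)_{1:2}\rVert\le\lVert x_i(t+\Delta t)_{1:2}-x_i(t)_{1:2}\rVert\le v_e\Delta t$, which is preserved under $\max_i$; the pursuer never enters.

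So the missing idea relative to the paper is that a pursuer-free speed bound is what produces $v_e$. No argument internal to \eqref{eq:DubinsRelative} can produce it, because that model's relative horizontal speed genuinely reaches $v_e+v_p$. Note also that the paper's comparison step, as written, transports sublevel-set membership forward, which by itself bounds growth from below; the upper bound really rests on the speed bound applied in the direction of your Step~1.

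You therefore have two options:
\begin{itemize}
\item state the lemma for the relative game with rate $(v_e+v_p)\Delta t$, which your Steps~1--2 already establish; or
\item switch explicitly to the free-agent model and read the lemma's reference to \eqref{eq:DubinsRelative} accordingly.
\end{itemize}
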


\begin{proof}
	\label{proof:flash_rate}
	By the comparison principle for viscosity solutions, $v(t,x) \leq 0$ implies
	$v(t+\tau, x + \tau f(x,u)) \leq 0$ for any admissible trajectory.  The
	maximum horizontal speed of the free-agent dynamics is $v_e$ (linear speed
	bound), giving a radius growth of at most $v_e \Delta t$ per step.
\end{proof}

The 4D flash-expansion rate converges to the 3D rate as altitude variance
$\sigma^2_{x_3} \to 0$, consistent with the altitude-decoupling property of
the cylinder terminal cost.

\subsubsection{Phase Transition Markers in Numerical Solutions}
\label{sec:d312_markers}

We detect phase transitions computationally by monitoring the topological
invariants $\chi(t)$, $\beta_1(t)$, and $n_{\mathrm{comp}}(t)$ of the
zero-sublevel set across backward time steps.  The detection algorithm is:

\begin{enumerate}
	\item At each step~$t$, compute the 2D slice $v(t, x_1, x_2;\bar{x}_3, 0)$
	on a grid of resolution $m \times m$.
	\item Label connected components of $\{v \leq 0\}$ via
	\textsc{scipy.ndimage.label}.
	\item Estimate $\beta_1$ by counting encircled complementary components.
	\item Compute $\chi(t) = n_{\mathrm{comp}}(t) - \beta_1(t)$.
\end{enumerate}

The threshold-crossing rules that trigger an event label are:
\begin{align}
	\label{eq:vacuole_marker}
	\Delta\chi_t \;\triangleq\; \chi_{t+1} - \chi_t \;<\; -0.5
	&\quad\Longrightarrow\quad \text{vacuole nucleation,} \\
	\label{eq:cordon_marker}
	\beta_1(t) \;\geq\; 1 \;\text{ and }\; \beta_1(t-1) = 0
	&\quad\Longrightarrow\quad \text{cordon formation,} \\
	\label{eq:frag_marker}
	n_{\mathrm{comp}}(t) \;>\; n_{\mathrm{comp}}(t-1)
	&\quad\Longrightarrow\quad \text{flock fragmentation.}
\end{align}

The phase-transition classification partitions the state space into five canonical murmuration phases: (i)
cohesion, (ii) evasion, (iii) cordon, (iv) expansion, and (v) fragmentation.
Each detected event is logged with its
time index, event type, and $\chi$, $\beta_1$, $n_{\mathrm{comp}}$ values, so that the topology metrics are traced across the full backward horizon.

\subsubsection{Simulation Setup and Observed Transitions}
\label{sec:d313_setup}

This case tests scale along two axes at once. Each flock carries a four-dimensional relative state, so the value solve sits above the dimension where a dense grid stays affordable; and the population is large, so a certificate whose cost grows with the number of birds would be of no use in the field. The scheme separates the two concerns. A value function is solved once per flock from the relative dynamics \eqref{eq:DubinsRelative} with the flock Hamiltonian \eqref{eq:ham_murmur}, and no grid is stored. The population enters afterwards, when the sign of that value is read at each bird's own state.

\paragraph{Setup.} We certify $99{,}996$ starlings, organised into six flocks of $16{,}666$ birds. Seven pursuers are placed on a ring of radius $1.15$ about the flock centres, and their capture radius $r_c = 0.6$ enters the datum as the union of capture cylinders in \eqref{eq:sdf_union}. We take viscosity $\delta = 0.18$ and draw $N = 1600$ Gaussian samples at each evaluation state. The frozen-coefficient iteration is capped at seven passes with a relative tolerance of $10^{-5}$, which suffices here because the min-max Hamiltonian of \eqref{eq:ham_murmur} evaluates in closed form under box-constrained climb rate and turn rate. Values are read on a $128 \times 128$ window over $[-3.5, 3.5]^2$ at fixed altitude and heading, at $18$ points along the backward horizon $\tau \in [0, 2]$. Sampling speckle is smoothed with a Gaussian of $1.5$ grid cells before the topology is measured, so that the invariants of \eqref{eq:euler_char} are not read off single-cell noise. The whole study, comprising the eighteen value solves and the certification of every bird, runs in $429$ seconds on eight CPU cores with no GPU.

\paragraph{Certifying the population.} A bird is certified by evaluating the cached value at its own four-dimensional state, so the cost of certifying a flock is linear in the number of birds, with no joint state space anywhere in the loop, and the queries are independent of one another. At the end of the horizon, $87.7\%$ of the $99{,}996$ birds carry a positive value and are therefore certified outside the capture set, while the remainder lie inside it or within the band where the sign is not resolved. The solve itself never represents the population, which is what keeps the memory at $O(N n)$ while the bird count stays free to grow. This is the operational reading of the scalability claim: the binding limit is evaluation throughput, not the reachability computation.

\paragraph{What the run shows.} \autoref{fig:murmur_numerics} traces the zero-sublevel set backwards in time. A hole is present in the set from the start of the horizon through $\tau = 1.18$, so that $\beta_1 = 1$ and $\chi = 0$; by the rule in \eqref{eq:cordon_marker} the set is annular and these steps are labelled a cordon. The hole closes at $\tau = 1.29$, where $\beta_1$ falls to zero and $\chi$ steps up to one. Two invariants therefore witness one event, which is a useful check that the transition is topological rather than an artefact of the sampling. The component count holds at $n_{\mathrm{comp}} = 1$ over the whole horizon, so this population does not fragment and the rule in \eqref{eq:frag_marker} never fires. The closure time is localised only to within the sampling interval of the backward horizon: re-running the same solve on a finer grid of $24$ steps places the same event one interval later, and that run also shows a two-component reading near $\tau \approx 0.8$ that reverts within one step, which we read as sampling noise rather than a split. Reading the three integers on their own reports both that the safety margin is changing and the manner of the change, which is what makes them usable as an operator-facing summary rather than a diagnostic to be inspected offline.

\begin{figure}[tb!]
	\centering
	\begin{tabular}{c}
		\includegraphics[width=\linewidth]{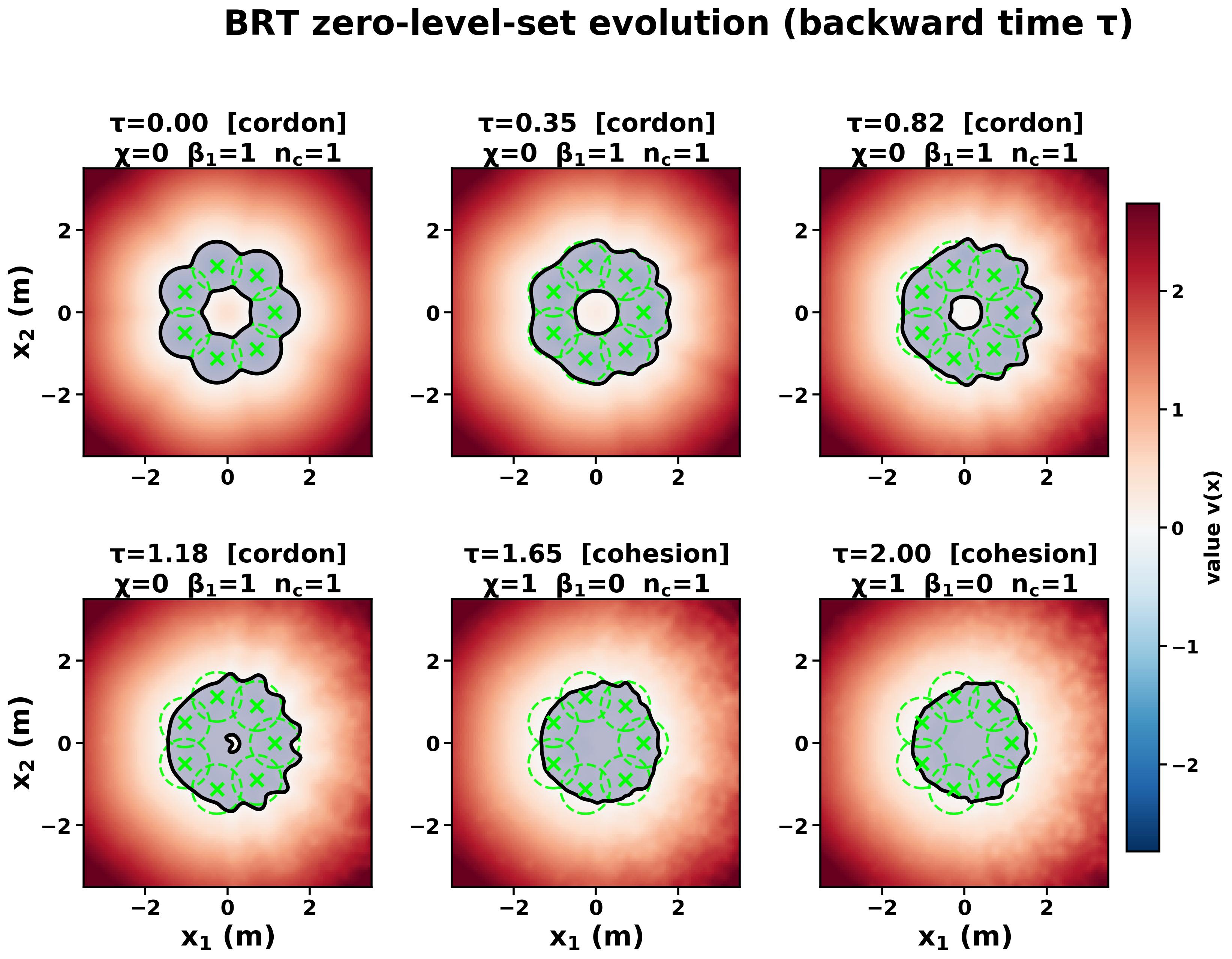}
		\\[0.3em]
		\includegraphics[width=\linewidth]{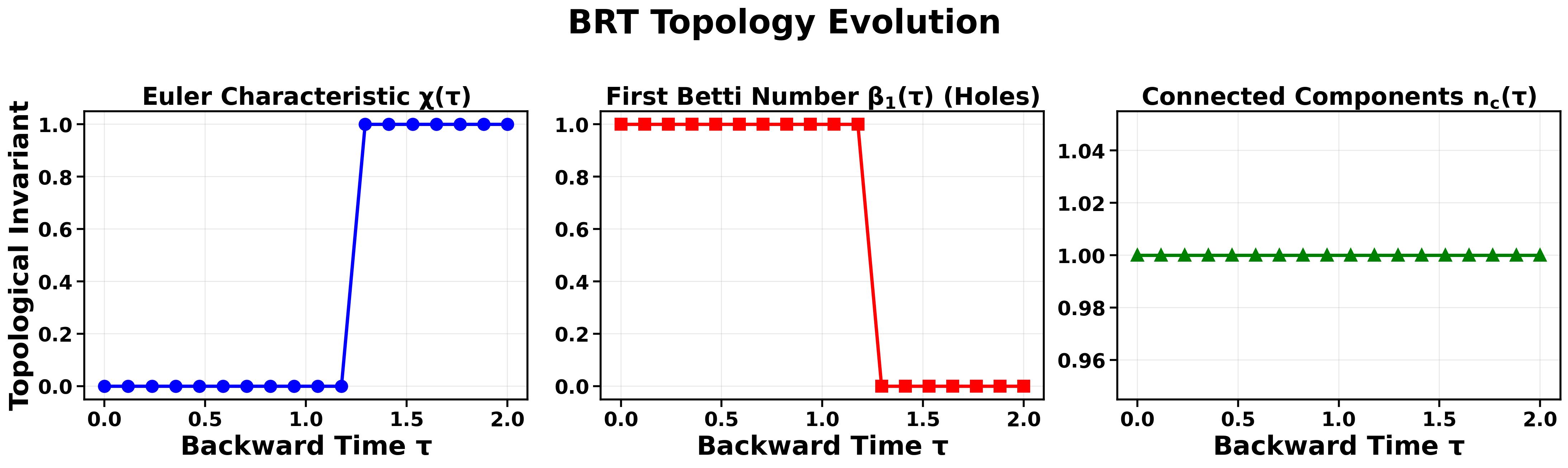}
	\end{tabular}
	\caption{Safety certificate for $99{,}996$ starlings in six flocks under attack by seven pursuers. \textit{Top:} six instants of the zero-sublevel set for an attacked flock, annotated with the phase label and the triple $(\chi, \beta_1, n_{\mathrm{comp}})$. Green crosses and dashed circles mark the pursuers and their capture radii, and the black curve is the $v = 0$ boundary. \textit{Bottom:} the same run read through its topological invariants. The step in $\chi$ and the drop in $\beta_1$ occur together at $\tau = 1.29$, while $n_{\mathrm{comp}}$ holds at one, so the flock closes its annulus without splitting.}
	\label{fig:murmur_numerics}
\end{figure}

\begin{figure}[tb!]
	\centering
	\includegraphics[width=\linewidth]{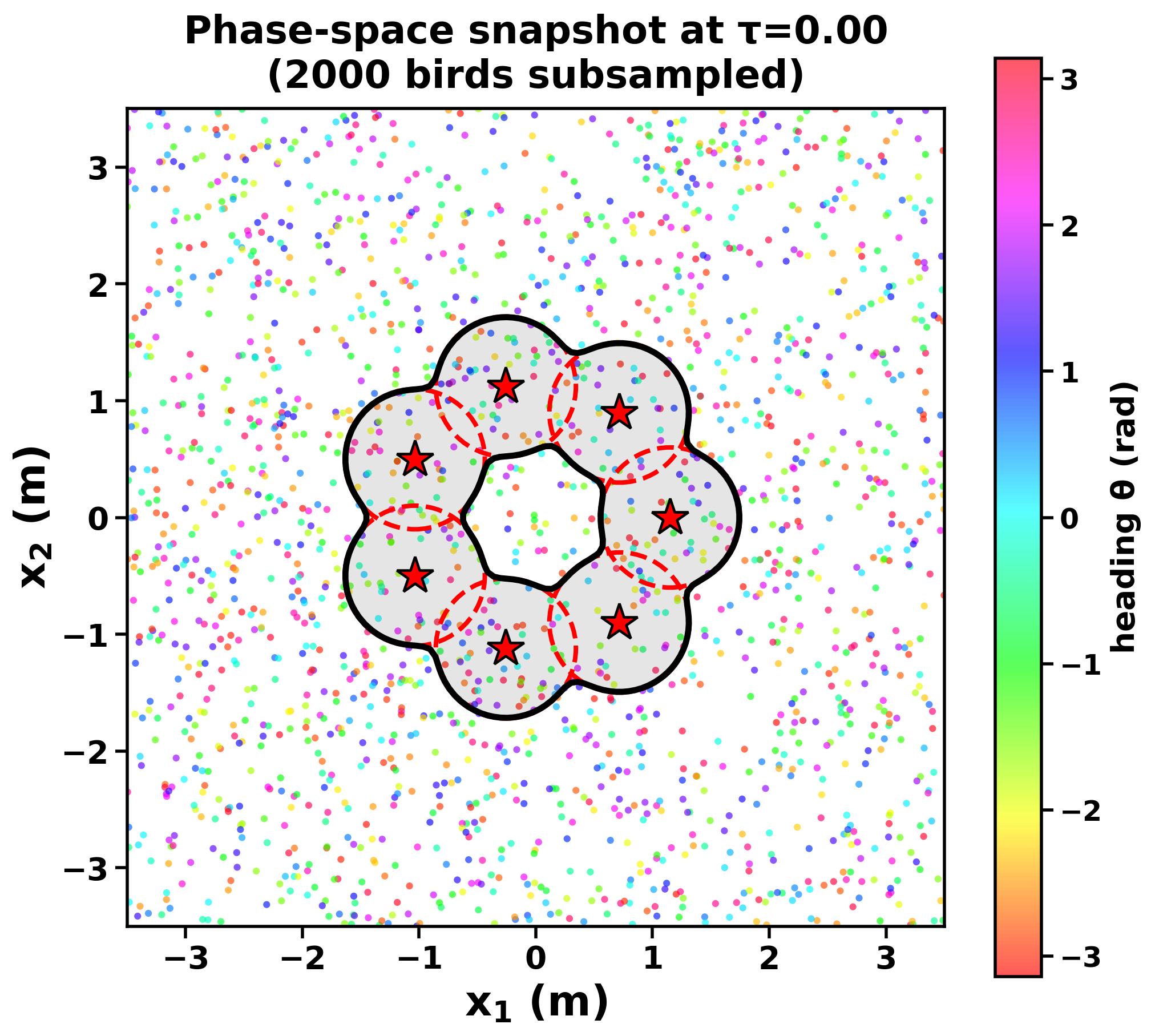}
	\caption{Phase-space view of the certified population at $\tau = 0$, with $2{,}000$ of the $99{,}996$ birds drawn and coloured by heading. Stars mark the flock centres and dashed circles their capture radii; the black curve is the boundary of the zero-sublevel set, which is annular here and so records a cordon. The colour spread shows that the population is a heading distribution rather than a rigid formation, and each bird is certified by one value evaluation at its own state.}
	\label{fig:murmur_phase_space}
\end{figure}
\begin{figure}[tb!]
	\centering
	\includegraphics[width=\linewidth]{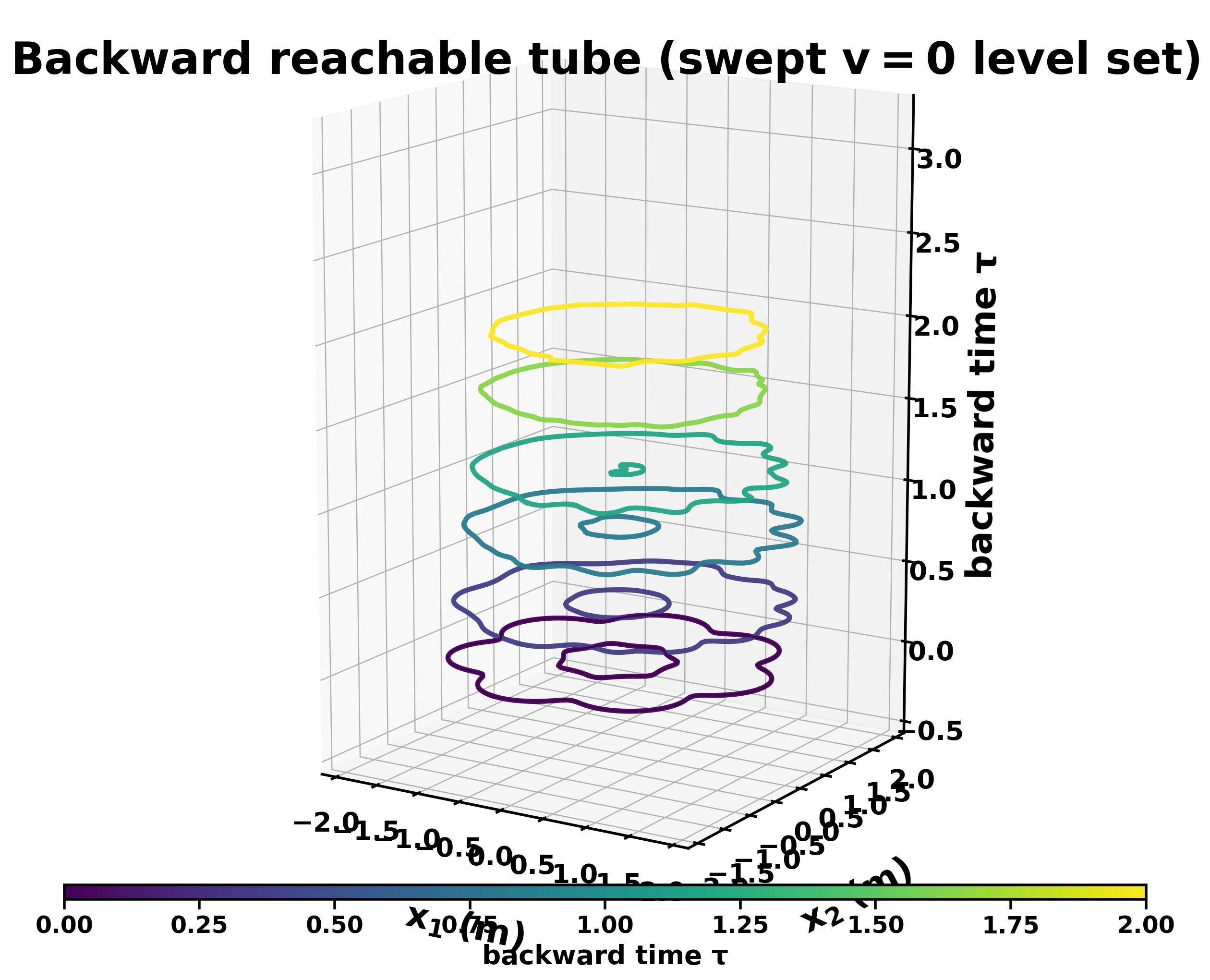}
	\caption{The same certificate viewed as a swept tube. Each contour is the $v = 0$ boundary of the attacked flock at one backward time, stacked along the vertical axis, so the annulus of the cordon appears as a hollow column that closes near the top of the horizon. This is the object the slices of \autoref{fig:murmur_numerics} cut through, and it is why the certified set is a reachable \emph{tube} rather than a set: a state that enters the capture region stays inside it for the remainder of the horizon.}
	\label{fig:murmur_tube3d}
\end{figure}

\section{Limitations}
\label{sec:limitations}
We state the limitations of the presented theory in this section.

\begin{inparaenum}[(i)]
	\item \emph{Sampling overhead in moderate dimensions}: Although the $O(N \cdot n)$ scaling becomes favorable as $n$ increases, the absolute cost of $N = 14{,}000$ samples per iteration can be higher than structured grid methods for $n \le 4$; practitioners should use grid-based solvers for low-dimensional problems where memory is not the bottleneck.

	\item \emph{Viscosity parameter sensitivity}: The choice of $\delta$ controls the smoothing level and approximation error $O(\sqrt{\delta})$; suboptimal $\delta$ selection can degrade accuracy or require excessive samples. Grid-search or Bayesian optimization is recommended during tuning.

	\item \emph{Convergence dependent on regularity}: The Picard iteration convergence guarantee (Theorem~\ref{thm:algorithm_convergence}) requires Lipschitz Hamiltonians and bounded second derivatives; nonconvex or highly nonlinear Hamiltonians may converge slowly or not at all. The scheme is at its most comfortable when the coefficient $\bc = 2\hamfunc/(\delta|D\valuefunc|^2)$ varies slowly (small $L_c$), and is exact for the quadratic Hamiltonian $\hamfunc=\tfrac12|p|^2$; conversely, we expect it to struggle where $\bc$ varies violently.

	\item \emph{Sample complexity is exponential in $1/\delta$ in the worst case} (Remark~\ref{rem:sample-bound-exponential}): with $\bc=1/\delta$, the explicit bound of Corollary~\ref{cor:mc_complexity} scales as $N \gtrsim \exp\big(2(g_{\max}-g_{\min})/\delta\big)$, since $\beta/\alpha = \exp((g_{\max}-g_{\min})/\delta)$ enters the bound squared. This is in direct tension with the Crandall-Lions viscosity error $O(\sqrt\delta)$~\citep{CrandallTwoApprox}: driving $\delta\to0$ for better viscosity accuracy drives the worst-case required sample count up exponentially, and Corollary~\ref{cor:mc_complexity} alone does not certify a favorable trade-off at small $\delta$. Concretely, for the Rockets/Dubins geometry ($g_{\max}-g_{\min}\approx 7$, $\delta=0.08$) the bound's exponent is in the hundreds --- utterly impractical if taken literally. Two things keep the method usable in practice despite this. First, the bound is a worst-case Hoeffding-type estimate that ignores variance (Remark~\ref{rem:tightness-bernstein-jensen}): a Bernstein bound, or a sharper (Jensen) estimate of $\mu$, can reduce the required $N$ substantially when $\mathrm{Var}(Z)$ is well-controlled, which we observe empirically to be the case (the 30-seed statistics of \autoref{tab:error_comparison} reject $L^2_{\text{rel}}\ge\sqrt\delta$ at $p_{\text{holm}}<10^{-50}$ using only $N=14$-$20$k samples, far below what the worst-case bound would suggest is needed). Second, we do not have a general class of Hamiltonians or cost functions for which the bound's constants ($g_{\max}-g_{\min}$ in particular) are controlled uniformly in $\delta$; establishing one is open. Practitioners should treat Corollary~\ref{cor:mc_complexity} as a qualitative warning against driving $\delta$ too small for a fixed sample budget, not as a tight prescription for $N$.

	\item \emph{Gradient estimation noise}: The spatial gradient estimate \eqref{eq:spatial_grad} relies on Monte Carlo variance in the log-sum-exp; in high-viscosity regimes or when samples become concentrated in narrow regions, gradient estimation can become unstable.

	\item \emph{Limited applicability to learned approximators}: The theoretical guarantees (Theorems~\ref{thm:mc_complexity} and~\ref{thm:algorithm_convergence}) assume exact Hamiltonian evaluation. Extension to neural-network-approximated Hamiltonians (e.g., in learned dynamics models) introduces additional approximation error not quantified by our theory.

	\item \emph{Single-core CPU evaluation}: All reported wall-clock times use a single CPU core of a 20-core Intel i7-14700K, leaving the machine's other 19 cores and its NVIDIA RTX A2000 GPU unused. The per-point Monte Carlo estimator (Algorithm~\ref{alg:quasi_lin}) is embarrassingly parallel across evaluation states and samples, so multi-core and GPU execution should reduce wall-clock time substantially; we have not yet benchmarked that scaling.

	\item \emph{Hamiltonians that are well-suited to our formulation}: Our presented solution is well-suited to  smooth, possibly nonconvex two-player game Hamiltonians, i.e., pursuit-evasion games, including Dubins-type, unicycle dynamics, 4D/3D dynamical vehicles where the co-state direction varies gradually and $L_c$ stays small. It may struggle on systems with Hamiltonians that possess rapid or discontinuous co-state dependence, such as bang-bang optimal control problems, where $\mathbf{c}$ varies sharply; this failure mode is inherited from viscous HJ theory rather than being specific to our frozen-coefficient scheme. This is because such controls fall outside the regularity conditions where the underlying viscous HJ solution applies at all. 
	
	\item \emph{Extensions to Stochastic Dynamics}: 	The Feynman-Kac expectation representation that underlies our value-expectation formula is the cost-to-go of a controlled diffusion $d\mathbf{x} = f\,dt + \sqrt{\Delta}\,d\mathbf{W}$ (read $f$: drift, ${\Delta}$: covariance). Here,  $\Delta$ \textit{is} the additive process noise's covariance, not just a numerical regularizer. Thus, the method already solves a stochastic reachability problem with additive Gaussian state noise; sampling $\mathbf{y}\sim\mathcal{N}(\mathbf{x},\Delta t I)$  simulates that diffusion's terminal state.
	
	\item \emph{Limitations in multiplicative noise dynamics}: What does not follow for free is control- or state-dependent (multiplicative) noise. There, the Cole-Hopf coefficient becomes a function of the diffusion matrix, and removing the resulting residual would need the same Picard quasi-linearization we already use for general Hamiltonians (\autoref{sec:methods}), rather than an exact substitution. Thus, additive-noise stochastic dynamics are already covered; multiplicative or control-dependent noise is a natural but nontrivial extension along the lines of our existing quasi-linearization machinery, not something the Cole-Hopf transform fundamentally precludes.
\end{inparaenum}

%
\end{document}